\documentclass[a4paper,UKenglish,cleveref, autoref, thm-restate]{lipics-v2021}
\nolinenumbers
\hideLIPIcs
\title{When Does Sparsity Help for $k$-Independent Set in Hypergraphs and Other Boolean CSPs?}

\author{Timo Fritsch}{Karlsruhe Institute of Technology, Germany}{timo.fritsch@hu-berlin.de}{https://orcid.org/0009-0006-3768-1382}{}
\author{Marvin Künnemann}{Karlsruhe Institute of Technology, Germany}{}{[orcid]}{Research partially supported by the Deutsche Forschungsgemeinschaft (DFG, German Research Foundation) -- 462679611.}
\author{Mirza Redzic}{Karlsruhe Institute of Technology, Germany}{mirza.redzic@kit.edu}{https://orcid.org/0009-0001-7509-1686}
{Research partially supported by the Deutsche Forschungsgemeinschaft (DFG, German Research Foundation) -- 462679611.}
\author{Julian Stie{\ss}}{Karlsruhe Institute of Technology, Germany}{julian.stiess@kit.edu}{[orcid]}{Research partially supported by the Deutsche Forschungsgemeinschaft (DFG, German Research Foundation) -- 462679611.}

\authorrunning{T. Fritsch, M. Künnemann, M. Redzic and J. Stie{\ss}} %TODO mandatory. First: Use abbreviated first/middle names. Second (only in severe cases): Use first author plus 'et al.'

\Copyright{Timo Fritsch, Marvin Künnemann, Mirza Redzic and Julian Stie{\ss}} %TODO mandatory, please use full first names. LIPIcs license is "CC-BY";  http://creativecommons.org/licenses/by/3.0/

\ccsdesc[500]{Theory of computation~Graph algorithms analysis}%TODO mandatory: Please choose ACM 2012 classifications from https://dl.acm.org/ccs/ccs_flat.cfm 
\ccsdesc[500]{Theory of computation~Problems, reductions and completeness}

\keywords{Multivariate algorithmics, fine-grained complexity theory, classification theorems, algorithmic hypergraph theory} %TODO mandatory; please add comma-separated list of keywords

\category{} %optional, e.g. invited paper

\relatedversion{} %optional, e.g. full version hosted on arXiv, HAL, or other respository/website
\usepackage{tikz}
\usetikzlibrary{shapes, positioning, backgrounds, calc}

\tikzset{
    std_node/.style={circle, fill=black, inner sep=1.5pt, outer sep=0pt},
    highlighted_node/.style={circle, fill=red!20, draw=red, very thick, inner sep=2pt, outer sep=0pt},
    hyperedge_e/.style={fill=blue!20, opacity=0.7, draw=blue, thick},
    hyperedge_ep/.style={fill=green!20, opacity=0.7, draw=green!60!black, thick},
    label_style/.style={fill=none, font=\small}
}

\usepackage{relsize}
\usepackage{mathtools}
\usepackage{todonotes}
\usepackage{calc}
\usepackage{hyperref}

\newcommand{\smallbold}[1]{{\textbf{#1}}}

\newcommand{\satg}[3]{\textsc{Csp$_{#1}^{#3}$}(#2) }
\newcommand{\IMPL}[0]{\smallbold{IMPL}}
\newcommand{\NAND}{{\textbf{NAND}}}
\newcommand{\EQ}[0]{\smallbold{EQ}}
\newcommand{\NOR}{{\smallbold{NOR}}}

\newcommand{\CSP}{\ensuremath{\textsc{Csp}}}

\renewcommand{\O}{\mathcal{O}}
\newcommand{\Fam}{\mathcal{F}}

\newcommand{\julian}[1]{{\color{red}J: }{\color{orange}{#1}}}
\newcommand{\marvin}[1]{{\color{red}MK: }{\color{orange}{#1}}}
\newcommand{\timo}[1]{{\color{red}T: }{\color{orange}{#1}}}
\newcommand{\mirza}[1]{{\color{red}MR: }{\color{orange}{#1}}}

\renewcommand{\julian}[1]{}
\renewcommand{\marvin}[1]{}
\renewcommand{\timo}[1]{}
\renewcommand{\mirza}[1]{}

\newcommand{\ponder}[1]{%
    \phantomsection%
    \textcolor{orange}{#1}%
    \addcontentsline{toc}{subsection}{\textcolor{orange}{\textbf{TODO:}} #1}%
}

\renewcommand{\ponder}[1]{}

\newcommand{\triv}{{\gamma_{\mathrm{triv}}}}
\newcommand{\umin}{{u_{\mathrm{min}}}}
\newcommand{\smin}{{s_{\mathrm{min}}}}

\newtheorem{hypothesis}{Hypothesis}

\usepackage{algorithm}
\usepackage[noend]{algpseudocode}

\begin{document}

\maketitle

%TODO mandatory: add short abstract of the document
\begin{abstract}
Consider the fundamental task of finding independent sets of (constant) size $k$ in a given $n$-node \emph{hyper}graph. How much is the time complexity affected by the sparsity of the input, i.e., the number of hyperedges $m$? Tur\'{a}n's theorem implies that the problem is trivial if $m=O(n^{2-\epsilon})$ for some $\epsilon> 0$.  Above that threshold (i.e., if $m=\Theta(n^\gamma)$ for some $\gamma \ge 2$), we give a perhaps surprising algorithm with running time $O\left(\min\left\{n^{\frac{\omega}{3}k} + m^{k/3}, n^k\right\}\right)$ (for $k$ divisible by 3), which is essentially \emph{conditionally optimal} for all $\gamma\ge 2$, assuming the $k$-clique and 3-uniform hyperclique hypotheses (here, $\omega \le 2.372$ denotes the matrix multiplication exponent). In fact, we obtain a more detailed time complexity that is sensitive to the arity distribution of the hyperedges.%\marvin{rephrased to highlight the more detailed time bound}

To study such phenomena in more generality, we study the time complexity of finding solutions of (constant) size $k$ in sparse instances of Boolean constraint satisfaction problems, where $n$ and $m$ denote the number of variables and constraints, respectively. %Here, we can fully classify interesting subclasses and observe a diverse set of time complexities. 
Our results include, among others:
\begin{itemize}
    \item an essentially full classification of the influence of sparsity for Boolean constraint families of binary arity. Of particular technical interest is a conditionally tight algorithm for the family consisting of the binary NAND and the binary Implication constraints, with a running time of $\Theta(m^{\omega k/6 \pm  c})$. %whenever $m=\Omega(n)$.
    \item the identification of a large class of constraint families $\Fam$ that exhibits a sharp phase transition: there is a threshold $\gamma_\Fam$ such that the problem is trivial for $m=O(n^{\gamma_\Fam-\epsilon})$, but requires essentially brute-force running time $\Theta(n^{k\pm c})$ for $m=\Omega(n^{\gamma_\Fam})$, assuming the 3-uniform hyperclique hypothesis. 
\end{itemize}
In general, we observe a rich landscape of time complexities. Notably, in many cases the \emph{combination} of constraints display higher time complexity than either constraint alone.
%\marvin{claims in the abstract correct?}
\end{abstract}

%\marvin{TODO: read Appendix B}
%\marvin{make Proof of Triviality Cutoff its own section?}
%\marvin{make $k$-IS in arity-$3$ hypergraphs its own section?}

%\begin{itemize}
%    \item should mention connection between Turan's theorem and the fact that sparse independent set instances are trivial. An extension of Turan's theorem to higher arity appears to be a difficult subject in the combinatorics community. (see, e.g., here: https://people.maths.ox.ac.uk/keevash/papers/hyp-reg-turan-journal.pdf) 
%   \timo{I believe a sufficiently general result for hypergraphs can be found here: 
%   \url{https://en.wikipedia.org/wiki/Tur\%C3\%A1n_number}. 
%   Wikipedia quotes a book by Dinitz, for which i wasnt able to find a digital copy, probably one can proof a similar result themselves? The precise formulation should quite instantly yield that $k$-indSet in $r$-uni HGs with $\gamma < r$ must contain a solution.}
%   \julian{The handbook cites this paper on \url{https://link.springer.com/article/10.1007/BF01375476} i suppose.}
%    \item in the conference version, we might want to give only a sketch for the case of graphs, and prove the statement for hypergraphs with minimum arity $r$ in a later technical section?
%    \item Notation: Timo's name of "sparsification" for the main reduction type should be called differently, e.g., "sparse embeddings"
%    \item kernel $\to$ core? possible names for general method: ``peeling off'', ...?
%\end{itemize}
%\tableofcontents

\section{Introduction}
%\textbf{Outline:} 
%\begin{itemize}
%    \item Hook on sparsity (i.e. many interesting graphs are indeed sparse) \marvin{cite, e.g., FKR SODA'24 for DomSet}
%    \item $k$-independent set is easy in sparse graphs
%    \item Switching lens to parametrized CSP, as NAND-like constraints are indeed the hardest class there
%    \item Sparsity indeed as a notable effect on sparsity for CSPs (if NAND appears)
%    \item We discuss the limitations that apply in the field of CSPs when considering sparsity (i.e. restrictions and implementations are now very limited)
%    \item More interestingly, this leads to very nuanced changes in hardness when considering constraint families containing NAND as well
%    \item We fully resolve the complexity 
 %       of binary constraint families under the current matrix multiplication exponent $\omega$ value.
%        (\autoref{sec:binary})
%    \item For higher arity constraint families, 
%        we discover a complex landscape of lowerbounds
%        for which we can prove tightness %for some particular constraint families.
%\end{itemize}

How does \emph{sparsity} of the input influence a problem's time complexity? From an algorithmic perspective, such questions have been well studied particularly for graph-theoretic problems, as many real-world graphs are rather far from dense. Indeed, for a problem solvable in time $n^{c\pm o(1)}$ on $n$-node graphs, a natural target to shoot for is an $m^{c/2\pm o(1)}$-time algorithm, where $m$ denotes the number of edges in the graph. Such an algorithm recovers the time bound of $n^{c\pm o(1)}$  in the dense case $m=\Theta(n^2)$, while significantly improving the running time if $m = O(n^{2-\epsilon})$ -- we shall refer to such a situation as a \emph{natural interpolation}. For many graph problems, natural interpolation is indeed achievable. %\marvin{this passage is perhaps a bit close to our SODA paper?} \marvin{cite well-known examples here?} \marvin{add a sentence like: Natural interpolation is indeed possible for problems such as global min-cut, max flow; watch out: needs time $m^{c/2+o(1)}$ rather than $O(m^{c/2})$.}

%From the hardness side, the precise influence of sparsity has only been established for a few select problems:
Much less is known on the hardness side.
Nevertheless, a growing body of work establishes conditional lower bounds ruling out natural interpolation for several fundamental problems. Notable examples include: approximate Diameter and Radius~\cite{RodittyVW13,AbboudVWW16},  All-Edges Triangle Detection~\cite{Patrascu10,WilliamsX20}, APSP and related problems~\cite{DBLP:conf/soda/LincolnWW18}, $k$-Dominating Set for $k\ge 3$ ~\cite{fischer2024effect} and more; see also~\cite{AgarwalR18}. %\marvin{this paragraph could use some work}

The goal of this work is to investigate the rich interplay  of the input sparsity and the resulting time complexity. Particularly, we focus on cases in which instead of achieving a natural interpolation, the running time exhibits a complex relationship with the sparsity -- possibly overshooting or undershooting natural interpolation, or both. Predominantly, we investigate problems beyond graphs, specifically, hypergraph problems as well as constraint satisfaction problems. In such problems, $n$ objects (nodes or variables) interact in a combination of relationships (hyperedges of different arity or constraints of different types). If each relation or constraint has arity at most $r$, then the sparsity $m$ (i.e., the total size of these relationships) is bounded by $O(n^r)$. In these cases, for an $n^{c\pm o(1)}$-time solvable problem, we denote the natural interpolation as a running time of the form $m^{c/r\pm o(1)}$. We shall see several cases in which this running time is \emph{partially} achievable, using \emph{conditionally optimal} algorithms that carefully consider the combination of relationships.

%From a hardness perspective, such questions are less well investigated.
%\marvin{where to discuss previous conditional lower bounds for sparse graphs? Such as Diameter LB, Lincoln et al., FischerKR SODA + subsequent works? only later in related work section?}

\subsection{Result I: Conditionally Optimal $k$-Independent Set in Sparse Hypergraphs}

As our first focus, we consider the $k$-Independent Set problem ($k$-IS) in hypergraphs: Given a (hyper)graph $G=(V,E)$, determine if there is a set $S\subseteq V$ of size $k$ that contains no (hyper)edge of $G$, i.e., $e\not\subseteq S$ for all $e\in E$. Already in graphs, $k$-IS is central to algorithmic graph theory: by complementing the graph, we obtain the classic $k$-clique problem, perhaps the best known $W[1]$-complete problem.\footnote{Note however, that complementing a sparse graph generally yields a dense graph, so that the influence of sparsity differs between $k$-clique and $k$-IS.} Generalizing to hypergraphs, we obtain a significantly more expressive problem. Indeed, for $3$-uniform hypergraphs, it is equivalent (up to sparsity) to the 3-uniform hyperclique problem. This problem is generally considered to be more difficult than $k$-clique; the corresponding 3-uniform hyperclique hypothesis has seen a surge of applications recently, see ~\cite{DBLP:conf/soda/LincolnWW18,AbboudBDN18,BringmannFK19,KunnemannM20,BringmannS21,DalirrooyfardJW22,MathialaganWX22,FischerKRS25,FuLR25}, among others (see below for further discussion). As a further case in point, note that the fairly recent hypergraph container method (see, e.g.,~\cite{BaloghMS18survey}) gives a combinatorial tool for independent sets in well-behaved hypergraphs, which has found various applications, including in algorithm design~\cite{Zamir23}.

We ask:
\begin{centering}
\itshape 
Let $k\ge 2$ and $\gamma\ge 0$ be constants. What is the time complexity of $k$-Independent Set in $n$-node hypergraphs with $m=\Theta(n^\gamma)$ hyperedges?
\end{centering}

Let us first consider the case of graphs rather than hypergraphs. Here, $0 \le \gamma \le 2$, and without taking sparsity into account, $k$-IS is well known to be solvable in time $O(n^{(\omega/3)k})$~\cite{cliqueFast}\footnote{Whenever $k$ is divisible by 3; in other cases, the running time is only slightly higher.}; the $k$-clique hypothesis postulates that this running time is essentially optimal. However, taking sparsity into account, the problem becomes \emph{trivial} when $\gamma < 2$, i.e., $m=O(n^{2-\epsilon})$: This already follows from Tur\'an's Theorem, which implies that any $n$-node graph with at most $(\frac{1}{k-1}-o(1))\frac{n^2}{2}$ edges contains an independent set of size $k$. 

The case for $h$-uniform hypergraphs with $r\ge 3$ can be resolved analogously: Without taking sparsity into account, no substantial improvement over brute-force running time $O(n^k)$ is known (see~\cite{AbboudFS24} for subpolynomial-factor improvements), and the 3-uniform hyperclique hypothesis postulates that this is best possible. While generalizing Tur\'an's theorem to hypergraphs is challenging (see, e.g.,~\cite{Keevash11-survey}), a simple greedy argument establishes that $k$-IS in $h$-uniform hypergraphs with $\gamma < h$ (i.e., the hypergraph contains $m\le O(n^{h-\epsilon})$ hyperedges) is again trivial. (We shall prove this formally in Section~\ref{sec:triviality-cutoff}). Thus, for any arity $2 \le h \le k-1$, the $k$-IS problem in $h$-uniform hypergraphs is non-trivial only for the hardest, dense case of $\gamma = h$.\footnote{The case of $h=k$ is trivial for all $\gamma < k$ and trivially solvable in linear time in the input if $\gamma=k$.}

We thus turn to the general case of hypergraphs with possibly \emph{mixed-arity} hyperedges. We may assume without loss of generality that $\gamma \le k$, since we may simply ignore any edge of arity at least $k+1$. The problem is trivial only for $\gamma < 2$. How does the problem's complexity behave in between? E.g., how quickly can we solve $k$-IS when we have, say, $\Theta(n^2)$ edges (of binary arity) and $\Theta(n^{2.5})$ hyperedges of arity 3? 

In this case, we determine a conditionally optimal running time of $O(n^{(5/6)k\pm c})$, which is intermediate between $O(n^{(\omega/3)k\pm c})$ and $O(n^{k \pm c})$. More generally, we obtain the following result.
\begin{theorem}
    Assuming the Clique and the 3-uniform Hyperclique Hypothesis, the optimal running time for $k$-Independent Set in $n$-node hypergraphs with $m$ hyperedges, each of arity at least $2$, is
    \[ \min\{ n^{\frac{\omega}{3}k} + m^{k/3}, n^k \},\]
    up to a factor of the form $O(n^c)$ for some $c$ independent of $k$.
\end{theorem}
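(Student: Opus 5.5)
Both directions are needed: an algorithm achieving $\min\{n^{\frac{\omega}{3}k}+m^{k/3},\,n^k\}$, and matching lower bounds from the Clique and 3-uniform Hyperclique Hypotheses. Throughout, let $k$ be divisible by $3$. Partition $E=E_2\cup E_{\ge 3}$ into binary edges and edges of arity at least $3$.

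\textbf{Algorithm.} The $n^k$ term is brute force. For the other term, the idea is that a hyperedge of arity $h\ge 3$ can be fully contained in a $k$-set only through its vertex set. The key combinatorial step is a "peeling" argument.

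First, let $V'$ be the set of vertices incident to some edge of $E_{\ge 3}$; then $|V'|\le 3m$ after keeping only edges of arity $\le k$. If the graph induced by $E_2$ on $V\setminus V'$ has an independent set of size $k$, we are done. Since $V\setminus V'$ has no hyperedges of larger arity, this is a $k$-clique computation in the complement, costing $O(n^{\frac{\omega}{3}k})$.

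In general, however, a solution $S$ mixes vertices from $V'$ and from $V\setminus V'$. So we guess the part $S\cap V'$, which has size $j$, and complete the remaining $k-j$ vertices in $V\setminus V'$ via a $(k-j)$-IS computation among the non-neighbours (w.r.t. $E_2$) of the guessed part. Naively this costs $|V'|^j\cdot n^{\frac{\omega}{3}(k-j)}$, which is too much. Instead, I would run the Nešetřil–Poljak style three-part split: split $S$ into three groups of $k/3$ vertices and build an auxiliary graph whose nodes are the candidate $k/3$-sets that are internally independent (with respect to all hyperedges inside the group). Each such set is either drawn from $V\setminus V'$ (at most $n^{k/3}$ of them) or touches $V'$.

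The step I expect to be the main obstacle is bounding the total number of candidate groups by $O(n^{k/3}+m^{k/9}\cdot\ldots)$ so that matrix multiplication on group-sets costs $n^{\frac{\omega}{3}k}+m^{k/3}$. Hyperedges spanning several groups must also be checked. My first attempt would be to handle these by brute force over triples of groups that involve $V'$-vertices, which costs roughly $m^{k/3}$ because each $V'$-vertex costs a factor that can be charged to $m^{1/3}$ per $3$-edge (each arity-$3$ edge yields $\le 3$ vertices, and combinatorially the number of "relevant" small vertex sets is bounded by $m^{|T|/3}$). More precisely, I would show that any relevant set $T\subseteq S$ of vertices lying in hyperedges of arity $\ge 3$ can be enumerated in time $O(m^{|T|/3}+n^{|T|})$-ish, by covering $T$ with edges. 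Then the vertices in $V\setminus V'$ only interact via $E_2$ and are handled by fast matrix multiplication. Balancing then gives the claimed bound, up to $n^{c}$ factors from the non-divisibility of group sizes.

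\textbf{Lower bounds.} The $n^{\frac{\omega}{3}k}$ term follows directly from the Clique Hypothesis: complement a $k$-clique instance, giving $m=\Theta(n^2)$ binary edges. For $m^{k/3}$ with $m=n^\gamma$, $2\le\gamma\le3$, I would reduce from 3-uniform $k$-hyperclique on $N=m^{1/3}$ nodes. Its complement is a $k$-IS instance on $N$ nodes with $N^3=m$ hyperedges. I would then pad with $n-N$ isolated vertices, made harmless by adding a complete $k$-partite-style structure, or more simply by adding binary edges between all padding vertices (that is, a clique of size $n$ costs $n^2\le m$ edges, since $\gamma\ge2$). Any independent set then uses at most one padding vertex; we guess it or instead ask for a $(k-1)$-set, losing only $O(n^c)$ factors. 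For $\gamma\ge 3$ the $n^k$ term is matched by hyperclique on $n$ nodes using arity-$3$ edges, and higher $\gamma$ only weakens the statement since $\min$ saturates at $n^k$.
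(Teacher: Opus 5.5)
\textbf{There is a genuine gap, and it is in the algorithm.} Your lower-bound half is essentially the paper's sparse embedding and works, with two small repairs:
\begin{itemize}
\item Your padding clique has no edges to the core, so one padding vertex can join a core $(k-1)$-set. This costs a factor $m^{1/3}\le n$, which the $O(n^c)$ slack absorbs. Joining each padding vertex to \emph{all} other vertices, as the paper does, costs only $O(n^2)\le m$ edges and gives an exact equivalence.
\item For $\gamma>3$ you still need instances with $\Theta(n^\gamma)$ edges. You can get them by adding redundant supersets of existing hyperedges.
\end{itemize}

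Your key algorithmic step is to charge each vertex of $S\cap V'$ to $m^{1/3}$ and enumerate $T$ ``by covering $T$ with edges''. This cannot hold for an actual solution. $S$ is independent, so by definition no hyperedge lies inside $S$, and nothing lets you enumerate the vertices of $S\cap V'$ in $m^{|T|/3}$ time. For instance, take $m=n^{2.5}$ arity-$3$ edges spread over all vertices. Then $V'=V$, and your brute force over triples of groups touching $V'$ costs $\Theta(n^k)$ instead of $m^{k/3}=n^{5k/6}$. This is intrinsic, not a bookkeeping issue. The three-part Ne\v{s}et\v{r}il--Poljak split only certifies \emph{pairwise} compatibility of the groups. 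An arity-$3$ edge with one vertex in each group is invisible to the matrix product, so a direct detection scheme of this kind has no way to account for it.

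\textbf{The missing idea is to count rather than detect.}
\begin{itemize}
\item Ne\v{s}et\v{r}il--Poljak also \emph{counts} the $k$-independent sets of the binary-edge graph $G$ in time $O(n^{k\omega/3})$.
\item You then subtract the number of false solutions: the $k$-ISes of $G$ that \emph{do} contain a hyperedge of arity $\ge 3$. These are exactly the objects to which your $m^{1/3}$-per-vertex intuition applies.
\item The paper computes this count by inclusion--exclusion over sets $S$ of such hyperedges. It keeps the sum small using a total order $\prec$ on hyperedges: instead of $I_S$ it uses $I'_S$, the $k$-ISes of $G$ that contain $V(S)$ but no hyperedge $e'\prec e$ meeting some $e\in S$.
\item One still has $I'_S=\bigcap_{e\in S}I'_{\{e\}}$ and $\bigcup_e I'_{\{e\}}=I_{\mathrm{invalid}}(H,k)$, by assigning each false solution to its $\prec$-minimal contained hyperedge. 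But now $I'_S=\emptyset$ unless $S$ is a matching with at most $k/3$ edges.
\item Each $|I'_S|$ is computed by deleting the vertex of $e'\setminus e$, or adding a binary edge on $e'\setminus e$. For higher arity this becomes a smaller hyperedge plus induction on the arity. One then counts $(k-3i)$-ISes among the non-neighbours of $V(S)$.
\end{itemize}
The total cost is $\sum_{i\le k/3} m^i\,(m+n^{(k-3i)\omega/3})=O(m^{k/3}+n^{k\omega/3})$.
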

For an illustration, we refer to Figure~\ref{fig:beyond_binary}(a) -- note that for $\omega \le \gamma\le 3$, and only there, natural interpolation is obtained. In fact, we give a much more detailed statement that takes the arity distribution into account, see Section~\ref{sec:technical-overview}.

Technically, our result relies on a very careful combination of ideas, which overcomes the challenge of incorporating different ways to handle edges of different arity. The main technical obstacle here is that the well known $k$-clique algorithm in graphs due to Nešetřil and Poljak~\cite{cliqueFast} is not well compatible with including \emph{any} hyperedge of arity at least $3$. However, exploiting that it can be used to \emph{count} all $k$-cliques as well, we seek to count all $k$-cliques on the (binary-arity) edges, and subtract from this count the number of solutions including any of the $\le m$ hyperedges of larger arity. To solve this task, we present a careful argument based on inclusion-exclusion and sparse witness listing in Section~\ref{sec:technical-overview}.\footnote{We remark that the idea of using inclusion-exclusion in combination with fast clique/triangle counting has been used in prior work, e.g., to obtain a $O(m^{2\omega/(\omega+1)})$-time algorithm for counting 3-ISes~\cite[Footnote 2]{VWilliamsWWY15}. In our work, we exploit these ingredients further to overcome a different challenge, i.e., handling the presence of mixed-arity edges.}

%Implicit in the above results is the insight that to understand the influence of sparsity for mixed-arity hyperedges, we have to closely consider both the presence of arity-2 edges as well as hyperedges of arity at least $3$.

We now turn towards a vastly more general setting, specifically, the class of Boolean constraint satisfaction problems, which includes the $k$-IS problem in hypergraphs as just one of many interesting examples.  

\subsection{Result II: On the Influence of Sparsity for Boolean CSPs}

Boolean constraint satisfaction has long served as a challenge for understanding the precise limits of our algorithmic and complexity-theoretic methods. The extensive list of such works includes classifications of tractability in terms of P vs NP-complete~\cite{Schaefer78,DBLP:conf/focs/Zhuk17,DBLP:conf/focs/Bulatov17}, counting complexity~\cite{Bulatov13}, parameterized complexity~\cite{Marx05,BulatovM14}, approximability~\cite{DBLP:conf/stoc/KhannaSW97} and many more. We wish to explore how sparsity affects the fine-grained time complexity investigated in~\cite{KunnemannM20}.

Formally, we define, for any finite constraint family $\Fam$ and $\gamma > 0$, the algorithmic problem $\CSP_k^\gamma(\Fam)$: given a set of $m=\Theta(n^\gamma)$ constraints, each formed by applying some function $f:\{0,1\}^r \to \{0,1\} \in \Fam$ on a set of $r$ pairwise distinct Boolean variables chosen from $x_1,\dots, x_n$, determine whether there exists an assignment that sets precisely $k$ variables to true and satisfies all constraints. For a singleton family $\Fam = \{f\}$, we also write $\CSP_k^\gamma(f)$.
Throughout the paper, we consider $k$ as a  constant independent of $n$.

Note that this class contains $k$-IS in $h$-uniform hypergraphs with $m=\Theta(n^\gamma)$ hyperedges as $\CSP_k^\gamma(\NAND_h)$, where $\NAND_h(y_1,\dots,  y_h) = \overline{y_1\wedge \cdots \wedge y_h}$. It also contains $k$-IS in mixed-arity hypergraphs as $\CSP_k^\gamma(\{\NAND_2,\dots, \NAND_k\})$.

We ask:
\emph{What determines the influence of sparsity for detecting size-$k$ solutions of Boolean CSPs?}

Generally speaking, we obtain a full\marvin{add discussion in footnote}\footnote{Full with regard to the effects of sparsity, as the classification is not tight for $\CSP_k^\gamma(\IMPL)$, for which the precise complexity is unclear in the dense case already.} classification for constraint families consisting exclusively of functions of binary arity. For higher-arity constraint functions, we give a large subclass of CSPs exhibiting a precise cutoff point: for sparser instances, the problem is trivial, for denser instances, it is essentially as hard as the dense case.

\paragraph*{Classification for Binary Constraint Families}

Without taking sparsity into account, previous works~\cite{Marx05,KunnemannM20} establish the following regimes within the Boolean constraint families: (1) problems fine-grained equivalent to $\CSP_k(\NAND)$, (2) problems fine-grained equivalent to $\CSP_k(\IMPL)$ where $\IMPL(x,y) = x \rightarrow y$, and (3) FPT problems.

For the two central hard CSPs, it is not too difficult to establish the following baselines (see Section~\ref{sec:binary}):
\begin{enumerate}
\item  $\CSP_k^{\gamma}(\NAND)$, i.e., $k$-IS in graphs, is trivial if $\gamma < 2$ and has complexity $O(n^{\omega k/3 \pm c})$ if $\gamma = 2$, assuming the clique hypothesis.
\item $\CSP_k^\gamma(\IMPL)$ is trivial if $\gamma < 1$ and has time complexity $n^{g(k)}$ uniformly for all $1\le \gamma \le 2$, where $\Omega(\sqrt[3]{k}) \le g(k) \le O(\sqrt{k})$, assuming the clique hypothesis.
\end{enumerate}
    
We find that surprisingly, the \emph{combined} constraint family $\{\IMPL, \NAND\}$ has \emph{higher} time complexity for all $1 \le \gamma < 2$ than either $\IMPL$ or $\NAND$ individually.

\begin{theorem}
Assuming that the clique hypothesis holds. The optimal time complexity for $\CSP^\gamma_k(\{\NAND, \IMPL\})$ is $\Theta(m^{\omega k/6 \pm c})$ for some $c$ independent of $k$.\footnote{By slight use of notation, in the introduction we write $\Theta(n^{f(k)\pm c})$ to express existence of an algorithm with running time $O(n^{f(k)+c})$ and a conditional lower bound of $\Omega(n^{f(k)-c})$.}
\end{theorem}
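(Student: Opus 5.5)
Our plan has two parts: an algorithm running in time $O(m^{\omega k/6 + c})$, and a reduction from $k$-clique showing that $m^{\omega k/6 - c}$ is impossible under the clique hypothesis. Throughout, $m=\Theta(n^\gamma)$ with $1\le \gamma\le 2$. For $\gamma<1$, a set of $k$ variables occurring in no constraint trivially satisfies everything.

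\textbf{Upper bound.} Call a variable \emph{active} if it occurs in some constraint; there are at most $2m$ active variables. Each remaining variable $x$ is free and can be set to true without violating anything. Hence a solution exists iff, for some $j\le k$, there is a satisfying assignment with exactly $j$ active variables set to true and $n_{\mathrm{free}}\ge k-j$. So we may restrict to the instance on the $N=O(m)$ active variables and ask, for every $j\le k$, whether it has a size-$j$ solution.

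On this instance, an assignment with true-set $S$ is satisfying iff $S$ is independent in the graph of \NAND-edges and $S$ is closed under the implications, i.e.\ $x\in S$ and $x\to y$ force $y\in S$. Next we contract strongly connected components of the implication digraph, since each component must be taken entirely or not at all. A component containing an internal \NAND{} edge is forbidden. So are its ancestors, which we delete iteratively. Each remaining node $v$ now carries a weight $w(v)=|\text{component}|$ and a downward closure $D(v)$ in the DAG. Only nodes with $|D(v)|\le j$ are relevant.

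A closed set $S$ of size $j$ is the union of closures of its minimal elements, of which there are at most $j$. Following the approach of \cite{KunnemannM20} for \IMPL, we guess the ``shape'': the sizes of the closures and the overlap pattern. This gives only $f(k)$ many patterns. Each candidate node $v$ has a closure of size at most $k$ that we precompute. The condition that the closures of a tuple of candidates form an independent set of the prescribed size is a pairwise condition, because both \NAND{}-conflicts and overlaps are detectable from pairs of closures. We thus reduce, per pattern, to finding a clique of size at most $k$ in an auxiliary graph on $O(N)=O(m)$ vertices, and we find it with the Nešetřil--Poljak algorithm \cite{cliqueFast} in time $O(N^{\omega k/3})$.

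This gives only $m^{\omega k/3}$, so the key refinement is needed. Vertices whose closure has size $s\ge2$ contribute $s$ true variables, so a $j$-solution uses at most $j/s$ of them. We also need to show that in the case where all closures are singletons (pure \NAND{} on isolated DAG sinks), the number of relevant vertices is effectively $O(\sqrt m)$. Vertices not involved in any \NAND{} edge are interchangeable and can be handled by counting. A singleton that is \NAND-adjacent to something is active, but an IS of singletons among $O(m)$ vertices is where the difficulty lies.

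Reconsidering, we use Turán-type triviality instead. If the \NAND-graph restricted to singleton sinks has $N'$ vertices and at most $m\ll N'^2$ edges, a size-$k$ independent set among them exists trivially. We therefore only need the clique algorithm when $N'\le O(\sqrt m)$, giving $m^{\omega k/6}$. The same degree argument must be extended to larger closures: a node with closure size $s$ has $s$ distinct vertices, and we plan to charge $m$ against the number of non-singleton closures, together with the factor $j/s$ above. \textbf{This balancing between non-singleton closures (which cost implication edges) and the $\sqrt m$ bound is the main obstacle.} We expect a case split by the number of true variables lying in closures of size $\ge2$, with each such variable contributing at most a factor $m^{1/2}$, i.e.\ $m^{\omega/6}$ per unit of $k$.

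\textbf{Lower bound.} We reduce $k'$-clique on a graph with $n'$ nodes, with $k'$ proportional to $k$. Given a $k'$-clique instance, we build a partitioned \NAND{} instance (an independent set in the complement graph within parts) on $n'$ nodes with $m\approx n'^2$ edges. We then pad it with $\Theta(n'^2)$ extra variables that must stay false, enforced using \IMPL{} chains into a forbidden gadget: each padding variable $z$ gets $z\to u$, where $u$ is a fixed variable with $u$ forced false. We realize that by $u\to u'$, with $\NAND(u,u')$ and $\NAND$-edges making $u$ unusable; alternatively we attach to $u$ a closure exceeding $k$.

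This yields $N=\Theta(n'^2)=\Theta(m)$ variables, so $\gamma=1$, and the same scaling works for all $1\le\gamma\le2$. A faster algorithm running in $m^{\omega k/6-c}=n'^{\omega k/3-2c}$ would then refute the clique hypothesis. The constant $c$ absorbs the bookkeeping over $k\bmod 3$ and the partition-forcing gadgets, which we plan to implement via per-part \NAND{} cliques.
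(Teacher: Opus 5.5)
Your lower bound follows the paper's idea: a dense $k$-IS core on $n'$ variables is hidden among $\Theta(n'^{2/\gamma})$ padding variables, which linearly many constraints force to $0$. There is one fixable bug. With only $u\to u'$ and $\NAND(u,u')$, the variable $u'$ is unconstrained. A $(k-1)$-independent set of the core plus $u'$ is then a spurious solution, so NO instances can map to YES instances. Add $u'\to u$; the paper places $\IMPL(x,y)\wedge\IMPL(y,x)\wedge\NAND(x,y)$ along a cycle through all padding variables. In your alternative, close the chain into a cycle of length $>k$ so that no free tail remains.

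\textbf{The genuine gap is the upper bound.} Your auxiliary-clique construction on the $O(m)$ active variables gives only $m^{\omega k/3}$. The refinement to $m^{\omega k/6}$ is exactly the ``main obstacle'' you leave open. Your observation that a closure of size $s$ contributes $s$ true variables helps only for $s\ge 3$, where guessing costs $m^{1/s}\le m^{1/3}\le m^{\omega/6}$ per true variable. For $s=2$ it gives $m^{1/2}>m^{\omega/6}$, and your Turán argument covers only singleton sinks. Closures of size exactly $2$ are the crux: there can be $\Theta(m)$ of them. When several chosen variables imply the same sink, enumerating minimal elements costs a factor $m$ for each additional \emph{single} true variable. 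Moreover, an enumeration cost of $m^{1/2}$ per variable becomes $m^{\omega/6}$ only after a balanced three-way split with pairwise-checkable consistency. Your sketch does not provide this split.

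The paper supplies the missing ideas in three steps.
\begin{enumerate}
\item Branch on variables with $|D(v)|\ge 3$, which is affordable by the computation above. This reduces to \emph{restricted} instances in which every variable implies at most one other variable.
\item Guess which of the isolated, $O(\sqrt m)$ many $2$-cycles are used. After that, every implication target is a sink. Hence any set of sinks, and any subset of a group $A(v)$ together with its sink $v$, is closed under implications. By Turán's theorem, both the number of groups and every group size are at most $f(k)\sqrt m$, since otherwise a solution is found greedily. So each true variable is specified at cost $O(\sqrt m)$ as a (group, member) pair.
\item Distribute all chosen groups except the two largest greedily over three bins. Guess the remaining two sinks explicitly at cost $O(m)$, and use their members to rebalance the bins to at most $\lceil (k-2)/3\rceil$ variables each. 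This yields a triangle instance with parts of size $O(\sqrt{m}^{\,k/3})$ and total time $O(m^{\omega k/6+1})$.
\end{enumerate}
Without this grouping-and-balancing mechanism, your proposal does not establish the algorithmic half of the statement.
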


Thus, for $1\le \gamma < 2$, while $\CSP_k^\gamma(\NAND)$ is trivial and $\CSP_k^\gamma(\IMPL)$ has a subexponential time complexity $n^{g(k)}$, the combination of both constraints has an exponential time complexity that is a natural interpolation of the $k$-IS running time.

The above theorem consists of two parts: (1) a conditional lower bound  and (2) a matching algorithm. The conditional lower bound is strikingly simple: We can introduce $\approx n-n_0$ $\IMPL$-constraints enforcing that any satisfying assignment with $k$ nonzeroes chooses its nonzero variables from a set $V_0\subseteq V$ of size $n_0$. On $V_0$, we can thus embed an arbitrary, possibly dense instance as long as $n_0^2 = O(m)$. Thus, by choosing $n_0 \approx \sqrt{m}$, we can reduce from a $k$-clique instance in an $\sqrt{m}$-vertex graph, yielding a conditional lower bound of $m^{k\omega/6-o(1)}$, as desired. %\marvin{introduce the name for this?}

The corresponding upper bound does not follow as easily. To beat  time $O(n^{\omega k/3})$ when the input contains only few \NAND- and \IMPL-constraints, one might hope for a ``reversal'' of the reduction in the conditional lower bound: is it possible to quickly ``peel off'' most of the variables so that a small set of $O(\sqrt{m})$ variables survives? Unfortunately, possibly intricate interactions between \IMPL- and \NAND-constraints prevent simple preprocessing schemes. Instead, we perform a careful combination of arguments:
%\begin{itemize}
    %\item 
    We first preprocess the instance to obtain a structured setting in which crucially each variable implies at most one other variable.
     We use this structured setting to group the variables, yielding a win-win argument: Either (1) there exists a ``large'' group with few $\NAND$ constraints, and we can detect a solution within this single group greedily, or (2) all groups are ``small''.
    In the remaining case in which all groups are ``small'', we can carefully reduce to an almost-balanced Triangle Detection instance.
%\end{itemize}
For more details, we refer to Section~\ref{sec:technical-overview} and the the full proof in Section~\ref{sec:NAND_IMPL}.

Interestingly, we observe that among the binary constraint families, the regime of $\CSP^{\gamma}_k(\{\NAND,\IMPL\})$ is the only new regime that emerges. Specifically, we arrive at the the following classification, illustrated in Figure~\ref{fig:binary_regime_overview}.

\begin{figure}[h]
    \centering
    \begin{subfigure}[c]{0.48\textwidth}
        \centering
        \includegraphics[width=0.85\textwidth]{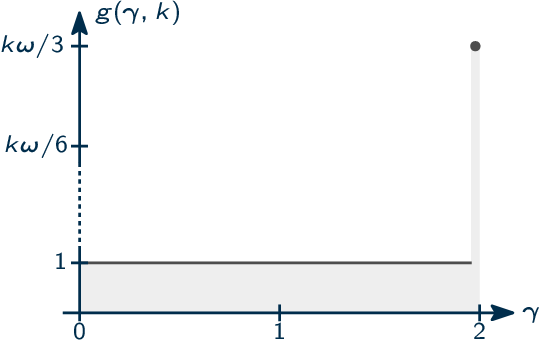}
        \caption{ $\CSP_k^\gamma(\NAND)$, equivalent to $k$-\textsc{Independent Set}}
    \end{subfigure}
    \hfill
    \begin{subfigure}[c]{0.48\textwidth}
        \centering
        \includegraphics[width=0.85\textwidth]{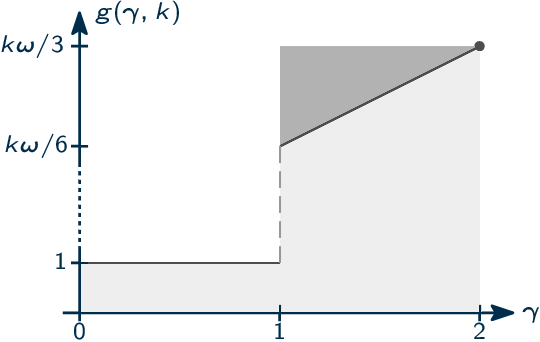}
        \caption{$\CSP_k^\gamma(\{\NAND\} \cup \Fam')$ for any binary, non-trivial family $\Fam'$}    
    \end{subfigure}

    \vspace*{0.5cm}
    
    \begin{subfigure}[c]{0.48\textwidth}
        \centering
        \includegraphics[width=0.85\textwidth]{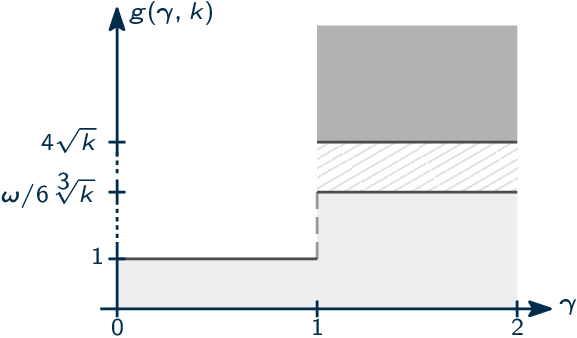}
        \caption{$\CSP_k^\gamma(\{\IMPL\} \cup \Fam')$ for any binary family with $\NAND \notin \Fam'$} 
    \end{subfigure}
    \hfill
    \begin{subfigure}[c]{0.48\textwidth}
        \centering
        \includegraphics[width=0.85\textwidth]{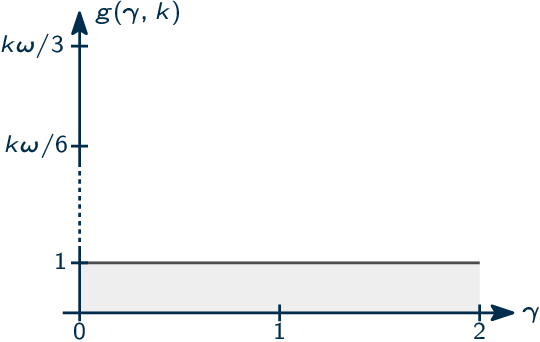}
        \caption{$\CSP_k^\gamma(\Fam)$ for any binary family with $\NAND, \IMPL \notin \Fam$}    
    \end{subfigure}

    \caption{Relationship between runtime and density for our different regimes: Given some density $\gamma$ and family $\Fam$ we obtain upper and conditional lower bounds of the form $n^{g(\gamma,k)}$. For improved readability, we state the exponent ${g(\gamma,k)}$ up to an additive constant independent of $k,\gamma$. Under this tolerance, these regimes are tight, except for case \textit{(c)} where the knowledge gap between $n^{\Omega(\sqrt[3]{k})}$ and $n^{O(\sqrt{k})}$ from the dense case transfers.}
    \label{fig:binary_regime_overview}
\end{figure}

\begin{theorem}[Sparsity Classification for Binary Constraint Families]\label{thm:classification-informal}
Let $\Fam$ be a family consisting exclusively of binary constraint functions. For $0\le \gamma < 1$, $\CSP_k^\gamma(\Fam)$ can be solved in $\O(n)$. Furthermore,
\begin{itemize}
    \item If $\Fam = \{\NAND\} \cup \Fam'$ for some non-empty family $\Fam'$, the optimal time complexity of $\CSP_k^\gamma(\Fam)$ for $1\le \gamma\le 2$ is $\Theta(m^{\omega k/6 \pm c})$, assuming the clique hypothesis.
    \item If $\Fam = \{\NAND\} $, then $\CSP^\gamma_k(\Fam)$ is trivial for $\gamma <2$. For $\gamma=2$, the optimal time complexity is $\Theta(n^{\omega k/3 \pm c})$, assuming the clique hypothesis.
    \item If $\NAND\notin \Fam$, but $\IMPL\in \Fam$, then the optimal time complexity of $\CSP^\gamma_k(\Fam)$  for $1\le \gamma \le 2$  is $n^{g(k)}$ with $\Omega(\sqrt[3]{k}) \le g(k) \le O(\sqrt{k})$, assuming the clique hypothesis.
    \item Finally, if $\IMPL,\NAND \notin \Fam$, then $\CSP^\gamma_k(\Fam)$ can be solved in time $O(n+m)$ for all $1\le \gamma \le 2$.
\end{itemize}
\end{theorem}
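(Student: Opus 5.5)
The classification splits into the four cases of the statement, preceded by a common treatment of the regime $\gamma<1$. Throughout, we use the known dense-case classification~\cite{Marx05,KunnemannM20}, the two baselines stated above for $\NAND$ and $\IMPL$, and the $\{\NAND,\IMPL\}$ theorem stated above, with its padding lower bound and its triangle-detection-based algorithm.

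\textbf{Sparse regime $\gamma<1$.} Here $m=o(n)$, so at least $n-2m$ variables occur in no constraint. Let $S$ be this set of free variables. Any assignment that sets only variables of $S$ to true satisfies every constraint $f$ with $f(0,0)=1$. For these constraint types we output $k$ variables of $S$, which takes $O(n)$ time. If some constraint $f$ in use has $f(0,0)=0$, then each of its scopes must contain a true variable. Since $k$ is constant, such a constraint type can appear only boundedly often in a yes-instance (beyond that the answer is no), or otherwise it forces structure within the $O(m)$ constrained variables. In that case we brute-force over the $O(m)=o(n)$ constrained variables involved, using that at most $k$ of them are set to true. Handling such 0-forcing types within a linear-time bound is the delicate point. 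The concrete plan is a small case analysis over the binary functions with $f(0,0)=0$ (OR, $x\wedge y$, XOR, EQ-type, $x$, and so on), treating each as a bounded branching or a propagation over the $O(m)$ constrained variables.

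\textbf{Case $\Fam=\{\NAND\}\cup\Fam'$ with $\Fam'$ nonempty and $1\le\gamma\le 2$.} For the lower bound we generalize the padding argument: every nontrivial binary $f\neq\NAND$ should let us confine the solution to a set $V_0$ of size $\approx\sqrt m$ using $O(n)$ constraints. For $\IMPL$ this works as in the text, by chaining $V\setminus V_0$ into a variable that then cannot be set. For $f$ with $f(0,0)=0$ we instead use the constraint to force specific variables to be true, which consumes a constant part of the budget $k$, and pair it with $\NAND$ to kill the outside variables. For example, if we force a variable $z$ to be true and add $\NAND(z,v)$ for all $v\notin V_0$, this costs $n$ constraints. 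Checking this for every binary function, up to the symmetric ones, is the routine core of the case. We then embed $k$-IS on $\sqrt m$ vertices and obtain $m^{\omega k/6-c}$. For the upper bound we reduce $\Fam$ to $\{\NAND,\IMPL\}$ plus unary constraints, preserving sparsity. Constraints like OR and XOR create a bounded number of forced true variables or small branchings, which we resolve by $O(1)$ branching. Constraints like EQ are handled by contracting variables, and unaries by deletion. After this we invoke the $\{\NAND,\IMPL\}$ algorithm.

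\textbf{Remaining cases.} For $\Fam=\{\NAND\}$ and $\gamma<2$, triviality follows from Tur\'an's theorem, and the case $\gamma=2$ is the clique-hypothesis baseline. For $\IMPL\in\Fam$ with $\NAND\notin\Fam$, the dense upper bound $n^{O(\sqrt k)}$ carries over after the same normalization. For the lower bound at each $\gamma\ge1$, we take a dense instance of $\CSP_k(\IMPL)$ on $n_0=n^{\gamma/2}$ variables and pad it with $n$ isolated variables. Naive padding makes the problem trivial, so the padding variables must instead be made useless, for instance by $\IMPL(v,w)$ chains ending at a variable that cannot be chosen. This is the step I expect to be the main obstacle, because without $\NAND$ we cannot outright forbid variables. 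The plan is to chain each padding variable $v_i$ to a long implication path of length $>k$, using only $O(n)$ constraints. Finally, if $\IMPL,\NAND\notin\Fam$, the dense classification makes $\Fam$ FPT via a linear-time propagation or kernel argument, which we would check runs in $O(n+m)$.
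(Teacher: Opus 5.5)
Your overall decomposition matches the paper's: Marx's branch-and-bound for $0$-invalid constraints, padding embeddings for the lower bounds, Tur\'an for pure $\NAND$, the K\"unnemann--Marx bounds for $\IMPL$, and a reduction to the $\{\NAND,\IMPL\}$ algorithm. However, the step you yourself flag as the main obstacle fails as you propose it.

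Chaining padding variables into implication \emph{paths} of length $>k$ creates new usable variables rather than removing them. If each padding variable gets its own path, any $k$ path endpoints have no descendants, so together they form a weight-$k$ satisfying assignment. If all padding variables lie on one path $y_1\to\dots\to y_L$, the suffix $y_{L-k+1},\dots,y_L$ is closed under implication and is itself a solution. Either way the padded instance becomes a trivial YES-instance and the reduction breaks.

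With $\IMPL$ alone, the way to forbid a variable without creating new sinks is to place it in a strongly connected component of size $>k$. For example, put all padding variables on a single directed $\IMPL$-cycle. This costs $O(n)$ constraints and forces all of them to $0$, and further $\IMPL$-constraints inside this set can be added freely to tune the density. Also note that shrinking the core to $n^{\gamma/2}$ variables loses a factor $\gamma/2$ in the exponent. This is harmless only because the statement gives $g(k)$ up to $\Omega(\cdot)$. The paper instead relies on the K\"unnemann--Marx hard instances already being sparse.

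Two further steps are only gestured at, and the gestures as written do not yield linear time.

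For $\gamma<1$, a $0$-invalid constraint type can occur unboundedly often in a YES-instance; for example, $\mathrm{OR}(z,v_i)$ for all $i$ is satisfied by $z=1$. Brute force over $k$-subsets of the constrained variables costs $m^k$, not $O(n)$. What works, and what the paper does, is the branch-and-bound itself:
\begin{itemize}
\item pick any constraint violated by the all-zero assignment and branch on which of its two variables is set to $1$;
\item after depth at most $k$, every remaining constraint is $0$-valid;
\item the at least $n-2m\ge k$ variables occurring in no constraint then complete any surviving branch.
\end{itemize}

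For $\IMPL,\NAND\notin\Fam$, ``FPT by the dense classification'' only gives $f(k)n^{O(1)}$. The linear-time argument runs as follows. In the binary case, branch-and-bound introduces only unary constraints. Afterwards the only remaining $0$-valid functions are $\NOR$, $\neg x$ and $\EQ$. Delete every variable occurring in a $\NOR$ or $\neg x$ constraint. The $\EQ$-components must then be chosen wholesale, which is a subset-sum instance with target $k$ and, after capping multiplicities, $O(k^2)$ items.

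Finally, two smaller points in the clique regime:
\begin{itemize}
\item Contracting $\EQ$-components produces weighted variables, which the $\{\NAND,\IMPL\}$ algorithm does not handle. Replace $\EQ(x,y)$ by $\IMPL(x,y)\wedge\IMPL(y,x)$ instead.
\item For $\mathrm{OR}$ and $\mathrm{XOR}$ no single variable can be forced true. The paper's gadget uses two auxiliaries with $f(z_1,z_2)\wedge\NAND(z_1,z_2)$ and adds $\NAND(z_j,v)$ for every padding variable $v$.
\end{itemize}
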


We prove this Theorem in~\autoref{sec:binary}.

\paragraph*{Towards Higher-Arity Constraint Families: Phase Transition At Triviality Cutoff}

Classifying the influence of sparsity for higher-arity constraint families becomes significantly more challenging. %As suggested by our previous results, it is essential to study the \emph{interplay} of different constraint functions. We shall thus focus on families $\Fam$ that combine one of the central functions $\NAND$ -- more generally, the $r$-ary $\NAND_r$ for $r\ge 2$ -- or $\IMPL$ with an additional constraint function $f$. \marvin{last sentence probably to delete}
Already the question how many constraints are minimally required to define a non-trivial instance is not obvious. We shall however see, that for a large class of constraint families, this quantity is decisive to understand the time complexity, as we observe a phase transition at this \emph{triviality cutoff}.
%\paragraph*{Triviality cutoff}

%As we observed for binary constraint families $\Fam$, it is crucial to analyze for which values of the sparsity parameter $\gamma$ the problem $\CSP_k^\gamma(\Fam)$ remains trivial.

Specifically, we say that $\CSP_k^\gamma(\Fam)$ is \emph{non-trivial} if there are infinitely many NO instances in $\CSP_k^\gamma(\Fam)$; conversely, it is \emph{trivial}, if for all sufficiently large $n$, any instance of $\CSP_k^\gamma(\Fam)$ with $n$ variables is a YES instance. For any constraint family $\Fam$, we define the \emph{triviality cutoff} $\triv$ as the smallest $\gamma\ge 0$ such that $\CSP_k^\gamma(\Fam)$ is non-trivial. Thus, every $n$-variable instance with sufficiently large $n$ and $O(n^{\triv-\epsilon})$ constraints from $\Fam$ contains a solution, while there exists infinitely many instances with $\Theta(n^\triv)$ constraints from $\Fam$ that do not contain a solution with $k$ nonzeroes.%\marvin{special cases: $\triv = 0$ if no value of $\gamma$ is trivial}.

We first introduce a parameter $\umin(\Fam)$ that describes this triviality cutoff. Specifically, let $\umin(f)$ denote the smallest weight $\|x\|_1$ of an assignment $x$ violating $f$, i.e., $\umin(f) = \min \{ \|x\|_1 \mid f(x) = 0\}$. Defining $\umin(\Fam)\coloneqq \min_{f\in \Fam} \umin(f)$, we can establish the triviality cutoff $\triv$ as $\umin(\Fam)$. 

\begin{theorem}[Triviality Cutoff]
\label{thm:triviality_cutoff}
Let $\Fam$ be any finite constraint family. The problem $\CSP_k^\gamma(\Fam)$ with $\gamma = \umin(\Fam)$ is non-trivial. Conversely, if $\gamma< \umin(\Fam)$, then there exists $n_0$ such that any instance of $\CSP_k^\gamma(\Fam)$ with at least $n_0$ variables admits a satisfying assignment of weight~$k$.
\end{theorem}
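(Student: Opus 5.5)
The proof splits into the two directions of the statement. Throughout, let $u=\umin(\Fam)$, attained by some $f\in\Fam$ of arity $r$ and some violating assignment $x^*$ with $\|x^*\|_1=u$. We may assume $1\le u\le k$: if $u=0$, the claim $\gamma=0$ is non-trivial should be handled directly, and a $u>k$ family only constrains assignments of weight above $k$ in any relevant way.

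\emph{Non-triviality at $\gamma=u$.} I would construct, for infinitely many $n$, a NO instance with $\Theta(n^u)$ constraints. For every $u$-subset $S=\{v_1,\dots,v_u\}$ of the variables, place one constraint $f$ whose $u$ positions with $x^*_i=1$ receive $v_1,\dots,v_u$. The remaining $r-u$ positions (the zero positions of $x^*$) receive further distinct variables; to keep them outside any solution, I would fix them as dummy variables $z_1,\dots,z_{r-u}$, distinct from $S$.

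Any weight-$k$ assignment then contains some $u$-subset $S$ of true non-dummy variables, provided it sets at least $u$ non-dummy variables true. If all dummies are false, that constraint sees exactly $x^*$ and is violated. The difficulty is that a solution may set dummies to true. To prevent this, I would use $\Theta(k)$ disjoint copies of dummy blocks, rotating which block serves each subset (for instance, taking $k+1$ blocks and adding the constraints for every block). A weight-$k$ assignment leaves at least one block entirely false, and its $k$ true variables include at least $u$ non-dummy variables or a combination that again yields a violated constraint. If needed, dummies can be chosen from the main pool itself: any set $Z$ of $r-u$ variables disjoint from $S$ and avoided by the solution exists because only $k$ variables are true. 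So I would simply add constraints for \emph{all} choices of the zero-positions, which costs $n^{r}$ constraints and is too many.

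Reconciling this is the main obstacle. My plan is to restrict the zero-positions to a fixed pool $P$ of $k+r$ variables: for each $S$, add constraints for all $(r-u)$-tuples from $P\setminus S$. This uses $O(n^u)$ constraints. Some tuple in $P$ avoids both $S$ and the true variables, since at most $k$ variables of $P$ are true and $|P|\ge k+r$. Every weight-$k$ assignment has $k\ge u$ true variables, so some $S$ of them yields a violated constraint. The count is $\Theta(n^u)$, and the count can be padded if necessary.

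\emph{Triviality below $u$.} For $m=O(n^{u-\epsilon})$, I would build a solution greedily. Every assignment of weight less than $u$ restricted to a constraint's scope satisfies it. A constraint can therefore be violated only if at least $u$ of its variables are true. It suffices to find a $k$-set $T$ such that no constraint scope meets $T$ in at least $u$ variables while the local pattern is violating. It is enough to require that no scope contains $u$ variables of $T$, i.e.\ $T$ is independent in the $u$-uniform hypergraph $H$ whose edges are all $u$-subsets of scopes. $H$ has at most $\binom{r}{u}m=O(n^{u-\epsilon})$ edges.

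A greedy or deletion argument then gives an independent set of size $k$. Call a $(u-1)$-set heavy if it lies in more than $n^{1-\epsilon/2}$ edges. Choose vertices one at a time, each time avoiding every vertex that completes an edge with $u-1$ already chosen vertices. The already chosen vertices form $O(k^{u-1})$ subsets, but those subsets may be heavy. To handle this, first discard vertices of high degree: by averaging, all but $o(n)$ vertices have degree $O(n^{u-1-\epsilon/2})$, and the argument inducts on $u$ via the link hypergraphs. Alternatively, a random $k$-set has expected number of contained edges $O(n^{u-\epsilon}\cdot(k/n)^u)=o(1)$, which directly gives existence for $n\ge n_0$. That averaging bound is the one I would use.
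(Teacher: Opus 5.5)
Your proof is correct, and it takes a genuinely different and in places simpler route than the paper.

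\textbf{Non-triviality.} The paper splits into three cases:
\begin{itemize}
\item $\umin(f)=0$: it uses $k+1$ disjoint constraints.
\item $\umin(f)=1$: it uses $O(n)$ copies of the $\textsc{LessThan}^K_1$ gadget (built from $f_{\mathrm{sym}}$ on all $h$-subsets of $k+h$ variables), covering every variable.
\item $\umin(f)\ge 2$: it appeals to the hardness reduction from $k$-IS via the dense embedding.
\end{itemize}
Your single construction handles every $u\le k$ at once. You place every $u$-set on the one-positions of a minimum-weight violating $x^*$ and fill the zero-positions with all tuples from a fixed pool $P$ of $k+r$ variables. This needs no symmetrization, no gadget, and no auxiliary variables forced to $0$. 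It rests only on the pigeonhole fact that $P$ always contains $r$ false variables. It even covers $u=0$ (take $S=\emptyset$, giving $\Theta(1)$ constraints on $P$), so the separate treatment you defer is unnecessary.

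\textbf{Triviality.} The paper gives a constructive greedy procedure. It picks, by averaging, a variable of low degree for every function, sets it to $1$, and replaces each incident constraint by its restriction $f_i$ with $\umin(f_i)\ge \umin(f)-1$. It then checks that the per-function invariant $m_f=O(n^{\umin(f)-\varepsilon})$ survives $k$ rounds. Your first-moment argument on the $u$-uniform shadow hypergraph is much shorter and suffices for the existential statement: the expected number of shadow edges inside a random $k$-set is $O(m n^{-u})=o(1)$. However, only the paper's version yields the $O(n+m)$-time algorithm the paper uses elsewhere. Your argument also extends to per-function densities if you sum $m_g n^{-\umin(g)}$ over $g\in\Fam$.

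\textbf{Small corrections.}
\begin{itemize}
\item For $u>k$ the non-triviality claim is not merely degenerate but false, since every weight-$k$ assignment satisfies every constraint. Rather than ``assuming'' $u\le k$, you should say the statement implicitly requires it.
\item In the edge bound for the shadow hypergraph, $r$ must be the maximum arity over $\Fam$, not the arity of the particular $f$.
\item The abandoned attempts (dummy blocks, heavy $(u-1)$-sets) should simply be removed from the write-up.
\end{itemize}
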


%\paragraph*{Phase Transition}

For a large class of constraint families, we obtain a sharp phase transition from trivial instances to requiring brute force running time, assuming the 3-uniform hyperclique hypothesis. For an illustration, see Figure~\ref{fig:beyond_binary}.

\begin{figure}
    \centering
    \begin{subfigure}[c]{0.45\textwidth}
        \centering
        \includegraphics[width=\textwidth]{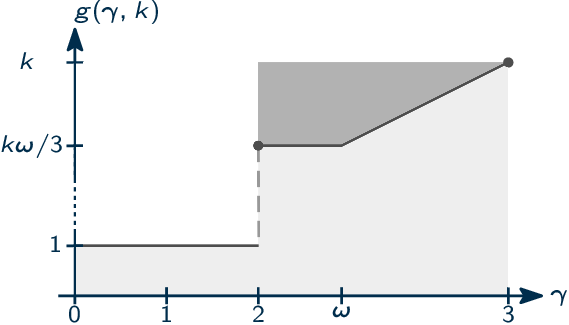}
        \caption{$\CSP_k^\gamma\big(\{\NAND_2, \NAND_3\}\big)$ equivalent to $k$-\textsc{Independent Set} in $3$-hypergraphs}
    \end{subfigure}
    \hfill
    \begin{subfigure}[c]{0.50\textwidth}
        \centering
        \includegraphics[width=\textwidth]{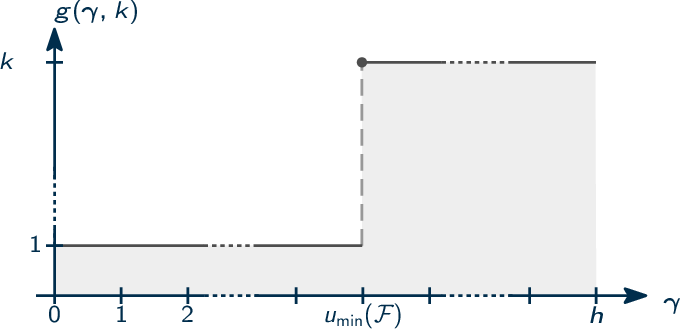}
        \caption{$\CSP_k^\gamma(\Fam)$ for any family $\Fam$ with $\umin(\Fam)\geq 3$}
    \end{subfigure}
    \hfill
    \caption{Relationship between runtime and sparsity for $\CSP_k^\gamma(\Fam)$ with higher-arity $\Fam$: (a) For $\Fam= \{\NAND_2,\NAND_3\}$, the complexity displays a nuanced relationship (a kinked slope) in the regime $2\le \gamma \le 3$; note that $\umin(\Fam)=2$. (b) showcases the sharp phase transition at the triviality cutoff for or families $\Fam$ with $\umin(\Fam) \geq 3$}
    \label{fig:beyond_binary}
\end{figure}

\begin{theorem}[Phase Transition]
\label{thm:phase_transition}
Let $\Fam$ be a constraint family. If $\umin(\Fam) \ge 3$, then $\CSP_k^\gamma(\Fam)$ is trivial for $\gamma < \umin(\Fam)$. For $\gamma \ge \umin(\Fam)$, it requires time $n^{k-o(1)}$ assuming the $\umin(\Fam)$-uniform hyperclique hypothesis.
\end{theorem}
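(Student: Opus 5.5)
The triviality part follows directly from Theorem~\ref{thm:triviality_cutoff}: for $\gamma < \umin(\Fam)$ every sufficiently large instance has a weight-$k$ solution. So the work is the lower bound for $\gamma \ge u := \umin(\Fam) \ge 3$. It suffices to treat $\gamma = u$ exactly. For larger $\gamma$ we pad the instance with dummy constraints on fresh variables, for instance constraints that only fail at high weight, or copies that never bind. A cleaner alternative is to note that an $N$-variable instance produced by the reduction can also be read as having $\Theta(N^{\gamma})$ constraints by allowing redundancy up to arity; we settle this detail at the end.

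Fix $f \in \Fam$ with $\umin(f)=u$ and arity $r \ge u$. Choose a violating assignment $a$ of weight exactly $u$, with support $P \subseteq [r]$, $|P| = u$. We reduce from $u$-uniform $k$-hyperclique, more precisely from its complement: $k$-independent set in a $u$-uniform hypergraph $H$ on $N$ nodes, which is equivalent under the $u$-uniform hyperclique hypothesis. Since $u \ge 3$, that hypothesis asserts $N^{k-o(1)}$ hardness.

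The construction has three parts.
\begin{enumerate}
\item Introduce one variable $x_v$ per node of $H$.
\item Add a small pool $Z$ of fresh variables that no solution should set to true.
\item For each hyperedge $e = \{v_1,\dots,v_u\}$, create a constraint $f$ that places $x_{v_1},\dots,x_{v_u}$ in the positions $P$ and distinct variables from $Z$ in the remaining $r-u$ positions.
\end{enumerate}

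If all of $Z$ is false, this constraint is violated exactly when all $x_{v_i}$ are true. Any other assignment restricted to these positions either has weight $< u$, hence satisfies $f$ by minimality of $u$, or keeps the $Z$-part zero and differs from $a$ on $P$. The second case is the subtle one: $f$ may also be violated by other assignments of weight $\ge u$ whose support on $P$ is incomplete. I would handle this by choosing $a$ carefully, then using a color-coding step. Partition $V(H)$ into $k$ color classes (the standard $k$-partite version of hyperclique). With zeros on $Z$, the only weight-$\le u$ patterns on $P$ are the subsets of $P$. Among those, the only violating one is $P$ itself, because any proper subset has weight $< u$. So within the variables of one constraint, the set of true variables is a subset of $e \cup Z$. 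If $Z$ is all false, this is a subset of $P$, and such a constraint is violated iff all of $e$ is true. This gives exactly the hyperedge semantics.

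The main obstacle is forcing $Z$ to be all false without an \IMPL-like constraint, since $\Fam$ may contain only $f$. I would try to make $Z$ large relative to $k$ but reused, with $|Z| = r-u$ and shared across all constraints. A solution of weight $k$ that sets some $z \in Z$ true then uses only $k-1$ node variables. To rule this out, I would duplicate the construction by taking $k+1$ disjoint pools $Z_1,\dots,Z_{k+1}$ and, for every hyperedge, one constraint per pool. Some pool is then entirely false, and that pool's constraints enforce independence of the true node variables. Any remaining true variables in other pools only lower the number of true node variables below $k$. I still have to argue that an independent set of size $k'<k$ plus pool variables cannot constitute a spurious solution. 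For this I would pad $H$ so that the question becomes ``independent set of size exactly $k$ among node variables'', for example by reducing from a $k$-partite instance and using $k$-fold blow-ups. If this route fails, the alternative is to observe that a true pool variable in $Z_j$ together with $u-1$ node variables may still violate $f$, and to use that to block pools entirely. This yields $O(k\cdot|E(H)|) = \Theta(N^{u})$ constraints on $N + O(1)$ variables, and hence the $N^{k-o(1)}$ lower bound at $\gamma = u$.
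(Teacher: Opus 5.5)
There is a genuine gap. Nothing in your construction controls the pool variables, and as stated it maps \emph{every} hypergraph to a YES-instance as soon as $r>u$. Each pool $Z_j$ occupies only the positions outside $P$. So set one variable true in each of $k$ of the $k+1$ pools and all node variables false. Every constraint then contains at most one true variable, has weight at most $1<u=\umin(f)$, and is satisfied. Your local analysis is fine: with the non-$P$ positions at $0$, the constraint is exactly $\NAND_u$ on $e$, so the colour-coding is unnecessary. But you never guarantee that situation.

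The repairs you sketch do not close this gap. Independent sets are monotone, so no blow-up of $H$ can make ``independent set of size $k-t$'' equivalent to ``size $k$'' for all $t$. And in a fixed-position layout a true pool variable need not interact with $f$ at all: $f(y_1,\dots,y_4)=\NAND_3(y_1,y_2,y_3)$ has $\umin(f)=3$ and simply ignores the pool position.

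The missing idea is to symmetrize $f$ so that auxiliary variables count against \emph{every} hyperedge. You then argue by exchange instead of trying to force the auxiliary variables to $0$.

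\begin{itemize}
\item Place $f$ in every order on every $r$-subset of a set of $k+r$ variables. This gives the gadget $\textsc{LessThan}^{k+r}_u$ of Lemma~\ref{lem:atMostConstruction}. If between $u$ and $k$ of these variables are true, pick an $r$-subset with exactly $u$ true ones; some ordering of it is the weight-$u$ violating assignment $a$.
\item The paper's dense embedding (Lemma~\ref{lem:dense_embed}) imposes this gadget on $e\cup Y'$ for every hyperedge $e$ and every $(k+r-u)$-subset $Y'$ of a common auxiliary set $Y$.
\item Take a weight-$k$ solution with $t\ge 1$ true auxiliary variables. Then each hyperedge contains at most $u-1-t$ true node variables.
\item Swap those $t$ auxiliary variables for $t$ false node variables. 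This produces a genuine independent set of size $k$ in $H$.
\end{itemize}

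A constant-size $Y$ gives $\Theta(|E(H)|)$ constraints at $\gamma=u$. A polynomially larger $Y$ supplies the extra constraints for $\gamma>u$. This also replaces your unspecified padding with ``non-binding'' constraints, which would need exactly the same control over auxiliary variables.
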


To future work, we leave the challenge of settling the influence of sparsity for constraint families $\Fam$ with $\umin(\Fam)\in \{0,1,2\}$. As witnessed by the mixed-arity $k$-IS family $\Fam= \{\NAND_2,\NAND_3\}$ (which has $\umin(\Fam)=2$), settling such families can become technically quite demanding.

%Classifies, e.g., ...

%\paragraph*{Linear slopes}
%$\Fam = \{\NAND_{10}, \NOR\}$ has complexity $m^{k/10 \pm o(1)}$.
%\begin{theorem}
%Let $\Fam$ be a constraint family and $f,g \in \Fam$. 
%If $1 \leq \umin(f)\leq 2 < \umin(g)$ then $\CSP_k^\gamma(\Fam)$ is trivial for $\gamma \leq \umin(f)$.
%For $\umin(f) \leq \gamma \leq \umin(g)$,
%it requires time $m^{k/\umin{g} - o(1)}$ assuming the $\umin(g)$-uniform $k$-hyperclique hypothesis.
%\end{theorem}
%(last case is captured by previous theorem, should we state this?)
%\paragraph*{Staircases}
%
%\paragraph*{Sloped staircase}
%see Timo's pictures on time complexities $\mathcal{F} = \{\NAND_r, XOR\}$
%
%
%\paragraph*{Towards a full classification}
%
%$\{\NAND_2,\NAND_3\}$ case:
%We can indeed beat the trivial $\O(n^k)$ algorithm by some $0<\epsilon< 1.$
%
%
%Can you beat essentially brute-force for $\{\NAND_r, EQ\}$?
%
%Capture difference between NOR, EQ, IMPL in combination with $\NAND_r$.
%
\subsection{Detailed Results and Technical Overview} \label{sec:technical-overview}
We highlight our main technical contributions, and provide a more detailed overview of our results. 
We begin by presenting our algorithmic contributions for $k$-Independent Set problem in $3$-hypergraph and developing the core framework that will also be useful in extending to general hypergraphs of higher arity. 
\paragraph*{Algorithm for $k$-Independent Set in Sparse $3$-Hypergraphs}
Let $H$ be a $3$-Hypergraph\footnote{That is, each hyperedge contains \emph{at most} $3$ vertices.} with $n$ vertices and $m$ hyperedges. 
Let $G$ be the underlying graph of $H$, obtained by removing all hyperedges of arity $3$ from $H$. 
Notice that any $k$-independent set in $H$ is also a $k$-independent set in $G$. 
We refer to the $k$-independent sets in $G$ as \emph{potential solutions}.
Formally, let $I(G,k)$ denote the set of all independent sets of size $k$ in $G$ and let $I_{\text{invalid}}(H,k)$ denote the set of all \emph{false solutions}, i.e. the set of all potential solutions that contain a hyperedge in $H$. 
We begin with a simple observation that $H$ has a $k$-independent set if and only if $|I(G,k)| - |I_{\text{invalid}}(H,k)|>0$.

Counting the potential solutions, i.e., the value $|I(G,k)|$ can be done via the classical algorithm of Nešetřil and Poljak \cite{cliqueFast}, yielding an $O(n^{\frac{k\omega}{3}})$ bound (if $k$ is divisible by $3$). 
Hence, the main algorithmic challenge is to efficiently count the invalid solutions, namely those potential solutions that violate at least one hyperedge constraint in $H$. 
We now turn our attention to this task.
For any subset of hyperedges $S$, let $I_S(H,k)$ denote the set of all potential solutions that contain all vertices spanned by $S$; that is, for each $X\in I_S(H,k)$, the induced hypergraph $H[X]$ contains all hyperedges in $S$.
Let $E_{\ge 3}$ denote the set of all hyperedges in $H$ that contain at least three vertices. We can now observe that, by definition of $I_{\text{invalid}}$, we have 
\[
I_{\text{invalid}}(H,k) = \bigcup_{e\in E_{\ge 3}} I_{\{e\}}(H,k),
\]
and for any non-empty set $S\subseteq E_{\ge 3}$
\[
I_S(H,k) = \bigcap_{e\in S}I_{\{e\}}(H,k).
\]
This gives us a natural way to compute the size of $I_{\text{invalid}}$ via the inclusion-exclusion principle as follows. 
\[
|I_{\text{invalid}}(H,k)| =  \sum_{i=1}^{\binom{k}{3}}\sum_{S\in \binom{E_{\ge 3}}{i}}(-1)^{i+1}|I_{S}(H,k)|.
\]
However, naively enumerating all sets $S$ and computing $|I_S(H,k)|$ fails to improve upon trivial $O(n^k)$ brute-force algorithm, even in the sparse regime $m=O(n)$.
To circumvent this issue, we notice that a lot of the ''higher-order'' terms in the inclusion-exclusion formula are either irrelevant, or redundant.
Specifically, such terms either (1) span more than $k$ vertices (irrelevant), or (2) can be obtained by appropriately guessing a ''lower-order'' term (redundant).
Case in point: consider a set $S$ that contains $\binom{k}{3}$ many edges. 
The only way that $S$ spans $k$ vertices is if these vertices induce a clique of size $k$ in $H$.
However, in that case, for any set $S'\subset S$ that contains $k/3$ pairwise non-intersecting hyperedges will span the same vertex set and consequently $I_{S'}(H,k) = I_S(H,k)$

This observation will allow us to drastically reduce the number of inclusion–exclusion terms that must be considered, and form the basis of our improved counting algorithm.
Intuitively, it brings us to the following win-win scenario: let $S\subset E_{\ge 3}$ of size $>k/3$. 
Then either (i) the hyperedges in $S$ are ''clustered together'' and we can get away by guessing a smaller set $S'\subset S$ ($S$ is redundant), or (ii) the hyperedges in $S$ are not ''clustered together'', but they span more than $k$ vertices, so we never have to consider $S$ at all ($S$ is irrelevant). 
However, the issue of ''clustered edges'' cannot be resolved by simply restricting our outer sum to stop at $k/3$, as that would generally lead to a lot of double counting. 
To bypass this problem, the idea is to impose a total ordering $\prec$ on the hyperedges of $H$.
Intuitively, whenever a hyperedge $e\in S$ intersects another hyperedge $e'\in E_{\ge 3}$ with $e'\prec e$, the algorithm recognizes that all the independent sets containing $V(e)\cup V(e')$ have been accounted for in another iteration $S^*$ that contains $e'$, so it avoids counting them again.
%\marvin{algorithm unclear here: do we proceed over hyperedges in this total ordering? Why is the condition asymmetric $e'\prec e$, but we only consider $V(e)\cup V(e')$ (symmetric)?}\mirza{the order in which we iterate over the hyperedges is not important. Note that $e\in S$, $e'\in E$, so we guess the edge $e$ (more precisely, a set $S$ that contains $e$), and then look at all the other edges in the hypergraph. If it intersects an edge with lower priority, we will ignore all the potential solutions that contain both $e$ and $e'$ (once we guess a set $S$ that contains $e'$, we will count this, and by doing it in this way, we avoid double counting)}\marvin{the later aspect is missing in the idea: that the independent sets containing $V(e)\cup V(e')$ are accounted for by a different set $S'$...}
It remains to formalize and implement this idea efficiently, which we address next.
\subparagraph*{Avoiding double counting} 
Fix an arbitrary total order $\prec$ on the hyperedges of $H$.
For any nonempty set of hyperedges $S$, let $I'_S(H,k)$ denote the set of all potential solutions $X$ of size $k$ that satisfy the following two conditions.
\begin{enumerate}
    \item $X$ contains all vertices spanned by the hyperedges in $S$.
    \item For every $e\in S$ and every hyperedge $e'\in E_{\ge 3}$ with $e\cap e'\ne \emptyset$, if $e'\prec e$, then $X$ does not contain $e'$.
\end{enumerate}
%Intuitively, the second condition enforces that among potential solutions with intersecting hyperedges, only the minimum one under $\prec$ is accounted for.
We claim that replacing $I_S$ by $I'_S$ in the inclusion-exclusion formula preserves correctness. In particular, we show that
\[
    I_{\text{invalid}}(H,k)= \bigcup_{e\in E_{\ge 3}} I'_{\{e\}}(H,k)
\]
The rough idea is as follows. Consider any $X\in I_{\text{invalid}}(H,k)$. By definition, the induced subhypergraph $H[X]$ contains at least one hyperedge (and possibly as many as $\binom{k}{3}$). Let $e$ be the minimum hyperedge in $H[X]$ with respect to the ordering $\prec$; that is, for every edge $e'\in E(H[X])\setminus \{e\}$, we have $e\prec e'$.
By construction, $X$ satisfies both defining conditions of $I'_{\{e\}}(H,k)$, and hence $X\in I'_{\{e\}}(H,k)$.
Particularly, this shows that ${I_{\text{invalid}}(H,k)\subseteq \bigcup_{e\in E_{\ge 3}} I'_{\{e\}}(H,k)}$. 
The containment in the other direction follows immediately by noticing that for each $S\subseteq E_{\ge3}$, we have $I'_S(H,k)\subseteq I_S$ and as argued previously, $I_{\text{invalid}}(H,k)= \bigcup_{e\in E_{\ge 3}} I_{\{e\}}(H,k)$.
Together, these inclusions imply the claimed equality.

We further observe that any set $S\subseteq E_{\ge 3}$ containing more than $k/3$ hyperedges either: (1) spans more than $k$ vertices, or (2) contains two hyperedges $e,e'$ such that $e\cap e'\ne \emptyset$.
In the first case, since $|V(S)|>k$, no independent set of size $k$ can contain $S$. Also, in the second case either $e\prec e'$, or $e'\prec e$. In either case, any potential solution in $I_S(H,k)$ containing both $e$ and $e'$ violates the second condition in the definition of $I'_S(H,k)$.
In particular, in both cases the set $I'_S(H,k)$ is empty.
Consequently, all ''higher-order'' inclusion-exclusion terms corresponding to $|S|>k/3$ vanish.
Combining this with the observations above, we have the following refinement of our inclusion-exclusion formula.
\[
    |I_{\textnormal{invalid}}(H,k)| =  \sum_{i=1}^{\lceil\frac{k}{3}\rceil}\sum_{S\in \binom{E_{\ge3}}{i}}(-1)^{i+1}|I'_{S}(H,k)|.
\]
The remaining challenge is that, unlike the original sets $I_S(H,k)$, the sets $I'_S(H,k)$ do not admit an obvious efficient counting algorithm.
We are now going to prove that we can indeed compute their sizes efficiently.

\subparagraph*{Computing $|I'_S(H,k)|$} 
Fix a set $S\subset E_{\ge3}$ and consider any hyperedge $e'\in E_{\ge 3}$. 
Suppose that there exists a hyperedge $e\in S$ such that (i) $e\cap e'\ne \emptyset$, and (ii) $e'\prec e$.
In this case, any potential solution containing $e'$ would violate the second condition in the definition of $I'_S(H,k)$, and we enforce this restriction via the following reduction.

If $|e'\setminus e| = 1$, we delete the unique vertex in $e'\setminus e$ from the hypergraph.
If instead $|e'\setminus e| = 2$, we replace the hyperedge $e'$ by an edge connecting the two vertices in $e'\setminus e$ (see Figure~\ref{fig:intersectiontypes} for an illustration\marvin{move picture here?}).
Let $H'$ be the hypergraph obtained from $H$ by iterating this reduction over all hyperedges $e'\in E_{\ge 3}$.
We show that this transformation preserves exactly the desired solutions, namely that $I'_S(H,k) = I_S(H',k)$.
Recall that the set $I_S(H',k)$ can be computed by running the Nešetřil, Poljak~\cite{cliqueFast} $k$-clique algorithm on an appropriate subgraph of the underlying graph of $H'$.
This yields an efficient procedure for computing $|I'_S(H,k)|$ and consequently an efficient way to compute $|I_{\text{invalid}}(H,k)|$. 
By implementing this carefully, we can bound the running time for computing this value as follows (assuming $k$ is divisible by $3$, otherwise we get a small polynomial overhead). \marvin{Should one say more here here? this bounds assumes iterating only over \emph{matchings} $S$, can this be mentioned here?}
    \begin{align*}
        T_k(n,m) &\le  \mathcal{O}\left(m^{\frac{k}{3}} + \sum_{i=0}^{\frac{k}{3}-1} m^i\left( m + n^{(k-3i)\omega/3} \right) \right)\\
        & \le \mathcal{O}\left(m^{\frac{k}{3}} +  n^{k\omega/3}   \right).
    \end{align*}
For more details on the implementation, as well as the detailed computation on this bound, we refer the reader to Section \ref{sec:k-IS-3-hypergraphs}.

We complement this algorithmic result with matching conditional lower bounds.
Using a technique that we call \emph{sparse embedding}, that maps a small dense instance of $k$-Hyperclique Detection in $3$-uniform hypergraphs into a sparse instance of $k$-Independent Set Detection in $3$-hypergraphs, we show that for any density $m= \Theta(n^\gamma)$, $2\le \gamma\le 3$, no algorithm can run in time $\mathcal O(m^{k/3-\varepsilon})$, unless the $3$-Uniform Hyperclique Hypothesis fails.
Moreover, the lower bound of $n^{k\omega/3 - o(1)}$ is inherited directly from hardness of $k$-Independent Set Detection in graphs.
Combining these two lower bounds establishes the conditional optimality of our algorithm 
across the entire sparsity spectrum. \footnote{When $m= \Theta(n^\gamma)$ for some $\gamma<2$, every such instance is a trivial yes-instance as a consequence of Turan's theorem; see Section~\ref{sec:triviality-cutoff} for more details.}

\paragraph*{Extending the Algorithm to General $h$-Hypergraphs}
Let $H$ be an $h$-Hypergraph for some $h\ge 3$, with $n$ vertices and $m = \sum_{i=2}^{h} m_i$ hyperedges, where each $m_i$ is the number of hyperedges of arity $i$.
The goal is to extend the algorithmic framework of detecting $k$-Independent Sets in $3$-Hypergraphs to this more general setting.
We first show that a straightforward extension of the techniques developed for $3$-hypergraphs yields an improvement over the brute-force $O(n^k)$ running time for general hypergraphs, provided that $m_i\le O(n^{3-\varepsilon_i})$ for every arity $i$ (and consequently $m = O(n^{3-\varepsilon})$). 
More precisely, we prove the following proposition, which generalizes the algorithm for $3$-hypergraphs and provides a unified upper bound for $k$-Independent Set in general $h$-hypergraphs.
\begin{restatable}{proposition}{HigherArityAlgo}
\label{prop:higher-arity-algo}
    Given any $h$-hypergraph $H$ with $n$ vertices and $m$ edges, for every $k$ divisible by $3$, there is an algorithm deciding if $H$ contains a $k$-independent set in time $\mathcal{O}\left(\min\left\{n^k, n^{\frac{k\omega}{3}} + m^{\frac{k}{3}}\right\}\right)$. 
\end{restatable}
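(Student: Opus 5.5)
The $n^k$ term is handled by brute force: enumerate all $\binom{n}{k}$ vertex sets and test each for independence in $O(1)$ time using a hash table of the hyperedges of arity at most $k$. Edges of arity larger than $k$ can be discarded at the outset, since they never lie inside a $k$-set. For the second term we generalize the 3-hypergraph counting scheme. Let $G$ be the graph formed by the arity-2 edges and let $E_{\ge 3}$ be the hyperedges of arity $3,\dots,\min(h,k)$. Then $H$ has a $k$-independent set if and only if $|I(G,k)|-|I_{\text{invalid}}(H,k)|>0$. The first quantity is computed by the counting version of Nešetřil--Poljak on the complement of $G$, in time $O(n^{k\omega/3})$.

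For $|I_{\text{invalid}}(H,k)|$ we fix a total order $\prec$ on $E_{\ge3}$ and define $I'_S(H,k)$ exactly as before: $X$ contains $V(S)$, and no $e'\in E_{\ge3}$ with $e'\prec e$ and $e'\cap e\neq\emptyset$ for some $e\in S$ is contained in $X$. The argument via the $\prec$-minimal hyperedge inside $X$ carries over verbatim and gives $I_{\text{invalid}}=\bigcup_e I'_{\{e\}}$. It also shows that $I'_S$ is empty whenever $S$ contains two intersecting hyperedges. Hence only matchings $S$ contribute. Since each hyperedge has at least 3 vertices, a matching with $|V(S)|\le k$ has at most $k/3$ edges. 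Inclusion--exclusion therefore reduces to a sum over matchings $S\subseteq E_{\ge3}$ with $|V(S)|\le k$.

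To compute $|I'_S(H,k)|$ we shrink the forbidden edges. Fix $S$, set $U=V(S)$, and remove $U$ together with all vertices adjacent to $U$ via arity-2 edges. For every $e'\in E_{\ge3}$ that is $\prec$ some intersecting $e\in S$, replace $e'$ by $e'\setminus U$. If $|e'\setminus U|=0$ then $I'_S=\emptyset$. If it has size 1, we delete that vertex. If it has size 2, it becomes an ordinary edge. If it has size at least 3, it stays a hyperedge; unlike the $h=3$ case it does not become a graph edge, which is the main subtlety. Every remaining hyperedge of arity at least 3 that is not forced in this way simply remains, so $|I'_S|$ equals the number of $(k-|U|)$-independent sets in the residual hypergraph $H'$. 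Here we avoid recursion on the hyperedge constraints: because $I'_S$ is defined only through conditions on $S$, the quantity we need is $|I_S(H',k)|$ with respect to graph edges only, i.e.\ containment of $U$ plus independence in the graph part of $H'$. The hyperedges of $H'$ of arity at least 3 are not constraints at this level, since inclusion--exclusion accounts for them. So we must ensure that $I'_S$ encodes only the $\prec$-condition and the graph condition. Shrunken edges of size at least 3 are then genuinely new constraints of higher arity. The first thing to try is to show that they can be dropped by reorganizing the sum: refine $I'_S$ so that condition 2 applies only to hyperedges that meet $U$ in exactly the relevant way, and absorb the rest into the outer inclusion--exclusion over matchings, checking that the minimal-edge argument still gives exact coverage. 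This is the step where I expect the real difficulty.

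Given that reduction, each term is a clique count in the complement of a graph on at most $n$ vertices for $k-|U|$ vertices, costing $O(n^{(k-|U|)\omega/3}+m)$ after the $O(m)$ preprocessing that handles $e'$. Since $|U|\ge 3|S|$, summing over the at most $m^i$ matchings of size $i$ gives
\[
\mathcal{O}\Bigl(m^{k/3}+\sum_{i=0}^{k/3-1} m^i\bigl(m+n^{(k-3i)\omega/3}\bigr)\Bigr).
\]
We bound this by $\mathcal{O}(m^{k/3}+n^{k\omega/3})$ using $m^i n^{(k-3i)\omega/3}\le\max(m^{k/3},n^{k\omega/3})$, which holds by convexity (the term is a geometric progression in $i$). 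Terms with $|U|$ not divisible by 3 incur only the polynomial overhead noted in the statement for $k$ divisible by 3; the $m^{k/3}$ term absorbs $i=k/3$. Finally, taking the minimum with brute force yields the claimed bound.
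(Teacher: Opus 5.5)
There is a genuine gap, and it is exactly the step you flag as ``the real difficulty.'' Fix a matching $S$ and shrink each forced hyperedge $e'$ to $e'\setminus U$. The pieces of size at least $3$ are genuine constraints of $I'_S(H,k)$. A set $X\supseteq U$ that is independent in $G$ but contains $e'\setminus U$ contains $e'$, and condition 2 excludes it. So $|I'_S(H,k)|$ is in general not a clique count in the complement of a graph. Your sentence ``Given that reduction, each term is a clique count\ldots'' therefore rests on a reduction you never supply.

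The fix you sketch (weaken condition 2 and absorb the rest into the outer inclusion--exclusion) is not worked out, and it is not clear it can work. Condition 2 applied to intersecting pairs inside $S$ is exactly what makes $I'_S=\emptyset$ for non-matchings. If you weaken it, you must enumerate non-matching sets $S$ whose span can be smaller than $3|S|$, and the $m^i n^{(k-3i)\omega/3}$ accounting no longer applies. Separately, your claim that unforced hyperedges of arity at least $3$ ``simply remain'' contradicts your next sentence. They must be dropped, since $I'_S$ only requires independence in $G$ plus avoidance of the forced edges.

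The missing idea is the one you explicitly set aside: recursion, applied only to the shrunken forced edges. The paper builds $H''$ from three ingredients:
\begin{itemize}
\item the graph edges of $H$;
\item the shrunken edges $e'\setminus e$, where a single leftover vertex is deleted instead;
\item and nothing else: every original hyperedge of arity at least $3$ is discarded.
\end{itemize}
It shows that $I'_S(H,k)$ is exactly the set of $k$-independent sets of $H''$ containing $V(S)$ (Lemma~\ref{lemma:higher-arity-I'}). Each shrunken edge is a proper subset of an edge of $H$, so $H''$ has arity at most $h-1$ and at most $m$ edges (Observation~\ref{obs:properties-of-h''}).

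Because the algorithm computes the number of $k$-independent sets exactly, it can be called recursively, by induction on $h$. The base case is the $3$-hypergraph algorithm, where every shrunken piece has size at most $2$. The recursive call counts $(k-|V(S)|)$-independent sets in $H''-N_{G''}(S)$ in time $\mathcal{O}\bigl(n^{(k-3i)\omega/3}+m^{(k-3i)/3}\bigr)$. The new term contributes $m^i\cdot m^{(k-3i)/3}=m^{k/3}$ for each $i$, which is absorbed. So your final summation goes through unchanged once this recursion replaces the unproved graph-only reduction.
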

On a high level, we are taking the same blueprint as in the $3$-hypergraph case: we count the potential solutions and subtract the number of invalid solutions. Recall,
\[
    |I(H,k)| = |I(G,k)|- |I_{\text{invalid}}(H,k)|,
\]
where $G$ denotes the underlying graph of $H$.
While the value $|I(G,k)|$ can still be computed efficiently, evaluating $|I_{\text{invalid}}(H,k)|$ becomes more complicated as we increase the arity.
As before, we employ the inclusion-exclusion formula, and eliminate the higher order terms in the same way, by using the following formula. 
\[
    |I_{\textnormal{invalid}}(H,k)| =  \sum_{i=1}^{\lceil\frac{k}{3}\rceil}\sum_{S\in \binom{E_{\ge3}}{i}}(-1)^{i+1}|I'_{S}(H,k)|.
\]
However, the main difficulty lies in computing $|I'_S(H,k)|$. 
Recall that for $3$-hypergraphs, this was achieved by imposing an ordering on hyperedges, and applying the two simple reduction rules (see Figure~\ref{fig:intersectiontypes}), thereby reducing the problem to computing a small independent set in an appropriate graph. 
For hypergraphs of higher arity, however, when two hyperedges $e,e'$ intersect, the set $e'\setminus e$ might contain more than two vertices, so simply adding an edge of arity $2$ no longer suffices.
A natural generalization of the ''type $2$'' reduction rule is to replace $e'$ by a hyperedge spanning $e'\setminus e$.
While this preserves correctness of the construction, crucially, it reduces the problem only to detecting an independent set in hypergraphs, and we can no longer apply the matrix-multiplication-based algorithm to improve upon trivial runtime (see \cite{DBLP:conf/soda/LincolnWW18} for a detailed discussion). 
Fortunately, we can circumvent this issue by observing that the new hypergraph has strictly smaller arity than $H$, allowing us to apply induction on the arity, with the $3$-hypergraph algorithm as a base case.

In fact, we improve upon this by using a combination of the three techniques (1) partitioning the hyperedges of $H$ into ''sparse'' and ''dense'' components (2) applying the inclusion-exclusion procedure described above to the sparse part (3) using sparse witness enumeration to handle the dense parts.
With a careful implementation of these techniques, and by addressing several technical obstacles, we can finally prove the following upper bound for the $k$-Independent Set problem on general hypergraphs. %\mirza{Do you guys think this is too vague? Should I provide more details?}
\begin{restatable}[$k$-Independent Set Algorithm for Higher Arity Hypergraphs]{theorem}{MixedArityAlgo}
\label{thm:mixed-arity-algo}
    Given any $h$-hypergraph $H$ with $n$ vertices and $m_i = \Theta(n^{\gamma_i})$ edges of arity $i$ (for each $i\ge 2$), there is an algorithm deciding if $H$ contains a $k$-independent set for any $k$ divisible by $3$ in time
    \[
    \mathcal{O}\left(n^{\frac{k\omega}{3}} + \sum_{i=3}^h \min\left\{ m_i^{\lceil \frac{k-i+3}{3}\rceil}, m_in^{k-i}\right\}\right).
    \]
\end{restatable}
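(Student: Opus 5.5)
We follow the blueprint of Proposition~\ref{prop:higher-arity-algo}: with $G$ the underlying graph of binary edges, we decide whether $|I(G,k)|-|I_{\text{invalid}}(H,k)|>0$. The count $|I(G,k)|$ is obtained in time $O(n^{k\omega/3})$ via Nešetřil--Poljak~\cite{cliqueFast}. The difference from the proposition is that we handle each arity class $E_i$ ($3\le i\le h$) by whichever of two methods is cheaper, which produces the per-arity term $\min\{m_i^{\lceil (k-i+3)/3\rceil}, m_in^{k-i}\}$.

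\textbf{Partition into sparse and dense arities.} Call arity $i$ \emph{sparse} if $m_i^{\lceil (k-i+3)/3\rceil}\le m_in^{k-i}$, and \emph{dense} otherwise. Let $E_{\mathrm{sp}}$ and $E_{\mathrm{de}}$ be the unions of the corresponding classes.

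\textbf{Dense arities: sparse witness enumeration.} For each dense hyperedge $e$ of arity $i$, we enumerate all $(k-i)$-subsets $Y$ of $V\setminus e$ and test whether $e\cup Y$ is independent in the rest of $H$. This test costs $O(1)$ after hashing: hyperedges of arity $>k$ are discarded, so only $O(1)$ subsets need to be looked up. This lists every $k$-set that contains a dense hyperedge, in time $O(m_in^{k-i})$. We use it in two ways.
\begin{itemize}
\item Any $k$-set containing a dense edge that is a genuine independent set of $H$ is impossible, so these sets are exactly the invalid candidates with a dense witness.
\item Hence we enumerate $\mathcal{D}=\{X\in I(G,k): X\supseteq e \text{ for some } e\in E_{\mathrm{de}}\}$ explicitly, deduplicate it by hashing, and subtract $|\mathcal{D}|$.
\end{itemize}
It then remains to count the sets in $I(G,k)\setminus\mathcal{D}$ that contain a sparse hyperedge but no dense one.

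\textbf{Sparse arities: ordered inclusion--exclusion.} For the sparse hyperedges we run the refined inclusion--exclusion with the sets $I'_S$ and the total order $\prec$. The order is chosen so that all dense edges precede sparse edges, so that the rule ``no smaller intersecting edge'' also excludes dense edges. Sets containing a dense edge that does not intersect $S$ are still counted. We correct for these by subtracting the members of $\mathcal{D}$ found in the enumeration, checking for each $X\in\mathcal{D}$ directly whether it would have been counted: its minimal sparse edge is well defined and the conditions can be checked locally.

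As before, nonvanishing terms have $S$ a matching, and terms with $|S|>k/3$ vanish. For a matching $S$ with $j$ edges, the arity of the largest edge $i$ already forces $|V(S)|\ge i+3(j-1)$, so $j\le\lceil (k-i+3)/3\rceil$ when $S$ contains an arity-$i$ edge. We charge each $S$ to its largest-arity edge. Computing $|I'_S|$ applies the reduction rules: delete a vertex when $|e'\setminus e|=1$, and replace $e'$ by the hyperedge $e'\setminus e$ otherwise. We then recurse on a hypergraph of smaller arity on the remaining $n'=n-|V(S)|$ vertices with $k'=k-|V(S)|$, using induction on $h$ with Proposition~\ref{prop:higher-arity-algo} or the 3-hypergraph algorithm as the base case.

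\textbf{Main obstacle: the runtime accounting.} The step I expect to be hardest is bounding the total cost by the claimed sum. The cost is
\[
\sum_{S}\Bigl(n^{k'\omega/3}+\text{recursive hyperedge terms}\Bigr),
\]
and each term $m_i^{j}\cdot n^{(k-i-3(j-1))\omega/3}$ must be shown to be at most $n^{k\omega/3}+m_i^{\lceil (k-i+3)/3\rceil}$. This is a convexity (max-of-endpoints) argument in $j$, since $m_i\le n^i$ and the exponent is linear in $j$. The induced hyperedges in the recursion have arity reduced by at least one and count at most $m$, so the inductive hypothesis gives terms dominated by those of the original arities. Divisibility and rounding issues when $k'$ is not divisible by 3 are absorbed by padding with isolated dummy vertices.
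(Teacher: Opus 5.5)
Your architecture matches the paper's: a sparse/dense split by comparing $m_i^{\lceil (k-i+3)/3\rceil}$ with $m_in^{k-i}$, witness enumeration for dense arities, and ordered inclusion--exclusion with recursion on $H''$ for sparse arities. Your bookkeeping differs: you subtract all of $\mathcal{D}$ and then correct the sparse count. That variant is correct, with one caveat. For $X\in\mathcal{D}$ you must test whether $X\in I'_{\{e\}}$ for \emph{some} sparse $e\subseteq X$, not only the minimal one. The genuine gap is in the step you yourself flag, the runtime of the recursion. Your choice of $\prec$ does not give the bound, and the justification ``residual edges number at most $m$, so the inductive bound is dominated'' is false.

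\textbf{Missing idea: an arity-monotone order.} You only require dense edges to precede sparse ones. Suppose an arity-$i_2$ edge may precede an intersecting arity-$i$ edge with $i<i_2$. Then a term charged to $m_i$ inherits up to $m_{i_2}$ residual high-arity edges, and the resulting cross terms are not covered by the target. Concretely, take $k=60$, $m_3=n^2$, $m_6=n^{2.5}$ (both sparse), all hyperedges through one common vertex, and arity-$6$ edges ordered first. Each of the $n^2$ terms $S=\{e\}$ with $|e|=3$ recurses on a $5$-hypergraph with $\Theta(n^{2.5})$ edges and $k'=57$. Its inductive bound is $n^{2.5\cdot 19}=n^{47.5}$, so the total is $n^{49.5}$. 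The claimed bound is $n^{20\omega}+n^{40}+n^{47.5}=O(n^{47.5})$.

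The paper avoids this by ordering all edges by non-decreasing arity. Then, for $S$ whose largest arity is $i$, $H''$ has at most $M_i=\sum_{j\le i}m_j$ edges of arity $\ge 3$, and the cost is $m_iM_i^{(k-i)/3}$. The cross terms $m_im_j^{(k-i)/3}$ with $j<i$ are absorbed because $m_j^{(k-i+3)/3}\le m_j^{(k-j+3)/3}$.

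\textbf{Dense edges should not enter the inclusion--exclusion at all.} Placing dense edges first means every $H'$ and $H''$ also carries residues of all dense edges meeting $V(S)$. Their number per $S$ is unrelated to the sparse counts and can be $\Theta(m_d)$. Both building $H'$ and the recursion pay for them. Under your ``at most $m$'' accounting, $m_d$ leaks into the sparse terms. For example, take $m_3=n^2$ and $m_5=n^4$ (dense). A residual class of arity $3$ with $k'=k-3$ is then charged $m_3\cdot m\,n^{k-6}\approx n^{k}$. The target is $O(n^{k-1})$ for large $k$.

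A finer per-dense-edge count might rescue this, but your argument does not contain it, and it is unnecessary. Run the ordered inclusion--exclusion on $H_S$ only, ignoring dense edges entirely. Then, in the dense enumeration, count only those $X$ that contain no sparse edge. This is the paper's route, and it also removes your correction step.
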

Perhaps surprisingly, we show that this upper bound is essentially tight: unless either the $k$-Clique Hypothesis, or the $3$-Uniform Hyperclique Hypothesis fails, no significantly faster algorithm is possible.
More specifically, by employing appropriate notions of sparse and dense embeddings, we can prove the following theorem.
\begin{restatable}[Arity-Sensitive Lower Bound]{theorem}{MixedArityLowerBound}\label{thm:mixed_arity_lb}
    Let $\varepsilon>0$ be arbitrary and $n$ be the number of vertices of any given $h$-hypergraph.
    For any $3\le i\le h$, let $m_i = \Theta(n^{\gamma_i})$ for any $2\le \gamma_i\le i$ be the number of hyperedges of arity $i$ of the input hypergraph. %Let $r_i = \lfloor \gamma_i\rfloor$.
    Then:
    \begin{enumerate}
        \item \emph{(Clique Lower Bound)} There is no algorithm solving $k$-Independent Set problem in $h$-hypergraphs in time $\mathcal{O}\left(n^{\frac{k\omega}{3}-\varepsilon}\right)$, unless the $k$-Clique Hypothesis fails.
        \item \emph{($3$-Uniform Hyperclique Lower Bound)} For no $i$ such that $2\le \gamma_i\le 3$ is there an algorithm solving the problem in time $\mathcal{O}\left( m_i^{\frac{k-i+3}3-\varepsilon}\right)$, unless the $3$-Uniform Hyperclique Hypothesis fails. 
        \item \emph{($r$-Uniform Hyperclique Lower Bound)} For no $i$ such that $\gamma_i\ge 3$ is there an algorithm running in $\mathcal{O}\left(m_in^{{k-i-\varepsilon}}\right)$, unless the $\left(\lfloor \gamma_i\rfloor\right)$-Uniform Hyperclique Hypothesis fails.
    \end{enumerate}
\end{restatable}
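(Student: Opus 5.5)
We prove the three parts separately, each by a fine-grained reduction from the corresponding hypothesis into an instance with the prescribed arity profile $m_j=\Theta(n^{\gamma_j})$.

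\emph{Padding.} We first show that we may always pad. Any hypergraph can be extended by fresh groups of vertices carrying arbitrary arity-$j$ hyperedges, so that every $m_j$ attains its target count. To keep the padding harmless, we make it unusable by a solution. Concretely, we add a set $P$ of new vertices and make $P$ essentially forbidden. One option is to put all binary edges on $P$ (this costs $|P|^2$ edges, which is fine since $\gamma_2 \ge 2$ is implicitly available). Then any $k$-independent set uses at most one vertex of $P$. To remove even this one vertex, we run the reduction with $k+1$ and require exactly one padding vertex, or more simply we pad with a clique-structured gadget. We then place the required arity-$j$ hyperedges inside $P$ or between $P$ and the core. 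This only affects the exponent by a constant independent of $k$. The number of vertices changes by at most a constant factor if $|P|=\Theta(n)$.

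\emph{Part 1.} We take a $k$-clique instance on $n$ vertices and complement it, obtaining a graph with $\Theta(n^2)$ binary edges. We then pad to the required $m_i$. An $O(n^{k\omega/3-\varepsilon})$ algorithm would then refute the $k$-Clique Hypothesis.

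\emph{Part 2 (sparse embedding).} Fix $i$ with $2\le\gamma_i\le3$ and set $k'=k-i+3$. We take a $3$-uniform $k'$-hyperclique instance on $N$ vertices and choose $N$ with $N^3\approx m_i$, i.e.\ $N=n^{\gamma_i/3}\le n$. Complementing it gives a $k'$-IS instance on the core $V_0$, $|V_0|=N$, with $\Theta(N^3)$ ternary hyperedges. To turn these into arity-$i$ hyperedges, we add a set $A$ of $i-3$ fresh vertices. We replace each ternary hyperedge $e$ by $e\cup A$ and force every solution to contain $A$. For instance, we add binary edges from each vertex of $A$ to all vertices outside $V_0\cup A$, and work inside $n$ vertices where the remaining $n-N$ vertices are organised so that exactly $k'$ vertices must be drawn from $V_0$ together with $A$. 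The clean way to achieve this is a $k$-partite version: we make the hyperclique instance $k'$-partite, add $i-3$ singleton parts $A$, and make each part a binary clique, costing $O(N^2)\le O(n^{\gamma_i})$ edges. Since $\gamma_i\ge2$ we can afford the at most $n^2$ binary edges that neutralise padding vertices. Then a solution of size $k$ exists iff a $k'$-hyperclique exists, and a time bound $m_i^{k'/3-\varepsilon}=N^{k'-3\varepsilon}$ contradicts the hypothesis.

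\emph{Part 3.} Let $r=\lfloor\gamma_i\rfloor\ge3$. We use an $r$-uniform $k''$-hyperclique instance on $N$ vertices, and embed its complement as arity-$i$ hyperedges by attaching $i-r$ forced vertices as above. The remaining $n^{\gamma_i-r}$ factor is realised by letting the forced set be chosen from $n^{(\gamma_i-r)/(i-r)}$-sized groups, which yields $m_i\approx N^r\cdot n^{\gamma_i-r}$. We tune $N=n$ and $k''=k-i+r$, so that the target bound $m_in^{k-i-\varepsilon}=n^{k''-\varepsilon}\cdot n^{\gamma_i-r}$ is compared with the brute-force $N^{k''}$; the leftover $n^{\gamma_i-r}<n$ is absorbed by the constant slack of the statement.

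The main obstacle is the forcing gadget: ensuring that padding and forced vertices neither create spurious solutions nor exceed the per-arity budget. We resolve this with the partite-with-binary-cliques trick, exploiting that binary edges ($\le n^2\le m_i$) are always affordable.
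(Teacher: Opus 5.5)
Your Parts 1 and 2 follow essentially the paper's route: your forced set $A$ plays the role of the paper's dummy vertices $d_1,\dots,d_{i-3}$, and your padding that of $p_1,\dots,p_n$. Part 3, however, has a genuine gap.

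By your own computation, the assumed algorithm runs in time $m_i n^{k-i-\varepsilon}=n^{k''-\varepsilon}\cdot n^{\gamma_i-r}$. Your source, though, is a balanced $k''$-hyperclique instance on $N=n$ vertices, for which the $r$-uniform hypothesis only rules out $n^{k''-\varepsilon}$-time algorithms. Whenever $\gamma_i$ is non-integral and $\varepsilon<\gamma_i-\lfloor\gamma_i\rfloor$, the assumed running time exceeds $n^{k''}$ and refutes nothing.

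The statement has no slack to absorb this, since it is claimed for every $\varepsilon>0$. Your argument therefore only gives the weaker bound $m_i n^{k-i-(\gamma_i-\lfloor\gamma_i\rfloor)-\varepsilon}$; it is correct only for integral $\gamma_i$. The cause is structural. In your group construction, all choices of vertices from the $i-r$ groups are interchangeable, because each choice induces exactly the same constraints on the core. So the groups inflate $m_i$ by $n^{\gamma_i-r}$ without adding any hardness.

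The missing idea is to draw the extra factor $n^{\gamma_i-r}$ from an \emph{unbalanced} source problem. Take colorful independent set in a $k$-partite $r$-uniform hypergraph with $|V_1|=\dots=|V_{k-1}|=n$ and $|V_k|=n^{\gamma_i-r}$. By split-and-list, it cannot be solved in time $\bigl(\prod_j|V_j|\bigr)^{1-\varepsilon}$ under the $r$-uniform hyperclique hypothesis. The paper then:
\begin{itemize}
\item keeps the $r$-edges meeting $V_k$ as they are;
\item makes each part a binary clique;
\item extends every other $r$-edge $e$ by each $v_k\in V_k$ together with $i-r-1$ dummy vertices.
\end{itemize}
Now the choice of $v_k$ genuinely matters, through the kept edges, and $m_i=\Theta(n^{r}\cdot n^{\gamma_i-r})=\Theta(n^{\gamma_i})$. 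An independent set of size $k+i-r-1$ exists iff the source has a colorful one. Moreover, $m_i\,n^{(k+i-r-1)-i-\varepsilon}=n^{k-1+\gamma_i-r-\varepsilon}$ is a polynomial saving over $\prod_j|V_j|=n^{k-1+\gamma_i-r}$, so this construction loses nothing.

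Separately, make the padding explicit: join every padding vertex by a binary edge to \emph{every} other vertex. This costs $\mathcal{O}(n^2)\le\mathcal{O}(m_i)$ edges and excludes padding vertices from all independent sets of size at least $2$. The variants you sketch do not work as stated:
\begin{itemize}
\item A clique on $P$ plus edges only from $A$ to $P$ admits spurious solutions. For instance, for $i=3$ we have $A=\emptyset$, so one padding vertex together with any colorful $(k-1)$-independent set of the core is an independent set of size $k$.
\item Running with $k+1$ and requiring one padding vertex is correct, but it shifts $k$ and loses $m_i^{1/3}$ in the exponent.
\end{itemize}
With the full join, padding costs nothing, so no ``constant slack'' is needed in Parts 1 and 2 either.
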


\paragraph*{Beyond Independent Sets: Boolean CSP} To study the effect of sparsity systematically in a more general class of problems, we turn to the class of Boolean Constraint Satisfaction Problems (CSPs). 
Recall that $k$-Independent Set in $h$-hypergraphs can be viewed as a special case of Boolean CSP over the constraint family $\Fam = \{\NAND_2, \dots, \NAND_{h}\}$.
We formally define this class of problems as follows.
\begin{restatable}[Boolean Constraint Satisfaction Problem $(\CSP_k^\gamma)$]{definition}{BooleanCSPDefinition}
    Let $\Fam$ be a finite Boolean constraint family (i.e. a set of functions $f:\{0,1\}^h \to \{0,1\}$). 
    Given a set $\Phi$ of $m$ Boolean constraints $C$ on variables 
    $x_1,\dots,x_n$, each of the form $f(x_{i_1},\dots, x_{i_h})$, where $f \in \Fam$ and $m \in \Theta(n^\gamma)$,
    the problem $\CSP_k^\gamma$ asks for the existence of an assignment $a:\{x_1,\dots, x_n\}\to \{0,1\}$ satisfying $\Phi$, that sets precisely $k$ variables to $1$. 
\end{restatable}
We begin by considering the special case in which the constraint family $\Fam$ consists exclusively of binary constraints.
Perhaps surprisingly, for every such family $\Fam$, and for all values of $\gamma$, we obtain a complete classification of the problem $\CSP_k^\gamma(\Fam)$: we design an algorithm running in time $T_k(m,n)$, and show its conditional optimality under the $k$-Clique Hypothesis.
To obtain such a complete classification across the entire sparsity spectrum, we first look at the instances of $\CSP_k^\gamma(\Fam)$ for values $\gamma<1$, and show that any such instance admits a linear time algorithm: $\mathcal O(n+m)$.
The classification becomes much more interesting when we consider the regime $\gamma\ge 1$. In particular we prove the following classification theorem, which is a more formal version of Theorem~\ref{thm:classification-informal}.
\begin{restatable}[Classification of Boolean CSPs over Binary Families]{theorem}{BinaryCSPClassification}\label{thm:binary-csp-classification}
 Let $\Fam$ be a family of binary Boolean constraints. Then, for any $1\le \gamma\le 2$ we obtain the following classification:
    \begin{enumerate}
        \item (\emph{Linear Regime}) If $\Fam$ contains neither binary $\NAND$ constraint, nor binary implication $\IMPL$, then $\CSP_k^\gamma(\Fam)$ can be solved in linear time $\mathcal O\left( m + n\right)$.
        
        \item (\emph{Subexponential Regime})~\cite{KunnemannM20} If $\Fam$ contains $\IMPL$, but not $\NAND$, then $\CSP_k^\gamma(\Fam)$ can be solved in time $\mathcal{O}\left(n^{4\sqrt k}\right)$, furthermore, unless the $k$-clique hypothesis fails, there is no algorithm solving $\CSP_k^\gamma(\Fam)$ in time $\mathcal O\left( n^{ \omega \sqrt[3]k/6+c-\varepsilon}\right)$, for any $\varepsilon>0$ and some $c>0$ independent of $k$.
        
        \item (\emph{$k$-IS Regime})  If $\Fam = \{\NAND\}$, then for any $\gamma<2$, every $\CSP_k^\gamma(\Fam)$ instance is trivial. For $\gamma = 2$, there is an algorithm solving $\CSP_k^\gamma(\Fam)$ in time $\mathcal O\left( n^{ \omega k/3}\right)$ (for any $k$ divisible by $3$). Furthermore, any algorithm running in $\mathcal O\left( n^{ \omega k/3-\varepsilon}\right)$ would refute the $k$-Clique Hypothesis.
        
        \item (\emph{Clique Regime}) If for some nonempty family $\Fam'$, $\Fam$ can be written as $\Fam = \{\NAND\} \cup \Fam'$, then we can solve $\CSP_k^\gamma(\Fam)$ in time $\mathcal O\left( m^{ \omega (k - c_{\Fam'})/6 + 1}\right)$ (for all sufficiently large $k$ divisible by $3$), where $c_{\Fam'}\in \{0, 1, 2\}$ depends only on $\Fam'$. Moreover, any algorithm running in $\mathcal O\left( m^{ \omega (k - c_{\Fam'})/6-\varepsilon}\right)$ would refute the $k$-Clique Hypothesis.
    \end{enumerate}
\end{restatable}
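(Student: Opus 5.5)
The four regimes call for different arguments, so I will handle each one in turn and put most of the effort into the clique regime.

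\emph{Linear regime.} If neither $\NAND$ nor $\IMPL$ is in $\Fam$, every binary constraint falls into one of three groups. The first group is satisfied by the all-zero pair and also by at least one of $(0,1)$ and $(1,0)$. The second group forces a variable to $0$ or forces equality or $\NOR$. The third group is unsatisfiable by the zero assignment. Constraints in the third group ($\mathrm{OR}$, $\mathrm{XOR}$, constant $1$-forcing, and the like) make the instance behave like an FPT problem: each such constraint forces at least one of its variables to be $1$. With fewer than $k$ forced ones available, branching gives an $f(k)$-bounded search tree, and each node is processed in linear time. Among the zero-satisfiable constraints, the only ones whose $(1,1)$ behaviour matters are $\EQ$, $\NOR$ and unary forcings, since $\NAND$ and $\IMPL$ are excluded. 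I would contract $\EQ$-components and delete variables forced to $0$ by $\NOR$ or forcing constraints. The task then reduces to choosing components whose sizes sum to $k$, which is a subset-sum with $O(k)$ targets and takes linear time. I expect this to mirror the dense classification of~\cite{KunnemannM20}, and sparsity only helps here.

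\emph{Subexponential and $k$-IS regimes.} For the subexponential regime, the upper bound $n^{4\sqrt k}$ is inherited from~\cite{KunnemannM20}. For the lower bound, I would take their dense hard instance on $n_0$ variables. If it already has $\Theta(n_0^\gamma)$ constraints, nothing more is needed. Otherwise I pad it with $n-n_0$ fresh variables, each attached by an $\IMPL$ chain that forces it to $0$, for example $x\to x$-like gadgets pointing into a unsatisfiable sink. This needs a forcing-to-$0$ gadget built only from $\IMPL$, which is the delicate point. I would realise it by an $\IMPL$ cycle of length $k+1$: a variable on the cycle cannot be $1$, since that would force all $k+1$ cycle variables to $1$. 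The remaining constraints are filled with redundant $\IMPL$s inside the forced-zero cycles until the count reaches $\Theta(n^\gamma)$.

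For the $k$-IS regime, triviality for $\gamma<2$ is Theorem~\ref{thm:triviality_cutoff} (with $\umin=2$), or directly Tur\'an's theorem. The upper bound is Ne\v{s}et\v{r}il--Poljak applied to the complement graph, and the lower bound is the direct equivalence with $k$-clique.

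\emph{Clique regime (main obstacle).} The lower bound is the embedding described in the introduction. Fix a subset $V_0$ with $|V_0|\approx\sqrt m$ and embed a dense $\NAND$ instance (the $k$-IS form of $k$-clique) on $V_0$. The other variables are forced to $0$ using $\Fam'$; the constant $c_{\Fam'}$ records how many ones this forcing consumes. With $\IMPL$, the cycle gadget above costs nothing. With $\NOR$ or $\EQ$ combined with $\NAND$, I would add a forced-one hub: one or two designated variables $h$ that must be $1$, together with $\NOR(h,x)$, or $\NAND$ plus some $\EQ$ or $\mathrm{OR}$ pattern that pins $x=0$. This shifts $k$ to $k-c_{\Fam'}$. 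Each case of $\Fam'$ is checked separately.

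For the upper bound I follow the outline in Section~\ref{sec:technical-overview}. First I apply the unit rules with $\Fam'$, then an $\IMPL$ closure: any variable whose implication closure has size greater than $k$ is set to $0$, and cycles are contracted. After this, each remaining variable implies at most one other one, up to $k^{O(1)}$ guessing. Next I group variables by their implication trees. Take a group whose $\NAND$-degree is small relative to its size. Since only $m$ $\NAND$ constraints exist, a group of size $\gg\sqrt m$ containing an independent $k$-selection closed under implication can be found greedily; this is the Tur\'an-style step. Otherwise every relevant group has size $O(\sqrt m)$, and only $O(\sqrt m)$ variables remain that can be $1$ at all. On these I guess the group structure of a solution ($f(k)$ patterns) and reduce to counting or detecting $k$-cliques in an $O(\sqrt m)$-vertex graph, split into three nearly balanced parts for triangle detection. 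This gives $m^{\omega k/6+O(1)}$, with the extra $+1$ absorbing the preprocessing.

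I expect the hardest step to be proving the win-win dichotomy rigorously: a large group must actually contain a solution, and small groups must leave only $O(\sqrt m)$ candidate variables. I would attack it first by showing that every variable of high $\NAND$-degree can be discarded after guessing, and then by applying a degree-counting bound within each group.
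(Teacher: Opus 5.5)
Your linear, subexponential and $k$-IS regimes, and the shape of your clique-regime lower bound, essentially match the paper. (One small fix: in your $\IMPL$ padding the redundant constraints must run \emph{between} different forced-zero cycles, since each cycle holds only $O(k^2)$ of them.)

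The clique-regime upper bound has a genuine gap at ``each remaining variable implies at most one other one, up to $k^{O(1)}$ guessing.'' Deleting variables whose closure exceeds $k$ can still leave $\Theta(n)$ variables with $3\le |D(v)|\le k$, for example $n/3$ disjoint chains $a_i\to b_i\to c_i$ interleaved with $\NAND$s. No $f(k)$-size branching can decide which of these lie in the solution. The missing idea is the paper's amortized guess. A heavy vertex ($|D(v)|\ge 3$) is cheap to guess because it forces at least two further ones. So one enumerates sets of $s\le k/3$ heavy vertices at cost $n^{s}$, propagates their closures and $\NAND$-neighbourhoods, deletes the remaining heavy vertices, and solves the resulting restricted instance with budget at most $k-3s$. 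Since $n\le m$ and $\omega/6\ge 1/3$, we get $n^{s}\,m^{\omega(k-3s)/6+1}\le m^{\omega k/6+1}$. Without this trade-off you never reach the restricted structure within the time budget.

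The second half of your win-win is also false as stated: the Tur\'an step does not leave only $O(\sqrt{m})$ candidate variables.
\begin{itemize}
\item \emph{What is actually bounded.} Tur\'an bounds, separately, the number of groups and the size of each group. The number of groups is bounded because the $\IMPL$-sinks are pairwise $\IMPL$-free, so more than $\sqrt{m}^{1+\epsilon}$ of them would already contain a solution.
\item \emph{Counterexample to your claim.} Take $\sqrt{m}$ sinks split into $k-1$ $\NAND$-cliques, each sink with $m^{1/4}$ sources split into $k-2$ $\NAND$-cliques. This uses $O(m)$ constraints and has no large group, yet $\Theta(m^{3/4})$ variables occur in solutions. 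Running $k$-IS on the candidate set would then cost about $m^{\omega k/4}$, not $m^{\omega k/6}$.
\item \emph{What the paper does instead.} It encodes solutions group by group. A group contributing $k_i$ vertices is fixed by its sink ($O(\sqrt{m})$ choices) plus $k_i-1$ members ($O(\sqrt{m})$ each). Hence a bin holding $s$ solution vertices has $O(\sqrt{m}^{\,s})$ nodes.
\item \emph{Balancing the bins.} Because a group's members hinge on one sink, groups cannot simply be split across the three triangle parts. The paper distributes all but the two largest groups greedily, giving pairwise imbalance at most $k_{\ell-1}$. It then guesses the sinks of the two largest groups at cost $O(m)$ and spreads their members to rebalance. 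This guess, not preprocessing, is the $+1$ in the exponent.
\end{itemize}

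Finally, your lower-bound gadget has a slip. $\NOR(h,x)$ with $h=1$ is unsatisfiable. For $\NOR$, $\EQ$ or $\IMPL$ no hub is needed: $f(x,y)\wedge f(y,x)\wedge \NAND(x,y)$, or $\NOR$ alone, already forces zeros, so $c_{\Fam'}=0$. A hub is needed only for $0$-invalid $\Fam'$, and it then pins $x$ via $\NAND(z,x)$. The constant must be $c_{\Fam'}=\smin(\Fam')$, and the matching $k$-shift in your upper bound must come from branch-and-bound on the $0$-invalid constraints, which you do not make explicit there.
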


The algorithm for the Clique Regime turned out to be the most interesting.
We essentially argue that for each family $\Fam$ that belongs to the clique regime we can reduce any instance of $\CSP_k^\gamma(\Fam)$ efficiently to $\CSP_k^\gamma(\{\NAND, \IMPL\})$, so it suffices to construct an efficient algorithm for this particular family.
Our algorithm proceeds in a few steps that when implemented carefully, yield the desired running time.

\noindent (1) (\emph{Reduction to Restricted Instance}) We first leverage the directed reachability structure induced by $\IMPL$ constraints: intuitively, vertices with many descendants are cheap to guess, since including such a vertex in a solution forces many additional variables to be set as well.
By systematically branching on these vertices and propagating their implications, we can reduce to a \emph{restricted} instance, where each vertex has at most a single descendant (besides itself) in the directed graph induced by $\IMPL$ constraints.

\noindent (2) (\emph{Grouping and Sparsity Cutoff}) For such restricted instances, we introduce a grouping technique to partition the variables according to their local $\IMPL$ structure.
Each group corresponds to a small implication neighborhood that can be treated as a single unit.
We then establish a \emph{sparsity cutoff}: if any group is sufficiently large and therefore sufficiently sparse with respect to $\NAND$ constraints, we immediately find a solution using Turan-like arguments, and are done. 
As a consequence, we can bound the size of each group and the total number of groups, reducing the problem to a \emph{core instance}.

\noindent (3) (\emph{Removing the $2$-Cycles and Reduction to Triangle Detection}) For our last step, we want our Implication-induced graph to be acyclic, however, our core instance still may contain cycles of size $2$. 
We then prove that any such cycle can only be locally contained within a small subset of vertices, so by a bounded guessing step, we can efficiently eliminate all cycles from our graph. 
After eliminating all cycles, we show how to distribute the groups in a balanced manner and encode valid group selections as vertices in an auxiliary graph. This construction reduces the problem to Triangle Detection, allowing us to leverage matrix multiplication to achieve the desired running time.

%% Timo: intendet for journal version
%\input{sections/journal_version/triviality_cutoff_for_tech_intro}

\section{Preliminaries}
For any natural number $x \in \mathbb{N}$, we denote with $[x]$ the set of integers $\{1, \dots, x\}$. 
Further, for any set $S$ and any number $d \in \mathbb{N} \cup \{0\}$ we denote with $\binom{S}{d}$ the set of all subsets of $S$ of size $d$. With $\binom{S}{\leq d}$ we denote the set of all subsets of $S$ of size $\leq d$. 
An \emph{$h$-hypergraph} is any hypergraph $H = (V,E)$ such that $E\subseteq \binom{V}{\le h}\setminus\binom{V}{\leq 1}$. 
An \emph{$h$-uniform hypergraph} is any $h$-hypergraph with $E\subseteq \binom{V}{h}$. 

\subsection{Hardness Assumptions}
The $k$-clique problem asks, given a graph $G=(V,E)$ on $n$ vertices, to determine whether there exists a $k$-clique $C \subset V$ of size $k$ such that $\binom{C}{2} \subseteq E$.
Dating back to~\cite{cliqueFast}, one can detect $k$-cliques time $\O(n^{\frac{\omega k}{3}})$ for $k$ divisible by $3$. Here, $\omega < 2.372$ denotes the matrix multiplication exponent, such that we can multiply two $n\times n$ matrices in time $\O(n^\omega)$.

This problem generalizes to $r$-uniform hypergraphs, asking to find a set of vertices such that every $r$ tuple is contained in a hyperedge -- but algebraic approaches such as fast matrix multiplication fail in this setting (for a discussion of this phenomenon, see~\cite{DBLP:conf/soda/LincolnWW18}).
As such, no algorithm with substantial improvements over brute-force time $n^{k - o(1)}$ is known (see~\cite{AbboudFS24} for subpolynomial-factor improvements). 
Even more so, polynomial improvements for $h$-uniform $k$-hyperclique would result in improved algorithms for problems such as $\textsc{Max-$h$-SAT}$~\cite{Wil07} that are widely believed to be hard. 

In fine-grained complexity theory, the $k$-clique problem has been widely used to obtain conditional lower bounds,
such as e.g.~\cite{AbboudBW18,chang:LIPIcs.CPM.2016.13,DalirrooyfardW22}.
Over the recent years, the $r$-uniform $k$-hyperclique problem has seen increasing popularity as well, 
yielding applications for a variety of problems~\cite{DBLP:conf/soda/LincolnWW18,AbboudBDN18,BringmannFK19,KunnemannM20,BringmannS21,DalirrooyfardJW22,MathialaganWX22,FischerKRS25,FuLR25}.

As such we use the following hypothesis concerning $k$-(hyper)clique detection:
\begin{hypothesis}[$r$-Uniform $d$-Hyperclique Hypothesis]
    Let $\epsilon > 0$ and $k > d$.
    \begin{enumerate}
        \item For $d = 2$ there is no $\O(n^{({\omega k}/{3}) - \varepsilon})$-time algorithm detecting a $k$-clique in a graph (also referred to as $k$-clique hypothesis).
        \item For $d\geq 3$, there is no $\O(n^{k - \varepsilon})$-time algorithm detecting a $k$-clique in a $d$-uniform hypergraph.
    \end{enumerate}
\end{hypothesis}

\section{\texorpdfstring{$k$}{k}-Independent Set in Non-Uniform Hypergraphs}
This is our main technical section. 
It is dedicated to constructing algorithms and lower bounds for detecting a $k$-independent set in (non-uniform) $h$-hypergraphs.\footnote{I.e., each hyperedge has arity \emph{at most} $h$.}
Note that throughout this section we treat edges as subsets of vertices and use standard set-theory notation (e.g. $e\cap e'$, $e\setminus e'$, $V \subseteq \bigcup_{e\in E} e$).
We begin by considering the simplest family of $h$-hypergraphs, namely the $3$-hypergraphs. 

\subsection{\texorpdfstring{$k$}{k}-Independent Set in \texorpdfstring{$3$}{3}-Hypergraphs}\label{sec:k-IS-3-hypergraphs}
In this section, we prove the following two main theorems.
\begin{theorem}[Algorithm for $k$-Independent Set Problem in $3$-Hypergraphs]\label{thm:23IS-algo}
    Given any $3$-hypergraph $H$ with $n$ vertices and $m$ edges, there is an algorithm deciding whether $H$ contains a $k$-independent set in time $\mathcal O\left(n^{\frac{k\omega}{3}} + m^{\frac{k}{3}}\right)$  for all $k$ divisible by $3$.
\end{theorem}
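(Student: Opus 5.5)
We follow the counting blueprint from the technical overview. Let $G$ be the graph formed by the arity-2 edges of $H$, and let $E_{\ge 3}$ be the set of arity-3 hyperedges. Fix an arbitrary total order $\prec$ on $E_{\ge3}$. The algorithm outputs YES iff $|I(G,k)|-|I_{\text{invalid}}(H,k)|>0$. The value $|I(G,k)|$ is computed by Nešetřil--Poljak $k$-clique counting on the complement of $G$ in time $O(n^{k\omega/3})$. It remains to compute $|I_{\text{invalid}}(H,k)|$ within the claimed budget.

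\emph{Step 1 (correctness of the truncated formula).} We prove $I_{\text{invalid}}(H,k)=\bigcup_{e\in E_{\ge3}}I'_{\{e\}}(H,k)$ and $I'_S=\bigcap_{e\in S}I'_{\{e\}}$. The forward inclusion follows by choosing the $\prec$-minimal hyperedge inside $X$; the reverse inclusion holds since $I'_S\subseteq I_S$. Inclusion--exclusion then applies to the $I'$-sets. We then show $I'_S=\emptyset$ whenever $S$ is not a matching, because two intersecting edges of $S$ violate condition 2 for the $\prec$-larger one. We also show $I'_S=\emptyset$ if $|S|>k/3$, because a matching of that size spans more than $k$ vertices. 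This gives
\[|I_{\text{invalid}}(H,k)|=\sum_{i=1}^{k/3}\sum_{S\text{ matching},|S|=i}(-1)^{i+1}|I'_S(H,k)|.\]

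\emph{Step 2 (computing $|I'_S|$).} Fix a matching $S$ with $|S|=i$ and vertex set $U=V(S)$, so $|U|=3i$. We process every $e'\in E_{\ge3}$ that meets some $e\in S$ with $e'\prec e$. If $|e'\setminus U|=0$, then $I'_S=\emptyset$. If $|e'\setminus U|=1$, we delete that vertex. If $|e'\setminus U|=2$, we add the binary edge between the two remaining vertices. Here we must take care to subtract with respect to $U$ rather than a single $e$, since $e'$ may touch several edges of $S$. We verify that the resulting $H'$ satisfies $I'_S(H,k)=I_S(H',k)$. Next we discard all remaining arity-3 hyperedges: any $e''$ not processed and not in $S$ either misses all of $S$ or is $\succ$ every edge it meets, so it is irrelevant for condition 2. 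We must double-check that such hyperedges are genuinely allowed inside $X$. They are, because $I_S$ only counts potential solutions, i.e.\ independent sets of $G$, and the inclusion--exclusion handles them. Then $|I_S(H',k)|$ equals the number of $(k-3i)$-independent sets in $G'[V\setminus (U\cup N_{G'}(U)\cup\text{deleted})]$, provided $U$ is independent in $G'$. This is again a clique count on at most $n$ vertices in time $O(n^{(k-3i)\omega/3})$ (with polynomial overhead when $3\nmid k-3i$, which cannot happen here). We also need the construction cost. Since $S$ has $O(1)$ edges, we find the hyperedges meeting $U$ via incidence lists. To avoid paying $O(m)$ per $S$ for large $i$, we instead charge $O(m+n^2)$ per $S$ only for $i<k/3$. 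For $i=k/3$ the count is simply $1$ or $0$, depending on whether $U$ itself is a valid set, checkable in $O(1)$ with hashing after $O(m)$ preprocessing.

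\emph{Step 3 (running time).} The number of matchings of size $i$ is at most $m^i$. The total is therefore
\[O\Bigl(m^{k/3}+\sum_{i=0}^{k/3-1}m^i\bigl(m+n^2+n^{(k-3i)\omega/3}\bigr)\Bigr).\]
Each term is bounded by $n^{k\omega/3}+m^{k/3}$: $m^{i+1}\le m^{k/3}$ for $i<k/3$, and $m^i n^{(k-3i)\omega/3}$ is a weighted geometric mean of the endpoints $i=0$ and $i=k/3$, hence at most their maximum. The $n^2$ term is absorbed since $n^{2}\le n^{\omega}\le n^{(k-3i)\omega/3}$. The main obstacle is Step 2: proving $I'_S(H,k)=I_S(H',k)$ exactly, in particular handling hyperedges that intersect several edges of $S$ and hyperedges entirely contained in $U$, while keeping the per-$S$ cost low enough for the $i=k/3-1$ layer.
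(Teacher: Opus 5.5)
Your proposal is correct and follows essentially the same route as the paper: count $|I(G,k)|$ via Ne\v{s}et\v{r}il--Poljak, express $|I_{\text{invalid}}(H,k)|$ by inclusion--exclusion over the $\prec$-restricted sets $I'_S$, observe that only matchings of size at most $k/3$ contribute, turn each $|I'_S|$ into a $(k-3i)$-clique count after deleting vertices and adding binary edges, and bound $\sum_i m^i n^{(k-3i)\omega/3}$ by its two endpoints. Two of your steps are slightly more careful than the paper's own write-up: you resolve overlaps with respect to $U=V(S)$ rather than a single $e\in S$, and in the $i=k/3$ layer you check condition~2 and keep the correct sign, whereas the paper's \textsc{resolve-intersections} procedure and its shortcut for $|V(S)|=k$ gloss over these points.
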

Moreover, we show that this running time is essentially optimal, unless at least one of the two established hypotheses fails.
\begin{theorem}[Conditional Lower Bounds for $k$-Independent Set Problem in $3$-Hypergraphs]
    \label{thm:23ISLB}
    There is no algorithm solving $k$-Independent Set problem in $3$-hypergraphs in time: 
    \begin{enumerate}
        \item $\mathcal O\left(n^{\frac{k\omega}{3}-\varepsilon}\right)$, unless the $k$-Clique Hypothesis fails.
        \item $\mathcal O\left(m^{\frac{k}{3}-\varepsilon}\right)$, unless the $3$-Uniform Hyperclique Hypothesis fails. Moreover, this holds even when restricting $m = \Theta(n^\gamma)$, for any $2\le \gamma\le 3$.\footnote{Recall that Turan's theorem implies that if $\gamma<2$, any instance is a trivial yes-instance (see Section \ref{sec:triviality-cutoff}). Hence, this theorem gives us a full landscape of the complexity of the problem in terms of both number of vertices and number of (hyper)edges.}
    \end{enumerate}
    
\end{theorem}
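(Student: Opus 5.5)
We prove the two lower bounds separately, each by a direct reduction.

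\textbf{Part 1 (clique lower bound).} Every graph is a $3$-hypergraph, so $k$-Independent Set in $3$-hypergraphs contains $k$-IS in graphs. Complementing a $k$-clique instance on $n$ vertices gives a $k$-IS instance on the same $n$ vertices, in time $O(n^2)$. Hence an $O(n^{k\omega/3-\varepsilon})$ algorithm for $3$-hypergraphs would detect $k$-cliques in time $O(n^{k\omega/3-\varepsilon}+n^2)$. For $k\ge 3$ this contradicts the $k$-Clique Hypothesis.

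\textbf{Part 2 (hyperclique lower bound, via a sparse embedding).} Fix $2\le\gamma\le 3$ and start from a $3$-uniform $k$-hyperclique instance $G_0=(V_0,E_0)$ with $|V_0|=n_0$. We may assume it is $k$-partite with parts $V_0^1,\dots,V_0^k$, at the cost of a color-coding or blow-up overhead that depends only on $k$. Complement it inside the $k$-partite structure:
\begin{itemize}
\item add a $3$-hyperedge on every triple from three distinct parts that is \emph{not} in $E_0$;
\item add binary edges inside each part $V_0^j$.
\end{itemize}
Then an independent set of size $k$ must take exactly one vertex per part, and it is independent precisely when its triples all lie in $E_0$, i.e.\ when it is a $k$-hyperclique. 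This uses $O(n_0^3)$ hyperedges and $O(n_0^2)$ binary edges.

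Now set $n_0 = m^{1/3}$ and $n = m^{1/\gamma}\ge n_0$, so that $m=\Theta(n^\gamma)$. Add $n-n_0$ padding vertices, forced never to appear in a solution as follows:
\begin{itemize}
\item split the padding vertices into $k$ groups and attach each group to one part $V_0^j$;
\item put all binary edges inside the enlarged part $V_0^j\cup P_j$, making it a clique.
\end{itemize}
Then at most one vertex per enlarged part can be chosen, so exactly one is chosen from each. To rule out choosing padding, add all triples containing a padding vertex and two vertices from other parts. This is too many hyperedges ($\approx n\cdot n_0^2$), so a cheaper forcing is needed.

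\textbf{Key obstacle: forcing out padding cheaply.} The step I expect to be hardest is keeping the total number of (hyper)edges at $\Theta(m)$. Both the padding cliques ($n^2\le m$ since $\gamma\ge2$, which is fine) and the forcing hyperedges must fit this budget.

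A cleaner route is to use a $(k+1)$-th part $Q$ consisting of all padding vertices plus one extra vertex $z$.
\begin{itemize}
\item Make $Q$ a binary clique, which costs $O(n^2)=O(m)$ edges.
\item Make each original part $V_0^j$ a binary clique.
\item Add binary edges between every padding vertex and every vertex of $V_0^1$. This costs $n\cdot n_0\le n^2=O(m)$ edges.
\end{itemize}
With $k+1$ cliqued parts, a $(k+1)$-independent set takes exactly one vertex from each. Its vertex in $Q$ cannot be a padding vertex, since that vertex conflicts with the chosen vertex of $V_0^1$; so it must be $z$, which has no edges. Replacing $k$ by $k+1$ (or equivalently reducing from $(k-1)$-hyperclique) shifts the exponent only by an additive constant.

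With $m=\Theta(n^\gamma)$ edges in total, an $O(m^{k/3-\varepsilon})$ algorithm would run in time $n_0^{k-3\varepsilon}$ up to $O(1)$ shifts in $k$. This refutes the $3$-Uniform Hyperclique Hypothesis for the corresponding $k$.

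If the hyperedge count falls below $m$, pad it up with hyperedges among padding vertices. These are harmless, since padding vertices are never chosen. This also makes $m=\Theta(n^\gamma)$ hold exactly for every $\gamma\in[2,3]$.
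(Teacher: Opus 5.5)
\textbf{Gap in Part~2.} Part~1 is fine and matches the paper, but Part~2 fails in the final accounting. Your construction reduces $k$-hyperclique on $n_0$ vertices to $(k+1)$-Independent Set on an instance with $m=\Theta(n_0^3)$. A hypothetical $O(m^{k'/3-\varepsilon})$ algorithm, run with $k'=k+1$, therefore takes time $O(n_0^{k+1-3\varepsilon})$. This beats the hypothesized $n_0^{k}$ barrier only when $\varepsilon>1/3$. Reducing from $(k-1)$-hyperclique instead gives $O(n_0^{k-3\varepsilon})$ against a barrier of $n_0^{k-1}$, which fails the same way. So your argument only excludes $O(m^{(k-1)/3-\varepsilon})$ algorithms. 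The statement asserts the exact exponent $k/3-\varepsilon$ for \emph{every} $\varepsilon>0$. Shifting $k$ by one costs $1/3$ in the exponent of $m$, i.e.\ a full factor $n_0$ in the hyperclique running time. That is not an ``additive constant'' you may drop here; it is precisely what the theorem says cannot be lost.

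\textbf{How to fix it.} The shift is self-inflicted. You needed $z$ and the $(k+1)$-st part only because each padding vertex is adjacent to $V_0^1$ alone. Without $z$, a padding vertex together with one vertex from each of $V_0^2,\dots,V_0^k$ forming a $(k-1)$-hyperclique would be a spurious $k$-independent set. Instead, join every padding vertex by a binary edge to \emph{every} other vertex. Since $n_0\le n$ and $\gamma\ge 2$, this costs $O(n(n+n_0))=O(n^2)=O(n^\gamma)$ edges. Then no independent set of size at least $2$ contains a padding vertex, so the $k$-independent sets of the padded instance are exactly those of the core, and $k$ is preserved. This is the paper's construction. The paper also skips the $k$-partite complementation: it directly embeds an arbitrary $3$-uniform $k$-IS instance (the complement of a hyperclique instance) on $N=\lceil n^{\gamma/3}\rceil$ vertices. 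Your closing remark, that you can add hyperedges among padding vertices to make $m=\Theta(n^\gamma)$ hold exactly, remains valid and covers a point the paper glosses over.
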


We dedicate the first part of this section to constructing our algorithm and proving Theorem \ref{thm:23IS-algo}. 
Our approach consists of first counting the \emph{potential solutions}, which are in principle all independent sets on the underlying ($2$-uniform) graph of $H$ (intuitively, in the first step we ignore arity-$3$ hyperedges), and then using inclusion-exclusion to count all \emph{false potential solutions}, by which we understand those independent sets of the underlying graph of $H$ which contain an arity-$3$ hyperedge in $H$. 
While this approach yields a correct solution, a naive implementation is unfortunately too slow. 
Intuitively, if many arity $3$-edges are clustered together, we would spend too much time counting the higher-order terms in the inclusion-exclusion formula. 
The last step of our algorithm takes care of this by making sure that we never double-count the solutions from such clustered hyperedges, assuring that we only need to compute the higher order terms of the ''nicely structured'' hyperedges, yielding the desired running time.
We start with the following simple observation.
\begin{observation}\label{obs:invalid-solutions}
    Let $H$ be a $3$-hypergraph and $G$ be the underlying graph of $H$ (obtained by removing all arity-$3$ hyperedges). 
    Let $I(G,k)$ be the set of all $k$-independent sets in $G$, and let $I_{\textnormal{invalid}}(H,k)$ be the set of all independent sets in $G$ that contain a hyperedge in $H$. 
    Then $H$ contains an independent set of size $k$ if and only if
    \[
    |I(G,k)| - |I_{\textnormal{invalid}}(H,k)| >0.
    \]
\end{observation}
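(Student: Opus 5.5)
The statement is a simple set identity, so I will prove it by double inclusion, after first noting a disjoint decomposition. Every $k$-independent set of $H$ has size $k$ and contains no hyperedge of $H$; in particular it contains no arity-$2$ edge, so it lies in $I(G,k)$. Since $G$ has the same vertex set as $H$ and only lacks the arity-$3$ hyperedges, every $X \in I(G,k)$ already avoids all arity-$2$ edges of $H$. Such an $X$ is therefore either independent in $H$, or it contains some arity-$3$ hyperedge of $H$. Hence I plan to show the disjoint decomposition
\[
I(G,k) = I(H,k) \,\dot\cup\, I_{\textnormal{invalid}}(H,k),
\]
where $I(H,k)$ denotes the $k$-independent sets of $H$ and $I_{\textnormal{invalid}}(H,k)$ is read, as in the observation, as the size-$k$ members of $I(G,k)$ that contain a hyperedge of $H$.

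The key steps are as follows. First, $I(H,k) \subseteq I(G,k)$, because the edge set of $G$ is a subset of that of $H$. Second, $I_{\textnormal{invalid}}(H,k) \subseteq I(G,k)$, which holds by definition. Third, the two sets are disjoint, since a member of $I_{\textnormal{invalid}}$ contains a hyperedge of $H$ and so is not independent in $H$. Fourth, their union covers $I(G,k)$: any $X \in I(G,k)$ that is not in $I_{\textnormal{invalid}}(H,k)$ contains no hyperedge of $H$ and is therefore in $I(H,k)$.

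Taking cardinalities, everything here is finite, so the decomposition gives $|I(H,k)| = |I(G,k)| - |I_{\textnormal{invalid}}(H,k)|$. It follows that $H$ has a $k$-independent set exactly when $I(H,k) \neq \emptyset$, that is, exactly when this difference is positive.

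There is no real obstacle. The only step that needs care is fixing the convention that $I_{\textnormal{invalid}}$ consists of size-$k$ sets drawn from $I(G,k)$, so that the subtraction is between a set and one of its subsets.
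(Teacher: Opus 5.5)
Your proof is correct and matches the paper. The paper states this as an observation without proof and later uses exactly the identity $|I(H,k)| = |I(G,k)| - |I_{\textnormal{invalid}}(H,k)|$, which follows from your disjoint decomposition $I(G,k) = I(H,k) \,\dot\cup\, I_{\textnormal{invalid}}(H,k)$; your convention that $I_{\textnormal{invalid}}(H,k)$ consists of the size-$k$ members of $I(G,k)$ is the right reading of the paper's slightly informal definition.
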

Recall that the classical clique counting approach computes $|I(G,k)|$ in time $\mathcal O(n^{k\omega/3})$. 
It remains to argue that we can also compute $|I_{\textnormal{invalid}}(H,k)|$ efficiently. 
The following lemma gives us a way to compute $|I_{\textnormal{invalid}}(H,k)|$ via the standard inclusion-exclusion-based approach. 
In particular, for a subset of hyperedges $S$, we can count how many independent sets in $G$ contain $S$ and then make sure we avoid double-counting. 
For any subset of hyperedges $S$, let $I_S(H,k)$ denote the set of all independent sets of size $k$ in $G$ that contain all vertices spanned by $S$.
\begin{lemma}[Invalid Solutions via the Inclusion-Exclusion Principle]\label{lemma:inclusion-exclusion1}
    Let $H$ be a $3$-hypergraph and let $E_3$ denote the set of all arity-$3$ hyperedges in $H$.
    Then the following equality holds.
    \begin{align*}
        |I_{\textnormal{invalid}}(H,k)| = & \sum_{i=1}^{\binom{k}{3}}\sum_{S\in \binom{E_3}{i}}(-1)^{i+1}|I_{S}(H,k)|.
    \end{align*}
\end{lemma}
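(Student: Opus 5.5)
The plan is to apply the standard inclusion–exclusion principle to the union $I_{\textnormal{invalid}}(H,k)=\bigcup_{e\in E_3} I_{\{e\}}(H,k)$, and then truncate the resulting sum.

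\textbf{Step 1: write the invalid set as a union.} By definition, a set $X\in I(G,k)$ is invalid if and only if $e\subseteq X$ for some $e\in E_3$. Here we may take ``hyperedge'' to mean an arity-$3$ hyperedge, since every arity-$2$ edge is already excluded by independence in $G$. Hence
\[
I_{\textnormal{invalid}}(H,k)=\bigcup_{e\in E_3} I_{\{e\}}(H,k).
\]
For any nonempty $S\subseteq E_3$, a set $X$ contains all vertices spanned by $S$ if and only if it contains each $e\in S$. Therefore
\[
I_S(H,k)=\bigcap_{e\in S} I_{\{e\}}(H,k).
\]

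\textbf{Step 2: apply inclusion–exclusion.} The classical formula for a finite union gives
\[
|I_{\textnormal{invalid}}(H,k)|=\sum_{i=1}^{|E_3|}\sum_{S\in\binom{E_3}{i}}(-1)^{i+1}|I_S(H,k)|.
\]
If one wants a self-contained check, fix $X\in I_{\textnormal{invalid}}(H,k)$ and let $t\ge 1$ be the number of hyperedges of $E_3$ contained in $X$. Then $X$ is counted exactly for the nonempty subsets $S$ of those $t$ hyperedges. Its total contribution is
\[
\sum_{i\ge 1}(-1)^{i+1}\binom{t}{i}=1-(1-1)^t=1.
\]
Sets $X$ that are not invalid lie in no $I_S$ with $S\ne\emptyset$, so they contribute $0$.

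\textbf{Step 3: truncate the outer sum at $\binom{k}{3}$.} This is the only point needing care. Every $X\in I_S(H,k)$ has $|X|=k$, and every $e\in S$ is a $3$-subset of $X$. Hence $S\subseteq\binom{X}{3}$, which forces $|S|\le\binom{k}{3}$. So whenever $i>\binom{k}{3}$, every term $|I_S(H,k)|$ with $|S|=i$ vanishes, and the outer sum may stop at $\binom{k}{3}$, as claimed.

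Equivalently, in the counting argument of Step 2, one has $t\le\binom{k}{3}$, so no nonzero binomial coefficient is discarded. Beyond this observation the proof is routine, and I expect no real obstacle.
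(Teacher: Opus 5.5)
Your proof is correct and follows the paper's argument: write $I_{\textnormal{invalid}}(H,k)$ as the union of the sets $I_{\{e\}}(H,k)$, identify $I_S(H,k)$ with the corresponding intersection, and apply inclusion--exclusion. Your Step 3, which justifies stopping the sum at $\binom{k}{3}$ because $S\subseteq\binom{X}{3}$, makes explicit a point the paper leaves implicit.
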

\begin{proof}
    The statement follows directly from the inclusion-exclusion principle, by observing that 
    \[
    I_{\textnormal{invalid}}(H,k) = \bigcup_{e\in E_3} I_{\{e\}}(H,k),
    \]
    and that for any non-empty set $S$ we have 
    \[
    I_{\textnormal{S}}(H,k) = \bigcap_{e\in S} I_{\{e\}}(H,k).\qedhere
    \]
\end{proof}
It is easy to see that naively enumerating all sets $I_S$ and computing the term $|I_S(H,k)|$ is infeasible.
However, we can observe that a lot of these sets can be seen as redundant by a more clever implementation. 
For instance, the only way that a set $S$ that contains $\binom{k}{3}$ edges is a part of an independent set of size $k$ in $G$ is that the $k$ vertices that it spans form a hyperclique in the underlying $3$-uniform hypergraph of $H$. 
However, in that case, we can observe that there is a subset $S'$ of $S$ that is of a much smaller size, in particular contains only $\frac{k}{3}$ hyperedges, that spans the same vertex set as $S$. 
Intuitively, if a particular solution contains many clustered hyperedges, by a naive implementation of our inclusion-exclusion approach, we will enumerate many edge sets $S$ that span the same set of vertices, hinting at the fact that we are doing a lot of redundant work.
In particular, this means that we only need to look at the sets $S$ that contain up to $\frac{k}{3}$ many edges, as long as we can guarantee that the independent sets spanned by these clustered hyperedges are never double counted.
We dedicate the rest of this section to formally introducing the notion of the clustered edges and showing how to avoid this redundant work.
\paragraph*{Handling intersecting edges.} 
%\mirza{Hype this part a bit...}
Let us first introduce some useful notation and terminology. 
We will introduce all of the concepts in a more general setting, namely for $h$-hypergraphs for $h\ge 3$, in order to be able to reuse it in the following sections, but note that in this section we will only focus on the special case of $h=3$.
Denote by $E_i$ the set of all hyperedges of arity $i$, and analogously, denote by $E_{\ge i}$ the set of all hyperedges of arity at least $i$.
Let $\phi:E(H)\to [m]$ be any fixed bijection. 
We use $\phi$ to obtain a total ordering of the hyperedges of $H$. That is, we write $e\prec e'$ if $\phi(e)<\phi(e')$.
Let $e,e'$ be edges that intersect (share common vertices) and assume that $e\prec e'$.
Intuitively, our goal is to efficiently remove all independent sets $X$ that contain $e$ from $I_S(H,k)$ for any $S$ that contains $e'$ to avoid the double counting of the sets spanned by the overlapping edges.
We now formally prove that this does not destroy any valid solutions.
For any set of hyperedges $S$, let $I'_S(H,k)$ be the set of all independent sets $X$ of the underlying graph $G$ of size $k$ satisfying the following two conditions.
\begin{enumerate}
    \item\label{cond:spanning} $X$ contains all the vertices spanned by the hyperedges in $S$.
    \item\label{cond:overlaps} \emph{(Handling the overlapping hyperedges)} For any edge $e\in S$ and any edge $e'\in E_{\ge3}$, such that $e\cap e' \ne \emptyset$, if $e'\prec e$, then $X$ does not contain $e'$.
\end{enumerate}
We now prove that we can safely replace the sets $I_S$ in our inclusion-exclusion formula by the sets $I'_S$.
\begin{lemma}
    Let $H$ be an $h$-hypergraph and let $E_{\ge 3}$ denote the set of all arity-$\ge 3$ hyperedges in $H$.
    Then the following equality holds.
    \begin{align*}
        I_{\textnormal{invalid}}(H,k) = \bigcup_{e\in E_{\ge3}} I'_{\{e\}}(H,k).
    \end{align*}
\end{lemma}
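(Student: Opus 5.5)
We prove the equality by double inclusion, working directly from the definitions of $I_{\textnormal{invalid}}(H,k)$ and $I'_{\{e\}}(H,k)$. For an $h$-hypergraph we read $I_{\textnormal{invalid}}(H,k)$ as the set of $k$-independent sets $X$ of the underlying graph $G$ that contain some hyperedge $e\in E_{\ge 3}$, i.e.\ $e\subseteq X$. This is the natural generalization of Observation~\ref{obs:invalid-solutions}: arity-$2$ edges are already excluded by independence in $G$.

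For the inclusion ``$\supseteq$'', we take $X\in I'_{\{e\}}(H,k)$ for some $e\in E_{\ge3}$. By condition~\ref{cond:spanning}, $X$ contains every vertex of $e$, so $e\subseteq X$. Moreover $X$ is a $k$-independent set in $G$ by definition of $I'$. Hence $X\in I_{\textnormal{invalid}}(H,k)$. Condition~\ref{cond:overlaps} plays no role here; it only shrinks the set, since $I'_{\{e\}}\subseteq I_{\{e\}}$.

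For the inclusion ``$\subseteq$'', we take $X\in I_{\textnormal{invalid}}(H,k)$. The set $F=\{e\in E_{\ge3} : e\subseteq X\}$ is nonempty, and since $\prec$ is a total order induced by the bijection $\phi$, we let $e^*$ be its $\prec$-minimum. We check that $X\in I'_{\{e^*\}}(H,k)$. Independence in $G$ and $|X|=k$ are inherited from $X$, and condition~\ref{cond:spanning} holds because $e^*\subseteq X$. For condition~\ref{cond:overlaps}, the only edge of $S=\{e^*\}$ is $e^*$. Let $e'\in E_{\ge3}$ with $e'\cap e^*\neq\emptyset$ and $e'\prec e^*$. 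If $X$ contained $e'$, then $e'\in F$ would be $\prec$-smaller than $e^*$, contradicting minimality. So $X$ does not contain $e'$, and condition~\ref{cond:overlaps} holds.

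There is no real obstacle. The only care needed is to read ``$X$ contains $e'$'' as $e'\subseteq X$ and to choose the minimum edge over all hyperedges inside $X$, rather than just those intersecting some fixed edge. Doing so makes the intersection hypothesis in condition~\ref{cond:overlaps} irrelevant for this direction.
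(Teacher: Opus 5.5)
Your proof is correct and matches the paper's route: the easy inclusion via $I'_{\{e\}}\subseteq I_{\{e\}}$, and the reverse inclusion by taking the $\prec$-minimal hyperedge contained in $X$. Restricting that minimum to $E_{\ge 3}$ (justified because $X$ is independent in $G$) is a slightly more careful phrasing of the paper's ``first hyperedge contained in $H[X]$''.
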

\begin{proof}
    We prove this by demonstrating set containment in both sides.
    Notice that one side is straightforward, as for each $e$ we have $I'_{\{e\}}\subseteq I_{\{e\}}$, and hence:
    \begin{align*}
          \bigcup_{e\in E_{\ge 3}} I'_{\{e\}}(H,k) \subseteq \bigcup_{e\in E_{\ge3}} I_{\{e\}}(H,k) = I_{\textnormal{invalid}}(H,k),
    \end{align*}
    where the last equality follows directly from definition of the set $I_{\textnormal{invalid}}(H,k)$.
    To show the other containment, let $X$ be any set in $ I_{\textnormal{invalid}}(H,k)$. 
    It suffices to show that there exists some $e$ such that $X$ is contained in $I'_{\{e\}}$.
    By definition, the subhypergraph $H[X]$ contains at least one hyperedge.
    Let $e$ be the first hyperedge contained in $H[X]$ with respect to the ordering $\prec$, i.e. for any hyperedge $e'$ contained in $X$, we have $e\prec e'$. 
    Then clearly $X$ is contained in $I'_{\{e\}}$.
\end{proof}
The idea of replacing $I$ by $I'$ in our inclusion-exclusion formula is to avoid double counting the clustered independent sets, hence we should expect that many higher order terms will vanish. Indeed, we now prove that this is indeed the case.
\begin{lemma}
    Let $H$ be an $h$-hypergraph and let $E_{\ge 3}$ denote the set of all arity-$\ge 3$ hyperedges in $H$. Let $S\subseteq E_{\ge 3}$ be any set of at least $\frac{k}{3}+ 1$ hyperedges. Then $I'_S(H,k) = \emptyset$.
\end{lemma}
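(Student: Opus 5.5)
We argue by a simple dichotomy on whether the hyperedges of $S$ are pairwise disjoint. Suppose for contradiction that some $X\in I'_S(H,k)$ exists; by the spanning condition, $X\supseteq \bigcup_{e\in S} e$.

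\emph{Case 1: the hyperedges in $S$ are pairwise disjoint.} Every hyperedge in $S\subseteq E_{\ge 3}$ has at least $3$ vertices, so the union spans at least
\[
\Bigl|\bigcup_{e\in S} e\Bigr| = \sum_{e\in S}|e| \ge 3|S| \ge 3\Bigl(\tfrac{k}{3}+1\Bigr) = k+3 > k
\]
vertices. Since $X$ must contain all of them, we get $|X|>k$, contradicting $|X|=k$.

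\emph{Case 2: there are distinct $e,e'\in S$ with $e\cap e'\neq\emptyset$.} Since $\prec$ is a total order on distinct hyperedges, we may assume without loss of generality that $e'\prec e$. Now $e\in S$, $e'\in S\subseteq E_{\ge 3}$, $e\cap e'\ne\emptyset$ and $e'\prec e$. Hence the overlap condition requires that $X$ not contain $e'$. But $e'\in S$, so the spanning condition forces $e'\subseteq X$, a contradiction.

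In both cases no such $X$ exists, so $I'_S(H,k)=\emptyset$. The only subtle point is noticing that the overlap condition quantifies over all $e'\in E_{\ge3}$, including members of $S$ itself. This is what makes intersecting pairs inside $S$ self-defeating. The bound in Case 1 holds for any $k$, not only for $k$ divisible by $3$; in that situation $|S|\ge k/3+1$ means $|S|>k/3$ in the same way.
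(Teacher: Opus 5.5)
Your proof is correct and follows the paper's argument essentially verbatim: you use the same dichotomy into pairwise-disjoint $S$, where the spanned vertex set has size at least $3|S|>k$, and intersecting $S$, where the overlap condition applied to the $\prec$-smaller of two intersecting members of $S$ contradicts the spanning condition. Your explicit remark that the overlap condition also ranges over hyperedges $e'\in S$ is exactly the point the paper's proof relies on without stating it.
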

\begin{proof}
    First notice that if for every pair of hyperedges $e,e'\in S$ it holds that $e\cap e' = \emptyset$, then $S$ spans at least $3(\frac{k}{3}+1) > k$ many vertices, and vacuously cannot be contained in any independent set of size $k$ in the underlying graph, hence $I'_S(H,k)\subseteq I_S(H,k) = \emptyset$.
    Now assume that $S$ contains two edges $e,e'$ that share at least one common vertex. 
    Then each independent set in $I_S(H,k)$ violates Condition \ref{cond:overlaps} (since either $e\prec e'$, or vice versa, in either case any independent set that contains both $e,e'$ violates Condition \ref{cond:overlaps}) in the definition of $I'_S(H,k)$ and hence is not contained in $I'_S(H,k)$, implying that $I'_S$ is empty.
\end{proof}
We can now formally rewrite our inclusion-exclusion formula in terms of sets $I'_S$.
\begin{corollary}\label{cor:inclusion-exclusion}
    Let $H$ be an $h$-hypergraph and let $E_{\ge 3}$ denote the set of all arity-$3$ hyperedges in $H$.
    Then the following equality holds.
    \begin{align*}
        |I_{\textnormal{invalid}}(H,k)| = & \sum_{i=1}^{\lceil\frac{k}{3}\rceil}\sum_{S\in \binom{E_{\ge3}}{i}}(-1)^{i+1}|I'_{S}(H,k)|.
    \end{align*}
\end{corollary}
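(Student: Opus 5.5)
The corollary will follow by running the inclusion--exclusion principle on the sets $I'_{\{e\}}(H,k)$ instead of $I_{\{e\}}(H,k)$, and then truncating the sum with the vanishing lemma.

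\textbf{Step 1: identify intersections with $I'_S$.} For every nonempty $S\subseteq E_{\ge 3}$ we show
\[
\bigcap_{e\in S} I'_{\{e\}}(H,k) = I'_S(H,k).
\]
Both conditions in the definition of $I'_S$ are conjunctions over the edges $e\in S$. Condition~\ref{cond:spanning} says that $X$ contains $V(S)=\bigcup_{e\in S} e$, which is the same as saying that $X$ contains each $e\in S$. Condition~\ref{cond:overlaps} quantifies over each $e\in S$ separately. Hence $X\in I'_S$ if and only if $X\in I'_{\{e\}}$ for every $e\in S$.

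\textbf{Step 2: apply inclusion--exclusion.} By the preceding lemma, $I_{\textnormal{invalid}}(H,k)=\bigcup_{e\in E_{\ge3}} I'_{\{e\}}(H,k)$. The inclusion--exclusion principle over this finite union gives
\[
|I_{\textnormal{invalid}}(H,k)|=\sum_{i\ge 1}\sum_{S\in\binom{E_{\ge3}}{i}}(-1)^{i+1}\Bigl|\bigcap_{e\in S}I'_{\{e\}}(H,k)\Bigr|.
\]
By Step~1, each intersection inside the absolute value equals $I'_S(H,k)$.

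\textbf{Step 3: truncate.} By the previous lemma, $I'_S(H,k)=\emptyset$ whenever $|S|\ge \frac{k}{3}+1$. When $3\mid k$, every $i>\lceil k/3\rceil$ satisfies $i\ge k/3+1$, so all those terms vanish. In general, every integer $i>\lceil k/3\rceil$ has $i\ge \lceil k/3\rceil+1\ge k/3+1$, so those terms vanish as well. The outer sum therefore stops at $\lceil k/3\rceil$, which is exactly the stated formula.

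The only step needing care is Step~1, namely checking that the two conditions defining $I'_S$ really are edgewise conjunctions. Once that holds, the rest is bookkeeping.
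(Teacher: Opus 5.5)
Your proof is correct and follows the same route the paper intends: the paper states the corollary without a separate proof, as a consequence of the union lemma $I_{\textnormal{invalid}}(H,k)=\bigcup_{e\in E_{\ge3}} I'_{\{e\}}(H,k)$, inclusion--exclusion, and the vanishing of $I'_S(H,k)$ for $|S|\ge k/3+1$. Your Step~1, showing that $\bigcap_{e\in S} I'_{\{e\}}(H,k)=I'_S(H,k)$ because both defining conditions are conjunctions over $e\in S$, supplies a detail the paper leaves implicit, and it is exactly what justifies writing the inclusion--exclusion terms as $|I'_S(H,k)|$.
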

\begin{figure}[h]
    \centering
    % --- Subfigure A: Type 1 Intersection ---
    % |e| = 3, |e'| = 3. focus on e' \ e.
    \begin{subfigure}{0.48\textwidth}
        \centering
        \begin{tikzpicture}[scale=1.7]
            % Vertices
            \node[std_node] (v1) at (0, 0.6) {};
            \node[std_node] (v2) at (0, -0.6) {};
            
            % v3 is in e \ e', now set to standard
            \node[std_node] (v3) at (-1, 0) {}; 
            % v4 is the vertex in e' \ e, now highlighted
            \node[highlighted_node] (v4) at (1, 0) {};
            
            % Vertex Labels
            \node[label_style, left=0.1cm of v3] {$v_3$};
            \node[label_style, right=0.1cm of v4] {$v_4$};
            \node[label_style, above=0.1cm of v1] {$v_1$};
            \node[label_style, below=0.1cm of v2] {$v_2$};

            \begin{scope}[on background layer]
                % Hyperedge e contains {v1, v2, v3}
                \draw[hyperedge_e] 
                    plot [smooth cycle, tension=1] coordinates {(-1.4,0) (0, 0.9) (0.4, 0) (0, -0.9)};
                \node[label_style, blue, above left] at (-1.1, 0.6) {$e$};

                % Hyperedge e' contains {v1, v2, v4}
                \draw[hyperedge_ep] 
                    plot [smooth cycle, tension=1] coordinates {(1.4,0) (0, 0.9) (-0.4, 0) (0, -0.9)};
                \node[label_style, green!60!black, above right] at (1.1, 0.6) {$e'$};
            \end{scope}
        \end{tikzpicture}
        \caption{Type 1: $|e' \setminus e| = 1$.}
        \label{fig:type1}
    \end{subfigure}
    \hfill
    % --- Subfigure B: Type 2 Intersection ---
    % |e| = 3, |e'| = 3. focus on e' \ e.
    \begin{subfigure}{0.48\textwidth}
        \centering
        \begin{tikzpicture}[scale=1.7]
            % Vertices
            \node[std_node] (v1) at (0, 0) {};
            
            % v2, v3 are in e \ e', now set to standard
            \node[std_node] (v2) at (-0.9, 0.5) {};
            \node[std_node] (v3) at (-0.9, -0.5) {};
            
            % v4, v5 are in e' \ e, now highlighted
            \node[highlighted_node] (v4) at (0.9, 0.5) {};
            \node[highlighted_node] (v5) at (0.9, -0.5) {};
            
            % Vertex Labels
            \node[label_style, below=0.2cm of v1] {$v_1$};
            \node[label_style, left=0.05cm of v2] {$v_2$};
            \node[label_style, left=0.05cm of v3] {$v_3$};
            \node[label_style, right=0.1cm of v4] {$v_4$};
            \node[label_style, right=0.1cm of v5] {$v_5$};

            \begin{scope}[on background layer]
                % Hyperedge e contains {v1, v2, v3}
                \draw[hyperedge_e] 
                    plot [smooth cycle, tension=1] coordinates {(-1.4,0) (-0.6, 0.9) (0.3, 0) (-0.6, -0.9)};
                \node[label_style, blue, above left] at (-1.2, 0.7) {$e$};

                % Hyperedge e' contains {v1, v4, v5}
                \draw[hyperedge_ep] 
                    plot [smooth cycle, tension=1] coordinates {(1.4,0) (0.6, 0.9) (-0.3, 0) (0.6, -0.9)};
                \node[label_style, green!60!black, above right] at (1.2, 0.7) {$e'$};
            \end{scope}
        \end{tikzpicture}
        \caption{Type 2: $|e' \setminus e| = 2$.}
        \label{fig:type2}
    \end{subfigure}
        \begin{subfigure}{0.48\textwidth}
        \centering
        \begin{tikzpicture}[scale=1.7]
            % Vertices
            \node[std_node] (v1) at (0, 0.6) {};
            \node[std_node] (v2) at (0, -0.6) {};
            \node[std_node] (v3) at (-1, 0) {}; 
            \node[std_node] (v4) at (1, 0) {};
            
            % Red X over v4 (the difference vertex)
            \draw[red, thick, shorten <=-3pt, shorten >=-3pt] (v4.north west) -- (v4.south east);
            \draw[red, thick, shorten <=-3pt, shorten >=-3pt] (v4.north east) -- (v4.south west);
            
            % Vertex Labels
            \node[label_style, left=0.1cm of v3] {$v_3$};
            \node[label_style, right=0.1cm of v4] {$v_4$};
            \node[label_style, above=0.1cm of v1] {$v_1$};
            \node[label_style, below=0.1cm of v2] {$v_2$};

            \begin{scope}[on background layer]
                \draw[hyperedge_e] 
                    plot [smooth cycle, tension=1] coordinates {(-1.4,0) (0, 0.9) (0.4, 0) (0, -0.9)};
                \node[label_style, blue, above left] at (-1.1, 0.6) {$e$};
            \end{scope}
        \end{tikzpicture}
        \caption{Modify type 1: remove the vertex in $e' \setminus e$.}
        \label{fig:type1_mod}
    \end{subfigure}
    \hfill
    % --- Subfigure B: Type 2 Intersection ---
    % |e' \ e| = 2. We connect the two vertices in the difference.
    \begin{subfigure}{0.48\textwidth}
        \centering
        \begin{tikzpicture}[scale=1.7]
            % Vertices
            \node[std_node] (v1) at (0, 0) {};
            \node[std_node] (v2) at (-0.9, 0.5) {};
            \node[std_node] (v3) at (-0.9, -0.5) {};
            \node[std_node] (v4) at (0.9, 0.5) {};
            \node[std_node] (v5) at (0.9, -0.5) {};
            
            % Red edge between v4 and v5 (vertices in the difference)
            \draw[red, thick] (v4) -- (v5);
            
            % Vertex Labels
            \node[label_style, below=0.2cm of v1] {$v_1$};
            \node[label_style, left=0.05cm of v2] {$v_2$};
            \node[label_style, left=0.05cm of v3] {$v_3$};
            \node[label_style, right=0.1cm of v4] {$v_4$};
            \node[label_style, right=0.1cm of v5] {$v_5$};

            \begin{scope}[on background layer]
                \draw[hyperedge_e] 
                    plot [smooth cycle, tension=1] coordinates {(-1.4,0) (-0.6, 0.9) (0.3, 0) (-0.6, -0.9)};
                \node[label_style, blue, above left] at (-1.2, 0.7) {$e$};
            \end{scope}
        \end{tikzpicture}
        \caption{Modify type 2: add an edge spanning $e' \setminus e$.}
        \label{fig:type2_mod}
    \end{subfigure}
    \caption{Two types of intersections of the hyperedges and the modifications introduced by Algorithm \ref{alg:resolve-intersection}.}
    \label{fig:intersectiontypes}
\end{figure}
We now proceed to argue that we can construct the sets $I'_S$ efficiently.
\paragraph*{Constructing the sets $I'_S$}
Given two hyperedges $e,e'$ such that $e\cap e' \ne \emptyset$, we distinguish between two types of intersections between them.
We say that the intersection of two hyperedges $e,e'$ is \emph{type $i$} if (1) $e'\prec e$, and (2) $|e'\setminus e| = i$ (see Figure \ref{fig:intersectiontypes}).
The idea to construct the sets $I'_S$ is to recursively guess an arity-$3$ edge $e$ and then for all edges $e'$ in $E_3(H)$ such that the intersection $e,e'$ is type $i$, we span the vertex set $(e'\setminus e)$ by a hyperedge of arity $i$ in all the descending branches. 
This ensures that we never consider the independent sets that violate Condition \ref{cond:overlaps} in the definition of the set $I'_S$.
More formally, we consider the following algorithm.
\begin{algorithm}
\begin{algorithmic}[1]
    \Procedure{resolve-intersections}{$H, S \subset E(H)$}
    \For{$e\in S$}
        \State $H' \gets H$
        \For{$e'\in E_3(H)$}\label{line:4}
            \If{$|e\cap e'| = 0$ \textbf{or} $e\prec e'$}
                \State \textbf{continue} 
            \ElsIf{$|e'\setminus e| = 1$}
                \State $v\gets \text{the unique vertex in set } (e'\setminus e)$
                \State $H' \gets H'- v$
            \Else
                \State $e'' \gets \text{hyperedge spanning } (e'\setminus e)$
                \State $H' \gets (H' - e') \cup e''$
            \EndIf
        \EndFor 
    \EndFor
    \Return $H'$
    \EndProcedure
\end{algorithmic}
\caption{}%Algorithm for Partial $k$-Dominating Set.}
\label{alg:resolve-intersection}
\end{algorithm}
\begin{lemma}\label{lemma:resolve-intersections}
    Let $H$ be a $3$-hypergraph and let $S\subset E_3(H)$.
    Let $H'$ be the hypergraph returned by the \textsc{resolve-intersections}$(H,S)$ function in Algorithm \ref{alg:resolve-intersection}.
    Then $I'_S(H,k) = I_S(H',k)$.
\end{lemma}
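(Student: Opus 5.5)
We argue by double inclusion. First we fix what the objects are. $I_S(H',k)$ is the set of size-$k$ independent sets $X$ of the underlying graph $G'$ of $H'$ that contain $V(S)$; here $G'$ consists of the binary edges of $H$ together with the newly added binary edges $e''$, restricted to the non-deleted vertices. We read the procedure as applying the modifications for all $e\in S$ cumulatively to a single $H'$, which is the intended semantics. We also assume $S$ is a set of pairwise disjoint hyperedges, or at least that no $e_1,e_2\in S$ intersect in a way that would delete a vertex of $V(S)$; otherwise both sides are empty, which we check separately below. Throughout we use that for $|e|=3$ and $e'\neq e$ intersecting $e$, $|e'\setminus e|\in\{1,2\}$.

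\emph{Step 1: $I'_S(H,k)\subseteq I_S(H',k)$.} Take $X\in I'_S(H,k)$. Then $X$ contains $V(S)$ and is independent in $G$. Consider any modification triggered by a pair $e\in S$, $e'\in E_3$ with $e\cap e'\neq\emptyset$ and $e'\prec e$. Condition~\ref{cond:overlaps} gives $e'\not\subseteq X$. Since $e\subseteq X$, this means $X\not\supseteq e'\setminus e$.
\begin{itemize}
\item If $|e'\setminus e|=1$, the unique vertex $v$ is not in $X$, so deleting $v$ does not affect $X$.
\item If $|e'\setminus e|=2$, $X$ does not contain both endpoints of the new edge $e''$, so $X$ stays independent.
\end{itemize}
Hence $X$ survives in $H'$, is independent in $G'$ and contains $V(S)$. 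For this we need that no vertex of $V(S)$ gets deleted. That holds because every deleted vertex lies outside $X\supseteq V(S)$.

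\emph{Step 2: $I_S(H',k)\subseteq I'_S(H,k)$.} Take $X\in I_S(H',k)$. Since $G\subseteq G'$ on the surviving vertices, $X$ is independent in $G$ and contains $V(S)$, so Condition~\ref{cond:spanning} holds. For Condition~\ref{cond:overlaps}, take $e\in S$ and $e'\in E_3$ with $e\cap e'\neq\emptyset$ and $e'\prec e$.
\begin{itemize}
\item If $|e'\setminus e|=1$, the vertex $v$ was deleted, so $v\notin X$ and thus $e'\not\subseteq X$.
\item If $|e'\setminus e|=2$, the edge $e''$ is present in $G'$, so $X$ does not contain both of its vertices, and again $e'\not\subseteq X$.
\end{itemize}
The hyperedges $e'$ of arity $\ge 3$ that the condition quantifies over are all in $E_3$ in a $3$-hypergraph, so the condition is fully covered.

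\emph{Main obstacle.} The main subtlety is the interaction between modifications caused by different $e\in S$. A deletion triggered by $e_1$ might remove a vertex of another $e_2\in S$. The replacement $(H'-e')\cup e''$ might also remove a hyperedge $e'$ that is itself in $S$ or is needed elsewhere. The first case is harmless: if the deleted $v$ lies in $V(S)$, then $I_S(H',k)=\emptyset$, and also $I'_S(H,k)=\emptyset$ because any $X\supseteq V(S)$ would contain $e'$, violating Condition~\ref{cond:overlaps}. The same argument applies when both vertices of a new $e''$ lie in $V(S)$. For removal of hyperedges, the arity-$3$ edges of $H'$ play no role in $I_S(H',k)$, which only refers to the underlying graph and $V(S)$ as computed in $H$. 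So deleting $e'$ is harmless as long as $V(S)$ is interpreted with respect to the original $S$.

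I would state these conventions explicitly at the start of the proof and then carry out Steps 1 and 2 as above.
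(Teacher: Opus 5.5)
Your proof is correct and follows essentially the same route as the paper: a double inclusion with the same case split on $|e'\setminus e|\in\{1,2\}$, where deleting the unique vertex of $e'\setminus e$, or adding a binary edge on $e'\setminus e$, rules out exactly those sets $X\supseteq e$ that contain an intersecting $e'\prec e$. Your remarks on the cumulative reading of the loop (the paper's pseudocode resets $H'\gets H$ inside the loop over $e\in S$) and on deletions that hit $V(S)$ make explicit what the paper leaves implicit, without changing the argument.
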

\begin{proof}
    Let $X$ be any element in $I_S(H',k)$. 
    We only need to prove that $X$ satisfies the Condition \ref{cond:overlaps} in the definition of the set $I'_S$. 
    Let $e'\in E_3(H)$ be any hyperedge in $H$ such that for some $e\in S$, $e'\prec e$ and $e\cap e'\ne \emptyset$.
    We consider different cases based on the value of $|e'\setminus e|$.
    If $|e'\setminus e| = 1$, our algorithm removes the unique vertex that lies in the difference from $H'$, and hence no subset of vertices in $H'$ can contain $e'$.
    If $|e'\setminus e| = 2$, our algorithm adds an arity-$2$ edge between the two vertices $u,v$ that lie in the difference, and so no independent set of the underlying graph $G'$ of $H'$ contains $e'$. 
    The remaining two cases, namely $|e'\setminus e| = 0$ and $|e'\setminus e| \ge 3$, are trivial. 
    We can discard the former case, since it would imply that $e=e'$. 
    Similarly, the latter case implies that $e\cap e' = \emptyset$, which we assumed was not the case.
    This proves that $X\in I'_S(H,k)$, and hence $I_S(H',k)\subseteq I'_S(H,k)$.

    For the other containment, consider any element $X$ in $I'_S(H,k)$. 
    Assume for contradiction that $I_S(H',k)$ does not contain $X$. 
    In particular, this means that $X$ does not form an independent set in the underlying graph of $H'$, but it does form an independent set in the underlying graph of $H$.
    However, this is only possible if the induced subhypergraph $H[X]$ contains an edge $e'$ such that (1) $e\cap e' \ne \emptyset$ and (2) $e'\prec e$ for some $e\in S$. 
    But this violates Condition \ref{cond:overlaps} in the definition of the set $I'_S$, implying that $X\not\in I'_S(H,k)$, yielding a contradiction.
\end{proof}
We can now use Algorithm \ref{alg:resolve-intersection} as a subroutine in our main algorithm.
Recall that in a hypergraph $H$, a \emph{matching} $M$ of size $i$ is a set of $i$ pairwise non-intersecting hyperedges. Also, for simplicity, assume that $k$ is divisible by $3$, and we will handle the remaining cases later in the analysis.
\begin{comment}
\begin{algorithm}
\begin{algorithmic}[1]
    \Procedure{Invalid-Solutions}{$H, k, {\rm sign}$, count}
    \If{$k\le 0$}
            \Return count
    \EndIf
    \For{$e = \{u,v,w\}\in E_3(H)$}
        \State $H'\gets $ \textsc{resolve-intersections}$(H,e)$
        \State $G' \gets$ underlying graph of $H'$.
        \State $R \gets N_{G'}[u]\cup N_{G'}[v]\cup N_{G'}[w]$
        \State $ct \gets $ $\#k$-Independent-Sets in $G'$ that contain $\{u,v,w\}$.
        \State ${\rm count} \gets $ \textsc{Invalid-Solutions}$(H'-R, k-3, -{\rm sign }, {\rm count} + {\rm sign} \cdot ct)$
    \EndFor
    \Return count
    \EndProcedure
\end{algorithmic}
\caption{}%Algorithm for Partial $k$-Dominating Set.}
\label{alg:resolve-intersection}
\end{algorithm}
\end{comment}
\begin{algorithm}
\begin{algorithmic}[1]
    \Procedure{Invalid-Solutions}{$H, k$}
    \State count $\gets 0$
    \State $G\gets$ Underlying graph of $H$
    \For{$i \in [k/3]$}
        \State sign $\gets (-1)^{i+1}$
        \For{matching $S\subset E_{\ge 3}$ of size $i$}\label{line:5}
            \If{$|V(S)| = k$}
            \If{$S\in I(G,k)$}\label{line:6}
                \State ${\rm count}\verb!++!$
            \EndIf
            \State \textbf{continue} 
            \EndIf
            \State $H' \gets$ \textsc{resolve-intersections}$(H,S)$
            \State ${\rm count} \gets  {\rm count} + {\rm sign}\cdot |I_{S}(H',k)|$
        \EndFor
    \EndFor
    \Return count
    \EndProcedure
\end{algorithmic}
\caption{}%Algorithm for Partial $k$-Dominating Set.}
\label{alg:invalid-solutions}
\end{algorithm}
It remains to prove the correctness of this algorithm and analyse the running time.
\begin{lemma}
    Given a $3$-Hypergraph $H$ with $n$ vertices and $m$ hyperedges, Algorithm \ref{alg:invalid-solutions} returns the value $ |I_{\textnormal{invalid}}(H,k)|$ correctly in time $\mathcal O\left( n^{\frac{k\omega}{3} } + m^{\frac{k}{3}} \right)$.
\end{lemma}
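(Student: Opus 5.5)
The proof splits into correctness and running time.

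\textbf{Correctness.} The starting point is Corollary~\ref{cor:inclusion-exclusion}, which writes $|I_{\textnormal{invalid}}(H,k)|$ as an alternating sum of $|I'_S(H,k)|$ over sets $S\subseteq E_{\ge 3}$ with $|S|\le k/3$. Two reductions bring this sum down to what Algorithm~\ref{alg:invalid-solutions} computes.

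First, only matchings $S$ contribute. If two hyperedges of $S$ intersect, then, as in the proof of the lemma bounding $|S|$, every $X\in I_S(H,k)$ contains both of them. This violates Condition~\ref{cond:overlaps}, so $I'_S(H,k)=\emptyset$.

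Second, for each matching $S$, Lemma~\ref{lemma:resolve-intersections} gives $I'_S(H,k)=I_S(H',k)$ with $H'=\textsc{resolve-intersections}(H,S)$. The algorithm adds exactly this term with sign $(-1)^{i+1}$.

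The special branch $|V(S)|=k$ needs separate attention. Since $S$ is a matching of arity-$3$ edges, $|V(S)|=3|S|\le k$, and equality forces $i=k/3$. In that case $X$ must equal $V(S)$, so $|I'_S(H,k)|\in\{0,1\}$. I would check that the algorithm's test agrees with membership in $I'_S$, i.e.\ that $V(S)$ is independent in $G$ and satisfies Condition~\ref{cond:overlaps}; the latter can be checked directly in $O(m)$ time or via $H'$. I would also check the sign: $(-1)^{k/3+1}$ should be applied to this $\pm1$ term. If the pseudocode only increments, I would read ``count{++}'' as ``count $\gets$ count$+$sign'', or argue the sign is correct as intended.

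\textbf{Running time.} Fix $i<k/3$. There are at most $m^i$ matchings of size $i$. For each, \textsc{resolve-intersections} runs in $O(m)$ time. To compute $|I_S(H',k)|$, I would restrict the underlying graph $G'$ of $H'$ to the vertices outside $V(S)$ that are non-adjacent to $V(S)$. I would then count independent sets of size $k-3i$ there, i.e.\ $(k-3i)$-cliques in the complement, with Nešetřil--Poljak. Since $k-3i$ is divisible by $3$, this takes $O(n^{(k-3i)\omega/3})$ time. For $i=k/3$ each matching costs $O(m)$ if done naively, which would give $m^{k/3+1}$. To avoid this, the $O(1)$-time check should use precomputed adjacency and a precomputed ``minimal incident hyperedge'' structure, or a charging argument. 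For example, I would precompute for each vertex the $\prec$-minimal arity-$3$ edge containing it, so that Condition~\ref{cond:overlaps} can be tested locally in $O(k^3)$ time by checking the edges inside $V(S)$.

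Summing gives
\[ \sum_{i=0}^{k/3-1} m^i\bigl(m+n^{(k-3i)\omega/3}\bigr)+m^{k/3}. \]
The terms $m^{i+1}$ are at most $m^{k/3}$. For the mixed terms $m^in^{(k-3i)\omega/3}$, I would use that the exponent is linear in $i$, so each term is at most $\max\{n^{k\omega/3},\,m^{k/3}\}$. The endpoint cases are immediate: $i=0$ gives $n^{k\omega/3}$, and at $i=k/3$ the term would be $m^{k/3}$. With constantly many $i$, the total is $O(n^{k\omega/3}+m^{k/3})$.

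\textbf{Main obstacle.} I expect the hardest step to be the careful bookkeeping, in particular the $i=k/3$ layer: keeping the per-matching cost at $O(1)$ there, and getting the sign of that term right.
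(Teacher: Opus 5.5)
Your proposal is correct and follows the paper's proof: correctness comes from Corollary~\ref{cor:inclusion-exclusion} together with Lemma~\ref{lemma:resolve-intersections}, and the running time comes from summing $m^i\bigl(m+n^{(k-3i)\omega/3}\bigr)$ over matchings, with the mixed terms dominated by the endpoints $i=0$ and $i=k/3$. Your extra care at the $|V(S)|=k$ layer is actually needed, not optional: the paper only tests independence of $V(S)$ in $G$ and increments unconditionally, which misses Condition~\ref{cond:overlaps} and the sign $(-1)^{k/3+1}$ (already for $k=6$ with two disjoint triples on six vertices it returns $3$ instead of $1$), so you should drop your hedge ``argue the sign is correct as intended''; also, a per-vertex $\prec$-minimal incident edge does not decide Condition~\ref{cond:overlaps}, and what makes your $O(k^3)$ local test work is a membership structure for $E_3$ (e.g.\ a hash table) that checks each of the $\binom{k}{3}$ triples inside $V(S)$ against the edges of $S$ it meets.
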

\begin{proof}
    Correctness follows directly from Lemma \ref{lemma:resolve-intersections} and Corollary \ref{cor:inclusion-exclusion}.
    We now prove the upper bound on the running time. 
    Recall that there are at most $\mathcal O(m^{i})$ matchings of size $i$, and that each such matching $S$ by definition spans precisely $3i$ many vertices in $3$-hypergraphs. 
    In particular, the set $I_S(H',k)$ consists of $3i$ many ''guessed'' matching vertices and $k-3i$ many vertices outside $S$. 
    If $3i<k$, we can find the remaining $k-3i$ many vertices by running the standard matrix-multiplication algorithm on the underlying graph $G'- N_{G'}[S]$. 
    Constructing $H'$ (and consequently the underlying graph $G'$) takes at most $\mathcal O(m)$ time.
    Note that if $3i = k$, we only need to check that $S$ forms an independent set in the underlying graph $G$, which we can do in $\mathcal O(1)$ time (Line \ref{line:6}).
    This allows us to bound the total running time as follows.
    \begin{align*}
        T_k(n,m) &= \mathcal{O}\left(m^{\frac{k}{3}} + \sum_{i=0}^{\frac{k}{3}-1} m^i\left( m + n^{(k-3i)\omega/3} \right) \right)\\
        & = \mathcal{O}\left(m^{\frac{k}{3}} + \sum_{i=0}^{\frac{k}{3}-1} m^{i+1} + m^in^{(k-3i)\omega/3}  \right)\\
        & = \mathcal{O}\left(m^{\frac{k}{3}} + \sum_{i=0}^{\frac{k}{3}-1} m^in^{(k-3i)\omega/3}   \right)\\
        & = \mathcal{O}\left(\sum_{i=0}^{\frac{k}{3}} m^in^{(k-3i)\omega/3}   \right)\\
        & = \mathcal{O}\left(m^{\frac{k}{3}} +  n^{k\omega/3}   \right).
    \end{align*}
\end{proof}
This Lemma, together with Observation \ref{obs:invalid-solutions} concludes the proof of Theorem \ref{thm:23IS-algo}.
We dedicate the rest of this section towards constructing our conditional lower bounds and moreover proving Theorem \ref{thm:23ISLB}.
\paragraph*{Conditional Lower Bounds} We remark that the first conditional lower bound from Theorem \ref{thm:23ISLB} follows directly from observing that finding a $k$-independent set in a graph is a special case of finding a $k$-independent set in a $3$-hypergraph. 
We proceed to prove that for any $2\le \gamma \le 3$, and $m = \Theta(n^\gamma)$, no algorithm can detect $k$-independent sets in $3$-hypergraphs with $n$ vertices and $m$ edges in time $m^{k/3-\varepsilon}$, unless the $3$-Uniform Hyperclique Hypothesis fails. 
To this end, we construct a simple reduction from detecting a $k$-independent set in a $3$-uniform hypergraph by using a sparse embedding. 
Let $n$ be any large integer and fix any $2\le \gamma\le 3$. 
Let $N:= \lceil n^{\gamma/3}\rceil$. Let $H^* = (V^*,E^*)$ be any given $3$-uniform hypergraph. 
Let $H = (V,E)$ be the $3$-hypergraph constructed as follows. Let $V'$ be a set of $n$ vertices and set $V = V'\cup V^*$. 
Moreover, let $E = E^* \cup \{xy \mid x\in V', y\in V-x \}$, i.e. we add an edge from every vertex in $V'$ to every other vertex in $H$.
\begin{lemma}\label{lemma:3is-lower-bound}
    The $3$-Hypergraph $H$ has $\mathcal O(n)$ many vertices and $m=\mathcal O(n^\gamma)$ many edges.
    Moreover, if there exists an algorithm detecting a $k$-independent set in $H$ in time $\mathcal O(m^{\frac{k}{3}-\varepsilon})$ for some $\varepsilon>0$, then there exists an algorithm detecting a $k$-independent set in $H^*$ in time $N^{k-2\varepsilon}$.
\end{lemma}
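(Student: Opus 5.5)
The plan is a direct reduction with three ingredients: bound the size of $H$, show that the $k$-independent sets of $H$ are exactly those of $H^*$, and translate the running time. Throughout, $H^*$ has $N=\lceil n^{\gamma/3}\rceil$ vertices. Equivalently, given an arbitrary $3$-uniform $H^*$ on $N$ vertices, we choose $n$ with $N=\lceil n^{\gamma/3}\rceil$, i.e.\ $n\approx N^{3/\gamma}$.

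\emph{Size bounds.} We have $|V|=n+N$. Since $\gamma\le 3$, $N=O(n)$, so $|V|=O(n)$. For the edges, $|E^*|\le\binom{N}{3}=O(N^3)=O(n^\gamma)$. The added binary edges number at most $n\cdot(n+N)=O(n^2)$, and $n^2=O(n^\gamma)$ because $\gamma\ge 2$. Hence $m=O(n^\gamma)$. If the statement is to hold with $m=\Theta(n^\gamma)$ exactly, we can pad with $\Theta(n^\gamma)$ arity-$3$ hyperedges inside $V'$. This does not affect correctness, since every vertex of $V'$ is already excluded from all solutions, as shown next.

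\emph{Correctness.} Assume $k\ge 2$. Every $x\in V'$ is joined by a binary edge to every other vertex of $V$. So any set $X\subseteq V$ with $|X|=k\ge 2$ and $X\cap V'\neq\emptyset$ contains such an edge and is not independent. Hence every $k$-independent set of $H$ lies inside $V^*$. The edges of $H$ induced on $V^*$ are exactly $E^*$, because all added edges meet $V'$. Therefore $X\subseteq V^*$ is a $k$-independent set in $H$ if and only if it is one in $H^*$, and the two instances have the same answer.

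\emph{Running time.} Constructing $H$ takes $O(n+m)=O(n^\gamma)=O(N^3)$ time, which is negligible for $k>3$. Running the assumed algorithm on $H$ costs
\[
O\bigl(m^{k/3-\varepsilon}\bigr)=O\bigl(n^{\gamma(k/3-\varepsilon)}\bigr)=O\bigl(N^{3(k/3-\varepsilon)}\bigr)=O\bigl(N^{k-3\varepsilon}\bigr),
\]
which is at most $N^{k-2\varepsilon}$ for large $N$. This gives the claimed algorithm for $H^*$.

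The only points needing care are the two exponent constraints: $\gamma\ge 2$ keeps the $O(n^2)$ padding edges within budget, and $\gamma\le 3$ keeps $N=O(n)$. The rest is routine. Finally, $k$-independent set in a $3$-uniform hypergraph is equivalent to $k$-hyperclique by complementing the hyperedges, so this yields the lower bound under the $3$-Uniform Hyperclique Hypothesis.
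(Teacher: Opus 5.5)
Your proposal is correct and follows the paper's proof essentially step for step. It uses the same vertex and edge counts (relying on $2\le\gamma\le 3$), the same observation that every vertex of $V'$ is joined to all other vertices, so that every $k$-independent set lies in $V^*$ where the induced edges are exactly $E^*$, and the same translation of $O(m^{k/3-\varepsilon})$ into a running time in $N$. Your exponent bookkeeping $n^{\gamma(k/3-\varepsilon)}=O(N^{k-3\varepsilon})$ is in fact slightly more accurate than the paper's intermediate $N^{k-\gamma\varepsilon}$, and both give the claimed bound $N^{k-2\varepsilon}$.
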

\begin{proof}
    We first note that $H$ has
    \[\mathcal O(N+n) = \mathcal O(n^{\gamma/3}+n) = \mathcal O(n)\]
    many vertices.
    Furthermore, we can bound the number of edges in $H$ as follows.
    \begin{align*}
        |E| &= |E^*| + \mathcal{O}(n^2) \\&= \mathcal{O}(N^3 + n^2) \\&= \mathcal{O}(n^\gamma + n^2)\\
        &= \mathcal{O}(n^\gamma).
    \end{align*}
    We now prove that there is a one-to-one correspondence between $k$-independent sets in $H^*$ and $k$-independent sets in $H$. 
    We first observe that for any $k\ge 3$, any independent set in $H^*$ is also an independent set in $H$. 
    This follows directly by noticing that each edge in $E\setminus E^*$ contains at least one endpoint not in $V^*$.
    Conversely, for any $k\ge 3$, any set of $k$ vertices that contains at least one vertex from $V'$ also contains an edge (hence is not an independent set). In particular, any $k$-independent set in $H$ is also a $k$-independent set in $H^*$.
    Finally, assume that there is an algorithm detecting a $k$-independent set in $H$ in time $\mathcal O(m^{\frac{k\omega}{3}-\varepsilon})$. 
    Then we can construct this reduction and detect the $k$-independent set in $H^*$ in time
    \begin{align*}
        T & = \mathcal O(m+m^{\frac{k}{3}-\varepsilon}) \\
        &= \mathcal O(n^\gamma + n^{\gamma \frac{k}{3}-\gamma \varepsilon})\\
        &=  \mathcal O(n^{\gamma \frac{k}{3}-\gamma \varepsilon})\\
        & = \mathcal O(N^{k-\gamma \varepsilon})\\
        & =  \mathcal O(N^{k-2\varepsilon})
    \end{align*}
\end{proof}
Theorem \ref{thm:23ISLB} now follows directly.

\subsection{\texorpdfstring{$k$}{k}-Independent Set in General \texorpdfstring{$h$}{h}-Hypergraphs}
We start this section by showing that our techniques for $3$-hypergraphs can be extended to detect $k$-independent sets in general $h$-hypergraphs. 
Later on, by being somewhat more careful, we can obtain a slightly better running time and proceed to show that, perhaps surprisingly, this is as good as we can hope for, unless one of the established hypotheses in fine-grained complexity fails.
More formally, we start by proving the following proposition.
\HigherArityAlgo*
Let $E_{\ge 3}$ denote the set of all hyperedges of $H$ of arity $\ge 3$.
Recall that $I(G,k)$ denotes the set of all $k$-independent sets in the underlying graph of $H$ (i.e. in graph $G = (V, E\setminus E_{\ge 3})$), and that $I_{\text{invalid}}(H,k)$ denotes the set of all independent sets in $G$ that contain a hyperedge in $H$ (false solutions).
Also, recall that $H$ contains an independent set of size $k$ if and only if
\[
 |I(H,k)| = |I(G,k)| - |I_{\text{invalid}}(H,k)|>0.
\]
The challenge for going from $3$-hypergraphs to general $h$-hypergraphs is that in general Lemma \ref{lemma:resolve-intersections} does not necessarily hold. 
In particular, running Algorithm \ref{alg:resolve-intersection} on general graphs may introduce new hyperedges of arity larger than two (recall that in  $3$-hypergraphs, all the intersections between hyperedges are either type-1, or type-2 intersections), and hence
computing the value $|I'_S|$ is not as simple as running $k$-Independent Set algorithm on the underlying hypergraph of $H'$.
However, we show that we can circumvent this issue by a simple inductive argument.
We now prove a statement analogous to Lemma \ref{lemma:resolve-intersections} for general hypergraphs.
\begin{lemma}\label{lemma:higher-arity-I'}
    Let $H$ be an $h$-hypergraph and let $S\subset E_{\ge 3}$ be a matching in $H$. Let $H'$ be the hypergraph obtained by running Algorithm \ref{alg:resolve-intersection} on $H$ (replacing $E_3$ by $E_{\ge 3}$ in Line \ref{line:4}).
    Define the hypergraph $H''$ as follows. $V(H'') = V(H')$;$E(H'') = E(H')\setminus E_{\ge3}(H)$.
    Note that in case $h=3$, $H''$ is just the underlying graph of $H'$.
    Then 
    \[I'_S(H,k) = I(H'',k) \cap \left\{X\in \binom{V}{k} \mid V(S)\subseteq X\right\}.\]
\end{lemma}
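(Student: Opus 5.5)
The plan is to prove the two containments separately, as in Lemma~\ref{lemma:resolve-intersections}. Both rest on one elementary observation. Fix $X\in\binom{V}{k}$ with $V(S)\subseteq X$. Let $e\in S$ and $e'\in E_{\ge 3}$ with $e\cap e'\neq\emptyset$ and $e'\prec e$. Since $e\subseteq V(S)\subseteq X$, we have
\[
e'\subseteq X \iff (e'\setminus e)\subseteq X .
\]
Moreover $e'\setminus e\neq\emptyset$, because $e'\neq e$ and two distinct hyperedges cannot have $e'\subseteq e$ with $e'\prec e$ unless we allow nested edges. If nested edges are allowed, I would first note that $e'\subsetneq e$ gives $I'_S(H,k)=\emptyset$ and handle this case separately. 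So Condition~\ref{cond:overlaps} for the pair $(e,e')$ says exactly that $X$ does not contain the set $e'\setminus e$.

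Next I would record precisely what $H''$ looks like. Each pair $(e,e')$ as above is processed by Algorithm~\ref{alg:resolve-intersection} with $E_{\ge 3}$ in Line~\ref{line:4}. If $|e'\setminus e|=1$, the unique vertex $v$ of $e'\setminus e$ is removed. If $|e'\setminus e|\ge 2$, a new hyperedge $e''=e'\setminus e$ is added. All other hyperedges of $H$ of arity $\ge 3$ are discarded in $H''$. What remains are the arity-$2$ edges of $G$ (restricted to surviving vertices) together with the new edges $e''$. Here I would remark that the new edges $e''$ must be treated as new objects and kept even if $e''$ happens to coincide as a set with some original hyperedge in $E_{\ge 3}(H)$. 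This is the one point in the definition of $H''$ that needs care, and it is the main thing to get right: the subtraction $E(H')\setminus E_{\ge 3}(H)$ is meant to remove the original arity-$\ge 3$ edges, not the ones created by the procedure. With this description, $X$ is a $k$-independent set of $H''$ iff three things hold: (a) $X$ avoids all deleted vertices, (b) $X$ contains no edge of $G$, and (c) $X$ contains no new edge $e''$.

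For the containment "$\subseteq$", take $X\in I'_S(H,k)$. Then $X$ is independent in $G$, which gives (b), and $V(S)\subseteq X$. If some deleted vertex $v$ lay in $X$, then $v$ arose from a pair $(e,e')$ with $e'\setminus e=\{v\}$, so $e'\subseteq X$, violating Condition~\ref{cond:overlaps}; this gives (a). Similarly, if some $e''=e'\setminus e\subseteq X$, the observation gives $e'\subseteq X$, again a violation; this gives (c). Hence $X\in I(H'',k)$.

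For the containment "$\supseteq$", take $X\in I(H'',k)$ with $V(S)\subseteq X$. Property (b) says $X$ is independent in $G$, and Condition~\ref{cond:spanning} holds by assumption. For any relevant pair $(e,e')$, either $e'\setminus e=\{v\}$ with $v\notin X$ by (a), or $e'\setminus e=e''\not\subseteq X$ by (c). In both cases $e'\not\subseteq X$ by the observation, so Condition~\ref{cond:overlaps} holds and $X\in I'_S(H,k)$. The matching assumption on $S$ is not needed for correctness here; it only matters later for counting.
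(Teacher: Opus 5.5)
Your proof is correct and follows the same route as the paper. Both prove two containments, splitting into the cases $|e'\setminus e|=1$ and $|e'\setminus e|\ge 2$; these match the vertex-deletion and edge-replacement rules of Algorithm~\ref{alg:resolve-intersection}, so that Condition~\ref{cond:overlaps} becomes exactly independence in $H''$. Your version is somewhat more careful than the paper's in three places: you explicitly handle sets $X$ that contain a deleted vertex, you state the equivalence $e'\subseteq X \iff e'\setminus e\subseteq X$ when $e\subseteq X$, and you make explicit the convention, which the paper assumes silently, that newly created edges $e''$ survive the subtraction $E(H')\setminus E_{\ge 3}(H)$.
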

The proof follows the same general approach as the proof of Lemma \ref{lemma:resolve-intersections}.
\begin{proof}
    Write $\mathcal T := I(H'',k) \cap \left\{X\in \binom{V}{k} \mid S\subseteq X\right\}$. Let $X\in \mathcal T$ be arbitrary. Clearly (by definition of $\mathcal T$) $X$ satisfies Condition \ref{cond:spanning} in the definition of the set $I'_S$. 
    We prove that it also satisfies Condition \ref{cond:overlaps}. 
    Let $e'\in E_{\ge3}(H)$ be any hyperedge in $H$ such that for some $e\in S$, $e'\prec e$ and $e\cap e' \ne \emptyset$.  
    We consider different cases based on the value of $|e'\setminus e|$.
    If $|e'\setminus e| = 1$, our algorithm removes the unique vertex that lies in the difference from $H'$, and hence no subset of vertices in $H'$ (hence also $H''$) can contain $e'$.
    If $|e'\setminus e| \ge 2$, our algorithm adds a hyperedge between the vertices spanning the set $e'\setminus e$ in $H'$. 
    Note that this newly added hyperedge is also present in $H''$ by construction, and so no independent set of the hypergraph $H''$ contains $e'$. 
    This proves that $X\in I'_S(H,k)$ and hence $\mathcal T \subseteq I'_S(H,k)$. 
    
    We now prove the other containment.
    Let $Y$ be any set in $I'_S(H,k)$.
    Then by Condition \ref{cond:spanning}, we have that $Y\in \left\{X\in \binom{V}{k} \mid V(S)\subseteq X\right\}$.
    It only remains to prove that $Y$ forms an independent set of size $k$ in $H''$.
    Assume for contradiction that this is not the case. 
    In particular, since $Y\in I'_S(H,k)$, this means that $Y$ forms an independent set in $H$, but not in $H''$. 
    Note that this is only possible if the induced subhypergraph $H[Y]$ contains an edge $e'$ such that (1) $e\cap e' \ne \emptyset$, and (2) $e'\prec e$ for some $e\in S$.
    But this violates Condition \ref{cond:overlaps} in the definition of $I'_S$, implying that $Y\not\in I'_S(H,k)$, yielding a contradiction.
\end{proof}
We now use induction on the arity of the graph to prove that we can obtain the running time of Proposition \ref{prop:higher-arity-algo}. We start by proving the following simple structural properties of graph $H''$.
\begin{observation}\label{obs:properties-of-h''}
    Let $H$ be an $h$-hypergraph ($h\ge 3$) with $n$ vertices and $m$ edges.
    Let $H''$ be the hypergraph constructed as above.
    Then $H''$ satisfies the following conditions.
    \begin{itemize}
        \item Any hyperedge in $H''$ spans $<h$ many vertices (i.e. $H''$ is an $h''$-hypergraph with $h''<h$).
        \item $|E(H'')|\le m$.
    \end{itemize}
\end{observation}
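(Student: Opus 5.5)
The plan is to compare the edge sets of $H''$ and $H$ directly by tracing where each hyperedge of $H''$ comes from. Recall that $H''$ is obtained from $H'$ by discarding all hyperedges of $E_{\ge 3}(H)$, and $H'$ arises from $H$ through two kinds of modifications. The first is deleting vertices, which can only shrink or remove edges. The second is replacing an edge $e'\in E_{\ge3}(H)$ with $e'\cap e\neq\emptyset$ and $e'\prec e$ for some $e\in S$ by the new edge $e''=e'\setminus e$. I will read Algorithm~\ref{alg:resolve-intersection} as accumulating these modifications over all $e\in S$, which is what Lemma~\ref{lemma:higher-arity-I'} needs. Every edge of $H''$ is therefore of one of two kinds: an arity-$2$ edge of $H$ that survived the vertex deletions, or a newly created edge $e''$ derived from some $e'\in E_{\ge 3}(H)$.

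For the first bullet I would argue by cases. Edges of the first kind have arity $2<h$, using $h\ge 3$. For an edge $e''=e'\setminus e$ of the second kind, the condition $e\cap e'\neq\emptyset$ gives $|e''|\le |e'|-1\le h-1$. The branch producing $e''$ is only reached when $|e'\setminus e|\ge 2$, so $e''$ is a genuine edge of arity between $2$ and $h-1$. Vertex deletions afterwards can only decrease arities. Edges of arity $\ge 3$ inherited from $H$ are removed by definition of $H''$. Hence $H''$ is an $h''$-hypergraph with $h''<h$.

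For the second bullet I would construct an injection from $E(H'')$ into $E(H)$. Each surviving arity-$2$ edge is mapped to itself, and each new edge is mapped to an originating edge $e'\in E_{\ge3}(H)$; these targets are disjoint, since the former have arity $2$ and the latter arity $\ge 3$. The main obstacle is that a single $e'$ may intersect several edges $e\in S$ that succeed it in $\prec$. In that case the cumulative procedure could create several distinct edges $e'\setminus e$ from one $e'$, and the map would fail to be injective.

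To handle this, I would use that $S$ is a matching and that we only care about sets $X\supseteq V(S)$. For each such $e'$ it then suffices to add the single edge $e'\setminus V(S)$. This edge is contained in every $e'\setminus e$, it still forbids exactly the sets $X\supseteq V(S)$ that contain $e'$, and it has arity $<|e'|$. If it has size $1$, we delete that vertex instead; if it is empty, then $I'_S(H,k)=\emptyset$ and that $S$ can be skipped. With this normalization each $e'$ produces at most one edge, so Lemma~\ref{lemma:higher-arity-I'} still holds and the map is injective. This gives $|E(H'')|\le |E(H)|=m$, since vertex deletions and coinciding new edges only reduce the count further.
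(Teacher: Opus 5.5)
Your first bullet is argued as in the paper: an edge of $H''$ of arity at least $3$ is a new set $e'\setminus e$. It is a proper subset of $e'$ because $e\cap e'\neq\emptyset$, so its arity is at most $h-1$.

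For the second bullet you take a different route, and your worry is justified. The paper charges each addition to a removal: whenever Algorithm~\ref{alg:resolve-intersection} adds $e''$, it also performs $H'-e'$. This pairing is one-to-one only for a single $e\in S$.

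\begin{itemize}
\item Read literally, the pseudocode resets $H'\gets H$ inside the loop over $S$. Then the charging is valid, but only overlaps with the last edge of $S$ are resolved, so Lemma~\ref{lemma:higher-arity-I'} fails.
\item Read cumulatively, as you do, a bad $e'$ is removed once but can spawn a new edge for every $e\in S$ it meets. The inequality then really fails. Take $S=\{e_1,e_2\}$ with $e_j=\{a_j,b_j,c_j\}$, and $f_\ell=\{a_1,a_2,x_\ell\}$ for $\ell=1,2,3$, all preceding $e_1,e_2$. Then $m=5$, yet $H''$ receives the six distinct edges $\{a_1,x_\ell\}$ and $\{a_2,x_\ell\}$.
\end{itemize}

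So the paper's argument only gives $|E(H'')|\le |S|\cdot m=O(m)$. That is enough for the asymptotic induction in Proposition~\ref{prop:higher-arity-algo}, since $k$ and $h$ are constants, but it is not the stated bound.

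Your single edge $e'\setminus V(S)$ is sound. For $X\supseteq V(S)$ we have $e'\subseteq X$ iff $e'\setminus V(S)\subseteq X$, so Lemma~\ref{lemma:higher-arity-I'} survives. The case $e'\subseteq V(S)$ correctly gives $I'_S(H,k)=\emptyset$. Your change buys three things:
\begin{itemize}
\item the exact bound $|E(H'')|\le m$, via a genuine injection;
\item a repair of the refined count $|E_{\ge 3}(H'')|\le\sum_{j=3}^{i}m_j$ in the later lemma, which relies on the same one-edge-per-$e'$ accounting;
\item since your new edges avoid $V(S)$, none of them is lost when the proof of Proposition~\ref{prop:higher-arity-algo} deletes $V(S)$ and its neighbourhood.
\end{itemize}

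The price is that you are proving the observation for a modified construction. State the change to Algorithm~\ref{alg:resolve-intersection} explicitly as part of the construction.
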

\begin{proof}
    Any hyperedge in $H''$ of arity $\ge 3$ is not present in $H$, hence was added during Algorithm \ref{alg:resolve-intersection}. 
    Note that whenever we added one such hyperedge, we also remove one hyperedge from $H$, hence the second property follows directly. 
    Also, each hyperedge added during this algorithm is a strict subset of a hyperedge in $H$, implying the first property as well.
\end{proof}
We are now ready to prove Proposition \ref{prop:higher-arity-algo}.
\begin{proof}[Proof of Proposition \ref{prop:higher-arity-algo}]
    Let $H$ be an $h$-hypergraph. We proceed by strong induction on $h$.
    If $h\le 3$, the claim follows directly from Theorem \ref{thm:23IS-algo}.
    Assume $h>3$.
    We now prove that if we can detect a $k$-independent set in any $(\le h-1)$-hypergraph $H^*$ in time $O\left(\min\left\{n^k, n^{\frac{k\omega}{3}} + m^{\frac{k}{3}}\right\}\right)$, then we can detect a $k$-independent set in $H$ in the same time.
    Using Corollary \ref{cor:inclusion-exclusion} and Lemma \ref{lemma:higher-arity-I'}, it is sufficient to show that for any set $S\subseteq E_{\ge3}$ that induces a matching in $H$, we can compute the value $|I(H'',k) \cap \left\{X\in \binom{V}{k} \mid V(S)\subseteq X\right\}|$ efficiently.
    Let $S$ be a matching of size $i$. In particular $|V(S)|\ge 3i$. 
    Notice that by construction of hypergraph $H''$, we have 
    \[I(H'',k) \cap \left\{X\in \binom{V}{k} \mid V(S)\subseteq X\right\} = I(H''-N_{G''}(S), k-|V(S)|)\]
    where $G''$ is the underlying graph of $H''$. 
    Moreover, since $H''-N_{G''}(S)$ is a subhypergraph of $H''$, by Observation \ref{obs:properties-of-h''}, it has at most $m$ edges and arity at most $(h-1)$.
    Hence, by the induction hypothesis, we can compute $|I(H''-N_{G''}(S), k-|V(S)|)|$ in time at most $\mathcal O\left(n^{\frac{(k-3i)\omega}{3}} + m^{\frac{k-3i}{3}}\right)$.
    We can thus compute $I_{\text{invalid}}(H,k)$ (by a simple modification of Algorithm \ref{alg:invalid-solutions} as described above) in time 
\begin{equation}\label{eq:running-time-analysis-prop}
\begin{aligned}
    T_k(n,m)
    &= \mathcal O\left(m^{\frac{k}{3}} + \sum_{i=0}^{k/3-1} m^i \left(n^{\frac{(k-3i)\omega}{3}} + m^{\frac{k-3i}{3}}\right)\right)\\
    &= \mathcal O\left(m^{\frac{k}{3}} + \sum_{i=0}^{k/3-1} m^i n^{\frac{(k-3i)\omega}{3}} \right)\\
    &= \mathcal O\left( \sum_{i=0}^{k/3} m^i n^{\frac{(k-3i)\omega}{3}} \right)\\
    &= \mathcal{O}\left(m^{\frac{k}{3}} + n^{k\omega/3}\right).
\end{aligned}
\end{equation}
    Moreover, whenever $m = \Omega(n^3)$, we can use a simple brute-force algorithm to obtain an $\mathcal O(n^k)$ algorithm.
\end{proof}
Moreover, by noticing that for any $h\ge 3$, detecting a $k$-independent set in $h$-hypergraphs is a generalization of detecting a $k$-independent set in $3$-hypergraphs, hence the lower bound from Theorem \ref{thm:23ISLB} naturally transfers to this setting (whenever $2\le \gamma\le 3$) as well, showing conditional optimality of our algorithm.
However, we notice that our algorithm analysis is nevertheless a bit wasteful sometimes (e.g. we bound the size of $V(S)$ by $3i$, whereas if $S$ consists of large hyperedges, it might be way larger), and a natural question that arises is if the total number of edges is a correct parameter to measure the effect of sparsity on this problem. 
A natural set of parameters could perhaps be $m_i$ for all $3\le i\le h$, where each $m_i$ denotes the number of hyperedges of arity $i$.
We remark that none of our conditional lower bounds exclude the possibility of an algorithm running in time e.g. $\mathcal O\left(n^{k\omega /3} + \sum_{i=3}^h m_i^{k/i}\right)$.
So a natural question to ask is if such an algorithm is indeed possible. 
A simple modification of the reduction from Theorem \ref{thm:23ISLB} shows that, while a small improvement is indeed possible, we do not hope to achieve such a running time. 
To that end, we obtain the following distinct lower bounds, and we later show that a simple, slightly more clever analysis of our algorithm implies that we can in fact match each of them.
\MixedArityLowerBound*
We note that the clique lower bound is inherited from the previous section, and we prove the remaining two lower bounds.
In order to prove the last bound, we reduce from the independent set detection in \emph{unbalanced $k$-partite $h$-uniform hypergraphs}. The hardness of this problem as stated by the following lemma follows a rather standard split-and-list approach (see e.g. \cite{bringmann2015quadratic,fischer2024effect,kunnemann2024fine} for analogous proofs.
\begin{lemma}[Hardness of Colorful $k$-Independent Set in Unbalanced Hypergraphs]
    Given a $k$-partite $r$-uniform hypergraph $H = (V_1\cup \dots \cup V_k, E)$, there is no algorithm detecting if there is an independent set $x_1\in V_1, \dots, x_k\in V_k$ in time $\mathcal O\left(\left(\prod_{i=1}^k|V_i|\right)^{1-\varepsilon}\right)$, unless the $r$-Uniform Hyperclique Hypothesis fails.
\end{lemma}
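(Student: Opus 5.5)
We reduce from $k$-Independent Set in a balanced $r$-uniform hypergraph on $N$ vertices, whose hardness is exactly the $r$-Uniform Hyperclique Hypothesis after complementing the hyperedges. Write $n_i := |V_i|$ and $n_i = N^{\alpha_i}$ with rational $\alpha_i \ge 0$. The plan is a split-and-list grouping: partition $[k]$ into a $k$-partite target structure whose parts are Cartesian products of groups of vertices of the source instance.

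\textbf{Step 1 (normalisation).} Suppose an algorithm runs in time $\mathcal O((\prod_i n_i)^{1-\varepsilon})$ on the unbalanced colorful instances. Choose an integer $K$ (a multiple of the denominators) so that each $\alpha_i = a_i/q$ and the total exponent $L := \sum_i a_i$ is an integer. We start from a colorful $L$-partite $r$-uniform instance $H^*=(U_1\cup\dots\cup U_L,E^*)$ with $|U_j| = N^{1/q}$, obtained from an arbitrary $N'$-vertex $r$-uniform instance by the standard colour-coding or $L$-fold copying reduction, with $N' := N^{1/q}$. Then $\prod_i n_i = N'^{L}$, so the assumed algorithm would give time $N'^{L(1-\varepsilon)}$, which refutes the hypothesis for $L$-independent set.

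\textbf{Step 2 (grouping construction).} Split $[L]$ into consecutive blocks $B_1,\dots,B_k$ with $|B_i| = a_i$. Let $V_i$ consist of all tuples in $\prod_{j\in B_i} U_j$ that are themselves independent in $H^*$, i.e.\ contain no hyperedge of $E^*$. This gives $|V_i| \le N'^{a_i} = n_i$; we pad with isolated dummy-free filler or simply accept $\le$. For every hyperedge $e\in E^*$ whose vertices lie in blocks $B_{i_1},\dots,B_{i_s}$ with $s\ge 2$ distinct blocks, $s\le r$, and for every choice of tuples $t_{i_1},\dots,t_{i_s}$ that contain the respective vertices of $e$, we add a hyperedge. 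Since $s<r$ is possible, to stay $r$-uniform we add the $s$-set together with $r-s$ further tuples from other parts, ranging over all choices. This is the key point: a set $\{t_1,\dots,t_k\}$ is independent in the new hypergraph iff $\bigcup t_i$ is an independent set of size $L$ in $H^*$. (If $k<r$, merge accordingly; we may assume $k\ge r$ after trivial padding.)

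\textbf{Step 3 (running time).} The construction time is at most $\mathcal O(\prod_i n_i)$ up to polynomial factors only if we are careful. Writing out all hyperedges costs $\prod$ over $r$ parts, which may exceed $N'^{L(1-\varepsilon)}$. This is the main obstacle. I would handle it by not listing hyperedges explicitly when the algorithm only needs oracle access. Failing that, I would choose block sizes so that any $r$ parts have total size $\le N'^{L(1-\varepsilon')}$, which holds whenever $k>r$ and the exponents are balanced enough. In the remaining skewed cases, where a few parts dominate, I would instead brute-force over the small parts and recurse. Correctness then follows from the equivalence in Step 2, and the time bound gives $N'^{L(1-\varepsilon)+o(1)}$, contradicting the $r$-Uniform Hyperclique Hypothesis.
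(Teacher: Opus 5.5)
Your Steps 1 and 2 are the standard split-and-list reduction the paper has in mind. The paper gives no proof of this lemma, only a pointer to analogous arguments. Your Step 2 equivalence is correct.

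The gap is Step 3. You correctly flag the construction time as the main obstacle, but then list three possible escapes, and none works as stated:
\begin{itemize}
\item Oracle access is not available, because the hypothetical algorithm receives the hypergraph as explicit input.
\item ``Balanced enough'' is not the relevant condition.
\item ``Brute-force over the small parts and recurse'' is not a defined reduction.
\end{itemize}
You also aim at the wrong target. The construction does not have to fit inside $N'^{L(1-\varepsilon)}$; it only has to be polynomially below $N'^{L}$.

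The missing step is a one-line count, once the implicit assumptions are made explicit. First, you need $r\ge 3$. For $r=2$ the relevant hypothesis is the $\omega k/3$ one, and Ne\v{s}et\v{r}il--Poljak already beats $(\prod_i|V_i|)^{1-\varepsilon}$ on balanced instances, so your Step 1 conclusion fails there. Second, you need $k>r$ with at least $r+1$ parts of polynomial size. Your ``trivial padding'' for $k<r$ cannot work, because then every colorful $k$-set is independent. For $k=r$, a single product of $r$ parts is already the whole budget.

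Under these assumptions, the count goes as follows.
\begin{enumerate}
\item Choose $q$ so that $a_i\ge 1$ for every polynomial-size part. Then any $r$ of the $k$ groups omit at least one such group, and hence cover at most $L-1$ of the parts $U_j$.
\item Declare an $r$-set of tuples from $r$ distinct parts a hyperedge if and only if its union contains a hyperedge of $E^*$. This subsumes both your filtering and your padding for $s<r$. The Step 2 equivalence still holds, since every hyperedge of $E^*$ touches at most $r\le k$ groups.
\item There are at most $\binom{k}{r}N'^{L-1}$ candidate $r$-sets, each checkable in $\mathcal{O}(1)$ time against a hash set of $E^*$.
\item The total time is therefore $\mathcal{O}(N'^{L-1}+N'^{L(1-\varepsilon)})=\mathcal{O}(N'^{L-\min\{1,L\varepsilon\}})$. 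This refutes the hypothesis for $L$-hyperclique, since $L\ge k>r$.
\end{enumerate}
Skewed exponents only enlarge $L$ and shrink the relative saving; they never make it vanish, so no recursion or balancing is needed.
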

Consider now the following construction. 
Let $3\le i\le h$ be arbitrary and let $m_i = \Theta(n^{\gamma_i})$ for arbitrary $\gamma_i\ge 3$.
Let $r = \lfloor \gamma_i\rfloor$. 
We reduce from the $k$-independent set detection in unbalanced $r$-uniform hypergraph as follows. 
Let $H^* = (V_1\cup \dots \cup V_k,E^*)$ be an arbitrary $k$-partite $r$-uniform hypergraph with $|V_1|=\dots = |V_{k-1}| = n$ and $|V_k| = \lfloor n^{\gamma_i-r}\rfloor$. 
\begin{lemma}\label{lemma:h-is-lower-bound-mn-construction}
    Let $H^*$ be as above. There exists an $h$-hypergraph $H$ satisfying the following conditions.
    \begin{enumerate}
        \item $H$ has $\mathcal O(n)$ vertices and $\mathcal O(m_i)$ hyperedges of arity $i$.
        \item $H$ contains an independent set of size $(k+i-r-1)$ if and only if $H^*$ contains vertices $x_1\in V_1, \dots, x_k\in V_k$ that form an independent set.
        \item $H$ can be constructed in time $\mathcal{O}(n+m_i)$.
    \end{enumerate}
\end{lemma}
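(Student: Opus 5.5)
We need to construct $H$ from $H^*$: $k$-partite $r$-uniform, parts $V_1,\dots,V_{k-1}$ of size $n$, and $V_k$ of size $N_k=\lfloor n^{\gamma_i-r}\rfloor\le n$. The target is independent sets of size $k+i-r-1$ with $\mathcal O(m_i)$ arity-$i$ hyperedges. The plan is to pad each $r$-edge of $H^*$ with $i-r$ fresh vertices, forming an arity-$i$ hyperedge. Note $i\ge r$ is needed; if $\gamma_i\le i$, then $r=\lfloor\gamma_i\rfloor\le i$.

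\textbf{Construction.} Let $P=\{p_1,\dots,p_{i-r}\}$ be $i-r$ fresh vertices. The vertex set is $V(H)=V_1\cup\dots\cup V_k\cup P$. The intended solution is $x_1,\dots,x_k$ together with all of $P$, but this has size $k+i-r$, not $k+i-r-1$. The extra $-1$ suggests instead that one padding vertex coincides with the role of $V_k$. So I would merge: $V_k$ has only $n^{\gamma_i-r}$ vertices, and I take hyperedges of arity $i$ to be $e\cup P'$ with $|P'|=i-r$, containing the vertex of $V_k$ only implicitly.

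The cleaner route is to treat $V_k$ specially. For each $v\in V_k$, the edges of $H^*$ through $v$ are $(r-1)$-subsets of $V_1\cup\dots\cup V_{k-1}$. Let $P$ have $i-r+1$ fresh vertices, and for each $v\in V_k$ use a private copy $P_v$. The solution is then $x_1,\dots,x_{k-1}$ together with $P_v$, of size $k-1+i-r+1=k+i-r$. This is still off by one, so $|P_v|=i-r$ and $P_v$ encodes $v$. For each $r$-edge $e\ni v$, add the hyperedge $(e\setminus\{v\})\cup P_v$ of arity $r-1+i-r=i-1$. That is not arity $i$, so instead keep $v$ itself inside $P_v$: let $P_v=\{v\}\cup Q_v$ with $|Q_v|=i-r$. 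The solution $\{x_1,\dots,x_{k-1}\}\cup P_v$ then has size $k+i-r$. Since the statement requires $k+i-r-1$, I would also drop one class, e.g. fold it so that $V_{k-1}$ is not used. I would settle the exact off-by-one bookkeeping when writing the proof; the core idea is the padding gadget. For each edge $e\ni v$ of $H^*$, add the arity-$i$ hyperedge $e\cup Q_v$; edges of $H^*$ avoiding $V_k$ are padded similarly so that they are killed whenever $P_v$ is chosen.

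\textbf{Forcing structure.} Add binary edges so that a solution picks at most one vertex per class $V_j$ and at most one gadget $P_v$. Binary edges inside each $V_j$ form cliques, giving $\mathcal O(n^2)\le \mathcal O(m_i)$ edges since $\gamma_i\ge 3$. Add binary edges between $Q_v$ and $Q_{v'}$ for $v\ne v'$, and between $v$ and $Q_{v'}$, so that gadgets are exclusive; this costs $\mathcal O(n^{2(\gamma_i-r)})=\mathcal O(n^2)$ edges. The total number of vertices is $\mathcal O(kn+(i-r+1)n^{\gamma_i-r})=\mathcal O(n)$. The number of arity-$i$ edges is at most $|E^*|\le n^{r-1}\cdot n^{\gamma_i-r}\cdot O(1)=\mathcal O(m_i)$ plus the $V_k$-free edges, which is $\mathcal O(n^{r})$ times padding choices. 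The bound is fine if those edges are padded with $Q_v$ for every $v$, costing $n^{r}\cdot n^{\gamma_i-r}$; this needs $r$-subsets of $k-1$ classes, which is fine.

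\textbf{Correctness.} A colorful independent set in $H^*$ yields an independent set in $H$ by choosing its $x_j$ together with the gadget of $x_k$. Conversely, the clique and exclusivity edges together with a pigeonhole count force any independent set of the target size to take exactly one vertex per class plus one full gadget, and the padded hyperedges then certify independence in $H^*$. The main obstacle is the pigeonhole step: showing that the size count forces a full gadget and one vertex per class, and making the arity and size counts match exactly $k+i-r-1$.
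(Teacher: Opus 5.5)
\textbf{There is a genuine gap: your final construction proves the equivalence for the wrong solution size.}

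In the version you settle on, you use private gadgets $Q_v$ with $|Q_v|=i-r$, hyperedges $e\cup Q_v$, and exclusivity edges between gadgets. The intended solution $\{x_1,\dots,x_k\}\cup Q_{x_k}$ then has size $k+i-r$. Your own pigeonhole count shows that $k+i-r$ is exactly the size that certifies a colorful independent set in $H^*$. So what you can prove is an equivalence for $k+i-r$, not for $k+i-r-1$.

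Your suggested repair, dropping or folding the class $V_{k-1}$, cannot work: every part of $H^*$ must still contribute a vertex. Nor is the discrepancy just bookkeeping. The lemma is applied with $k'=k+i-r-1$, so that an $\mathcal O(m_i n^{k'-i-\varepsilon})$ algorithm runs in time $n^{\gamma_i}\cdot n^{k-r-1-\varepsilon}$. That is polynomially below $\prod_j|V_j|=n^{k-1+\gamma_i-r}$. With $k'=k+i-r$ you pay an extra factor of $n$, and the reduction refutes nothing.

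\textbf{The missing idea is to let the chosen vertex of $V_k$ itself serve as one of the $i-r$ padding vertices.} Concretely:
\begin{itemize}
\item Keep every edge of $H^*$ that meets $V_k$ unchanged. Such an edge has arity $r$, there are only $\mathcal O(n^{r-1}|V_k|)=\mathcal O(n^{\gamma_i-1})$ of them, and the lemma only counts arity-$i$ edges.
\item Replace each $V_k$-free edge $e$ by the $|V_k|$ hyperedges $e\cup\{v_k\}\cup\{d_1,\dots,d_{i-r-1}\}$, one for each $v_k\in V_k$. The $i-r-1$ dummy vertices are \emph{shared} by all of these hyperedges.
\item Add a clique on each $V_j$.
\end{itemize}
This gives $\mathcal O(n)$ vertices and $\mathcal O(|V_k|\,n^r)=\mathcal O(m_i)$ arity-$i$ edges, and no exclusivity gadgets are needed.

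The pigeonhole step then becomes immediate. A solution contains at most one vertex per part, and there are only $i-r-1$ dummies. So a solution of size $k+i-r-1$ consists of exactly one vertex per part plus all dummies. After that, $e\cup\{x_k\}\cup D$ lies in the solution if and only if $e$ does.

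You were one step away when you put $v$ into $P_v$. The error was also padding the edges through $v$, which pushed $|Q_v|$ up to $i-r$ instead of $i-r-1$.
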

\begin{proof}
    Let $V(H) = V_1\cup \dots \cup V_k\cup \{d_1,\dots, d_{i-r-1}\}$ and let $E^*_{(k)}$ be the set of hyperedges of $H^*$ with an endpoint in $V_k$. We now construct the hyperedges of $H$.
    \[
    E(H) = E^*_{(k)} \cup \bigcup_{i=1}^k\binom{V_i}{2} \cup \left\{\{\underbrace{v_k}_{\in V_k},\underbrace{d_1,\dots, d_{i-r-1}}_{\text{dummy nodes}}, \underbrace{v_{j_1}, \dots, v_{j_r}}_{\text{original edge}}\}\mid \{v_{j_1}, \dots, v_{j_r}\}\in E^*, \quad v_k\in V_k \right\}
    \]
    That is, we preserve all the hyperedges from $H^*$ that have an endpoint in $V_k$, and for all other edges we create $|V_k|$ many edges of arity $i$, each obtained by extending the original edge by a vertex $v_k$ from $V_k$, and the dummy nodes $d_1,\dots, d_{i-r-1}$. 
    The remaining edges of arity-$2$ are obtained by making each part $V_i$ of $H^*$ into a clique, in order to guarantee that any independent set in $H$ contains at most one vertex from each $V_i$.
    We first prove that $H$ has the desired number of vertices and edges. 
    It is straightforward to see that $|V(H)| = \mathcal O(kn + n^{\gamma_i-r}) = \mathcal O(n)$. 
    Also, notice that \[|E^*_k|\le \mathcal O(n^{r-1}|V_k|) = \mathcal O(n^{r-1+\gamma_i-r}) = \mathcal O(n^{\gamma_i-1}).\]
    We only have to prove that $H$ contains $\mathcal O(m_i)$ edges of arity $i$ to show both first and the third condition.
    Let $E_i$ denote the set of edges of arity $i$ in $H$. Then we can bound it as follows.
    \begin{align*}
        |E_i| &= \mathcal{O} \left(|V_k| \cdot n^r\right) = \mathcal O(n^{r+\gamma_i-r}) = \mathcal O(n^{\gamma_i}).
    \end{align*}
    It remains to prove the second property. We notice that in $H$ any independent set of size $(k+i-r-1)$ must contain \emph{all} vertices $d_1,\dots, d_{i-r-1}$, since edges $\binom{V_i}{2}$ ensure that any independent set contains at most one vertex from each $V_i$. 
    We prove that $v_1\in V_1,\dots, v_k\in V_k$ is an independent set in $H^*$ if and only if $v_1,\dots, v_k, d_1,\dots, d_{i-r-1}$ is an independent set in $H$. 
    Assume first that $S:=\{v_1,\dots, v_k, d_1,\dots, d_{i-r-1}\}$ is an independent set in $H$ such that (without loss of generality) $v_1\in V_1,\dots, v_k\in V_k$, and assume for contradiction that $v_1,\dots, v_k$ contains a hyperedge $\{v_{j_1},\dots, v_{j_r}\}$ in $H^*$ (hence does not form an independent set).
    Note that this edge cannot be in $E^*_{(k)}$, since this edge set is also contained in $H$. 
    But then consider the edge $\{v_k, d_1,\dots, d_{i-r-1},v_{j_1},\dots, v_{j_r}\}$ in $H$. 
    Clearly each of the vertices of this edge is contained in $S$, and hence $S$ cannot be an independent set, a contradiction.

    Conversely, let $v_1\in V_1,\dots, v_k\in V_k$ form an independent set in $H^*$. We prove that $S:=\{v_1,\dots, v_k, d_1,\dots, d_{i-r-1}\}$ is an independent set in $H$.
    This follows by observing that any edge in $H^*$ is embedded into an edge in $H$, hence if $\{v_1,\dots, v_k, d_1,\dots, d_{i-r-1}\}$ contained an edge, then the corresponding edge would be present in $H^*$ as well.
\end{proof}
We can now prove the third lower bound from Theorem \ref{thm:mixed_arity_lb}.
\begin{lemma}
    For no $i$ such that $\gamma_i\ge 3$ is there an algorithm running in $\mathcal{O}\left(m_in^{{k-i-\varepsilon}}\right)$, unless the $\left(\lfloor \gamma_i\rfloor\right)$-Uniform Hyperclique Hypothesis fails.
\end{lemma}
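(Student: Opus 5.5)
We argue by contraposition, using the construction of Lemma~\ref{lemma:h-is-lower-bound-mn-construction} together with the hardness of colorful $k$-independent set in unbalanced $r$-uniform hypergraphs, where $r=\lfloor\gamma_i\rfloor\ge 3$. Fix $i$ with $\gamma_i\ge 3$ and suppose some algorithm $\mathcal A$ solves $k'$-Independent Set in $h$-hypergraphs with $m_i=\Theta(n^{\gamma_i})$ arity-$i$ hyperedges in time $\mathcal O(m_i n^{k'-i-\varepsilon})$. We apply this with the parameter $k':=k+i-r-1$, since that is the solution size produced by the construction.

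Given an unbalanced $k$-partite $r$-uniform instance $H^*$ with $|V_1|=\dots=|V_{k-1}|=n$ and $|V_k|=\lfloor n^{\gamma_i-r}\rfloor$, we build $H$ by Lemma~\ref{lemma:h-is-lower-bound-mn-construction} in time $\mathcal O(n+m_i)$. The lemma guarantees that $H$ has $\mathcal O(n)$ vertices and $\mathcal O(m_i)$ arity-$i$ hyperedges, and that it has a $k'$-independent set iff $H^*$ has a colorful independent set. One minor point is that the density assumption needs $m_i=\Theta(n^{\gamma_i})$ exactly rather than only $\mathcal O$. If needed, we pad with isolated dummy arity-$i$ hyperedges on fresh vertices made unusable by binary edges to all other vertices. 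This does not change the answer, since any such fresh vertex lies in a binary edge with every other vertex.

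Running $\mathcal A$ then takes time
\[
\mathcal O\!\left(m_i\, n^{k'-i-\varepsilon}\right)=\mathcal O\!\left(n^{\gamma_i}\, n^{k-r-1-\varepsilon}\right)=\mathcal O\!\left(n^{k-1+\gamma_i-r-\varepsilon}\right).
\]
On the other hand, $\prod_{j}|V_j| = n^{k-1}\cdot n^{\gamma_i-r}=n^{k-1+\gamma_i-r}$. Since $k-1+\gamma_i-r\ge k-1>0$ is a constant, the running time is at most $\left(\prod_j|V_j|\right)^{1-\varepsilon'}$ for $\varepsilon'=\varepsilon/(k-1+\gamma_i-r)>0$. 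Adding the linear construction time does not hurt, because $n+m_i\le n^{k-1+\gamma_i-r-\varepsilon}$ for $k$ large enough ($k\ge 2$ and $\varepsilon<1$ suffice, since $m_i=n^{\gamma_i}$ and $k-1-r\ge 1$ is not guaranteed). If $k$ is small relative to $r$, we instead use $\mathcal O(m_i n^{k'-i-\varepsilon})\ge m_i$ directly, as the claim is only meaningful when $k'-i\ge 0$.

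This contradicts the unbalanced hardness lemma, and hence the $\lfloor\gamma_i\rfloor$-Uniform Hyperclique Hypothesis. The main care point is the exponent bookkeeping: the parameter of $\mathcal A$ is $k'$ rather than $k$, and $k'-i=k-r-1$ exactly cancels the $n^r$ factor hidden in $m_i$. The rest is routine.
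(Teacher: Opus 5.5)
Your proposal is correct and is essentially the paper's own proof. It uses the same construction from Lemma~\ref{lemma:h-is-lower-bound-mn-construction}, the same choice $k'=k+i-r-1$, and the same cancellation $m_i n^{k'-i-\varepsilon}=n^{k-1+\gamma_i-r-\varepsilon}=\bigl(\prod_j|V_j|\bigr)^{1-\varepsilon'}$, and you are slightly more careful than the paper about rescaling $\varepsilon$ and about padding to get $m_i=\Theta(n^{\gamma_i})$.

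One small slip concerns absorbing the $\mathcal O(n+m)$ construction time. This requires $k'-i=k-r-1\ge 1$ (i.e.\ $k\ge r+2$), not just $k\ge 2$. For $k'=i$ your fallback inequality $m_i n^{k'-i-\varepsilon}\ge m_i$ is false. The paper's proof silently skips this degenerate case as well.
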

\begin{proof}
    Assume that such an algorithm exists. 
    Given a hypergraph $H^*$ as above, construct the graph $H$ as in the proof of the last lemma and run the algorithm on $H$ to detect a $k':=(k+i-r-1)$-independent set in time $\mathcal O(m_in^{k'-i-\varepsilon})$ to detect the independent set in $H^*$ in time:
    \begin{align*}
        T_k &= \mathcal{O}\left(m+n+m_in^{k'-i-\varepsilon}\right) \\ 
        & = \mathcal O\left(m_in^{(k+i-r-1)-i-\varepsilon}\right)\\
        & =  \mathcal O\left(m_in^{(k-r-1)-\varepsilon}\right)\\
        & = \mathcal O\left(n^{\gamma_i}n^{k-r-1-\varepsilon}\right)\\ 
        & = \mathcal O\left(n^{k-1} n^{\gamma_i-r-\varepsilon}\right) \\
        & = \mathcal{O}\left((|V_1|\dots |V_{k-1}| |V_k|)^{1-\varepsilon}\right)
    \end{align*}
    refuting the $r$-Uniform Hyperclique Hypothesis.
\end{proof}
We now proceed to show the $3$-Uniform Hyperclique based conditional lower bound as stated in Theorem \ref{thm:mixed_arity_lb}.
We remark that this construction stems from a rather straightforward combination of ideas from the proofs of Lemma \ref{lemma:3is-lower-bound} and Lemma \ref{lemma:h-is-lower-bound-mn-construction}, and we will only sketch the proof.
Let $3\le i\le h$ be arbitrary and let $m_i = \Theta(n^{\gamma_i})$ for arbitrary $2\le \gamma_i\le 3$. 
We reduce from the (colorful) $k$-independent set detection in $k$-partite $3$-uniform hypergraph as follows. 
Let $H^* = (V_1\cup \dots \cup V_k,E^*)$ be an arbitrary $k$-partite $3$-uniform hypergraph with $|V_1|=\dots = |V_{k}| = m_i^{1/3}$. 
\begin{lemma}\label{lemma:h-is-lower-bound-m-construction}
    Let $H^*$ be as above. There exists an $h$-hypergraph $H$ satisfying the following conditions.
    \begin{enumerate}
        \item $H$ has $ \Theta(n)$ vertices and $\mathcal O(m_i) = O(n^{\gamma_i})$ edges of arity $i$.
        \item $H$ contains an independent set of size $(k+i-3)$ if and only if $H^*$ contains vertices $x_1\in V_1, \dots, x_k\in V_k$ that form an independent set.
        \item $H$ can be constructed in time $\mathcal{O}(n+m_i)$.
    \end{enumerate}
\end{lemma}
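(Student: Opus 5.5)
We combine the sparse embedding of Lemma~\ref{lemma:3is-lower-bound} with the dummy-padding trick of Lemma~\ref{lemma:h-is-lower-bound-mn-construction}. Write $N:=m_i^{1/3}$ for the part size, so $N=\Theta(n^{\gamma_i/3})=O(n)$ because $\gamma_i\le 3$. Introduce $i-3$ fresh dummy vertices $d_1,\dots,d_{i-3}$. Every $3$-uniform hyperedge $\{a,b,c\}\in E^*$ is replaced by the arity-$i$ hyperedge $\{a,b,c,d_1,\dots,d_{i-3}\}$. For $i=3$ there are no dummies and the hyperedge is kept unchanged. To force at most one vertex per part, we add the arity-$2$ edges $\binom{V_j}{2}$ for every $j\in[k]$. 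Finally, to reach $\Theta(n)$ vertices, we add a padding set $V'$ of $n$ vertices. Each $x\in V'$ is joined by a binary edge to every other vertex, exactly as in Lemma~\ref{lemma:3is-lower-bound}.

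\textbf{Size bounds.} The vertex count is $kN+(i-3)+n=\Theta(n)$. The arity-$i$ hyperedges number $|E^*|\le N^3=O(m_i)$. The binary edges number $O(kN^2+n^2)$. This is fine because $\gamma_i\ge 2$ gives $n^2=O(m_i)$, and the binary edges do not count toward $m_i$ anyway. The construction time is linear in the output size, which gives condition~3.

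\textbf{Correctness (condition~2).} Since $k\ge 2$, any independent set of size $k+i-3$ avoids $V'$, because every $V'$-vertex is adjacent to all others. It also contains at most one vertex per $V_j$. So it must consist of all $i-3$ dummies plus one vertex $x_j\in V_j$ for each $j$. With all dummies present, such a set contains a padded hyperedge $\{a,b,c,d_1,\dots,d_{i-3}\}$ if and only if $\{a,b,c\}\subseteq\{x_1,\dots,x_k\}$. Hence $\{x_1,\dots,x_k\}\cup\{d_1,\dots,d_{i-3}\}$ is independent in $H$ exactly when $\{x_1,\dots,x_k\}$ is independent in $H^*$, and both directions follow.

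\textbf{Main obstacle.} The delicate point is the bookkeeping that turns this construction into the lower bound of item~2. We need $|V_j|=m_i^{1/3}$, so that an $O(m_i^{(k'-i+3)/3-\varepsilon})$ algorithm with $k'=k+i-3$ runs in time $O(N^{k-3\varepsilon})$. That would solve colorful $k$-independent set on $N$-vertex $3$-uniform instances in time $O(N^{k-3\varepsilon})$. This problem is equivalent to $3$-uniform $k$-hyperclique by complementing the hyperedges inside $k$-partite triples, so the algorithm would refute the $3$-Uniform Hyperclique Hypothesis.

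We must also ensure that the other arities do not break the accounting. Arities other than $2$ and $i$ get $0$ hyperedges in this construction, which is consistent with the theorem's per-arity statement. Binary edges are unconstrained in the statement. If the theorem is read as requiring each $m_{i'}=\Theta(n^{\gamma_{i'}})$ for the other arities as well, we would add the required number of hyperedges of arity $i'$ that each contain a $V'$-vertex. Such hyperedges are irrelevant to solutions, since $V'$ is excluded anyway.
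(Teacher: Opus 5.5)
Your proposal is correct and is essentially the paper's construction: $i-3$ dummy vertices pad each hyperedge of $E^*$ to arity $i$, each part $V_j$ becomes a clique, and $n$ padding vertices are joined by binary edges to every other vertex, with the same argument that a $(k+i-3)$-independent set must be all dummies plus one vertex per part. You fill in details the paper's sketch leaves implicit, for example using $\gamma_i\ge 2$ to absorb the $\Theta(n^2)$ padding edges into the $\mathcal O(n+m_i)$ construction time.
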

\begin{proof}[Proof sketch]
    Let $V(H) = V_1\cup \dots \cup V_k\cup \{d_1,\dots, d_{i-3}\}\cup \{p_1,\dots, p_n\}$ and construct the edges of $H$ as follows.
    \[
    E(H) = \left\{\{p_i,v\} \mid i\le n, v\in V(H)\right\} \cup \bigcup_{i=1}^k \binom{V_i}{2}\cup \left\{\{\underbrace{d_1,\dots, d_{i-3}}_{\text{dummy nodes}}, \underbrace{v_{j_1}, v_{j_2}, v_{j_3}}_{\text{original edge}}\}\mid \{v_{j_1}, v_{j_2}, v_{j_3}\}\in E^*\right\}.
    \]
    In particular, vertices $p_i$ make sure that we have $\Theta(n)$ many vertices (sparse embedding) and moreover, we add an edge of arity two between each $p_i$ and every other vertex to make sure that no independent set of size $\ge 2$ contains a $p_i$ vertex.
    Besides this, we note that the construction is very similar to that of Lemma \ref{lemma:h-is-lower-bound-mn-construction}, and all of the properties of the hypergraph $H$ can be proved completely analogous to the proof of that lemma.
\end{proof}
Theorem \ref{thm:mixed_arity_lb} now follows directly.
Finally, we prove that the running time of the algorithm from Proposition \ref{prop:higher-arity-algo} can be improved to match the lower bounds from Theorem \ref{thm:mixed_arity_lb} when we parameterize by the parameters $m_i$.
More precisely, we prove the following theorem.
\MixedArityAlgo*
The strategy towards proving this is to design a heavy-light approach on hyperedges of different arities. 
Specifically, we start similarly as in the previous algorithms by counting all of the independent sets in the underlying graph and then the goal is to subtract the count of the ''false solutions'', that is the independent sets in the underlying graph that contain an edge in the original hypergraph.
To that end we partition the set $[h]$ into two sets $S,D$ (standing for sparse and dense), where $S$ consists of all those $i$ such that $m_i = O(n^{3})$ and $D$ consists of all other $i$'s.
We then use the approach similar to that of Proposition \ref{prop:higher-arity-algo} to efficiently count the false solutions in ''sparse'' underlying hypergraphs, and finally, we show that we can also efficiently enumerate the false solutions in ''dense'' underlying hypergraphs while in parallel making sure to never double count any false solution.

We start by observing that a simple brute-force approach allows us to \emph{enumerate} all $k$-independent sets in the underlying graph that contain a hyperedge of arity $i$.
\begin{observation}\label{lemma:enum-false-sol}
    Let $H$ be an $h$-hypergraph with $n$ vertices and $m_i = \Theta(n^{\gamma_i})$ edges of arity $i$ for each $2\le i\le h$.
    Let $G$ be the underlying graph of $H$.
    Then there exists an algorithm \emph{enumerating} all $k$-independent sets in $G$ that contain a hyperedge of arity $i$ in $H$ in time $\mathcal{O}(m_in^{k-i})$.
\end{observation}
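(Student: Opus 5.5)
The plan is direct enumeration: fix each arity-$i$ hyperedge $e$, then extend $e$ by brute force to all $k$-sets containing it. For a fixed $e$ we list all sets $Y \in \binom{V\setminus e}{k-i}$. There are at most $n^{k-i}$ of them, so over all hyperedges this produces at most $m_i n^{k-i}$ candidates $X = e\cup Y$. A candidate $X$ is output if and only if it is a $k$-independent set in $G$. The output is complete: every $k$-independent set $X$ of $G$ containing some arity-$i$ hyperedge $e$ is generated when $e$ is the fixed hyperedge and $Y = X\setminus e$. Candidates are output only after the check, so no false positives occur.

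The main task is to make each check cost $O(1)$ amortized, so that the total stays $\mathcal{O}(m_i n^{k-i})$ rather than picking up a factor $k^2$ or a logarithm. Since $k$ is constant, testing all $\binom{k}{2}$ pairs of $X$ against the edge set of $G$ is $O(1)$, provided adjacency queries take $O(1)$ time. I would store the adjacency of $G$ in an $n\times n$ matrix. Building it costs $O(n^2)$, which is dominated by $m_i n^{k-i}$ whenever $k-i\ge 2$.

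There are two remaining regimes. If $k-i \le 1$, then instead of a matrix I would use hashing (or sorted adjacency lists, losing only a log factor). Alternatively, I would only check pairs involving the few added vertices, noting that the pairs inside $e$ can be checked once per $e$. The cheapest option is to first restrict the search to hyperedges $e$ that are independent in $G$, which takes $O(1)$ per $e$ with hashing.

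There is one more subtlety: a set $X$ may contain several arity-$i$ hyperedges and would then be listed several times. The statement only asks for enumeration within this time bound, so duplicates are harmless; each set appears at most $\binom{k}{i}=O(1)$ times. If distinct output is needed, I would apply the minimum-hyperedge rule under $\prec$ from the earlier inclusion–exclusion lemma and output $X$ only when $e$ is the $\prec$-minimal arity-$i$ hyperedge inside $X$. This check enumerates the $O(1)$ $i$-subsets of $X$ and looks each one up in a hash set of hyperedges. I expect the only real obstacle to be this bookkeeping of constant-time lookups. The combinatorial content is trivial.
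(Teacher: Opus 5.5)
Your proposal is correct and matches the paper's argument. Like the paper, you enumerate, for each arity-$i$ hyperedge $e$, all $O(n^{k-i})$ extensions of $e$ to a $k$-set and keep those that are independent in $G$, using constant-time checks. Your extra care about $O(1)$ adjacency lookups and duplicate outputs (at most $\binom{k}{i}$ copies, or deduplication via the $\prec$-minimal hyperedge rule) fills in details that the paper's sketch leaves implicit.
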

\begin{proof}
     We simply enumerate all sets of size $k$ that contain a vertex of arity $i$ (at most $\mathcal O(m_in^{k-i})$ of them) and for each in constant time check if it contains an independent set in $G$. If it does not, continue, if it does, subtract $1$ from the total count. 
    It is straightforward to verify that this procedure counts precisely $k$-independent sets in $G$ that contain an edge of arity $i$ in $H$.
\end{proof}
Finally, we are missing one more tool before proving our theorem, namely the refinement of the algorithm analysis from Proposition \ref{prop:higher-arity-algo}.
We first need to make a few simple observations.
First, recall that so far we had an arbitrary ordering on the set of hyperedges $\prec$. 
We now want to enforce some structure on the ordering, and in particular we want to make it monotone in arity of the hyperedge. 
More specifically, let $\prec$ be any total ordering on $E(H)$ such that whenever two edges $e,e'\in E(H)$ satisfy $|V(e)|<|V(e')|$, we have that $e\prec e'$.
We now bound the number of hyperedges in the graph $H''$ using the new ordering. 
The following lemma is a refinement of Observation \ref{obs:properties-of-h''}.
\begin{lemma}\label{lemma:properties-of-h''}
    Let $\prec$ be the total ordering of the hyperedges of $H$ as above.
    Let $S\subseteq E_{\ge 3}(H)$ be a set of hyperedges such that each hyperedge has arity at most $i$. 
    Let $H'$ and $H''$ be as before (i.e. $H'$ stems from running Algorithm \ref{alg:resolve-intersection} on $H,S$ and $H'' = H'- E_{\ge 3}(H)$).
    Then $|E_{\ge3}(H'')| \le \sum_{j=3}^i m_j$.
\end{lemma}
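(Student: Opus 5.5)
The plan is to account for every hyperedge of arity at least $3$ in $H''$, using the fact that the only such hyperedges are created inside \textsc{resolve-intersections}. By Lemma~\ref{lemma:higher-arity-I'} and the definition $H'' = H' - E_{\ge 3}(H)$, every edge of $E_{\ge 3}(H'')$ has the form $e'' = e'\setminus e$. Here $e\in S$ and $e'\in E_{\ge 3}(H)$ satisfy $e\cap e'\neq\emptyset$, $e'\prec e$, and $|e'\setminus e|\ge 3$ (differences of size $1$ delete a vertex and produce no edge; differences of size $2$ produce arity-$2$ edges, which are not counted in $E_{\ge 3}$). So the task is to bound the number of distinct edges $e''$ produced this way, and for this I will construct an injection from them into $E_{\ge3}(H)\cap\bigcup_{j\le i}E_j$.

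Since $e' \prec e$ and the ordering is monotone in arity, we get $|e'|\le |e|\le i$. This is exactly where the new choice of $\prec$ is used. So every generating edge $e'$ lies in $\bigcup_{j=3}^i E_j(H)$.

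For the injection, I map each created edge $e''$ to a generating pair $(e,e')$, and then to $e'$. Because $S$ is a matching, fixing $e'$ does not by itself fix $e''$: the same $e'$ can intersect several edges $e\in S$. This is the step I expect to be the main obstacle. If the algorithm processes every $e\in S$ against the same $e'$, several distinct differences $e'\setminus e$ could in principle be added. The line $H'\gets (H'-e')\cup e''$ removes $e'$, and the edges added are subsets of $e'$, so I would argue through the algorithm's bookkeeping that each original $e'$ contributes at most one surviving edge of arity $\ge 3$.

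I see two ways to make that argument. The first is to read Algorithm~\ref{alg:resolve-intersection} as iterating over the current edge set of $H'$. Then, once $e'$ has been replaced by $e'\setminus e_1$, a later $e_2$ acts on this replacement, and each original $e'$ has a single descendant chain that yields at most one edge in the end. The second is to note that if $e'$ meets two distinct matching edges $e_1,e_2$, the differences $e'\setminus e_1$ and $e'\setminus e_2$ both forbid supersets. The edge $e'\setminus(e_1\cup e_2)$, or a vertex deletion if that set is small, is then stronger and correct for $I'_S$, so we may keep only one edge per $e'$ without changing the count.

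With either justification, $|E_{\ge3}(H'')| \le |\{e'\in E_{\ge3}(H): |e'|\le i\}| = \sum_{j=3}^i m_j$.
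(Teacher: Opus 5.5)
Your key step is the same as the paper's. Every arity-$\ge 3$ edge of $H''$ is some $e'\setminus e$ with $e\in S$ and $e'\prec e$, and the arity-monotone order gives $|e'|\le|e|\le i$. The paper then just says the bound ``follows easily'', presumably via the one-for-one replacement count behind Observation~\ref{obs:properties-of-h''}. It never addresses the multiplicity issue you raise, and that issue is genuine.

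Line~\ref{line:4} ranges over the edges of the \emph{input} $H$. So if $e'$ meets several $e\in S$ with $e'\prec e$, the update $H'\gets (H'-e')\cup e''$ runs once for each of them. Deleting $e'$ again is a no-op, but every distinct $e'\setminus e$ is added. (This presumes $H'\gets H$ is executed once, before the loop over $e\in S$, as Lemma~\ref{lemma:higher-arity-I'} requires; as printed it sits inside that loop.)

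A concrete instance: let $S=\{e_1,e_2\}$ be two disjoint arity-$5$ edges. Let $e'_1,\dots,e'_t$ be arity-$4$ edges, so $e'_s\prec e_1,e_2$ automatically, each meeting $e_1$ and $e_2$ in one vertex and having two private vertices. Take no further edges of arity $\ge 3$ and $i=5$. The algorithm then creates the $2t$ distinct arity-$3$ edges $e'_s\setminus e_1$ and $e'_s\setminus e_2$, whereas $\sum_{j=3}^{5}m_j=t+2$. So for the algorithm as written the inequality fails once $t\ge 3$, and no bookkeeping argument can give one surviving edge per $e'$.

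Your first justification is therefore not available. Iterating over the current edges of $H'$ is a change to the pseudocode, not a reading of it. It would also require extending $\prec$ to the new edges and re-checking Lemma~\ref{lemma:higher-arity-I'}.

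Your second justification is the right repair, and it is sound. For every $X\supseteq V(S)$ we have $e'\subseteq X$ if and only if $e'\setminus V(S)\subseteq X$. So you can replace all edges spawned by $e'$ with the single set $e'\setminus V(S)$: a vertex deletion if it is a singleton, and $I'_S(H,k)=\emptyset$ if it is empty. This leaves $I'_S(H,k)$ unchanged, still strictly lowers the arity (as the induction in Proposition~\ref{prop:higher-arity-algo} needs), and makes $e''\mapsto e'$ an injection into $\bigcup_{j=3}^{i}E_j(H)$.

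Alternatively, you can leave the algorithm unchanged and count pairs $(e,e')$. This gives $|E_{\ge 3}(H'')|\le |S|\sum_{j=3}^{i}m_j$. Since $|S|\le\lceil k/3\rceil$ in every application and $k$ is constant, this constant-factor bound suffices for the subsequent running-time analysis.

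With either fix your argument is complete, and it is more careful than the paper's.
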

\begin{proof}
    Any hyperedge in $H''$ of arity $\ge 3$ is not present in $H$ and was added during Algorithm \ref{alg:resolve-intersection}. 
    In particular, any such hyperedge stems from the intersection of some $e\in S$ and some $e'\in E(H)$ such that $e'\prec e$. 
    However, since $e'\prec e$, this means that the arity of $e'$ is at most arity of $e$, which is at most $i$ since $e\in S$. 
    The bound on the number of hyperedges in $H''$ now follows easily.
\end{proof}
\begin{lemma}
    Given any $h$-hypergraph $H$ with $n$ vertices and $m_i = \Theta(n^{\gamma_i})$ hyperedges of arity $i$ for each $2\le i\le h$, there exists an algorithm computing the number of $k$-independent sets in the underlying graph $G$ of $H$ that contain a hyperedge in $H$ (false solutions) in time 
    \[
    \mathcal O \left(n^{k\omega/3} + \sum_{i=3}^h m_i^{\frac{k-i+3}{3}}\right).
    \]
\end{lemma}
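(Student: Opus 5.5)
We count false solutions with the same truncated inclusion–exclusion as before: Corollary~\ref{cor:inclusion-exclusion} together with Lemma~\ref{lemma:higher-arity-I'}. The change is that the ordering $\prec$ is now monotone in arity, and the analysis is a strong induction on the maximal arity $h$, organised by the largest arity occurring in the guessed matching $S$. We already know the following. Only matchings $S\subseteq E_{\ge 3}$ contribute. For such $S$, $|I'_S(H,k)| = |I(H''-N_{G''}(S),\,k-|V(S)|)|$, where $G''$ is the underlying graph of $H''$. Here $H''$ has arity $<h$ (Observation~\ref{obs:properties-of-h''}), and by Lemma~\ref{lemma:properties-of-h''}, if every edge of $S$ has arity at most $i$, then $H''$ has at most $\sum_{j=3}^{i} m_j$ hyperedges of arity $\ge 3$. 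Each such term is computed by the induction hypothesis: count independent sets in the underlying graph by Nešetřil–Poljak, then subtract the false solutions of $H''-N_{G''}(S)$ recursively.

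The sum over matchings is split by $i:=\max_{e\in S}|e|$. Fix $i$ and let $S$ contain at least one edge of arity $i$. Then $|V(S)|\ge i+3(|S|-1)$. Such an $S$ is only relevant if $|V(S)|\le k$, so $|S|\le \lfloor (k-i)/3\rfloor+1$, and the number of candidates is at most $m_i\cdot (\sum_{j\le i} m_j)^{|S|-1}$. For each candidate, the residual parameter is $k':=k-|V(S)|\le k-i-3(|S|-1)$. By induction, the recursive call on $H''-N_{G''}(S)$ costs
\[
\mathcal O\Bigl(n^{k'\omega/3}+\sum_{j=3}^{i} M_i^{(k'-j+3)/3}\Bigr),\qquad M_i:=\sum_{j\le i}m_j ,
\]
plus $\mathcal O(m)$ to build $H''$. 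Since every arity in $H''$ is at most $i$, we may take $m_j\le M_i$ there.

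Summing a geometric-type expression over $|S|$ should then give the claimed bound. For the $n^{k'\omega/3}$ part, each product $m_j^{t}\,n^{(k-3t-c)\omega/3}$ is an interpolation between $n^{k\omega/3}$ and $m_j^{(k-j+3)/3}$ (the AM–GM step used in \eqref{eq:running-time-analysis-prop}); this needs $m_j\le n^3$, which holds in the regime that matters, since otherwise the $m_j n^{k-j}$ term of the theorem takes over. For the recursive part, the induced terms look like $m_i\,M_i^{|S|-1}\cdot M_i^{(k'-j+3)/3}$ with total exponent about $(k-i+3)/3$. These are absorbed by $\sum_{j\le i} m_j^{(k-j+3)/3}$, because a term with larger $m_j$ but smaller arity $j$ has a larger exponent. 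This is where the arity-monotone ordering is essential: edges created when resolving intersections for an arity-$i$ guess only come from edges of arity $\le i$, so sparse high-arity edges never get charged with the exponent belonging to low-arity ones.

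I expect the main obstacle to be the exponent bookkeeping. First, $(k-i+3)/3$ need not be an integer, so one must check that the ceilings and the $k$ divisible by $3$ assumption (the inner recursion uses $k'$, not necessarily divisible by $3$) only cost the rounding already present in the statement. I would handle this by proving the inductive statement with exponent $\lceil (k-j+3)/3\rceil$ and showing that an arity-$i$ guess reduces $k$ by at least $i\ge 3$. Second, one must confirm that cross terms of the form $m_i\cdot m_j^{t}$ with $j<i$ are dominated by $m_i^{(k-i+3)/3}+m_j^{(k-j+3)/3}$. I would try Hölder/weighted AM–GM on the exponents $1$ and $t$, using $1+t\le (k-i+3)/3$ and $t\le (k-j+3)/3$.
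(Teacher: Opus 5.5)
Your proposal follows essentially the same route as the paper. It uses the same truncated inclusion--exclusion with an arity-monotone order, guesses the largest-arity edge first to get $m_i M_i^{|S|-1}$ candidate matchings, bounds the residual instance via $M_i=\sum_{3\le j\le i}m_j$, and absorbs the cross terms $m_i m_j^{(k-i)/3}$ ($j\le i$) into $m_i^{(k-i+3)/3}+m_j^{(k-j+3)/3}$; the paper instead plugs the coarser bound $n^{k'\omega/3}+M_i^{k'/3}$ into the recursion, which gives the same result.

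One correction: the interpolation step does not need $m_j\le n^3$. With $u=|S|$ one has $m_iM_i^{u-1}n^{(k-i+3-3u)\omega/3}\le M_i^{u}n^{(k-i+3-3u)\omega/3}$, and for $0\le u\le (k-i+3)/3$ this is at most $\max\{n^{(k-i+3)\omega/3},M_i^{(k-i+3)/3}\}\le n^{k\omega/3}+\mathcal O\bigl(\sum_{3\le j\le i}m_j^{(k-j+3)/3}\bigr)$. So the lemma holds for arbitrary edge counts, without appealing to the regime of the theorem.
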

\begin{proof}
    We run the same algorithm from the proof of Proposition \ref{prop:higher-arity-algo}, but analyse in terms of $m_i$'s this time, rather than in terms of $m$.
    Note that the construction and correctness is already argued there, so we only prove the running time, namely we refine the Equation~\ref{eq:running-time-analysis-prop}. 
    Recall that in Equation~\ref{eq:running-time-analysis-prop} we bounded the number of vertices spanned by the matching $S$ by $3|S|$. 
    However, this is too coarse for our current use case, and we rather look at the arity of the largest hyperedge contained in $S$. 
    More specifically, we notice that any matching $S\subseteq E_{\ge3}$ of size $j$ that contains an edge of arity $i$ spans at least $i+3(j-1)$ many vertices. Hence, we can bound the running time of the algorithm as follows.
    We will also assume that $S$ does not contain any edges of arity greater than $i$ (i.e. we first guess the edge of the largest arity and then guess the remaining $j-1$ edges).
    \begin{equation}
        \begin{aligned}\label{eq2-}
            T_k(n,m)
            &= \mathcal O\left(\sum_{i=3}^h\sum_{j=0}^{(k-i)/3} m_i\cdot m^{j} \left(n^{\frac{(k-i-3j)\omega}{3}} + m^{\frac{k-i-3j}{3}}\right)\right)
        \end{aligned}
    \end{equation}
    Before bounding this equation further, let us first take a minute to digest it.
    The intuition is that in the inner-most term, we first guess a matching of size $j+1$ for some $j\ge 0$ that contains a hyperedge of arity $i$ for some $i\ge 3$, and the remaining $j$ hyperedges of arity at most $i$. 
    This contributes the term $m_i\cdot m^{j}$ ($m_i$ is the time required to guess the edge of arity $i$ and there are at most $\mathcal O(m^j)$ matchings of size $j$ that can potentially form a matching of size $j+1$ with the guessed edge). 
    We then use the same induction-based approach from Proposition \ref{prop:higher-arity-algo} to show that we can compute the value $|I(H''-N_{G''}(S), k-|V(S)|)|$ in time at most $\mathcal O\left(n^{\frac{(k-i-3j)\omega}{3}} + m^{\frac{k-i-3j}{3}}\right)$, giving us the desired value. 
    We finally simply sum over all possible values of $i,j$ to cover all cases and get the total bound of the running time of our algorithm.
    
    Note that we can do even slightly better than this simple bound.
    Particularly, for any $i\ge 3$, let $M_i = \sum_{j=3}^i m_j$.
    Now, recall that by our assumption all of the edges in our matching have arity at most $i$.
    Also, note that we are applying our induction on hypergraph $H''$, and second $m$ is the trivial bound on the hyperedges in $H''$ as stated by Observation \ref{obs:properties-of-h''}.  
    However, using the previous lemma, we know that there is a better bound on this number that can be obtained by exploiting the monotonicity of our ordering $\prec$ w.r.t. arity. 
    This means that in principle we can replace both $m$'s in our bound with $M_i$'s.
    We can now show that this bound evaluates to what we want.
    \begin{align*}
        T_k(n,m)
        &= \mathcal O\left(\sum_{i=3}^h\sum_{j=0}^{(k-i)/3} m_i\cdot M_i^{j} \left(n^{\frac{(k-i-3j)\omega}{3}} + M_i^{\frac{k-i-3j}{3}}\right)\right) & \text{\small{(Eq~\ref{eq2-} + arguments above)}}\\
        &= \mathcal O\left(n^{\frac{k\omega}{3}} + \sum_{i=3}^h\sum_{j=0}^{(k-i)/3} m_i\cdot \left(M_i^{\frac{k-i}{3}}\right)\right) & \\
        &= \mathcal O\left(n^{\frac{k\omega}{3}} + \sum_{i=3}^h m_i\cdot \left(M_i^{\frac{k-i}{3}}\right)\right) \\
        & = \mathcal O\left(n^{\frac{k\omega}{3}} + \sum_{i=3}^h m_i\cdot \left(\sum_{j=3}^i m_j\right)^{\frac{k-i}{3}}\right) & \text{\small{(by definition of $M_i$)}}\\
        &= \mathcal O\left(n^{\frac{k\omega}{3}} + \sum_{i=3}^h m_i\cdot \left(\sum_{j=3}^i m_j^{\frac{k-i}{3}}\right)\right) & \text{\small{(using that $k,i\in \mathcal O(1)$)}}\\
        &= \mathcal O\left(n^{\frac{k\omega}{3}} + \left(\sum_{i=3}^h m_i^{\frac{k-i+3}{3}}\right) + \left(\sum_{\substack{3\le i\le h\\ j<i}} m_i\cdot m_j^{\frac{k-i}{3}}\right)\right)\\
        &= \mathcal O\left(n^{\frac{k\omega}{3}} + \sum_{i=3}^h m_i^{\frac{k-i+3}{3}}\right),
    \end{align*}
    where the last line follows from the fact that for each $i\ne j$, we have that $m_im_j^{\frac{k-i}{3}}\le \max\{m_j^{\frac{k-i+3}{3}}, m_i^{\frac{k-i+3}{3}}\}$, and since $j<i$, we have $m_j^{\frac{k-i+3}{3}}\le m_j^{\frac{k-j+3}{3}}$.
    Hence each of the (constantly many) terms from the second sum can be bounded by some term in the first sum.
\end{proof}
We are finally ready to prove Theorem \ref{thm:mixed-arity-algo}.
\begin{proof}[Proof of Theorem \ref{thm:mixed-arity-algo}]
    Let $S,D$ (standing for sparse and dense respectively) be a partition of $[h]$ such that $S= \{i\in [h]\mid m_i^{(k-i+3)/3} \le m_in^{k-i}\}$ and $D = [h]\setminus S$. 
    Intuitively, $S$ consists of all those indices $i$'s such that there are at most (roughly) $\mathcal O(n^3)$ many edges of arity $i$.
    Let $H_S$ be the spanning subgraph of $H$ consisting of all edges of arity $i$ with $i\in S$. 
    More formally, let $E_i(H)$ be the set of hyperedges of $H$ of arity $i$, then we define $H_S = (V, E_S)$ where $V = V(H)$ and $E_S = \bigcup_{i\in S} E_i(H)$.
    By the previous Lemma, we can compute the set $I_{\text{invalid}}(H_S, k)$ in time 
    \[\mathcal O \left(n^{k\omega/3} + \sum_{i\in S} m_i^{\frac{k-i+3}{3}}\right).\] 
    Note that this counts all the false solution that contain a hyperedge of ''sparse arity'' (i.e. hyperedge present in $H_S$). 
    
    Let $H_D$ be defined analogously, by replacing $S$ by $D$ (and additionally preserving the arity-$2$ edges, which are implicitly included in $S$).
    Now, using Observation \ref{lemma:enum-false-sol}, we can enumerate each false solution from $H_D$ in time 
    \[
    \mathcal O(\sum_{i\in D} m_i n^{k-i}).
    \]
    For each such enumerated set, we can in $\mathcal O(1)$ time check if the corresponding vertices contain an edge in $H_S$ (meaning that we already counted this solution), and if it does, we do nothing, otherwise, we increment the count by one.
    This algorithm is clearly correct and runs in the desired time.
\end{proof}

\section{Binary Constraint Families and the Effects of Sparsity}
\label{sec:binary}
% The constraint function $\NAND$ is unique in its behavior concerning sparsity, exhibiting a sharp \emph{phase transition} of hardness.
% We will see that the complexity of binary constraint families that do not contain $\NAND$ on the other hand,
% are not affected by sparsity (except for degenerate cases).
% This motivates the question to consider constraint families
% containing $\NAND$ in \emph{combination} with other constraints.
Towards a generalization of the phenomena seen with the non-uniform $k$-independent set, we recall the setting of Boolean constraint satisfaction problems, parameterized by the number of non-zeros $k$ in a satisfying assignment and the density $\gamma$:
\BooleanCSPDefinition*

The Boolean constraint satisfaction problem is a powerful generalization, capturing a vast class of problems, including the non-uniform $k$-independent set.
In this section, we will provide a classification of $\CSP_k^\gamma(\Fam)$ for \emph{binary} constraint families $\F$ -- 
as already for binary families, considering sparse instances can severely change the complexity of the given problem.
Our theorem is complete with the exception of the family $\Fam = \{\IMPL\}$ previously discussed in \cite{KunnemannM20}. For this family we extend the lower and upper bounds and show that sparsity has no impact.

\BinaryCSPClassification*
\subparagraph*{Preliminaries}
Let us recall and introduce some $\CSP$ specific notations and results.
For a singleton constraint family $\Fam = \{f\}$, we also write $\CSP_k^\gamma(f)$.
Denote by $\text{vars}(\Phi)$ the variables of an instance $\Phi$ of $\CSP_k^\gamma(\Fam)$, then we define its \emph{primal graph} $G(V,E)$ where
$V = \{x_i \mid x_i \in \text{vars}(\Phi)\}$ and $E = \{(x_{i_1},\dots,x_{i_h}) \mid f(x_1,\dots,x_{i_h}) \in \Phi\}$.
Furthermore we call a satisfying assignment $a$ for $\CSP_k^\gamma$ a \emph{solution} if it is of weight exactly $k$. 
For a Boolean constraint function $f$, we use $f(\mathbf{x})$ with $\mathbf{x} = (x_1,\dots,x_h)$ as a shorthand for $f(x_1,\dots,x_h)$.
Finally, let $f_\text{sym}$ denote the symmetrization of $f$, i.e. 
\[f_\text{sym}(x_1,\dots,x_h) = \bigwedge_{\sigma \in S_h} f(x_{\sigma(1)},\dots,x_{\sigma(h)}).\]

A constraint function $f$ is \emph{$0$-valid/invalid} if it is satisfied/violated by the all $0$-assignment.
Moreover, we say a constraint family is $0$-valid if this holds for all its constraints.
We define constants $\umin(f)$ and $\smin(f)$, denoting the weight of the minimum \emph{violating} and \emph{satisfying} assignments respectively. 
\begin{definition}
    For a Boolean constraint function $f$ we let $\umin(f):= \min\left\{\|a\|_1 \mid f(a)=0\right\}$ and $\smin(f):= \min\left\{\|a\|_1 \mid f(a)=1\right\}$. We call $f$ $\umin(f)$-violating and $\smin(f)$-satisfying respectively.
    For a Boolean constraint family $\Fam$ we define 
    \begin{itemize}
        \item $\umin(\Fam) = \min_{f \in \Fam} \umin(f)$ and 
        \item $\smin(\Fam) = \min_{f \in \Fam} \smin(f)$ respectively.
    \end{itemize}
\end{definition}

We generally assume our constraint families to be finite and we also implicitly assume that there are no trivial constraint functions contained, as they can be trivially resolved. 
In the following, we will refer to Boolean constraint functions as known from propositional logic, i.e. $F(x) = \neg x$, $\NAND(x,y) = \neg(x \wedge y)$, $\EQ(x,y) = x\leftrightarrow y$ and so on.

With this out of the way, we use the remainder of this section to first consider the setting of sublinear density and binary constraint families agnostic to sparsity,
then give a linear time algorithm for $\CSP_k^\gamma(\NAND)$ for density $\gamma < 2$ and
subsequently, provide algorithms for binary families containing $\NAND$ together with matching lower bounds.

Intuitively, while $\CSP_k^\gamma(\NAND)$ is either in time $\O(n+m)$ for $\gamma < 2$ or in time $n^{\frac{\omega k}{3}\pm o(1)}$, adding \emph{any} other binary constraint into the mix, makes the time exponent scale with the density exponent $\gamma$, resulting in a running time of 
$\O\left(n^{\gamma{(k - \smin(\Fam))\omega}/{6} + 1}\right)$, while depending on properties of the constraint family, there is a slight offset in $k$.
% \todo{This theorem does not hold! Consider \{\IMPL\}, \{\NAND\} or \{\NAND, \OR\}}
\subsection{General Families with Sublinear Density}
In this section, we briefly show that $\CSP$s with a sublinear number of constraints can be solved quickly, as asymptotically, they contain $\Theta(n)$ isolated vertices that facilitate finding a solution.
First we recall the following result by Marx \cite{Marx05}, that efficiently deals with $0$-invalid constraint functions by reducing them to the $0$-valid case.
\begin{lemma}[Branch and Bound~\cite{Marx05}]
    \label{lem:branch_and_bound}
    Let $\Fam$ be a constraint family that is not $0$-valid. 
    Then we can one-to-many reduce an instance $\Phi$ of $\CSP_k^\gamma(\Fam)$ 
    to at most $f(k)$ many instances $\Phi$ of $\CSP^{\gamma_i}_{k_i}(\Fam^*_i)$ where $k_i < k$ and $\Fam^*_i$ is a $0$-valid constraint family in time $\O(n+m)$.
\end{lemma}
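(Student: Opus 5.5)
The plan is Marx's branching argument, with bookkeeping for sparsity and weight. Since $\Fam$ is not $0$-valid, some constraint $C = f(x_{i_1},\dots,x_{i_h})$ in $\Phi$ has $f(0,\dots,0)=0$. (If no constraint of $\Phi$ is $0$-invalid, then $\Phi$ is already an instance over the $0$-valid subfamily and we are done.) Such a $C$ can be found by a single $\O(n+m)$ scan. Every satisfying assignment must set at least one of the at most $h$ variables of $C$ to $1$. We branch on a nonempty set $T\subseteq\{x_{i_1},\dots,x_{i_h}\}$ that is to be the exact set of true variables of $C$ (more cheaply: on the first true variable in the scope). There are at most $2^h$ branches, and each branch fixes at least one variable to $1$ and the rest of the scope to $0$.

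In each branch we substitute the fixed values into all constraints containing fixed variables. A constraint $g(\mathbf{y})$ with some coordinates fixed becomes a restricted function $g'$ of lower arity on the remaining variables. If $g'$ is constantly $1$, we drop it. If $g'$ is constantly $0$, we reject the branch. Otherwise we keep $g'$ as a constraint. The budget drops to $k' = k-|T|<k$. The restricted functions form a finite family $\Fam'$ of restrictions of members of $\Fam$, depending only on $\Fam$. The number of constraints does not increase, and there are still $\Theta(n)$ variables, so the density exponent stays the same or drops; this is why the statement allows $\gamma_i$. The substitution costs $\O(n+m)$ per branch.

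We iterate the branching while some remaining constraint is $0$-invalid. Every step decreases $k$ by at least one, so the recursion has depth at most $k$. A branch reaching $k'<0$ is rejected, and a branch reaching $k'=0$ is decided directly by checking the all-zero assignment. The branching factor is at most $2^{h_{\max}}$, where $h_{\max}$ is the maximum arity in $\Fam$. This gives at most $f(k)=2^{h_{\max}k}$ leaf instances. Each leaf is an instance $\Phi_i$ whose constraints are all $0$-valid, i.e.\ it lies in $\CSP^{\gamma_i}_{k_i}(\Fam_i^*)$ with $\Fam_i^*$ the $0$-valid restrictions occurring in it. Correctness follows because the branches exhaustively partition the solutions according to the assignment on the scope of $C$. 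Also, a weight-$k$ solution of $\Phi$ with the fixed part removed is exactly a weight-$k_i$ solution of $\Phi_i$, and conversely.

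The main obstacle is the running time. Naively the recursion costs $f(k)\cdot\O(n+m)$, which is fine for constant $k$. The subtle point is finding a $0$-invalid constraint quickly at each level without rescanning. To handle it, I would maintain a list of $0$-invalid constraints and occurrence lists per variable, updating them only when variables are fixed. Each branch touches only the constraints incident to newly fixed variables, so the total work stays $\O(f(k)(n+m))=\O(n+m)$.
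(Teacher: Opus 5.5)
Your proposal is correct and follows essentially the paper's argument. You branch on which variable(s) of a violated $0$-invalid constraint are set to $1$, substitute to get restricted lower-arity constraints (rejecting infeasible branches), and stop after depth at most $k$ with bounded branching factor, giving $f(k)$ leaf instances over $0$-valid families with $k_i<k$; two small differences are that substituting into every constraint containing a fixed variable is more explicit than the paper's description, and the incremental bookkeeping is unnecessary because rescanning in $\O(n+m)$ at each of the $f(k)$ search-tree nodes already gives the bound for constant $k$.
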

\begin{proof}
    Observe that for a $0$-invalid constraint to be satisfied, at least one more variable has to be set to $1$.
    We can use this observation for a branch and bound approach as given in \cite{Marx05}, 
    where for every violated $0$-invalid constraint $f(x_1,\dots,x_r)$ we branch by setting $x_i$ for $i \in [r]$ to $1$, and replacing $f$ with this constraint on variables $\{x_1,\dots,x_r\} \setminus \{x_i\}$ and adding it to $\Fam^*$. If this constraint is infeasible, we terminate this branch. Note that if there is no feasible branch remaining, then the instance is infeasible.
    We terminate after at most $k$ steps or when arriving at an instance over a constraint family $\Fam^*$ with only $0$-valid constraints remaining.
    The search tree has at most $f(k)$ many leaves, thus we obtain for every leaf $i$ an instance $\Phi_i$ of $\CSP_{k_i}^{\gamma_i}(\Fam_i^*)$ 
    where each $k_i$ is bounded by $k_i \leq k - \smin(\Fam)$.
    The search tree is bounded in depth by $k$ with branching factor $r$ and we can check for constraints violated by the $0$-assignment in time $\O(n+m)$,
    obtaining the desired runtime.
\end{proof}

Using the previous lemma, we now consider all Boolean constraint families for density $\gamma < 1$.
The general intuition is that asymptotically, there exists a large enough set of unused variables, providing us access to a weight $k$ solution.
\begin{theorem}
\label{thm:sub_linear_density}
    Let $\Fam$ be a Boolean constraint family.
    Then any instance of $\CSP_k^\gamma(\Fam)$ for $\gamma < 1$ can be solved in time $\O(n)$.
\end{theorem}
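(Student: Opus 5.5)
The plan is to first apply Lemma~\ref{lem:branch_and_bound} to get rid of $0$-invalid constraints, and then to exploit that sublinearly many constraints leave a linear number of variables completely unconstrained. The branching reduces the instance in time $\O(n+m)$ to at most $f(k)$ instances $\Phi_i$ of $\CSP_{k_i}^{\gamma_i}(\Fam_i^*)$ over $0$-valid families with $k_i\le k$. Since the branching only sets variables to $1$ and shrinks existing constraints, each $\Phi_i$ still has at most $m=O(n^{\gamma})$ constraints and at least $n-k$ remaining variables. It therefore suffices to show that a $0$-valid instance with $m=O(n^\gamma)$, $\gamma<1$, constraints can be decided in linear time, and then to take the disjunction over all branches. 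Since $f(k)$ is a constant, this keeps the total time at $\O(n)$.

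For a $0$-valid instance, I would call a variable \emph{isolated} if it occurs in no constraint. Each constraint has arity at most $r=\max_{f\in\Fam}$ arity, so at most $rm=O(n^\gamma)=o(n)$ variables occur in constraints. Hence for sufficiently large $n$ there are at least $k$ isolated variables. I then set exactly $k_i$ isolated variables to $1$ and all other variables to $0$. Every constraint sees only zeros and is satisfied by $0$-validity, so each $0$-valid branch is a YES instance. The branch also accounts for the $k-k_i$ variables already fixed to $1$, so every non-pruned branch yields a weight-$k$ solution. The algorithm is therefore: run the branching, and if some branch survives, output the assignment built from isolated variables. For small $n$ below a constant threshold $n_0$, brute force takes $O(1)$ time.

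The main obstacle is bookkeeping in the branching step. The reduced constraints on fewer variables could themselves be $0$-invalid again, or infeasible, which is why the branching is repeated up to depth $k$. I have to make sure that after termination the remaining instance is genuinely $0$-valid and that its variables and constraints are still counted against the original $n$ and $m$. I also need to handle the boundary case where a branch needs more than $k$ ones and is pruned. Here I rely on the lemma as stated.

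A second subtlety is that the instance must actually contain $n$ variables, including those appearing in no constraint. The counting argument needs the variable set to be given explicitly, and since the input has size $\Theta(n)$, reading it fits within time $\O(n)$.
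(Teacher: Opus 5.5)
Your proposal is correct and follows essentially the same route as the paper. You first eliminate the $0$-invalid constraints with the branch-and-bound lemma; then, since only $O(n^\gamma)=o(n)$ variables occur in constraints, at least $k$ unconstrained variables remain, and setting them to $1$ works because $0$-validity satisfies every constraint. The only cosmetic difference is that you count isolated variables in each reduced instance, whereas the paper counts variables already isolated in the original instance, which branching never touches.
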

\begin{proof}
    Clearly as there are only $\O(n^{1-\delta})$ many constraints for some $\delta > 0$,
    asymptotically for $k \ll n$ there are at least $k$ variables not appearing in any constraint.
    These present a weight $k$ solution for any $0$-valid $\Fam$, by setting all other variables to $0$.
    For $0$-invalid constraint families, we need to resolve the $0$-invalid constraints first:
    We apply \autoref{lem:branch_and_bound} to obtain $f(k)$ many $0$-valid instances in time $\O(n)$. 
    As branch and bound only considers variables that are contained in a constraint, there are still at least $k$ variables not appearing in any constraint -- using the previous argument, we obtain a solution.
\end{proof}

\subsection{Constraint Families without \texorpdfstring{$\NAND$}{NAND}}
In this section, we will briefly visit binary constraint families, that do not contain $\NAND$, proving part 1 and 2 of \autoref{thm:binary-csp-classification}.
For these families, sparsity does not play a substantial role: Either they are efficiently solved in the dense case already,
or the complexity does not change with respect to the instance density.
With \autoref{thm:sub_linear_density} in mind, we will only consider instances of $\gamma \geq 1$ in the following.

\begin{observation}
\label{obs:easy_constraints}
    We can preprocess an instance of $\CSP_k^\gamma(\Fam \cup \{\NOR,F\})$ to remove all occurrences of $\{\NOR,F\}$ in time $\O(n + m)$ 
    to obtain an instance of $\CSP_k^{\gamma'}(\Fam)$.
\end{observation}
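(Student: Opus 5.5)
We need to prove the observation: remove all NOR and F constraints in linear time, getting an instance over $\Fam$.

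The plan is to observe that both $F(x)=\neg x$ and $\NOR(x,y)=\neg x\wedge\neg y$ are equivalent to forcing each of their variables to $0$. The preprocessing then just deletes every variable that occurs in such a constraint, together with all constraints that mention a deleted variable, after first simplifying those constraints with the forced value substituted.

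\textbf{Step 1: collect the forced variables.} Scan all $m$ constraints once. For every occurrence of $F(x)$ mark $x$ as forced to $0$. For every occurrence of $\NOR(x,y)$ mark both $x$ and $y$ as forced to $0$. Let $Z$ be the set of marked variables. Every solution must set all variables in $Z$ to $0$, since $F$ and $\NOR$ are satisfied only by the all-zero assignment on their scope. This step takes $\O(n+m)$ time.

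\textbf{Step 2: substitute into the remaining constraints.} Consider each remaining constraint $f(x,y)$ with $f\in\Fam$.
\begin{itemize}
\item If both $x,y\in Z$: check $f(0,0)$. If it is $0$, output a trivial NO instance; otherwise drop the constraint.
\item If exactly one variable, say $x$, lies in $Z$: the constraint becomes the unary function $g(y)=f(0,y)$.
  \begin{itemize}
  \item If $g$ is constantly $1$, drop it.
  \item If $g$ is constantly $0$, answer NO.
  \item If $g(y)=\neg y$, then $y$ is also forced to $0$, so add $y$ to $Z$ and repeat.
  \item If $g(y)=y$, then $y$ is forced to $1$.
  \end{itemize}
\end{itemize}

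\textbf{Main obstacle: where the forced-to-$1$ unary constraints go.} Forced-to-$1$ constraints are not in $\Fam$, so they must be handled without leaving the family. The intended handling is to treat them as a $0$-invalid unary constraint and resolve them like the branch and bound of \autoref{lem:branch_and_bound}: set $y=1$, decrease $k$ by one, and substitute $y=1$ into its constraints. This is not clean, since substituting may again create unary constraints and the decreased $k$ changes the target parameter.

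A cleaner alternative, which I would try first, is to restrict attention to the case the statement really intends. There the remaining constraints only involve variables outside $Z$, and any constraint touching $Z$ collapses to either a tautology or a forcing of $0$ (propagated as above via a queue, in linear time overall) or a forced $1$. Forced $1$s are handled by lowering $k$ accordingly; this is exactly what the observation's "$\CSP_k^{\gamma'}$" allows, with $k$ possibly reduced.

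\textbf{Step 3: delete and check the size bounds.} Delete all variables of $Z$. Each solution of the original instance restricts to a solution of the new one, and conversely extending by zeros preserves weight. The number of remaining constraints and variables only decreases, which gives some density $\gamma'$. Since propagation visits each constraint a constant number of times, the total time is $\O(n+m)$.
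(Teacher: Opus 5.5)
Your core step is the paper's: every variable in an $F$- or $\NOR$-constraint must be $0$, so it can be deleted, which gives $n'\le n$ and $m'\le m$. Your queue-based propagation of forced zeros goes beyond the paper's one-line argument, and it is needed. For example, $F(x)\wedge \EQ(y,x)$ forces $y=0$, and simply discarding $x$ together with its constraints would lose this.

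The genuine gap is your treatment of variables forced to $1$. The observation outputs an instance of $\CSP_k^{\gamma'}(\Fam)$ with the \emph{same} $k$; only the density exponent may change. Your reading that the statement ``allows $k$ possibly reduced'' is incorrect, so ``lowering $k$ accordingly'' does not prove it. This branch also breaks your Step~3, since extending by zeros no longer restores the weight. Moreover, fixing $y=1$ would require substituting $y=1$ into $y$'s other constraints, which your zero-only queue does not do.

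The missing idea is $0$-validity. The paper only invokes this observation for $0$-valid families: after \autoref{lem:branch_and_bound} in the linear-regime theorem, and in the subsection on $0$-valid families containing $\NAND$. If $f(0,0)=1$ for every $f\in\Fam$, then substituting $0$ for one argument of a binary $f$ yields either the constant $1$ or $\neg y$. Hence your branches ``$g\equiv 0$'', ``both in $Z$ with $f(0,0)=0$'' and ``$g(y)=y$'' never occur. Only zeros propagate, each variable enters $Z$ once, and each constraint is inspected $O(1)$ times. Deleting $Z$ then gives an equivalent instance over $\Fam$ with unchanged $k$ and $n'\le n$, $m'\le m$.

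For a $0$-invalid constraint such as $f(x,y)=x\vee y$, the pair $F(x)\wedge f(x,y)$ forces $y=1$, which your procedure can only absorb by decrementing $k$. So state the $0$-validity assumption explicitly rather than reinterpreting the conclusion.
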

It suffices to remove any variable appearing in either $F$ or $\NOR$, as setting those to $1$ would directly violate some constraint. For completeness sake, notice that this argument also extends to unary constraints on two variables of the form $f(x,y) = F(x)$, as we can simply replace all occurrences by $F(x)$.
While this preprocessing step might yield an instance of higher density $\gamma'$,
it is ensured that $m' \leq m, n' \leq n$ for the resulting instance.

The \emph{easiest} class of problems are those with constraint families that are weakly separable, a condition established in \cite{Marx05}. The authors show that such families are solvable in FPT-time even for the dense case, and in our binary setting this condition matches exactly those families that contain neither $\NAND$ nor $\IMPL$. In fact, such families can be solved in linear time, for any density.

\begin{theorem}
    Let $\Fam$ be a binary constraint family such that $\NAND \not \in \Fam$ and $\IMPL \not \in \Fam$.
    Then for any $1\le \gamma$ we can solve $\CSP^\gamma_k(\Fam)$ in time $\O(n + m)$.
\end{theorem}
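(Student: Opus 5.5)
The plan is to enumerate which binary Boolean functions avoid both $\NAND$ and $\IMPL$, and to show that each remaining constraint either forces variables to $0$, can be made to vanish, or joins variables into classes that must be set together. A constraint $f(x,y)$ belongs to one of three groups.

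\begin{itemize}
\item \emph{Unary-like.} It depends on one argument only: $F(x)$, $T(x)$, or $\neg x$.
\item \emph{$0$-invalid.} It is violated by the all-zero assignment, e.g.\ $\mathrm{OR}$, $\mathrm{XOR}$, $\neg\EQ$, or $x\vee\neg y$ in some orientation.
\item \emph{$0$-valid.} It is satisfied by the all-zero assignment.
\end{itemize}

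First, apply \autoref{lem:branch_and_bound} to the $0$-invalid constraints. This yields $f(k)$ many $0$-valid instances with parameter $k_i<k$, computed in time $\O(n+m)$. The branching fixes at most $k$ variables to $1$ and restricts constraints to their remaining arguments. The restricted constraints are unary or binary and must again avoid $\NAND$ and $\IMPL$; this holds for restrictions, since any binary function obtained is the original one. So afterwards we have a $0$-valid family that excludes $\NAND$ and $\IMPL$.

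Next, classify the $0$-valid binary functions. Each satisfies $f(0,0)=1$. We exclude $\NAND$ (violated only by $(1,1)$) and $\IMPL$ in either orientation (violated only by $(1,0)$, respectively $(0,1)$). What remains, up to trivial constraints, is the following.

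\begin{itemize}
\item $\NOR$ and the unary-type functions $\neg x$ and $\neg x\wedge\neg y$. By \autoref{obs:easy_constraints}, all variables in these are deleted (forced to $0$) in linear time.
\item $\EQ$, i.e.\ $x\leftrightarrow y$.
\item Functions violated by two non-zero points, such as $\neg x\wedge(\ldots)$. These reduce to the $F$ case.
\end{itemize}

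I expect that after removing forced-zero variables, only $\EQ$ constraints survive. I would verify the finite case analysis over the sixteen binary functions explicitly, and this is the main routine obstacle. The key point is that any $0$-valid function excluded from the list must be violated by exactly one of $(1,0)$, $(0,1)$, $(1,1)$ while allowing the other two. Such a function is $\IMPL$, its reverse, or $\NAND$.

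For an instance consisting only of $\EQ$ constraints, compute the connected components of the primal graph in time $\O(n+m)$. Every solution sets each component entirely to $0$ or entirely to $1$. The question becomes whether some subset of component sizes sums to exactly $k_i$. This is a subset-sum problem with target at most $k$, and components of size larger than $k$ can be discarded. It is solvable by dynamic programming over at most $k$ distinct sizes, counting multiplicities up to $k$, in time $\O(n+k^2)$.

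Summing over the $f(k)$ branches gives total time $\O(n+m)$ for constant $k$, which proves the theorem.
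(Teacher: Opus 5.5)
Your proposal is correct and follows essentially the same route as the paper. You use branch-and-bound (\autoref{lem:branch_and_bound}) to reach a $0$-valid family, delete the variables occurring in $\NOR$/negation constraints (\autoref{obs:easy_constraints}), and finally solve the remaining $\EQ$-only instance by connected components and a bounded subset-sum over component sizes at most $k$. Two small points: $x\vee\neg y$ is $0$-valid (it is $\IMPL$ with swapped arguments), so it belongs among the excluded functions, not the $0$-invalid examples; and when forcing variables to $0$ you should propagate along $\EQ$-edges, i.e.\ discard every component containing a forced-zero variable, a step the paper also leaves implicit.
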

\begin{proof}
    Using \autoref{lem:branch_and_bound} we can resolve all $0$-invalid constraints in $\Fam$ 
    and obtain $f(k)$ instances of $\CSP^{\gamma_i}_{k_i}(\Fam^*_i)$ in time $\O(n+m)$.
    An important observation is, that since $\Fam$ is binary, 
    the branch and bound procedure only introduces unary constraints to $\Fam^*_i$ (or resolves them directly).
    As neither $\IMPL$ nor $\NAND$ are contained in $\Fam$, and any other $0$-valid constraint of arity at most $2$ is either $\NOR$, $F$ or trivial,
    applying \autoref{obs:easy_constraints} yields an instance of $\CSP^{\gamma'_i}_{k_i}(\EQ)$ that we need to solve.
    To satisfy a set of $\EQ$ constraints, it is required to always set all variables of a connected component to either true or false.
    More precisely, by determining all connected components $C$ in the primal graph of an instance in $\O(m)$, 
    we can reduce an instance of $\CSP_{k_i}^{\gamma_i}(\EQ)$ to a bounded subset-sum 
    instance. Note that there are at most $n$ components, and every connected component $C_i$ corresponds to a positive weight $w_i = |C_i|$ for $i \in [n]$.
    As our target sum is $k$, we can bound every weight $0 < w_i \leq k$ and 
    further bound the number of weights $w_i$ of the same weight $w \leq k$ by $\lceil k / w\rceil$.
    Finally, this obtains an instance with at most $k^2$ distinct weights and target value $k$.
    Using a fast time algorithm for bounded subset sum (e.g. \cite{PISINGER19991}) each instance can be solved in time $\O(k^3)$.
\end{proof}
Hence the interesting cases are constraint families that \emph{represent} $\IMPL$ and $\NAND$, consider the regimes as classified in \cite{KunnemannM20}.
In the binary setting, the only constraints that represent $\IMPL$ and $\NAND$, are exactly those functions themselves.

Due to the nature of the precedence of $\IMPL$ constraints, there is a simple argument that can be made to reduce any dense instance on $\IMPL$-constraints to a sparse one. 
\begin{lemma}
    An instance $\Phi$ of $\CSP_k^\gamma(\IMPL)$ can be reduced to an instance $\Phi'$ of $\CSP_k^{1}(\IMPL)$ in time $\O(n^\gamma)$.
\end{lemma}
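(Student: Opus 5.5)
We sparsify the implication structure while preserving all weight-$k$ solutions, using that \IMPL-constraints only matter through reachability. Let $D$ be the directed primal graph of $\Phi$, with an arc $x\to y$ for every constraint $\IMPL(x,y)$. An assignment satisfies $\Phi$ exactly when the set $T$ of true variables is closed under reachability in $D$, i.e., every descendant of a vertex of $T$ lies in $T$. Consequently any directed graph $D'$ on the same vertex set with the same reachability relation (same transitive closure) defines an equivalent instance. The goal is to construct such a $D'$ with $O(n)$ arcs in time $O(n+m)=O(n^\gamma)$.

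First, compute the strongly connected components of $D$ in time $O(n+m)$, e.g.\ with Tarjan's algorithm. Within each component $C=\{c_1,\dots,c_t\}$, replace all internal arcs by the directed cycle $c_1\to c_2\to\dots\to c_t\to c_1$, which uses $|C|$ arcs. This forces all variables of $C$ to be equal, as the original arcs did. Between components, we must preserve reachability in the condensation DAG. A transitive reduction would cost more than linear time, so instead we observe that only reachability matters up to depth relevant for weight $k$.

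We proceed by weight pruning. Compute, via the DAG, a cap $\min(k+1,|\mathrm{desc}(x)|)$ for every variable $x$, where $\mathrm{desc}(x)$ denotes the set of descendants of $x$. This is computable in $O(n+m)$ time by a BFS from each vertex truncated after $k+1$ discovered vertices. That BFS costs $O(k\cdot\deg)$ per vertex if we stop exploring after $k+1$ vertices, but scanning adjacency lists can cost $\deg$. The total is therefore $O(k\,m)$, which is $O(n^\gamma)$ for constant $k$.

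Any variable with at least $k+1$ descendants, counted including itself, can never be true, so we delete it. Deleting it only removes variables and arcs, and remaining descendants are unaffected because the descendants of a kept vertex are themselves kept. Every remaining vertex $x$ has at most $k$ descendants, which we explicitly listed. We therefore replace the arc set by arcs $x\to y$ for all $y\in\mathrm{desc}(x)$. This gives at most $kn=O(n)$ arcs, exactly preserves closure under reachability, and yields an instance of $\CSP_k^1(\IMPL)$ on at most $n$ variables.

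If the resulting instance has fewer than $\Theta(n)$ constraints, we pad with dummy variables. Each dummy variable gets a 2-cycle of \IMPL-constraints together with an arc into a 2-cycle that is forced to weight at least $k+1$, so the dummy variables can never be set to true. Alternatively, we simply note that the sparsity class permits $m=O(n)$.

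The main obstacle is the running-time bound of the truncated searches. The concern is the scanning of long adjacency lists after $k+1$ vertices have already been found. To handle it, we stop the BFS as soon as $k+1$ distinct vertices have been discovered, so each search scans at most the out-arcs of $k+1$ vertices. We then bound this per source by $\sum_{v\text{ visited}}\deg^+(v)$ and argue amortisation as follows. We process vertices in reverse topological order of the condensation DAG and reuse the stored capped descendant lists of children, merging at most $k+1$ elements per scanned arc. This gives a total of $O(k\cdot m)$.
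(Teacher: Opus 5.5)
Your proposal is correct and rests on the same idea as the paper: a variable that forces more than $k$ variables to $1$ can never be true, and discarding all such variables leaves only linearly many \IMPL-constraints. What differs is the pruning criterion.

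The paper prunes every variable of out-degree at least $k$. This is an immediate witness of at least $k+1$ descendants and requires only degree counting. The ancestors of pruned variables must also be removed (one reverse search); the paper leaves this implicit, but it is necessary, since merely deleting a variable would free its in-neighbours. After that, every surviving variable has out-degree at most $k-1$, so $O(kn)$ constraints remain after $O(n+m)$ time, with no reachability information needed.

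Your descendant-count criterion has the advantage that the deleted set is automatically ancestor-closed. However, it creates exactly the running-time obstacle you spend your last paragraph on. Your reverse-topological DP on the condensation, merging capped lists of size at most $k+1$ along each arc, resolves it correctly in $O(n+km)$.

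Your earlier per-source BFS justification does not work as written. If $\Theta(n)$ sources all reach a single vertex of out-degree $\Theta(n)$, summing $\deg^+$ over visited vertices gives $\Theta(n^2)$ even when $m=\Theta(n)$. It can be rescued by stopping mid-scan once $k+1$ vertices are found: without parallel arcs, each scan then reads only $O(k)$ arcs.

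The SCC-cycle and transitive-closure replacements are unnecessary, since the induced arcs already give at most $k-1$ out-arcs per surviving variable. The padding gadget is garbled: a $2$-cycle cannot force weight $k+1$, so you would want an arc into a directed $(k+1)$-cycle instead. That step is optional anyway.
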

\begin{proof}
    \label{lem:sparse_impl}
    Consider the primal graph $G$ of $\Phi$.
    Observe that setting a variable $x_i$ to $1$ entails setting every descendant of $x_i$ to $1$ as well.
    Thus, only variables with less than $k$ descendants can be part of a weight $k$ solution, allowing us to prune 
    all vertices with outgoing degree of $k$ or larger to obtain an equivalent instance
    $\Phi'$ that has most $\O(n)$ many constraints and thus a density of $\gamma = 1$. 
\end{proof}

Finally, for all binary constraint families that do not contain $\NAND$, but $\IMPL$ we essential obtain the same regime as in the dense case.
\begin{theorem}[Thm. 1.3. \cite{KunnemannM20}]
    Let $\Fam$ be a binary constraint family with $\NAND \not \in \Fam$. 
    There exists a constant $c_\Fam$ such that for any $1 \le \gamma\le 2$,
    an instance $\Phi$ of $\CSP_k^\gamma(\Fam\ \dot\cup\ \IMPL)$ can be solved in time $\O(n^{4\sqrt{k} + c_\Fam})$. 
    Further, there is no $\varepsilon > 0$ such that we can solve $\CSP_k^\gamma(\Fam\ \dot\cup\ \IMPL)$ in time $\O(n^{\sqrt[3]{k}\omega/6 + c_\Fam -\varepsilon})$ assuming the Clique Hypothesis fails.
\end{theorem}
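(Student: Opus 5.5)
The plan is to reduce both directions to the dense case, Thm.~1.3 of \cite{KunnemannM20}, using the sparsification lemma for $\IMPL$ proved just above.

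For the \textbf{upper bound}, take an instance $\Phi$ of $\CSP_k^\gamma(\Fam\cup\{\IMPL\})$ with $1\le\gamma\le 2$. The algorithm of \cite{KunnemannM20} works on any instance, regardless of density. Its running time depends only on $n$, giving $\O(n^{4\sqrt k + c_\Fam})$. So the upper bound transfers verbatim, because an instance with $m=\Theta(n^\gamma)\le\O(n^2)$ constraints on $n$ variables is in particular a dense-setting instance. No sparsity-specific argument is needed.

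For the \textbf{lower bound}, fix $1\le\gamma\le 2$. We want to show that a fast algorithm for density $\gamma$ yields a fast algorithm for the dense problem, which is hard by \cite{KunnemannM20}. The key observation is the pruning argument of the lemma on sparse $\IMPL$ instances. The lower-bound instances of \cite{KunnemannM20} presumably use only $\IMPL$ constraints (plus possibly constraints from $\Fam$) with a bounded-outdegree structure. More robustly, we proceed in two steps.
\begin{enumerate}
\item Take a hard instance $\Phi$ on $N$ variables. By the pruning lemma, we may assume without loss of generality that every variable has fewer than $k$ $\IMPL$-successors. For the $\Fam$-constraints, we must check that the reduction of \cite{KunnemannM20} yields only $\O(N)$ constraints, or that these also admit pruning.
\item Adjust the density. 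If $\Phi$ has $\Theta(N)$ constraints and we need $m=\Theta(n^\gamma)$ with $n=\Theta(N)$, we pad with dummy constraints that cannot affect weight-$k$ solutions. The choice is to add fresh variables $z_1,\dots,z_{n'}$, each of which has at least $k$ distinct $\IMPL$-successors among the other fresh variables, arranged as an arbitrary $n^\gamma$-edge digraph of minimum outdegree at least $k$. None of these can be set to $1$ in a weight-$k$ solution, since any variable set to true forces all of its (at least $k$) successors to be true as well, exceeding weight $k$ once the fresh component is closed under descendants. They are therefore forced to $0$, while all other constraints of $\Phi$ stay intact.
\end{enumerate}
With $n'=\Theta(N)$ the new instance has $\Theta(N)$ variables and $\Theta(N^\gamma)$ constraints. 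A solver running in time $\O(n^{\sqrt[3]k\omega/6+c_\Fam-\varepsilon})$ would then solve the dense instance in the same time, contradicting \cite{KunnemannM20} under the Clique Hypothesis.

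The main obstacle is step~1: ensuring that the hard instances from \cite{KunnemannM20} have only $\O(N^{\gamma})$ constraints \emph{including} those from $\Fam$, since pruning handles only $\IMPL$. I would first try to read off their construction, which I expect uses essentially only $\IMPL$ with near-linear size. If $\Fam$-constraints are dense there, I would instead invoke the lower bound for $\{\IMPL\}$ alone: it is a subfamily, so its hard instances are also instances of the larger family, pruned to linear size and then padded as above. The additive $c_\Fam$ can absorb the difference.
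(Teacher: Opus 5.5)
Your proposal is correct. It is also more explicit than the paper, which gives no proof of its own. The paper cites Thm.~1.3 of \cite{KunnemannM20} for both bounds. It justifies the transfer to sparse instances only by the remark that the K\"unnemann--Marx hardness construction already yields sparse $\IMPL$-instances, plus the pruning lemma reducing any $\CSP_k^\gamma(\IMPL)$ instance to linear density.

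Your upper-bound argument coincides with the paper's. For the lower bound, you argue black-box instead. You pass to the subfamily $\{\IMPL\}$, so dense $\Fam$-constraints never arise; the first option in your step~1 can simply be dropped. You prune to $\O(kN)$ constraints. Then you pad with fresh variables forming a digraph of minimum out-degree $k$ with $\Theta(N^\gamma)$ arcs. Any padded variable set to $1$ forces at least $k$ further ones, so all of them are $0$ in every weight-$k$ solution. Hence solutions correspond exactly, while the density becomes exactly $\gamma$.

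This buys two things the paper's remark does not. First, it does not depend on the internals of the K\"unnemann--Marx construction. Second, it actually produces instances with $m=\Theta(n^\gamma)$ for every $1\le\gamma\le 2$. A linear-size hard instance only witnesses $\gamma=1$, and the paper never addresses $\gamma>1$. The polynomial overhead of pruning and padding, at most $\O(N^2)$, is harmless in the regime where the bound is meaningful.

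The only soft spot is your claim that $c_\Fam$ ``absorbs the difference.'' Via the subfamily you obtain the lower bound with K\"unnemann--Marx's constant for $\{\IMPL\}$, which is the meaningful statement. The restated theorem's single $c_\Fam$, shared by the upper and lower bound, is itself loose bookkeeping; for instance, it cannot be large when $k=1$. So this is a defect of the statement, not a gap in your argument.
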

As the hardness proof given by Künnemann and Marx constructs a sparse instance for $\CSP_{k}^\gamma(\IMPL)$ already, it seems unlikely so far that we can radically improve the existing gap between upper and lower bound due to sparsity considerations.

It now remains to investigate families that contain $\NAND$ constraints.
\subsection{Binary Constraint Families containing \texorpdfstring{$\NAND$}{NAND2}}
What differentiates $\NAND$ from other binary constraints regarding density? 
Turán's Theorem \cite{turan1941extremal} answers this question,
asymptotically guaranteeing the existence of a $k$-independent set in a sparse graph for $m \in O(n^\gamma)$ for $\gamma < 2$.

In this section we will give upper and lower bounds for binary constraint families that contain $\NAND$. 
Specifically we will prove parts 3 and 4 of \autoref{thm:binary-csp-classification}.
We begin by giving a linear time algorithm for solving $\CSP^\gamma_k(\NAND)$ for $\gamma <2$.
For \emph{Clique Regime}, we first show how to solve any $0$-valid binary constraint family in combination with $\NAND$, 
then reduce $0$-invalid families to this case.

As a preliminary result, we consider the task of detecting a $k$-independent set in sparse graphs, allowing us to solve $\CSP^\gamma_k(\NAND)$ for $\gamma <2$ in time $\O(n+m)$.
The following corollary shows that, speaking asymptotically, every sparse graph contains a $k$-independent set.
\begin{corollary}[Sparse $k$-Independent Set]
\label{thm:turan-kis}
Let $G$ be a graph on $n$ vertices with $m \leq \frac{n^2}{2k}$ many edges.
Then $G$ contains a $k$-independent set for large enough $n$.
\end{corollary}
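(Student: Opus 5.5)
The plan is to derive the statement from Tur\'an's Theorem~\cite{turan1941extremal}, either directly or via its weaker Caro--Wei form, which avoids case distinctions. Tur\'an's Theorem says that a graph on $n$ vertices with no $K_{k}$ (equivalently, after complementing, with no independent set of size $k$) must contain at least as many edges as the complement of the Tur\'an graph $T(n,k-1)$. That complement is a disjoint union of $k-1$ cliques of sizes $\lfloor n/(k-1)\rfloor$ or $\lceil n/(k-1)\rceil$. So the argument proceeds by contraposition: assume $G$ has no $k$-independent set and derive $m > \frac{n^2}{2k}$ for large $n$.

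Concretely, the cleanest route is the Caro--Wei bound $\alpha(G)\ge \sum_{v} \frac{1}{d(v)+1}$. By convexity of $x\mapsto 1/(x+1)$ (Jensen), this gives $\alpha(G)\ge \frac{n}{\bar d+1}$, where $\bar d = 2m/n$. Substituting $m\le \frac{n^2}{2k}$ yields $\bar d \le n/k$, and hence
\[
\alpha(G)\ \ge\ \frac{n}{n/k+1}\ =\ \frac{kn}{n+k}.
\]
This is strictly less than $k$ for every finite $n$, so the rough bound alone falls just short. This is the one delicate step, and I would handle it by using integrality. If $G$ had no $k$-independent set, then $\alpha(G)\le k-1$, which would force
\[
\frac{kn}{n+k}\le k-1,
\]
that is, $kn \le (k-1)(n+k)$, or equivalently $n\le k(k-1)$. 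Hence for every $n > k(k-1)$ we get $\alpha(G) > k-1$, so $\alpha(G)\ge k$, and $G$ contains a $k$-independent set. This is exactly the ``large enough $n$'' clause of the corollary.

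For completeness I would also note the direct Tur\'an computation as an alternative. The complement of $T(n,k-1)$ has roughly $(k-1)\binom{n/(k-1)}{2}\approx \frac{n^2}{2(k-1)} - \frac n2$ edges. For large $n$ this exceeds $\frac{n^2}{2k}$, because $\frac{1}{k-1}-\frac1k>0$ and the quadratic gap dominates the linear and rounding terms.

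Finally, to tie the corollary to the algorithmic claim used afterwards: if $m=O(n^\gamma)$ with $\gamma<2$, then $m\le n^2/(2k)$ for all sufficiently large $n$, so every instance of $\CSP^\gamma_k(\NAND)$ is a YES-instance. A witness can also be produced constructively in $\O(n+m)$ time by the standard greedy procedure behind Caro--Wei: repeatedly pick a vertex of minimum degree, add it to the set, and delete it together with its neighbours.
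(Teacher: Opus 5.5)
Your proof is correct, but it takes a different route from the paper.

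\textbf{The paper's route.} The paper argues by contradiction using the exact Tur\'an theorem on the complement. If $G$ has no $k$-independent set, then $\overline{G}$ is $K_k$-free, so $|E(\overline{G})|\le (1-\frac{1}{k-1}+o(1))\frac{n^2}{2}$. On the other hand, $m\le \frac{n^2}{2k}$ forces $|E(\overline{G})|\ge (1-\frac{1}{k}-o(1))\frac{n^2}{2}$. The constant gap $\frac{1}{k-1}-\frac{1}{k}$ beats the lower-order terms for large $n$, but the paper tracks these terms only through $o(1)$ notation and gives no explicit threshold.

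\textbf{Your route.} You combine Caro--Wei with Jensen to get $\alpha(G)\ge \frac{kn}{n+k}$, and then close the remaining gap to $k$ by integrality of $\alpha(G)$. In this form the two tools are equivalent: both say that a graph with $\alpha(G)\le k-1$ has at least $\frac{n^2}{2(k-1)}-\frac{n}{2}$ edges. So nothing is lost relative to the paper.

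\textbf{What your route buys.} Your version makes the threshold explicit ($n>k(k-1)$ suffices). It is also self-contained, since Caro--Wei has a two-line proof.

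It has an algorithmic bonus as well. The minimum-degree greedy provably attains the Caro--Wei sum $\sum_v \frac{1}{d(v)+1}$. Each step removes a closed neighbourhood whose total weight is at most $1$, and the weights of the surviving vertices can only grow. So your argument directly yields an $\mathcal{O}(n+m)$-time construction under the corollary's own hypothesis $m\le \frac{n^2}{2k}$. By contrast, the paper proves constructiveness separately in Lemma~\ref{lem:constructive_turan}, and only under the stronger assumption $m\le \frac{n^2}{2k^2}$.

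\textbf{What the paper's route buys.} In exchange, the paper's route rests on the sharp extremal theorem, which also identifies the extremal graphs (disjoint unions of $k-1$ near-equal cliques). That extra information is not needed for this corollary.
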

\begin{proof}
    Assume there was a $k$-independent set-free graph $G$ on $n$ vertices with $m$ edges. 
    Then the complement graph $\overline{G}$ is $k$-clique-free and the number of edges is bounded from below by 
    \[
    \left|E(\overline{G})\right| \geq \frac{n(n-1)}{2} - \frac{n^2}{2k}\geq \left(1-\frac{1}{k} - o(1)\right) \frac{n^2}{2} > \left(1-\frac{1}{k-1} + o(1)\right)\frac{n^2}{2}
    \]
    By Turán, the largest $k$-clique-free graph on $n$ vertices has at most $\left(1 - \frac{1}{k-1} + o(1)\right)\frac{n^2}{2}$ edges. 
    Thus, for large $n$, we arrive at a contradiction.
\end{proof}

This result can be made constructive, by iteratively picking low degree vertices, removing them and their neighborhood from the graph until we find a $k$-independent set. This procedure neatly ensures that during every step $i$, it is guaranteed that the graph still contains a $(k-i)$-independent set.
\begin{lemma}
\label{lem:constructive_turan}
    If there exists a $k$-independent set by \autoref{thm:turan-kis}, then we can find it in time $\O(n+m)$.
\end{lemma}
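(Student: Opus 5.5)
The plan is a greedy procedure that maintains the invariant of Corollary~\ref{thm:turan-kis} at every step. We are given $G$ with $n$ vertices and $m \le \frac{n^2}{2k}$ edges, and $n$ is large enough. We pick a vertex $v$ of minimum degree, add it to the solution, and delete $N[v]$ from the graph. Then we recurse with parameter $k-1$ on the remaining graph $G_1$.

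The key step is to show that $G_1$ again satisfies the hypothesis $m_1 \le \frac{n_1^2}{2(k-1)}$, with $n_1$ still large. The average degree of $G$ is $2m/n \le n/k$, so the minimum degree is at most $n/k$. Hence $n_1 \ge n - n/k - 1 = \frac{k-1}{k}n - 1$, and trivially $m_1 \le m \le \frac{n^2}{2k}$. We need
\[
\frac{n^2}{2k} \le \frac{1}{2(k-1)}\left(\frac{k-1}{k}n-1\right)^2 = \frac{(k-1)n^2}{2k^2} - O(n).
\]
This fails by a factor of $(k-1)^2/k^2$ versus $1/k \cdot k/(k-1)$. Note that $\frac{k-1}{k^2} < \frac1k$, so the naive invariant does not quite propagate. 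This mismatch is the main obstacle.

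I would resolve it by tracking the removed edges rather than bounding $m_1$ by $m$. Deleting $N[v]$ removes every edge incident to $N[v]$, and each of the $\deg(v)+1$ removed vertices has degree at least $\deg(v)$. Alternatively, and more robustly, I would not insist on the exact constant. In the application, instances with $\gamma<2$ have $m = O(n^{2-\varepsilon})$, so the average degree at each of the $k$ rounds stays $O(n^{1-\varepsilon})$, and each round removes only $o(n)$ vertices. Concretely, after $i$ rounds we have $n_i \ge n - i\cdot(O(n^{1-\varepsilon})+1) = n(1-o(1))$ and $m_i \le m = O(n^{2-\varepsilon})$. So $G_i$ is nonempty and the next greedy pick always exists, and after $k$ rounds we have found $k$ pairwise non-adjacent vertices. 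For the exact threshold $\frac{n^2}{2k}$, I would use the Caro--Wei/Tur\'an greedy bound: the min-degree greedy yields an independent set of size at least $\frac{n}{d+1}$ with $d=2m/n \le n/k$. This gives at least $\frac{n}{n/k+1} = \frac{k n}{n+k}$, which tends to $k$ from below, so slack is needed. I would therefore state the constructive lemma for $m \le \frac{n^2}{2k}$ with the greedy analyzed via the standard potential $\sum_{u} \frac{1}{\deg(u)+1}$, which is at least $\frac{n^2}{2m+n}$ and is non-decreasing under min-degree deletion. If the constant is tight, I would fall back to the $m=O(n^{2-\varepsilon})$ version, which is all that is needed for $\CSP_k^\gamma(\NAND)$ with $\gamma<2$.

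For the running time, I would maintain vertices in degree buckets, an array indexed by current degree with doubly linked lists. Finding a minimum-degree vertex takes amortized constant time plus pointer movement. Deleting a vertex $u$ costs $O(\deg(u))$ bucket updates for its neighbors. Each vertex and edge is deleted at most once, so the total is $O(n+m)$. Since we stop after $k = O(1)$ picks, the cost is in fact dominated by building the structure, which takes $O(n+m)$.
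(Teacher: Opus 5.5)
Your greedy (pick a low-degree vertex, delete its closed neighbourhood, repeat $k$ times) is the paper's. Your diagnosis that the invariant $m\le\frac{n^2}{2k}$ does not propagate is exactly why the paper also proves the lemma only under a stronger hypothesis. It assumes $m\le\frac{n^2}{2k^2}$, under which the invariant $m_i\le\frac{n_i^2}{2(k-i)^2}$ does propagate. With $c=k-i$, a vertex of degree at most $2m_i/n_i\le n_i/c^2$ exists, so $n_{i+1}\ge\frac{c-1}{c}n_i$ for large $n_i$, while $m_{i+1}\le m_i$. The paper remarks that this weaker version suffices for its applications. Your fallback under $m=O(n^{2-\varepsilon})$ is correct and plays the same role. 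It covers both places the lemma is used: the $\gamma<2$ theorem, and the claim on large vertex sets in the $\{\NAND,\IMPL\}$ algorithm, where the edges inside $V'$ number at most $|V'|^{2/(1+\epsilon)}$. The paper's version additionally handles densities that are a small constant fraction of $n^2$.

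Your Caro--Wei idea is stronger than you concluded. It proves the lemma with the literal hypothesis $m\le\frac{n^2}{2k}$, which the paper leaves unproven. Two corrections are needed.

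First, $\Phi=\sum_u\frac{1}{\deg(u)+1}$ is not itself non-decreasing; what is monotone is $\Phi$ plus the number of picks made so far. Deleting $N[v]$ for a minimum-degree vertex $v$ of degree $d$ removes $d+1$ terms, each at most $\frac{1}{d+1}$, while the surviving terms can only grow. So $\Phi$ drops by at most $1$ per pick, and the greedy makes at least $\Phi(G)\ge\frac{n^2}{2m+n}\ge\frac{kn}{n+k}$ picks.

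Second, no extra slack is needed. The number of picks is an integer, and $\frac{kn}{n+k}>k-1$ as soon as $n>k(k-1)$, so at least $k$ picks occur. This argument genuinely needs a \emph{minimum}-degree vertex, not merely one of at most average degree as in the paper.

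Your running-time analysis is fine. With only $k$ picks, even rescanning for a minimum-degree vertex in $\O(n+m)$ per round suffices.
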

\begin{proof}
    Consider some graph $G$ as required by \autoref{thm:turan-kis} on $n$ vertices with $m \leq \frac{n^2}{2k^2} \leq \frac{n^2}{2k}$ for large enough $n$. 
    %\marvin{Cor. 24 already needs to use $n^2/2k$.} 
    For our further considerations this suffices, allowing us to avoid tedious details, but we note that one can make this constructive for the tight bound as well.
    Starting with $G_{0}\coloneqq G$ (define $n_i,m_i$ analogously) we proceed iteratively for $k$ rounds to build an independent set $I$: 
    Pick a vertex $v_i$ of degree $d_i \coloneqq \deg(v_i) \leq \frac{2m_i}{n_i}$, 
    add it to $I$ and continue with $G_{i+1} \coloneqq (G_i \setminus \{v_i\} \cup N(v_i))$ while $i<k$.  
    As we remove the neighborhood of every $v_i$, it is easy to see that the resulting set $I$ is indeed a $k$-independent set.

    By the handshake lemma and an averaging argument, it is guaranteed that $G_i$ contains a vertex $v_i$ of degree $d_i$.
    For correctness, it is left to prove that every iteration (up to the $k$-th one) maintains the graph's density, 
    such that $G_i$ indeed always contains a $(k-i)$-independent set.
    More precisely, we prove that $m_i \leq \frac{n_i^2}{2(k-i)^2}$ for all $i < k$.
    For $m_0$ this is clear. 
    Let $c := k-i$: going from step $i$ to $i+1$, we remove $v_i$ and its neighborhood of size $d_i \leq \frac{n_i}{c^2}$. Thus we can bound the size of the graph $G_{i+1}$ by
    \begin{align*}
        n_{i+1} & \geq n_i - \frac{n_i}{c^2} - 1 \geq \left(\frac{c-1}{c}\right) n_i \text{ for large enough $n_i$ and}\\
        m_{i+1} &\leq m_i
    \end{align*} 
    Assume the invariant holds for all $i < i+1$, then
    \[
        m_{i+1} \leq m_i < \frac{1}{2} \left(\frac{n_i}{c}\right)^2 \leq \frac{1}{2} \left(\frac{n_{i+1}}{c-1}\right)^2.
    \]
    The last step follows as $n_{i+1} \geq \left(\frac{{c-1}}{c}\right) n_i$.
    Thus during step $i$ of the algorithm, \autoref{thm:turan-kis} ensures the existence of a $(k-i)$-independent set in the remaining graph $G_i$.
  
    Finally, the running time is bounded by 
    \[
        \O\left(\sum_{i=0}^k n_i + m_i\right) = \O\left(k \cdot (n_0 + m_0)\right) = \O\left(n + m\right). 
    \]
\end{proof}
\begin{theorem}
    We can solve $\CSP_k^\gamma(\NAND)$ in time $\O(n + m)$ for $\gamma < 2$.
\end{theorem}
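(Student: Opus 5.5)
The plan is to read an instance of $\CSP_k^\gamma(\NAND)$ as a graph and apply \autoref{lem:constructive_turan}. Given the instance $\Phi$ with $m=\Theta(n^\gamma)$ constraints and $\gamma<2$, I first build its primal graph $G$ in time $\O(n+m)$. Each constraint $\NAND(x_i,x_j)$ becomes the edge $\{x_i,x_j\}$. Repeated constraints on the same pair are merged, and since every constraint is on pairwise distinct variables there are no loops. A weight-$k$ satisfying assignment is then exactly the indicator vector of a $k$-independent set of $G$, so it suffices to find such a set.

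The key step is to check that the density hypothesis of \autoref{lem:constructive_turan} holds for all sufficiently large $n$. Since $\gamma<2$, there is a constant $C$ with $m\le Cn^{\gamma}$, and $Cn^\gamma\le \frac{n^2}{2k^2}$ once $n\ge (2Ck^2)^{1/(2-\gamma)}$. This threshold $n_0$ depends only on $k$, $\gamma$ and $C$, all of which are constants. Hence \autoref{thm:turan-kis} guarantees a $k$-independent set, and the greedy procedure of \autoref{lem:constructive_turan} finds one in time $\O(n+m)$. We output the corresponding assignment.

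For the finitely many instance sizes $n<n_0$, the instance has constant size, so brute force over all $\binom{n}{k}=\O(1)$ candidate sets runs in constant time. Together with the previous step this gives total running time $\O(n+m)$.

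The only real obstacle is conceptual rather than computational. The statement must be read with constants hidden in $\Theta(n^\gamma)$, and the asymptotic guarantee of Turán must apply uniformly; both are settled by the explicit threshold $n_0$ above. I also need the small technical point in \autoref{lem:constructive_turan} that low-degree vertices can be located in linear time per round. This is done by maintaining degrees and deleting vertices, at $\O(n+m)$ per round, over $k=\O(1)$ rounds.
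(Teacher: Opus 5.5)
Your proposal is correct and follows the paper's own proof. You build the primal graph, observe that $m=\Theta(n^\gamma)$ with $\gamma<2$ drops below the Tur\'an-type density threshold for all large $n$, and invoke \autoref{lem:constructive_turan}. Your explicit threshold $m\le n^2/(2k^2)$ is in fact the bound the proof of \autoref{lem:constructive_turan} actually uses; the paper's proof of this theorem instead quotes $n^2/(2k)$, so yours is slightly more careful. Your brute-force fallback for $n<n_0$ is a harmless addition that the paper leaves implicit.
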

%\thinking{Timo:Maybe we should add that such instances are YES-instances and one could return YES immediately and further emphasize that this runtime is essentially optimal }
%\thinking{yeah kinda what i was going for in the paragraph, could be made explicit in the theorem statement i think}
\begin{proof}
    Consider the primal graph $G$ with $m = \Theta(n^\gamma)$: Then a solution is a $k$-independent set in $G$.
    As there exists some $n_0$ such that $m \leq \frac{n^2}{2k}$ for all $n \geq n_0$, a simple application of \autoref{lem:constructive_turan} constructs a solution for $\CSP_k^\gamma(\NAND)$ in time $\O(n + m)$.
\end{proof}

The natural thing to consider is the introduction of further constraint functions. 
Maybe surprisingly, this makes the problem harder to solve, depending on the density.
\subsubsection{Solving \texorpdfstring{$0$}{0}-valid Constraint Families}
\label{sec:NAND_IMPL}
For this section, we give algorithms for $0$-valid constraint families containing $\NAND$.
In particular we will show the following:
\begin{theorem}
\label{thm:0_valid_binary_algo}
    Let $\Fam$ be a $0$-valid binary constraint family.
    Then we can solve $\CSP_k^\gamma(\Fam\ \dot\cup \ \{\NAND\})$ in time $\O(m^{\frac{\omega k}{6} + 1})$.
\end{theorem}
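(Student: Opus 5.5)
By Observation~\ref{obs:easy_constraints}, I would first reduce to the single family $\{\NAND,\IMPL\}$. Every $0$-valid binary constraint function is, up to swapping its arguments, one of $\NAND$, $\IMPL$, $\EQ$, $\NOR$, a unary $F$ applied to one argument, or trivial. Occurrences of $\NOR$ and $F$ are removed in time $\O(n+m)$ by deleting the variables they touch. Each $\EQ(x,y)$ is replaced by $\IMPL(x,y)\wedge\IMPL(y,x)$. This gives an equivalent instance of $\CSP_k(\{\NAND,\IMPL\})$ with $n'\le n$ variables and $m'\le 2m$ constraints. It therefore suffices to solve this family in time $\O(m^{\omega k/6+1})$. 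From here on I write $D$ for the directed graph of the $\IMPL$-constraints, in which a variable set to $1$ forces all its descendants to be $1$.

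\textbf{Step 1: reduce to a restricted instance.} Any variable with at least $k$ descendants in $D$ is deleted, as in Lemma~\ref{lem:sparse_impl}. Next I branch on variables $x$ with at least $2$ descendants besides themselves. Setting $x=1$ forces its whole descendant set to $1$, which is at least $3$ variables, so the budget $k$ drops by at least $3$. I then delete the closed $\NAND$-neighbourhood of the forced set. Setting $x=0$ deletes $x$ together with all its ancestors. Only the $1$-branches consume budget, and each of them can start from at most $m$ choices (a vertex with a descendant has an outgoing $\IMPL$-edge). So a $1$-branch that fixes $t$ vertices should cost about $m^{t/3}$, which is dominated by $(\sqrt m)^{\omega t/3}$ since $\omega\ge 2$. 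The $0$-branches have to be organised so that they do not blow up. I would do this by processing the variables in a topological order of the condensation of $D$, so that every branching variable is fixed exactly once. The outcome is a collection of restricted instances, in each of which every variable has at most one descendant other than itself.

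\textbf{Step 2: grouping and the sparsity cutoff.} In a restricted instance every weakly connected component of $D$ is small and has a simple shape: an isolated vertex, a single edge $u\to v$, a $2$-cycle, or an in-star $u_1,\dots,u_s\to v$ whose sinks have no further out-edges. I group the variables according to these components. Choosing a vertex $u_j$ of a star forces exactly the pair $\{u_j,v\}$, and choosing $v$ forces only $v$ itself. Inside a group with $N$ members and $M$ internal $\NAND$s, I solve the task greedily by applying Lemma~\ref{lem:constructive_turan} to a suitable auxiliary graph on the members. This works whenever $N^2\gg k^2 M$. Since the total number of constraints is $\O(m)$, this win-win argument means that either some large group already yields a solution in linear time, or every group that could contribute has size $\O(\sqrt m)$. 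In the latter case the usable vertices can be compressed to $\O(\sqrt m)$ candidate \emph{units}, each unit being a singleton or a forced pair.

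\textbf{Step 3: remove $2$-cycles and reduce to triangle detection.} A $2$-cycle forms a unit of its own, so it is absorbed after a bounded amount of guessing. I then guess how the remaining budget of $k$ splits into singleton units and pair units; there are only $\O(k)$ such splits. The units are distributed into three balanced parts of total weight about $k/3$ each. Every compatible tuple of units within a part becomes a node of an auxiliary graph, and two nodes are joined by an edge if the union of their forced sets is $\NAND$-free and consistent. A weight-$k$ solution then corresponds exactly to a triangle. Each part has about $(\sqrt m)^{k/3}$ nodes, so matrix multiplication gives time $\O(m^{\omega k/6})$. An extra factor $\O(m)$ covers the preprocessing and branching overhead and any imbalance when $k$ is not divisible by $3$.

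The main obstacle is Step~2 combined with the $0$-branches of Step~1. The difficulty is guaranteeing that after the cutoff only $\O(\sqrt m)$ units survive, since a priori many small groups could each carry few $\NAND$s. I would try to handle this with a global Turán-type argument over all groups: if the number of units is much larger than $k\sqrt m$, then the $\NAND$-graph on units, where a pair unit inherits the $\NAND$s of both its endpoints, has $\O(m)$ edges on $\omega(k\sqrt m)$ vertices. Corollary~\ref{thm:turan-kis} then provides $k$ pairwise compatible units, and these can be selected greedily so that their weights sum exactly to $k$.
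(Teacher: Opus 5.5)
You follow the paper's route through the reduction to $\{\NAND,\IMPL\}$, the restriction to instances with $|D(v)|\le 2$, the star/2-cycle shape of restricted components, and the per-group Tur\'an cutoff. The gap is the claim at the end of Step~2 that the usable vertices compress to $\O(\sqrt m)$ units. Your count of $(\sqrt m)^{k/3}$ nodes per part in Step~3 rests on it, but the claim is false, and your global Tur\'an argument cannot establish it. Here is a counterexample:
\begin{itemize}
\item Take $g=\sqrt m$ sinks $c_1,\dots,c_g$, pairwise joined by $\NAND$.
\item Give each $c_i$ an in-star of $N=m^{1/4}$ leaves, and make those leaves a $\NAND$-clique.
\end{itemize}
This instance has $\Theta(m)$ constraints and is restricted and preprocessed. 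Every group has size about $m^{1/4}\ll\sqrt m$, so your cutoff never fires. It is a NO-instance for $k\ge 3$: at most one center can be chosen, and at most one leaf of its star. Yet it has $m^{3/4}\gg k\sqrt m$ pair units.

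Your unit graph also does not have $\O(m)$ edges. A single $\NAND(c_i,c_j)$ is inherited by all $N^2$ pairs of units from the two stars, giving $\Theta(m^{3/2})$ edges here. Moreover, pair units from the same star share their center. So pairwise compatible units need not have weights summing to $k$, and the same overlap breaks your triangle edges unless you enforce disjointness.

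What is needed, and what the paper does, is to count proportionally to weight rather than to bound the number of units. Apply Tur\'an only where it is valid: to the sinks $V_R$ and to the $\IMPL$-isolated vertices $V_0$. These force nothing beyond themselves and span no $\IMPL$-edges, so you get $\O(k\sqrt m)$ groups. Your cutoff then bounds each group's size by $\O(k\sqrt m)$.

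Next, guess a composition $k=k_1+\dots+k_\ell$ over the groups used by a solution. A group contributing $k_i$ vertices has $\O((\sqrt m)^{k_i})$ choices: its sink times $k_i-1$ members. So each unit of weight costs one factor $\sqrt m$, however many leaves exist overall.

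The groups for $k_1,\dots,k_{\ell-2}$ are placed greedily into three bins with pairwise imbalance at most $k_{\ell-1}$. The sinks of the two largest groups are guessed in $\O(m)$ time, and their remaining vertices are split across the bins to rebalance. Each part then has weight about $k/3$ and $\O((\sqrt m)^{k/3})$ nodes.

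A smaller point for Step~1: measure heaviness in the residual instance, after deleting already-forced variables. Otherwise, with overlapping descendant sets, a variable that is heavy in the original graph may add fewer than $3$ new ones to the solution.
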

Algorithmically, this boils down to preprocessing all easy constraints first and remain with an instance over constraints $\{\NAND,\IMPL\}$ for which give an algorithm.

For the first part, recall that we can resolve $\NOR,F$ and any ``unary'' constraint on two variables in linear time.
\begin{observation}
\label{obs:0_valid_no_impl}
Let $\Fam$ be binary constraint family not containing $\EQ$ or $\IMPL$.
 $\CSP_k^\gamma(\Fam \ \dot\cup \ \{\NAND\})$ can be solved in time $\O(m^{\frac{\omega k}{6}})$.
\end{observation}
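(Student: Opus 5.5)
The plan is to strip away every constraint other than $\NAND$ by linear-time preprocessing. That leaves a plain $k$-Independent Set instance, which we split into two cases. In the sparse case, Tur\'an's theorem hands us a solution. In the dense case, the number of variables is $O(\sqrt{m})$, so the Ne\v{s}et\v{r}il--Poljak algorithm~\cite{cliqueFast} already runs in the claimed time $\O(m^{\frac{\omega k}{6}})$.

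\emph{Step 1 (reduce to pure $\NAND$).} Since we are in the $0$-valid setting of this subsection, every non-trivial $0$-valid binary constraint function is one of $\NAND$, $\IMPL$ (in either orientation), $\EQ$, $\NOR$, or a ``unary'' constraint $F$ on one argument. If $\Fam$ is not $0$-valid, we would first apply \autoref{lem:branch_and_bound}. It only creates $f(k)$ instances with smaller $k$ and adds unary constraints, which are resolved immediately. Because $\EQ,\IMPL\notin\Fam$, the only functions left besides $\NAND$ are $\NOR$ and $F$. We remove them with \autoref{obs:easy_constraints}: we delete every variable occurring in such a constraint, since it must be $0$ anyway. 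This yields an equivalent instance of $\CSP_k(\NAND)$ with $n'\le n$ variables and $m'\le m$ constraints, computed in time $\O(n+m)$. A solution is now exactly a $k$-independent set in the primal graph $G'$ on $n'$ vertices and $m'$ edges.

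\emph{Step 2 (sparse case).} Suppose $m'\le n'^2/(2k^2)$, with $n'$ larger than the constant threshold of \autoref{thm:turan-kis}. Then \autoref{lem:constructive_turan} constructs a $k$-independent set in time $\O(n'+m')$, and we answer YES. If $n'$ is below the constant threshold, brute force takes $\O(1)$ time.

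\emph{Step 3 (dense case).} Otherwise $m'> n'^2/(2k^2)$, which means $n'\le k\sqrt{2m'}=\O(\sqrt{m})$ because $k$ is constant. We run the Ne\v{s}et\v{r}il--Poljak $k$-clique algorithm on the complement of $G'$. This takes time $\O(n'^{\omega k/3})=\O(m^{\omega k/6})$ for $k$ divisible by $3$; otherwise it incurs the usual small polynomial overhead in the exponent. The preprocessing cost $\O(n+m)$ is dominated, assuming the instance has no isolated junk beyond $\O(m)$ variables, i.e.\ $\gamma\ge 1$; the case $\gamma<1$ is already handled by \autoref{thm:sub_linear_density}.

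There is no real obstacle here. The only point requiring care is that the preprocessing in Step 1 may increase the density $m'/n'^2$. This is exactly why the case distinction is made on the \emph{reduced} instance $(n',m')$ rather than on the original one. Because $m'\le m$, the dense-case bound is still expressed in terms of the original $m$.
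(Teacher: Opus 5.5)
Your proposal is correct and follows essentially the same route as the paper. You strip the $\NOR$/$F$ constraints via Observation~\ref{obs:easy_constraints}, then split on the density of the residual instance: in the sparse case the constructive Tur\'an argument (Lemma~\ref{lem:constructive_turan}) finds a solution, and in the dense case you run Ne\v{s}et\v{r}il--Poljak on $n'=\O(\sqrt{m'})$ vertices. Your additional details — the explicit threshold $n'^2/(2k^2)$, the branch-and-bound detour for non-$0$-valid families (unnecessary here, since the subsection assumes $\Fam$ is $0$-valid), and the remark on absorbing the $\O(n+m)$ preprocessing when $\gamma\ge 1$ — only make explicit what the paper leaves implicit.
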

We preprocess the instance by \autoref{obs:easy_constraints} in time $\O(n + m)$.
Depending on the density of the residual instance over constraints from $\Fam$ with $n'\leq n$ vertices and $m' \leq m$, we can then find a solution in linear time if $m' = n^{2 - O(1)}$, or we know that the resulting instance is indeed dense, thus $n' \in \O(\sqrt{m'})$ and we obtain a running time of $\O(m^{\frac{\omega k}{6}})$.

It thus suffices to consider the remaining $0$-valid binary, constraint functions (modulo symmetries) $\IMPL$ and $\EQ$.
As $\EQ$ reduces to $\IMPL$ by replacing each $\EQ$ constraint with two $\IMPL$-constraints,
we only give an algorithm for $\{\NAND,\IMPL\}$:
\begin{theorem}
\label{thm:nand_impl_alg}
    There is an algorithm solving $\CSP_k^\gamma(\{\NAND,\IMPL\})$ in time $\O\left(m^{k\omega/6 + 1}\right)$.%$m^{\omega k/6 + 2 + o(1)}$.
\end{theorem}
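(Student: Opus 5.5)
We follow the three-step outline from the technical overview. Throughout, let $D$ be the directed graph on the variables with an arc $x\to y$ for every constraint $\IMPL(x,y)$, and let $G_N$ be the primal graph of the $\NAND$ constraints. A solution is a $k$-set $X$ that is closed under $D$-reachability and independent in $G_N$.

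\textbf{Step 1: reduction to restricted instances.} A variable with at least $k$ descendants in $D$ can never be set to $1$, so we delete it, as in \autoref{lem:sparse_impl}. Next fix a threshold $d$ with $1\le d<k$, say $d=1$. For every variable $x$ with more than $d$ descendants, we either branch on setting $x$ to $1$ or set $x$ to $0$.
\begin{itemize}
\item Setting $x$ to $1$ forces its whole closure $\mathrm{desc}(x)$, which has size more than $d$. We check the $\NAND$ constraints inside this set, delete $N_{G_N}(\mathrm{desc}(x))$, and recurse with budget $k-|\mathrm{desc}(x)|$.
\item Setting $x$ to $0$ deletes $x$ and, transitively, all its ancestors.
\end{itemize}
The plan is to charge the branching so that guessing a large closure costs at most what guessing its vertices separately would cost. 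Selecting a set of size $s$ by explicit enumeration costs $O(n)$, while the $m^{\omega s/6}$ budget would allow $m^{\omega/6}$ per vertex. I would instead organize the guesses so that every recursion leaf is again a $\{\NAND,\IMPL\}$ instance in which each vertex has at most one proper descendant. Each leaf has a smaller $k'$, and the total work is bounded by a sum of terms $n^{j}\cdot T(k-j')$ with $j'>j$. Since $n\le m$ by $\gamma\ge1$, the inductive bound $m^{\omega k'/6+1}$ should absorb these terms.

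\textbf{Step 2: grouping and the sparsity cutoff.} In a restricted instance, every weakly connected piece of $D$ is a single vertex, an arc $x\to y$, or a $2$-cycle. Some care is needed here, because a vertex $y$ may have many in-neighbours. We therefore group the variables by their unique descendant: each group is a sink $y$ together with the vertices pointing to it.

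Choosing vertices inside one group costs only the shared sink plus one extra vertex each. Suppose some group $g$ is large, with $|g|^2 \gg k\cdot(\#\NAND \text{ edges inside } g)$. Then we apply \autoref{lem:constructive_turan} inside $g$ after checking the sink, and find a solution greedily. Otherwise every group $g$ satisfies $|g| = O(\sqrt{k\,m_g})$, where $m_g$ counts the $\NAND$ edges incident to $g$. Summing over groups bounds the number of relevant variables, or at least the number of variables a solution can plausibly use, by $O(\sqrt{m})$ after also using the $\IMPL$ edges. This gives a core instance with $N=O(\sqrt m)$ variables, with a small additive slack that accounts for the $+1$ in the exponent.

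\textbf{Step 3: acyclicity and triangle detection.} Each $2$-cycle forces its two variables together, so we contract it into one weight-$2$ unit. Since $\EQ$-like pairs live inside a single group, this needs only $O(1)$ guessing per group.

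Now a solution is a collection of units, namely single vertices and sink-plus-parent structures, whose weights sum to $k$. We guess the weight profile, which has $O_k(1)$ choices, and split it into three parts of weight about $k/3$ each. We then build an auxiliary graph whose nodes are the valid closed independent sets of weight about $k/3$, of which there are $O(N^{k/3})$. Two such nodes are adjacent if their union is independent and closed, with shared sinks handled by fixing a canonical owner for each sink. A triangle in this graph corresponds exactly to a solution, so matrix multiplication runs in $O(N^{\omega k/3})=O(m^{\omega k/6})$ time.

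The main obstacle is Step 2. The difficulty is making the cutoff rigorous enough to bound the core size by $O(\sqrt m)$, while also ensuring that shared sinks do not break either the Tur\'an argument or the balanced split in Step 3. My first attempt would be to treat each group as a star and apply \autoref{thm:turan-kis} to its leaves after removing the sink's $\NAND$ neighbourhood.
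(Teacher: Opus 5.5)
\textbf{There is a genuine gap in Steps~2--3.} They rest on a claim that is false, not merely hard to make rigorous: after the Tur\'an cutoff, the core need not have $O(\sqrt m)$ variables. Summing $|g|=O(\sqrt{k\,m_g})$ over groups gives, by Cauchy--Schwarz, only $O(\sqrt{G\cdot m})$, where $G$ is the number of groups, and nothing you use makes $G$ small.

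Here is a concrete counterexample. Take $\sqrt m$ sinks forming $k-1$ disjoint $\NAND$-cliques. Give each sink $m^{1/4}$ in-neighbours forming $k-2$ disjoint $\NAND$-cliques, and put no $\NAND$ edges across groups. This is a restricted instance with $O(m)$ constraints. No cutoff fires: there are no $k$ independent sinks, and no $k-1$ independent in-neighbours of one sink. Yet it has $\Theta(m^{3/4})$ variables, and for $k\ge 3$ every one of them lies in some solution. Your node count $O(N^{k/3})$ then becomes $m^{k/4}$, and the triangle step costs $m^{\omega k/4}$.

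Step~3 also has a second, independent problem. Suppose nodes must be closed, i.e., contain the sink of every in-neighbour they contain. Then a solution made of one sink plus $k-1$ of its in-neighbours, or of two groups of weight $k/2$, has no split into three closed parts of weight about $k/3$, so solutions are lost. Suppose instead nodes may contain in-neighbours whose sink is ``owned'' by another part. Then a node can draw in-neighbours from all $\Theta(m^{3/4})$ candidates in the example above, and the count blows up again.

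The missing idea, as in the paper, is to bound the \emph{number of groups} and the \emph{size of each group} separately, instead of the number of variables.
\begin{itemize}
\item Sinks have no outgoing $\IMPL$ edge, so Tur\'an applied to the sink set gives $O(\sqrt m)$ sinks.
\item The preprocessing deletes in-neighbours that are $\NAND$-adjacent to their sink, so Tur\'an inside a group gives group size $O(\sqrt m)$.
\item The same bound holds for the $\IMPL$-isolated vertices.
\end{itemize}
Hence every chosen vertex ranges over $O(\sqrt m)$ candidates \emph{once its sink is fixed}.

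The algorithm then proceeds as follows. It guesses the profile $k_1\le\dots\le k_\ell$ and places groups $1,\dots,\ell-2$ \emph{whole} into three bins greedily, which keeps the pairwise imbalance at most $k_{\ell-1}$. It explicitly guesses the sinks of the two largest groups at cost $O(m)$; this is where the $+1$ in the exponent comes from, not from slack in the core size. It then uses their $k_{\ell-1}+k_\ell-2$ remaining vertices to rebalance every bin to at most $\lceil (k-2)/3\rceil$. Closure is checked inside a bin or against the pre-guessed sinks. Each bin has $O(m^{k/6})$ nodes, and each triangle instance costs $O(m^{\omega k/6})$.

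A minor point on Step~1: the absorption needs $j'\ge 3j$, not just $j'>j$. Your threshold does give this, since every branched closure has size at least $3$. Enumerating the guessed heavy set directly, as the paper does, also avoids the overhead of long chains of $0$-branches.
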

    In the following, we will consider the primal graph $G$ of the given instance, distinguishing between $\NAND$ and $\IMPL$ edges in $G$.
    % We begin by preprocessing the instance: First, observe that we can remove all pair of vertices contained in $\NAND$ and $\IMPL$ together.
    With regard to the $\IMPL$-edges, we define for a vertex $v$ its descendants $D(v) = \{ u \mid u \in V, \text{s.t. }u\text{ is reachable from }v \}$ and ancestors $A(v) = \{ u \mid u\in V, \text{s.t. }v\text{ is reachable from }u\}$. In particular we have that $v\in D(v)$ and $v\in A(v)$. We also extend this notation to sets of vertices s.t. $D(V) = \bigcup_{v \in V} D(v)$.
    We compute for every vertex $D(v)$ and $A(v)$. During the computation, if we find two vertices $u,w\in D(v)$ such that there exists a $\NAND$-edge on $u,w$, we can safely remove $v$ and all its ancestors from the graph. Further, we can remove any vertex $v$ such that $D(v)\ge k$.

    After this preprocessing step, we observe the following:
    due to the nature of $\IMPL$ constraints, vertices that have many descendants are \emph{cheap} to guess:
    Consider the vertices $V_{\geq 3} = \{ |D(v)| \geq 3 \mid v \in V\}$. 
    Setting some $v \in V_{\leq 3}$ to $1$ yields at least $2$ more variables $u \in D(v)$, which we need to set to $1$.

    Hence, let us first consider the hard case where for all $v \in V$ it holds that $|D(v)| \leq 2$. 
    We call instances that satisfy this condition \emph{restricted}. 
    After giving an algorithm for such restricted instances, 
    we show that this yields an equally fast algorithm for the general case.
\begin{theorem}[Restricted $\CSP_k^\gamma(\{\NAND,\IMPL\}$]
\label{thm:restricted_nand_impl}
    Let $\Phi$ be an instance of $\CSP_k^\gamma(\{\NAND,\IMPL\})$ such that no two variables share both an \IMPL- and \NAND$_2$-constraint and $|D(v)| \leq 2$ for all $v \in \text{vars}(\Phi)$.
    Then there is an algorithm solving $\Phi$ in time $\O\left(m^{k\omega/6 + 1}\right)$.%$m^{\omega k/6 + 2 + o(1)}$.
\end{theorem}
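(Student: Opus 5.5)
We prove \autoref{thm:restricted_nand_impl} by following the three-step outline of the technical overview: describe the structure of restricted instances, apply a Tur\'an-type sparsity cutoff, and reduce what remains to an almost balanced Triangle Detection instance.

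\emph{Structure of restricted instances.} Since $|D(v)|\le 2$ for every $v$, each variable implies at most one other variable $u$, and then $D(u)\subseteq\{u,v\}$. So the $\IMPL$-graph decomposes into three kinds of pieces. There are \emph{sinks} $u$ with $D(u)=\{u\}$, each with a set $P(u)$ of \emph{parents}, where every $p\in P(u)$ has $D(p)=\{p,u\}$. There are free vertices, which are sinks with $P(u)=\emptyset$. There are $2$-cycles $\{v,u\}$; these are isolated from all other $\IMPL$-edges, since any further in-neighbour would have three descendants. Call the \emph{group} of $u$ the set $\{u\}\cup P(u)$, with $2$-cycles as separate groups of size $2$.

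A weight-$k$ solution is exactly a set of $k$ variables, closed under descendants and containing no $\NAND$-edge. Equivalently, it is a union of \emph{units}: a lone sink (weight $1$), a sink together with some of its parents (weight $1+j$), or a $2$-cycle (weight $2$). By the preprocessing, $u$ has no $\NAND$-edge to its own parents. I first remove the $2$-cycles by a bounded guess. A solution contains at most $k/2$ of them, and I expect that $2$-cycles involved in $\NAND$-edges contribute only $O(m)$ candidates. So I branch $O(m^{c})$ times, with $c$ constant, on how many and which $2$-cycles are used; the $+1$ in the exponent absorbs this.

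\emph{Sparsity cutoff.} Next comes the win-win. If the sinks number at least $C_k\sqrt m$, the $\NAND$-graph induced on the sinks has at most $m\le n_0^2/(2k^2)$ edges. Then \autoref{lem:constructive_turan} finds $k$ pairwise non-adjacent sinks in linear time, and these form a solution of lone-sink units. Similarly, if some group has $|P(u)|\ge C_k\sqrt m$, then $P(u)$ contains a $(k-1)$-independent set by \autoref{lem:constructive_turan}, and together with $u$ this is a solution. Otherwise there are $O(\sqrt m)$ sinks, each with $O(\sqrt m)$ parents. I would then argue in the same Tur\'an fashion about \emph{parents from distinct groups}: if too many groups still contain parents, then either a greedy choice of $k/2$ parent--sink pairs succeeds, or the number of ``useful'' parents is $O(\sqrt m)$ overall. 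This yields a core instance with $N=O(\sqrt m)$ relevant variables.

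\emph{Reduction to Triangle Detection.} On the core, I guess how the weight $k$ splits into three parts of weight about $k/3$ that respect unit boundaries; there are $O(1)$ splits, and units straddling a boundary are handled by an $O(m)$ guess. For each part I list all closed, $\NAND$-free sets of that weight, of which there are $O(N^{k/3})=O(m^{k/6})$. These lists are the three sides of a tripartite graph. Two listed sets are joined when they are disjoint, have no $\NAND$-edge between them, and do not pick parents of a sink that the other part omits; sharing a sink is handled by canonically assigning each used sink to exactly one part. A triangle corresponds exactly to a solution. Fast matrix multiplication then gives time $O(m^{k\omega/6})$ per guess, hence $O(m^{k\omega/6+1})$ in total.

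\emph{Main obstacle.} I expect the hardest step to be the cutoff: showing that the core really has $O(\sqrt m)$ relevant variables, or at least $O(m^{k/6})$ candidate pieces of weight $k/3$. The difficulty is that a single $\NAND$-edge between two sinks $u,u'$ makes \emph{all} pairs in $P(u)\times P(u')$ incompatible. So a naive Tur\'an bound on the parent-level conflict graph fails, because its edge count is not $O(m)$. My first attempt would be to run the Tur\'an argument at the level of sinks weighted by group size, charging each group's size against the $\Omega(|P(u)|^2)$ $\NAND$-edges it must contain inside $P(u)$, which exist since otherwise the group alone yields a solution. I would try to show that $\sum_u |P(u)|^2 = O(m)$, which combined with the bound on the number of sinks keeps the enumeration within $m^{k/6}$.
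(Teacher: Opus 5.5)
Your architecture is the paper's: sinks with their parent groups, Tur\'an cutoffs giving $O(\sqrt m)$ sinks and $O(\sqrt m)$ parents per sink, then a balanced Triangle Detection instance after guessing at most two sinks in $O(m)$ time. However, the $2$-cycle step as written breaks the time bound. Branching over which of $O(m)$ candidate $2$-cycles a solution uses costs $O(m^{j})$ for $j$ cycles. The $+1$ in the exponent cannot absorb $m^{c}$ for $c>1$, and in your plan it is already spent on guessing the straddling sinks. Lowering the parameter to $k-2j$ only saves a factor $m^{2j\omega/6}=m^{j\omega/3}$, which is smaller than $m^{j}$ since $\omega<3$. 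The missing ingredient is to apply your Tur\'an cutoff to the $2$-cycles as well, as the paper does. A $2$-cycle is closed under $\IMPL$ and, by hypothesis, has no internal $\NAND$-edge. So if there are $\Omega(\sqrt m)$ cycles, the $\NAND$-graph between cycles (at most $m$ edges) contains $k/2$ pairwise non-adjacent cycles, i.e.\ a solution (for even $k$). Hence only $O(\sqrt m)$ cycles remain, guessing $j$ of them costs $O(m^{j/2})$, and $\sum_j m^{j/2}\cdot m^{(k-2j)\omega/6+1}=O(m^{k\omega/6+1})$ because $\omega\ge 3/2$.

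Your ``main obstacle'' is self-inflicted. You do not need a core of $N=O(\sqrt m)$ variables, and your cutoffs cannot produce one. Take $\sqrt m$ sinks forming a $\NAND$-clique, each with $m^{1/4}$ parents whose $\NAND$-graph is a disjoint union of $k-2$ cliques. This instance has $O(m)$ constraints, no $P(u)$ contains a $(k-1)$-independent set, and the greedy parent--sink pairing fails, yet there are $\Theta(m^{3/4})$ parents. What the paper uses instead is that the bounds you already have (its P1 and P2) control the number of \emph{units}. A unit of weight $j$ is a sink plus $j-1$ of its own parents, so there are $O(\sqrt m)\cdot O(\sqrt m)^{j-1}=O(m^{j/2})$ of them. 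Consequently a bin of weight $s$ built from whole units and parents of the guessed sinks has $O(m^{s/2})$ candidates, even though the total number of variables can be far larger than $\sqrt m$. Your bound $\sum_u|P(u)|^2=O(m)$ can be proved by Tur\'an inside each $P(u)$, but it is unnecessary. Also, as the paper does, always fix two guessed sinks outside the bins (when the solution has at least two groups), so that each bin has weight at most $\lceil (k-2)/3\rceil\le k/3$. Otherwise rounding costs you whenever the current parameter is not divisible by $3$, which happens in the subproblems left after guessing $2$-cycles. With these fixes, your cut-the-sequence balancing is a valid substitute for the paper's version, which packs the $\ell-2$ smallest groups greedily into three bins and rebalances with the two largest.
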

\begin{proof}
We devise this algorithm in 3 steps. 
First, we exploit the restriction on $\IMPL$-constraints to partition the variables into groups. 
Each group induces a solution within itself, if it is sufficiently large and sparse w.r.t. $\NAND$-edges. 
Otherwise, we can assume that such a group must be dense which allows us to bound its size. 
Then, we will deal with $\IMPL$-edges that form $2$-cycles and describe the main part of the algorithm, 
consisting of guessing and grouping vertices cleverly (depending on $\IMPL$-edges), finally reducing to triangle detection.

    The following observation regarding $\NAND$ sparsity of large vertex sets is crucial for our algorithm:
    \begin{claim}
    \label{lem:small_groups}
            Let $V' \subseteq V$ with $|V'| \ge \sqrt{m}^{1 + \epsilon}$ for all $\epsilon > 0$,
            then there exists an independent set $\{v_1,\dots, v_{k}\}$ w.r.t. $\NAND$-edges. 
    \end{claim}
    \begin{proof}
    Assume $V'$ were dense, i.e. the count of edges is
    \[\O\left(|V'|^2\right) = \O\left(\left(\sqrt{m}^{1+\epsilon}\right)^2\right) = \O\left(m^{1+\epsilon}\right) = \O\left(n^{(\gamma + \gamma\epsilon)}\right)\]
    where $\gamma\epsilon >0$, which is a contradiction as the graph has only $\O\left(n^\gamma\right)$ edges in total.
    Thus we can find a $k$-independent set in $V'$ in linear time by \autoref{lem:constructive_turan}.
    \end{proof}
    In particular, this bounds the size of any $V' \subseteq V$ where $\IMPL$ edges allow us to (almost) freely choose vertices from $V'$ 
    by $|V'| \geq \sqrt{m}^{1 + \epsilon}$. Otherwise we can find a $k$-independent set directly in time $\O(n+m)$.

    \subparagraph*{Removing \texorpdfstring{$2$}{2}-cycles}
    Recall that we are promised a \emph{restricted} instance with $|D(v)| \leq 2$ for all $v \in V$.
    More precisely, every such vertex $v$ has at most a single outgoing $\IMPL$-edge.
    Observe that this entails that $2$-cycles are isolated w.r.t. $\IMPL$-edges.
    Using \autoref{lem:small_groups}, we can bound the size of this set by $\O(\sqrt{m})$. 
    %\timo{we require $k$ to be even for this to true, right?}
    It now suffices to guess a set of at most $\lfloor k/2 \rfloor$ many cycles and solve the remaining instance without cycles.
    Assuming we can solve the problem without $2$-cycles in time $\O\left(m^{k\omega/6 + 1 }\right)$, we obtain the running time 
    \[
    \sum_{j = 0}^{\lfloor k/2 \rfloor} \O({\sqrt{m}}^{j}) \O\left(m + m^{(k - 2j)\omega/6 + 1}\right) = \sum_{j = 0}^{\lfloor k/2 \rfloor} \O\left(m^{ k\omega/6 + 1 - (2j\omega )/6 + j/2}\right) = \O\left(m^{k\omega /6 + 1}\right)
    \] 
    for $\omega \geq 2$.
    
    \subparagraph*{Reducing to Triangle Detection}
    After dealing with $2$-cycles with $\IMPL$-edges, 
    we partition $V$ into sets 
    \begin{itemize}
        \item $V_L = \{ v \mid |A(v)| = 1 \text{ and } |D(v)| = 2\}$,
        \item $V_R = \{v \mid |A(v)| \geq 2\}$ and
        \item $V_0\, = \{ v \mid |D(v)| = |A(v)| = 1\}$.
    \end{itemize} 
    For any vertex  $u \in V_L$ to be part of a solution, there exists some $v \in V_R$ that is necessarily contained as well. 
    % \timo{I dont think this is true since there might be variables with no incident \IMPL-constraint, neither outgoing nor incoming. There can't be too many such variables, but some might exist.}
    % \julian{indeed, those need to be dealth with. I guess we can just use them as another group}
    In this regard, we call the set of ancestors $A(v)$ of $v\in V_R$ and $v$ itself its \emph{group}. 
    Conversely, if we know that some $v \in V_R$ is part of a solution, then including any subset of its group is consistent with the $\IMPL$ constraints. Vertices in $V_0$ are isolated w.r.t. $\IMPL$-edges.
     
    Using \autoref{lem:small_groups}, we can give bounds on the size of groups. 
    If a group is too large, we can easily find a solution fully contained therein.
    Thus, we have the following properties for groups:
    \begin{enumerate}[P1]
        \item $|V_R|\leq f(k)\sqrt{m}$, thus there are at most $\O(\sqrt{m})$ many groups.
        \item Each group has size at most $|A(v)| \leq f(k)\sqrt{m}$
    \end{enumerate}
    To see this, observe that there are no $\IMPL$-edges between vertices in $V_L$, and no $\IMPL$-edges contained in $V_R$, as the restricted instance promises that $|D(V)| \leq 2$ for all $v \in V$.
    Lastly, we also consider $V_0$ as a group as well, where we can bound its size $|V_0| \leq f(k) \sqrt{m}$ by \autoref{lem:small_groups}.

    We exploit these properties by guessing \emph{how many groups} (of which there are at most $\O(\sqrt{m})$) a solution is composed of.
    Further, we guess how many vertices $k_i$ of each group (containing at most $\O(\sqrt{m})$ vertices) are part of the solution, thus it suffices to consider at most $\O(\sqrt{m}^{k_i})$ subsets per group.
    To this end, the algorithm guesses a partition of $k$ into $\ell$ groups of sizes $k = k_1+\dots+ k_\ell$ with $k_1\leq \dots \leq k_\ell$,
    to eventually reduce to an instance of triangle detection.
    First, we work through the case of at most two groups. 
    \subparagraph*{$\mathbf{\ell \leq 2}$}  
    We guess at most two groups in time $\O({\sqrt{m}}^2) = \O(m)$. 
    It remains to find a $(k-2)$-independent set among the $f(k)\sqrt{m}$ many vertices in the groups $A(v_1) \cup A(v_2)$.
    We distribute these $\O\left(\sqrt{m}\right)$ vertices into three parts, where each part corresponds to a choice of $\left\lceil\frac{k-2}{3}\right\rceil\leq \frac{k}{3}$ many vertices and reduce to a $3$-partite triangle instance of size $\O\left(\sqrt{m}^{\frac{k}{3}}\right)$.
    Two vertices in the triangle instance are adjacent, if the corresponding sets of vertices in the original instance form an independent set.
    The total time of guessing the groups, constructing and solving the triangle instance is bounded by $\O\left(m^{k\omega/6 + 1}\right)$.
    % \timo{Im not sure why we need to reduce to triangle and make these bucket arguments. Can't we just solve (k-2)-IndSet on $A(v_1) \cup A_(v_2)$?}
    % \julian{Fair enough, although this frames the algorithm as a reduction to triangle in any case which feels more neat}
    
    \subparagraph*{$\mathbf{\ell \ge 3}$}  We first prove the following claim, stating that we can distribute the groups corresponding to $k_1,\dots, k_{\ell-2}$ in three bins in such a way that the \emph{imbalance} of any pair of bins is at most $k_{\ell-1}$; then we can use the remaining two groups to fix the imbalance:
    \begin{claim}
        Let $k_1\leq \dots \leq k_\ell \in \mathbb{N}$.
        Then there exists a distribution of $k_1,\dots,k_{\ell-2}$ into three bins $S_1, S_2, S_3$ such that the following condition holds 
        \[
        \left|\left|\left(\sum_{k_i\in S_p} k_i\right)\right| - \left|\left(\sum_{k_j\in S_q} k_j\right)\right| \right| \leq k_{\ell-1}
        \]
        for each pair $p,q\in [3]$.
    \end{claim}
    \begin{proof}[Proof (sketch)]
        A greedy approach suffices: At step $i$, put value $k_i$ into the bin whose sum is the smallest after iteration $i-1$. 
        Invariant: after step $i$ the difference between the sum of values of any pair of bins will be at most $k_i$. 
        Hence, after iteration $\ell-2$, the difference between any pair of buckets will be at most $k_{\ell-2} \leq k_{\ell-1}$.
    \end{proof}
    Let $s_1, s_2,s_3$ be the sums of values from the bins $S_1, S_2, S_3$ respectively. 
    We construct the triangle instance by following the construction from the proof of the previous claim. 
    This yields an instance of triangle detection $X_1,X_2,X_3$ where each set $X_i$ corresponds to all valid choices of $s_i$ vertices that adhere to our group assignment choice. Again, two vertices are adjacent if their vertex choices form a valid independent set.
    
    However, this is not yet sufficient, as the imbalance is potentially still large.
    To this end, we use the remaining $k_{\ell-1}+k_\ell$ vertices to balance these partitions.
    We guess the remaining two groups in $\O(m)$ (i.e. assume that the corresponding vertices in $V_R$ are part of the solution) time and can now distribute the vertices of their group,
    where exactly $k_{\ell-1} + k_{\ell} - 2$ many of those vertices have to be part of a solution.
    The new triangle instance will have each node correspond to a subset of $\binom{V}{s_i+r_i}$ vertices of the original instance, 
    where the $s_i$ vertices are obtained by taking the union of groups $1,\dots,\ell-2$ as in the claim, 
    and the $r_i$ vertices come from the remaining two groups and are distributed in such a way to minimize the imbalance.
    
    Indeed, we can reduce the imbalance between buckets to $\max_{i,j\in[3]} |s_i+r_i-s_j-r_j|\leq 1$.
    See that the imbalance after distributing $k_1,\dots,k_{\ell - 2}$ is at most $k_{\ell - 2}$.
    Thus, we distribute the contribution $k_{\ell-1} - 1$ and $k_\ell - 1$ of our guessed groups, 
    i.e. how many vertices of said groups are set to one in a solution, to the imbalanced buckets.
    As we have at least $2(k_{\ell-2})$ many vertices available and the imbalance is at most $k_{\ell - 2}$, we can distribute the contributions up to an imbalance of $1$.
    Thus we obtain buckets of size at most $s_k \leq \lceil\frac{k-2}{3}\rceil \leq \frac{k}{3}$ and the triangle instance is of size $\O(\sqrt{m}^{\frac{k}{3}})$.
    Thus the time required for guessing groups, constructing and solving the triangle instance can be bounded by $\O\left(m^{ k \omega / 6 + 1}\right)$.

    We perform this reduction for every possible partition of $k$, we arrive at a total runtime of $\O\left(m^{ k \omega / 6 + 1}\right)$
    \end{proof}
    It is now left to show, that we can efficiently reduce any given ${\CSP_k^\gamma(\{\NAND,\IMPL\})}$ instance to the restricted case:
\begin{lemma}
\label{lem:restricting_nand_impl}
    Let $\Phi$ be an instance of $\CSP_k^\gamma(\{\NAND,\IMPL\})$. 
    Assume we can solve the restricted case in time $\O\left(m^{c \cdot k + d}\right)$ for constants $c \geq \frac{1}{3}$ and $d$,
    then we can solve $\Phi$ in time $\O\left( m^{c \cdot k + d}\right)$.
\end{lemma}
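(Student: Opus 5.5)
We reduce a general instance to restricted ones by a bounded-depth branching on vertices with many descendants. We assume the preprocessing already described has been done. That is, every $v$ with $|D(v)|\ge k+1$ has been removed. Every $v$ whose descendant set $D(v)$ contains both endpoints of a $\NAND$-edge has been removed together with $A(v)$. Every pair of variables sharing both an $\IMPL$- and a $\NAND$-constraint has been resolved. Then every remaining $v$ satisfies $|D(v)|\le k$, and $D(v)$ is independent w.r.t.\ $\NAND$-edges.

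Let $V_{\ge 3}=\{v : |D(v)|\ge 3\}$. We branch on whether the solution contains some vertex of $V_{\ge 3}$. If it does not, all of $V_{\ge 3}$ can be deleted; deleting a vertex forces all its ancestors to $0$, so we delete $A(V_{\ge 3})$. In the remaining instance every vertex has $|D(v)|\le 2$, so it is restricted, and the assumed algorithm handles it in time $\O(m^{ck+d})$. If the solution does contain some $v\in V_{\ge 3}$, we guess $v$; there are at most $n\le \O(m)$ choices, since isolated vertices can be handled separately, or we may assume $m\ge n$ for $\gamma\ge 1$. We then set all of $D(v)$ to $1$; these are $j:=|D(v)|\ge 3$ variables. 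Next we delete every $\NAND$-neighbour of $D(v)$ together with its ancestors, and recurse on the residual instance with parameter $k-j\le k-3$. The residual instance is again a $\{\NAND,\IMPL\}$-instance with at most $m$ constraints. Correctness is immediate: the solution either avoids $V_{\ge 3}$ entirely or contains a vertex $v$ of it, and then also contains $D(v)$.

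For the running time we set up the recurrence $T(k)\le \O(m^{ck+d}) + \O(m)\cdot T(k-3)+\O(m)$, noting that the recursion depth is at most $k/3$. We unfold it by induction: assuming $T(k-3)=\O(m^{c(k-3)+d})$, the branching term is $\O(m^{1+ck-3c+d})$. Since $c\ge 1/3$ we have $1-3c\le 0$, so this is $\O(m^{ck+d})$. This is exactly where the hypothesis $c\ge\frac13$ enters. With $c=\omega/6\ge 1/3$, this gives \autoref{thm:nand_impl_alg} from \autoref{thm:restricted_nand_impl}. The constant-size blowup from the number of branches at each level depends only on $k$ and is absorbed into the $\O$.

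The main obstacle is making sure each recursive instance is a legitimate instance of the same kind. After deletions, the number of variables $n'$ may shrink relative to $m'$, so the density $\gamma$ changes. The remedy is that the assumed restricted algorithm is stated in terms of $m$ alone, and we only use $m'\le m$. A second subtlety is that the restriction hypothesis of \autoref{thm:restricted_nand_impl} also requires that no pair shares both constraint types, so we must re-run the linear-time preprocessing in each branch. Finally, cheap branching on $D(v)$ must cost only $\O(m)$ per level. We would handle this by computing all descendant sets, each of size at most $k$, in $\O(k\cdot m)$ time once per recursion node.
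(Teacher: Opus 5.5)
Your proof is correct and takes essentially the paper's approach. You branch on heavy vertices (those with $|D(v)|\ge 3$), where each guess costs a factor of at most $m$ but lowers $k$ by at least $3$. You propagate $D(v)$, delete its $\NAND$-neighbours together with their ancestors, and then use $c\ge\frac13$ to absorb the branching into $\O(m^{ck+d})$.

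The only difference is organisational: you recurse one heavy vertex at a time and recompute heaviness in the residual instance, whereas the paper enumerates sets $S$ of at most $k/3$ heavy vertices in one shot. Your version has the small advantage of sidestepping the paper's loosely stated claim that a weight-$k$ solution contains at most $k/3$ heavy vertices, which fails for a long implication chain; what actually holds is that a greedily chosen $S$ with $|D(S)|\ge 3|S|$ suffices.
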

\begin{proof}
    Let $V_{\geq 3} = \{ v \in V \mid |D(v)| \ge 3\}$ for the primal graph $G$. We refer to those vertices as \emph{heavy}: Recall that heavy vertices are cheap to guess, as they imply at least $2$ more vertices that are contained in a solution.
    As any solution of weight $k$ can contain at most $k/3$ many heavy vertices 
    it suffices to iterate over subsets $S \subseteq V_{\ge 3}$ with $|S| \leq k/3$. 
    Every such set $S$ we want to consider needs to fulfill the following properties:
    \begin{itemize}
        \item There exists no $\NAND(u,v)$-edge with $u,v \in D(S)$,
        \item $|S \cup D(S)| \leq k$, and
        \item $|S \cup D(S)| \geq 3|S|$
    \end{itemize}
    Verifying the validity of a set $S$ can be done in time $\O(|S|)$.

    Given a valid set $S$, we obtain a restricted instance by propagating that $S':=D(S)$ is part of the solution and removing any remaining heavy vertices. 
    More precisely, we delete $S'$ and its neighborhood $N(S')$ w.r.t. $\NAND$-edges, as these vertices can not be part of a solution anymore. 
    Then, we remove all vertices that are still heavy after this step.
    Finally, for every deleted vertex, we need to remove all its ancestors, as they can no longer be part of a solution as well by the $\IMPL$-edges. 
    Note, that as we can compute $D(v)$ for all $v \in V$ in time $\O(m)$,
    this propagation can be done in time $\O(m)$ and obtains a restricted instance. 
    
    Iterating through all subsets of size $s \leq k/3$, checking their validity, propagating and solving the restricted instance
    thus has a total running time of
    \[
    \sum_{s=0}^{k/3} \O\left(s\cdot n^{s}\right) \O(m + m^{c(k-3s) + d}) \leq \sum_{s=0}^{k/3} \O(m^{ck-3cs + s + d}) =\O(m^{ck + d}).
    \]
    as $n \leq m$, $c \geq \frac{1}{3}$ and $|S'| \geq 3|S|$.
\end{proof}
Using \autoref{lem:restricting_nand_impl} and \autoref{thm:restricted_nand_impl} and the fact that $\frac{\omega}{6} \geq \frac{1}{3}$,
this proves \autoref{thm:nand_impl_alg}.
\begin{proof}[Proof of \autoref{thm:0_valid_binary_algo}]
    Let $\Fam$ be a $0$-valid binary constraint family with $\NAND \in \Fam$.
    Given an instance $\Phi$ of $\CSP^\gamma_k(\Fam)$, we can preprocess $\Phi$ in time $\O(n + m)$.
    If $\Phi$ contains $\EQ$ or $\IMPL$ constraints,
    replace any potentially occurring $\EQ$-constraints by two $\IMPL$-constraints and then solve the resulting instance with $\autoref{thm:nand_impl_alg}$ in time $\O(m^{k\omega/6 + 1})$.
    If there are no such constraints, then we can solve the instance in time $\O(m^{k\omega/6})$ by \autoref{obs:0_valid_no_impl}.
\end{proof}

\subsubsection{Solving \texorpdfstring{$0$}{0}-invalid Constraint Families}
Solving $0$-invalid functions boils down to an application of the branch and bound approach of \autoref{lem:branch_and_bound} and solving the resulting $0$-valid instance.
\begin{theorem}
    \label{thm:alg1invalid}
    Let $\Fam$ be a $0$-invalid, binary Boolean constraint family. 
    Then we can solve $\CSP_k^\gamma(\Fam \cup \NAND)$ in time $\O(m^{{\omega(k-\smin(\Fam))}/{6} + 1})$.
    %where $\smin(\Fam) \in \{1,2\}$ depends only on $\Fam$.
\end{theorem}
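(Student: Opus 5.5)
The plan is to apply the branch-and-bound reduction of \autoref{lem:branch_and_bound} to the instance and then solve each resulting $0$-valid instance with \autoref{thm:0_valid_binary_algo}; the only real work is tracking how far the solution weight drops.

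\textbf{Step 1 (branching).} Given an instance $\Phi$ of $\CSP_k^\gamma(\Fam\cup\{\NAND\})$, repeatedly pick a constraint violated by the all-zero assignment. Since $\Fam$ is binary, such a constraint is either a unary-like constraint that forces one variable to $1$, or a binary constraint requiring at least one of its two variables to be $1$. We branch over which variable to set to $1$, and propagate, replacing constraints by their restrictions. Restricting a binary constraint yields only unary constraints (or the constraint is resolved), so the residual constraints are $\NAND$, the $0$-valid binary constraints of $\Fam$, and unary constraints. Unary constraints forcing a variable to $0$ are handled by deleting that variable, as in \autoref{obs:easy_constraints}.

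The search tree has depth at most $k$ and branching factor at most $2$, so it has $f(k)$ leaves and each leaf is computed in $\O(n+m)$ time. At each leaf we obtain an instance over a $0$-valid binary family $\Fam^*_i\cup\{\NAND\}$ with $n_i\le n$ variables, $m_i\le m$ constraints, and remaining budget $k_i$.

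\textbf{Step 2 (weight accounting).} We need $k_i\le k-\smin(\Fam)$ on every branch. Along a branch, the first branching step is triggered by some $0$-invalid $f\in\Fam$. Satisfying $f$ requires setting at least $\smin(f)\ge\smin(\Fam)$ variables to $1$. Choosing the constraint to branch on, and keeping it active until it is satisfied, consumes at least $\smin(\Fam)$ units of weight before the branch reaches a $0$-valid leaf. This is the most delicate point. If $\smin(f)=2$ (for instance $f=\mathrm{AND}$), the propagation sets both variables at once, so the loss is exactly $2$. If $\smin(f)=1$, the loss is $1$. In either case $k_i\le k-\smin(\Fam)$.

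\textbf{Step 3 (solving leaves).} If $\NAND$ survives in a leaf instance, apply \autoref{thm:0_valid_binary_algo} to get time $\O(m_i^{\omega k_i/6+1})\le\O(m^{\omega(k-\smin(\Fam))/6+1})$. If the leaf family contains no $\NAND$, we use the linear-time or \IMPL-regime bounds instead, which are dominated by this bound for large $k$. Summing over the $f(k)$ leaves and adding the $\O(n+m)$ branching overhead gives the claimed bound.

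I expect the main obstacle to be Step 2. One must argue that the budget loss is charged to the minimum satisfying weight of the triggering $0$-invalid constraint, and not to $1$ per branching step. One must also argue that $\smin(\Fam)$ in the exponent is justified even when a $0$-invalid instance contains no violated constraint. In that case there is nothing to branch on and the budget would not drop. So the statement should be read as requiring that the instance actually contains a $0$-invalid constraint; otherwise the instance is $0$-valid and \autoref{thm:0_valid_binary_algo} applies directly.
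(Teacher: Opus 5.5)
Your proposal is correct and is essentially the paper's own proof. You branch via \autoref{lem:branch_and_bound}, charge at least $\smin(\Fam)$ units of weight to the first triggering $0$-invalid constraint so that $k_i\le k-\smin(\Fam)$ at every leaf, and solve each $0$-valid leaf with \autoref{thm:0_valid_binary_algo}.

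The edge case you flag, which the paper passes over silently, needs no change to the statement. If $\Fam$ has a $0$-valid member then $\smin(\Fam)=0$. Otherwise an instance without $0$-invalid constraints is pure $\NAND$, solvable in $\O(n+m+m^{\omega k/6})$ by \autoref{obs:0_valid_no_impl}, which lies within $\O(m^{\omega(k-2)/6+1})$ since $\omega\le 3$. Your case split in Step~3 is likewise unnecessary, because \autoref{thm:0_valid_binary_algo} also covers leaves with no $\NAND$ constraints, which removes your ``large $k$'' caveat.
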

\begin{proof}
    We apply \autoref{lem:branch_and_bound}. 
    First we argue that to replace a $0$-invalid constraint $f$ with a $0$-valid constraint, 
    at least $\smin(f)$ many $1$s are required to be plugged in. As we prune infeasible branches, we can thus bound each $k_i \leq k - \smin(\Fam)$.
    In the binary case it is easy to see that,
    as $\Fam^*_i$ is $0$-valid but $\Fam$ was $0$-invalid, the only additionally introduced constraint functions are of arity at most $1$ and $0$-valid, these constraints can be directly resolved in linear time.
    Now, it is only left to solve $f(k)$ many instances of $\CSP_{k_i}^{\gamma_i}(\Fam_0)$ for some computable $f(k)$ and $\Fam_0 \subseteq \Fam$ containing only $0$-valid constraints. 
    Each can be solved in time $\O\left(m_i^{\omega(k - \smin(\Fam)) / 6+ 1} \right)$ 
    by \autoref{thm:0_valid_binary_algo}, 
    yielding a runtime of $\O\left(f(k) \cdot m_i^{\omega(k - \smin(\Fam))  / 6 + 1}\right) = \O\left(m^{\omega(k - \smin(\Fam))  / 6 + 1}\right)$.
\end{proof}

\subsubsection{Hardness}
We now go on to prove our main hardness result, showing that the previous algorithms are in fact optimal up to FPT-factors of the form $f(k)n^{\O(1)}$.
It becomes apparent that there is a \emph{smooth} hardness transition depending on the density, as soon as we consider any other binary constraint in combination with $\NAND$.
The reduction is based on the idea of \emph{hiding} a dense, hard core instance of $\NAND$.
By using only few constraints to enforce a large set of auxiliary variables to be $0$, we embed this core into a sparse instance.
We obtain the following lower bound for all binary constraint families containing $\NAND$:

\begin{theorem}
     Let $\Fam$ be a non-empty binary constraint family.
     There exists no $\varepsilon > 0$ such that we can solve $\CSP^\gamma_k(\Fam \ \dot\cup \ \{\NAND\})$ for $1\leq \gamma \leq 2$ in time
    \(
        \O(m^{{(k - \smin(\Fam))\omega}/{6} - \varepsilon})
    \)
\end{theorem}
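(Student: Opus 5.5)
Throughout, let $f\in\Fam$ be a non-trivial binary constraint and $s\coloneqq\smin(\Fam)$; we may take $f$ to attain $s$ (otherwise pick such an $f$). The plan is a sparse-embedding reduction from $k'$-clique, where $k'\coloneqq k-s$. Take a $k'$-IS instance, i.e.\ a graph $G^*$ on $N$ vertices (complement of the clique instance). Put $N\coloneqq\lceil n^{\gamma/2}\rceil$, so its $\NAND$-edges number $O(N^2)=O(n^\gamma)$. We then pad with $n$ auxiliary variables that every weight-$k$ solution is forced to set to $0$, except for exactly $s$ forced $1$s. An algorithm in time $O(m^{k'\omega/6-\varepsilon})=O(N^{k'\omega/3-2\varepsilon})$ would then refute the $k'$-clique hypothesis, which is the claim.

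How the padding is done depends on $f$. Up to symmetry the non-trivial binary functions other than $\NAND$ are $\IMPL$, $\EQ$, $\NOR$, $F$ (unary on one coordinate), and the $0$-invalid ones ($\mathrm{OR}$, $T$, $\mathrm{XOR}$, etc.).
\begin{itemize}
\item If $f$ is $0$-valid, then $s=0$. Create auxiliary variables $p_1,\dots,p_n$ and kill each one. With $\IMPL$, add $p_j\to p_{j+1}$ along a path and end it with $p_n\to u$, where $u$ is a fresh variable tied by $\NAND$ edges to a $(k+1)$-clique of $\NAND$s, so that $u=1$ is impossible. Simpler: add $\IMPL(p_j,q)$ and $\NAND(q,q')$ with $\IMPL(q,q')$, which forces $q=0$ and hence $p_j=0$. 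With $\NOR$/$F$, kill $p_j$ directly. With $\EQ$, chain all $p_j$ together with $q$. This uses $O(n)$ constraints.
\item If $f$ is $0$-invalid, first use $s$ fresh variables, with one $f$-constraint for each appropriate pair/single, whose only cheapest satisfying choice sets exactly $s$ of them to $1$. Then attach the killing gadget above using $\NAND$ alone: every $p_j$ gets $\NAND(p_j,t)$ with a forced-$1$ variable $t$. Since $t$ is forced to $1$ from $f$ (e.g.\ $T(t)$, or $\mathrm{OR}(t,t')$ with $\NAND(t',\cdot)$ arranged), all $p_j$ must be $0$. Any other $0$-invalid case uses the same idea.
\end{itemize}
Each variant adds $O(n)\le O(n^\gamma)$ constraints, because $\gamma\ge1$.

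The correctness step: any weight-$k$ solution sets exactly the $s$ forced variables to $1$ and all $p_j$ to $0$, so it induces a $k'$-IS in $G^*$, and conversely every $k'$-IS extends to a solution. The total size is $n+N+O(1)=\Theta(n)$ variables and $m=\Theta(n^\gamma)$ constraints, matching the density requirement. To make $m=\Theta(n^\gamma)$ exact rather than only an upper bound, add dummy harmless constraints among the padding variables if needed.

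I expect the main obstacle to be the $0$-invalid case analysis. There one must make sure the forced part costs exactly $\smin(\Fam)$ ones (not more) and that the killing gadget is realizable with the available functions, including cases like $\mathrm{XOR}$ or $f(x,y)=x\vee y$ where forcing requires combining $f$ with $\NAND$. I would first try a uniform gadget: a minimum satisfying assignment of $f$ on fresh variables, with $\NAND$ edges from its $1$-variable to all padding variables. I would then verify through the short list of binary functions that this forces precisely $s$ ones.
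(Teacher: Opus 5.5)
Your reduction is the one the paper uses: hide a dense $(k-s)$-IS instance on $N\approx n^{\gamma/2}$ vertices among $\Theta(n)$ padding variables, force them to $0$ with $O(n)$ constraints, and pay exactly $s=\smin(\Fam)$ forced ones. The parameter and running-time bookkeeping is right, and your $\EQ$-chain and $\NOR$/$\neg x$ gadgets are fine. The gap is in the remaining gadgets. You raise the exact-weight requirement only for the $0$-invalid case, but it applies to \emph{every} auxiliary variable you introduce: each must be pinned, or a weight-$k$ solution can spend a one on it and then needs only a $(k-s-1)$-IS of $G^*$.

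As written this fails twice.
\begin{itemize}
\item In the $\IMPL$ case, $\IMPL(q,q')\wedge\NAND(q,q')$ forces $q=0$ but leaves $q'$ free, since $q=0,q'=1$ satisfies both constraints. Your instance is therefore a YES-instance whenever $G^*$ has a $(k-1)$-IS. Your first variant also fails: $\NAND$-edges from $u$ into a $\NAND$-clique do not forbid $u=1$.
\item In the $0$-invalid case, attaching $\NAND$-edges only to ``the $1$-variable'' of $f$ does not work for $\mathrm{OR}$ or $\mathrm{XOR}$. The constraint $f(t,t')$ does not determine which of $t,t'$ is $1$. With $t=0$, $t'=1$, no $p_j$ is forced to $0$, so solutions can be built from padding variables. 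For $\mathrm{OR}$, or for $f(x,y)=x$ with a fresh $t'$, the gadget can also carry weight $2>s$.
\end{itemize}

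The missing gadgets are short, and they are the paper's.
\begin{itemize}
\item For $s=0$, use $g(x,y)=f(x,y)\wedge f(y,x)\wedge\NAND(x,y)$. For every non-trivial $0$-valid $f\neq\NAND$ (up to symmetry $\IMPL$, $\EQ$, $\NOR$, $\neg x$), this admits only $x=y=0$. Place it on consecutive padding variables (the paper uses a cycle), so no unpinned helper variable appears. For $\IMPL$ this amounts to adding $\IMPL(q',q)$.
\item For $s=1$, the pair $f(z_1,z_2)\wedge\NAND(z_1,z_2)$ forces exactly one of $z_1,z_2$ to $1$, because $f(0,0)=0$ and w.l.o.g.\ $f(1,0)=1$. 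You must then add both $\NAND(z_1,p_j)$ \emph{and} $\NAND(z_2,p_j)$ for every $j$, so that whichever of them is $1$ kills the padding.
\item For $s=2$ we have $f=\mathrm{AND}$, so $f(z_1,z_2)$ pins both variables and $\NAND(z_1,p_j)$ suffices.
\end{itemize}
With these gadgets your correctness argument and time calculation go through unchanged.
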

\begin{proof}
    We reduce from (dense) $k$-independent set, i.e. an instance $\Phi$ of $\CSP_k^2(\NAND)$
    to an instance of $\CSP_k^{\gamma}(\Fam \cup \{\NAND\})$ for any $1 \leq \gamma \leq 2$.
    In the following, for some $f \in \Fam$ with $\smin(f) = \smin(\Fam)$,
    we will construct a gadget formula over $\ell:=n^{2/\gamma}$ many new variables $Y = \{y_1,\dots,y_\ell\}$ 
    plus at most $2$ further auxiliary variables $Z = \{z_1,z_2\}$.
    This gadget should be sufficiently sparse (i.e. have only linearly many constraints) and the only satisfying assignments of weight smaller than $k$
    are of weight exactly $\smin(\Fam)$. Using this gadget, we can then embed the given instance 
    $\Phi$ in a sparse instance to obtain any desired density $\gamma < 2$ while retaining an equivalence between size $k$ and size $k + \smin(\Fam)$ solutions.

    Based on $\smin(\Fam)$, we will construct a gadget constraint $g(x,y)$ that is only satisfied for weight $0$ assignments.
    
    \subparagraph*{$\mathbf{\smin(\Fam) = 0}$}
    Let $g(x,y) \equiv f(x,y) \wedge f(y,x) \wedge \NAND(x,y)$. As we can assume that $f$ is neither trivial nor $\NAND$, $g(x,y)$ is only satisfied by the $0$-assignment.
    
    \subparagraph*{$\mathbf{\smin(\Fam) = 1}$}
    W.l.o.g., let $f(1,0) = 1$.
    We construct 
    \(g'(x) \equiv f(z_1,z_2) \wedge \NAND(z_1,z_2) \wedge \NAND(z_1,x) \wedge \NAND(z_2,x),\)
    which sets $z_1$ or $z_2$ to $1$ but forces $x$ to take value $0$. Thus we get the desired $g(x,y) \equiv g'(x)\wedge g'(y)$.
    
    \subparagraph*{$\mathbf{\smin(\Fam) = 2}$}
    We construct $g(x,y) \equiv f(z_1,z_2) \wedge \NAND(z_1,x) \wedge \NAND(z_1,y)$, which sets $z_1$ and $z_2$ to $1$ but forces $x$ and $y$ to take value $0$.

    \subparagraph*{}
    In every case, it is easy to verify that $g(x,y)$ is only satisfied by setting $x$ and $y$ to $0$.
    We construct a \emph{cycle} using $g$
    \[
        \Phi' := \bigwedge_{i=1}^{\ell-1} g(y_i,y_{i+1}) \wedge g(y_{\ell},y_{1})
    \]
    using only $\O(n^{2/\gamma})$ many constraints.
    $\Phi'$ is only satisfied by the standard assignment, i.e. setting $Y$ to $0$ and $\smin(\Fam)$ variables in $Z$ to $1$.
    
    We set $k' = k + \smin(\Fam)$ and finally construct the instance $\Phi'' \equiv \Phi \wedge \Phi'$ depending on $\smin(\Fam)$.
    By our construction any weight $k'$ satisfying assignment $a$ for $\Phi''$ induces a weight $k$ assignment 
    that is satisfying for $\Phi$, as auxiliary variables need to be set to the standard assignment for $\Phi'$.
    As $\Phi''$ has $\O(n^{2/\gamma})$ many variables but $\O(n^2 + n^{2 / \gamma}) = \O(n^2)$ many constraints, the density is $m = \O(n^\gamma)$.

    Assume we can solve this instance in time $\O(m^{(k' - \smin(\Fam))\omega/
    {6} -\varepsilon})$ 
    for some $\varepsilon > 0$, 
    then we can solve $\CSP_k^2(\NAND)$ in time $\O(n^{{(2 \gamma k \omega)}/{(6\gamma)} -\varepsilon'}) = \O(n^{\frac{\omega k}{3} -\varepsilon'})$ for some $\varepsilon' > 0$
    refuting the $k$-Clique Hypothesis.
\end{proof}

This opens up a new regime when compared to the dense case, where the exponent is linearly dependent on the density of the instance:
While $\CSP_k^\gamma(\NAND)$ exhibits a \emph{sharp} {phase transition} in hardness at $\gamma = 2$, 
if there are additional constraint functions available 
this introduces a \emph{smooth} phase transition for $\gamma <2$.

\section{Hardness at the Triviality Cutoff: \texorpdfstring{$\ge 2$}{>= 2}-violating families}\label{sec:triviality-cutoff}
%was: Beyond binary - $\geq 2$-violating constraint Families
\label{sec:beyond_binary}
After fully classifying $\CSP_k^\gamma$ for binary constraint families, we now turn to higher-arity families.
In particular, we consider the \emph{$c$-violating} constraint families for $c \geq 2$, that is, the families satisfying $\umin(\Fam) \geq 2$. 

This class of constraint functions is structurally closely related to $\NAND_c$, that is clearly $c$-violating.
More importantly, we show that every $c$-violating family can implement the the arity-$h$ constraint $\textsc{LessThan}^h_c$, which is only satisfied by assignments of weight less than $c$.

Using this property, we construct two main ingredients used in our hardness results.
At high level, these gadgets allow us to \emph{embed} instances of $\CSP_k^\gamma$ into instances of different densities.

The main results of this section are twofold.
First, for every $c \geq 3$ we establish a sharp \emph{phase transition} in complexity of $c$-violating constraint families $\Fam$:
depending on the density $\gamma$, the problem $\CSP_k^\gamma(\Fam)$ is either trivial, or requires essentially brute-force time under the $d$-uniform $k$-hyperclique hypothesis.
We then turn to the $c=2$ case. Beyond the corresponding \emph{phase transition}, we revisit constraint families containing $\NAND$ of different arities, which is an equivalent formulation of the non-uniform $k$-independent set problem. 
More formally, we obtain the following main results in this section:
\begin{theorem}
\label{thm:3-vio-hardness}
    Let $\Fam$ be a constraint family with $\umin(\Fam) \ge 3$.
    Then:
    \begin{itemize}
        \item $\CSP_k^\gamma(\Fam)$ is trivial for $\gamma < \umin(\Fam)$.
        \item $\CSP_k^\gamma(\Fam)$ requires time $n^{k-o(1)}$ for any $\gamma \ge \umin(\Fam)$, unless the $\umin(\Fam)$-uniform Hyperclique Hypothesis fails.
    \end{itemize} 
\end{theorem}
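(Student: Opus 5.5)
The plan has two parts: a greedy counting argument for triviality below the cutoff, and a padding (``zero-position filling'') reduction from $c$-uniform $k$-hyperclique for hardness at and above it. Write $c:=\umin(\Fam)\ge 3$, fix $f\in\Fam$ of arity $h$ with $\umin(f)=c$, and fix a violating assignment $a\in\{0,1\}^h$ of $f$ with $\|a\|_1=c$. The basic fact used throughout is that every assignment of weight $<c$ on a constraint's scope satisfies it, because $\Fam$ is $c$-violating. I expect the second direction of the hardness reduction, ruling out spurious solutions that use padding variables, to be the main obstacle.

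\emph{Triviality for $\gamma<c$.} This is the generalization of \autoref{lem:constructive_turan}. By the basic fact, a constraint can only be violated by a weight-$k$ set $X$ if $X$ contains at least $c$ variables of its scope. I build $X$ greedily. Suppose $t<k$ variables have been chosen so that no constraint has $\ge c$ chosen scope variables. Call a candidate $v$ \emph{bad} if adding $v$ breaks this invariant, i.e., some constraint contains $v$ together with $c-1$ already chosen variables. Each constraint contains only $O_k(1)$ $(c-1)$-subsets of chosen variables, so the number of bad candidates is at most $\sum_{C}\binom{|C|}{c-1}\cdot[\text{$C$ has $c-1$ chosen}]$.

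To make this $o(n)$, I restrict the greedy choice to low-degree vertices. The number of pairs (constraint, $(c-1)$-tuple of variables in its scope) is $O(m)=O(n^{c-\epsilon})$. Averaging over a random choice of the earlier $c-1$ picks among the $\Omega(n)$ lowest-degree vertices then shows that only $o(n)$ candidates become bad at each step. An explicit version picks vertices whose ``codegree'' into the partial solution is zero, with a potential-function argument in the style of \autoref{lem:constructive_turan}. After $k$ steps this yields a weight-$k$ satisfying assignment, so non-triviality can only begin at $\gamma=c$, matching \autoref{thm:triviality_cutoff}.

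\emph{Hardness for $\gamma\ge c$.} I reduce from colorful $k$-independent set in a $k$-partite $c$-uniform hypergraph $H^*$ with parts of size $N=n$; this is hard under the $c$-uniform hyperclique hypothesis by complementation. To simulate $\NAND_c(u_1,\dots,u_c)$, I place $u_1,\dots,u_c$ at the one-positions of $a$ and padding variables at its $h-c$ zero-positions. If the padding variables are $0$, the constraint is violated exactly when all $u_j$ are $1$, since every other pattern has weight $<c$. Because a solution sets only $k$ variables, I use $K:=k+1$ copies of each such constraint with pairwise disjoint padding tuples drawn from a pool $P$ of $O(K h)$ padding variables, reused across all constraints. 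Then every weight-$k$ assignment leaves at least one copy \emph{clean}, so each clean copy faithfully enforces $\NAND_c$. I apply this gadget to every hyperedge of $H^*$, and to every $c$-subset inside a single part, which enforces at most one vertex per part since $c\ge 3$ and $k\ge c$.

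Completeness is immediate: an independent transversal, with all padding set to $0$, has weight $<c$ on every scope except possibly ones it fully covers, and it fully covers none by independence.

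Soundness requires that a solution cannot use padding variables, and this is the main obstacle. My first attempt is to add, again via $K$ clean copies with padding drawn disjointly, a $\NAND_c$-gadget on $\{p,w_1,\dots,w_{c-1}\}$ for every $p\in P$ and every $(c-1)$-set $\{w_1,\dots,w_{c-1}\}\subseteq V(H^*)\cup P\setminus\{p\}$. This uses $O(n^{c-1})$ constraints. Then any weight-$k$ solution with $k\ge c$ that contains a padding variable contains it together with $c-1$ other chosen variables, and some clean copy is violated.

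It remains to count. The instance has $O(n)$ variables and $O(n^c)$ constraints, and for $\gamma>c$ I add filler copies of existing constraints (duplicated with fresh padding tuples from $P$) to reach $\Theta(n^\gamma)$ constraints without changing the solution set. A $n^{k-\epsilon}$-time algorithm for the resulting instance would therefore refute the $c$-uniform hyperclique hypothesis. If the padding-exclusion step fails, for instance because the padding pool interacts with clean-copy counting, my fallback is to replace $P$ by $\Theta(n)$ padding variables and argue by pigeonhole over $k+1$ disjoint padding tuples per constraint, keeping the constraint count $O(n^c)$.
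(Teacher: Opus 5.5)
Two steps of your hardness reduction fail as written. The padding-exclusion step, which you flagged as the main obstacle, is actually sound. \textbf{First, the colorful source.} Your within-part gadgets are $\NAND_c$ gadgets, so they only forbid $c$ chosen vertices inside one part, not two. Solutions therefore need not be transversals, and the reduction is unsound. For $c=3$, $k=4$, take two vertices from each of $V_1,V_2$. Every scope then contains at most two ones: each hyperedge of the $k$-partite $H^*$ meets three distinct parts, each within-part triple meets the set in at most two vertices, and the padding is $0$. So the constructed instance is a YES-instance regardless of $H^*$. The simplest repair is to drop colorfulness and reduce from plain $k$-independent set in a $c$-uniform hypergraph, i.e.\ from $\CSP_k^c(\NAND_c)$. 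That problem is literally the complement of $c$-uniform $k$-hyperclique, and it is the paper's source problem. Alternatively, your own exclusion trick can enforce one vertex per part: place $\NAND_c$ gadgets on $\{u,v\}\cup W$ for every same-part pair $u,v$ and every $(c-2)$-set $W$. This costs $O(n^c)$ constraints, and any weight-$k$ solution with $k\ge c$ containing $u,v$ violates a clean copy.

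\textbf{Second, the density filling for $\gamma>c$.} Constraints form a set, and with $|P|=O(kh)$ there are only $O(1)$ distinct padding tuples. Every instance you build therefore has $O(n^c)$ distinct constraints, so ``fresh'' duplicates cannot reach $\Theta(n^\gamma)$; your fallback explicitly stays at $O(n^c)$ as well. You need a polynomially large pool, e.g.\ $|P|=\Theta(n)$, with $\Theta(n^{\gamma-c})$ distinct padding tuples per hyperedge (available for $\gamma\le h$). The paper's \autoref{lem:dense_embed} uses $|Y|=n^{(\gamma-c)/(h-c)}$ auxiliary variables for exactly this reason. Your exclusion gadgets survive the change. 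They cost $O(k\,|P|\,(n+|P|)^{c-1})=O(n^c)$ constraints. For $k\ge c$, the pigeonhole over $k+1$ disjoint tuples, each avoiding $\{p\}\cup W$, still forces every padding variable to $0$. After that, every copy (not only a clean one) encodes $\NAND_c$ exactly, so the extra tuples need not be disjoint. With $|P|=\Theta(n)$, the exclusion gadgets alone also supply $\Theta(n^c)$ constraints, which covers $\gamma=c$ when the source has few hyperedges.

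With these repairs, your route differs genuinely from the paper's. The paper builds $\textsc{LessThan}^K_c$ from the symmetrization of $f$ on all $h$-subsets of $K=k+h$ variables. It attaches this gadget to every $\NAND_c(\mathbf{x})$ together with all $(K-c)$-subsets of $Y$, and removes ones on $Y$ by an exchange argument. You instead embed $\NAND_c$ directly at the one-positions of a single weight-$c$ violating assignment and force the padding to zero outright, which is more transparent.

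For triviality, your averaging sentence is not yet a proof, but nothing delicate is needed since only existence is claimed. A uniformly random $k$-set contains $c$ scope variables of some constraint with probability $O(m(k/n)^c)=O(n^{\gamma-c})=o(1)$, with constants depending on $k,h,c$. The paper instead runs the low-degree greedy with restricted functions $f_i$ from \autoref{thm:0valid_sparse_general_algo}, which additionally gives a linear-time construction.
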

Under the $k$-clique hypothesis, we obtain a similar result for families with $\umin(\Fam) = 2$.
\begin{theorem}
\label{thm:2-vio-hardness}
    Let $\Fam$ be a constraint family with $\umin(\Fam) = 2$.
    Then:
    \begin{itemize}
        \item $\CSP_k^\gamma(\Fam)$ is trivial for $\gamma < \umin(\Fam)$.
        \item $\CSP_k^\gamma(\Fam)$ it requires time $n^{\frac{\omega k}{3}-o(1)}$ for $\gamma \ge \umin(\Fam)$, assuming the $k$-Clique Hypothesis.
    \end{itemize} 
\end{theorem}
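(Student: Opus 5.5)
The first item follows from Theorem~\ref{thm:triviality_cutoff}. It says that for $\gamma<\umin(\Fam)=2$, every sufficiently large instance has a weight-$k$ solution. So all the work is in the lower bound. I plan a reduction from (dense) $k$-Independent Set on an $N$-vertex graph $G$, i.e.\ $\CSP_k^2(\NAND)$, to $\CSP_k^\gamma(\Fam)$. The reduction should keep the number of variables at $\Theta(N)$ and output $\Theta(N^\gamma)$ constraints. Then an $\O(n^{\omega k/3-\varepsilon})$ algorithm for $\CSP_k^\gamma(\Fam)$ would refute the $k$-Clique Hypothesis.

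Fix $f\in\Fam$ of arity $h$ with $\umin(f)=2$. Then $f$ is satisfied by every assignment of weight at most $1$ (in particular it is $0$-valid). It is violated by some weight-$2$ assignment, with ones at positions $p,q$. If we plug $x,y$ into positions $p,q$ and zeros everywhere else, $f$ becomes exactly $\NAND(x,y)$. To realise the zeros I introduce $L:=k+2$ disjoint \emph{dummy blocks} $D_1,\dots,D_L$, each consisting of $h-2$ fresh variables.
\begin{itemize}
\item For every edge $uv$ of $G$ and every block $\ell$, add $f(u,v,D_\ell)$, with $u,v$ at positions $p,q$ and $D_\ell$ filling the rest.
\item For every original vertex $v$, every dummy $d\in D_\ell$ and every block $\ell'\neq\ell$, add $f(v,d,D_{\ell'})$.
\item For every pair of dummies $d,d'$, and every block $\ell'$ containing neither of them, add $f(d,d',D_{\ell'})$.
\end{itemize}
The key pigeonhole observation is this. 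A weight-$k$ assignment has ones in at most $k$ variables, and there are $k+2$ blocks. So for any two variables, some block other than their own blocks is entirely zero. On that block the corresponding constraint acts as $\NAND$ on the pair.

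Consequences, in order:
\begin{enumerate}
\item A dummy set to $1$ cannot coexist with an original vertex set to $1$.
\item No two dummies can both be $1$.
\item Since $k\ge 2$, every solution therefore sets all dummies to $0$ and exactly $k$ original vertices to $1$.
\item Those $k$ vertices form an independent set of $G$, by the edge constraints on a zero block.
\end{enumerate}
Conversely, a $k$-independent set of $G$ with all dummies at $0$ makes every constraint see weight at most $1$, so all constraints are satisfied. This gives $n=N+\O(1)$ and $m=\Theta(N^2)$, which handles $\gamma=2$.

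For $2<\gamma\le h$, I pad to the required density. I add $N'=\Theta(N)$ further dummy variables, each attached to every original vertex and every other dummy via the type-two and type-three constraints above, so they are forced to $0$ in any solution. I then add $\Theta(n^\gamma)$ arbitrary distinct $f$-constraints on $h$-tuples of dummies; these are satisfied by the all-zero dummy assignment, since $f$ is $0$-valid. There are $\Theta(n^h)$ such tuples, which suffices for any $\gamma\le h$. (For $\gamma>h$ the family cannot produce that many distinct constraints, so there is nothing to prove.) Both the forcing constraints and the padding are at most $\O(n^2+n^\gamma)$, so the density is exactly $\gamma$ and $n=\Theta(N)$. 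Hence an $\O(n^{\omega k/3-\varepsilon})$ algorithm would solve $k$-IS on $G$ in time $\O(N^{\omega k/3-\varepsilon})$.

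The main obstacle I expect is exactly the one the block construction addresses. An arbitrary $f$ with $\umin(f)=2$ need not behave like $\NAND$ once its extra positions carry ones. A solution could therefore try to ``cheat'' by switching on auxiliary variables to neutralise constraints, or to pad an independent set smaller than $k$. The care needed is in checking that the $k+2$ blocks always leave a zero block for every relevant pair, including pairs of dummies from the same block and the padding dummies. It also has to hold that no weight-$k$ assignment can place ones in dummies without violating a forcing constraint.
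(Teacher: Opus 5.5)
Your proposal is correct, but the lower bound uses a genuinely different construction from the paper's. The triviality part matches the paper, which also just invokes the greedy low-degree algorithm behind Theorem~\ref{thm:triviality_cutoff}.

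For hardness, the paper uses the dense embedding of Lemma~\ref{lem:dense_embed}:
\begin{enumerate}
\item It passes to the symmetrization $f_{\mathrm{sym}}$ and builds $\textsc{LessThan}^K_2$ on $K=k+h$ variables by placing $f_{\mathrm{sym}}$ on every $h$-subset.
\item It attaches each $\NAND$ pair of the $k$-IS instance to all $(K-2)$-subsets of a set $Y$ of $n^{(\gamma-2)/(h-2)}$ fresh variables. The embedding itself thus produces $\Theta(n^2|Y|^{h-2})=\Theta(n^\gamma)$ constraints.
\item Soundness is argued by an exchange step that moves ones from $Y$ onto original variables.
\end{enumerate}
You instead take a different path:
\begin{enumerate}
\item You fix a single violating weight-$2$ pattern $(p,q)$ and realise $\NAND$ positionally.
\item You use $k+2$ constant-size blocks, so pigeonhole always supplies an all-zero block. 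If two variables are both $1$, only $k-2$ further ones remain, so at least two of the $\ge k$ other blocks are all-zero. Your sentence ``for any two variables'' should be stated for pairs that are both set to $1$, which is the only case you use.
\item You force every auxiliary variable to $0$ by direct pairwise constraints.
\item You obtain the density separately, by padding with arbitrary $f$-constraints on forced-zero variables.
\end{enumerate}
Your route is more elementary and its soundness proof more transparent: no symmetrization, no $\textsc{LessThan}$ gadget, no exchange argument. The paper's route buys uniformity. The same gadget works verbatim for every $c\ge 3$ in Theorem~\ref{thm:3-vio-hardness} and is reused for the sparse embedding in Lemma~\ref{lem:sparse_embed}.

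One small correction concerns your remark that for $\gamma>h$ there is nothing to prove. This is only valid when $f$ has the largest arity in $\Fam$. If some $g\in\Fam$ has arity $h_g>h$, then instances with $h<\gamma\le h_g$ exist, and the theorem still claims hardness for them. The fix is immediate: do the final padding with $g$-constraints on $h_g$-tuples of your $\Theta(N)$ padding dummies. This is harmless, since $\umin(g)\ge\umin(\Fam)=2$ makes $g$ satisfied by the all-zero assignment. The paper's Lemma~\ref{lem:dense_embed} is likewise only stated for $\gamma\le h$ and has the same omission.
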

Constraint families that contain constraint functions of different $\umin$ parameters admit the sparse embedding approach already introduced in the binary case, where a dense hard core instance is embedded into a larger sparse instance.
Consequently, for some constraint families with $\umin(\Fam) = 2$ that additionally contain some $g$ with $\umin(g) > \umin(f)$, we can give a stronger lower bound for specific densities $\gamma$.
\begin{theorem}
\label{thm:sandwich-hardness}
    Let $\Fam$ be a constraint family. 
    If there exist $f,g \in \Fam$ such that 
    \(
       2 \leq \umin(f) < \gamma \le \umin(g),
    \)
    then $\CSP_k^\gamma(\Fam)$ requires time $m^{\frac{k}{\umin(g)} - o(1)}$, unless the $\umin(g)$-uniform $k$-Hyperclique Hypothesis fails. 
\end{theorem}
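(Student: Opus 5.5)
The plan is a \emph{sparse embedding}, as in Lemma~\ref{lemma:3is-lower-bound} and the binary hardness proof. Write $c:=\umin(f)$ and $d:=\umin(g)$, so $2\le c<\gamma\le d$. I would reduce from $k$-independent set in a $d$-uniform hypergraph $H^*$ on $N$ vertices. This problem is equivalent to $k$-hyperclique under complementation and needs time $N^{k-o(1)}$ under the $d$-uniform hyperclique hypothesis. The target is a $\CSP_{k'}^\gamma(\Fam)$ instance with $n$ variables, where we choose
\[
N:=\lceil n^{\gamma/d}\rceil ,
\]
so that encoding the $O(N^d)=O(n^\gamma)$ hyperedges of $H^*$ fits the constraint budget.

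\textbf{Step 1 (gadgets).} I would use the fact, announced at the start of this section, that every $c$-violating family implements $\textsc{LessThan}^h_c$. In particular $\{f\}$ implements $\NAND_c$ and $\{g\}$ implements $\NAND_d$. Each implementation may use $O(1)$ auxiliary variables $Z$ whose values in any low-weight solution are forced, costing a fixed weight shift $s$. Concretely, take a weight-$c$ violating assignment $a$ of $f$. Put the $c$ target variables on the $1$-positions of $a$ and put variables that are forced to $0$ on the $0$-positions. Then any assignment setting all $c$ targets to $1$ violates $f$, while the remaining behaviour is governed by $f$ having no violating assignment of weight below $c$. The same construction with $g$ gives the $d$-ary gadget.

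\textbf{Step 2 (padding and core).} Take a padding set $P$ of $n$ fresh variables and the core set $V^*=V(H^*)$. Make the padding unusable by adding, for every $p\in P$ and every $T\in\binom{P\cup V^*}{c-1}$, the gadget $\NAND_c(\{p\}\cup T)$. This uses $O(n\cdot n^{c-1})=O(n^c)=o(n^\gamma)$ constraints because $c<\gamma$. Since $k\ge c$, any weight-$k$ solution that sets some $p\in P$ to $1$ also has $c-1$ further ones among $P\cup V^*$, and so violates one of these gadgets. For every hyperedge of $H^*$, add the $\NAND_d$ gadget built from $g$, giving $O(N^d)=O(n^\gamma)$ constraints. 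If necessary, pad with harmless repeated constraints so that $m=\Theta(n^\gamma)$. Then weight-$(k+s)$ solutions correspond exactly to $k$-independent sets of $H^*$, and there are $\Theta(n)$ variables.

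\textbf{Step 3 (running time).} An $O(m^{k'/d-\varepsilon})$ algorithm, after absorbing the constant shift $s$ in the statement's $-o(1)$ (as the binary proof does via $\smin$), gives time
\[
n^{\gamma k/d-\gamma\varepsilon}=N^{k-d\varepsilon}
\]
for $H^*$. This refutes the hypothesis.

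\textbf{Main obstacle.} I expect the hard step to be Step 1. The $0$-positions of $a$ must be filled with variables that are provably $0$, yet the obvious candidates are the padding variables, whose zeroness is exactly what the gadgets are supposed to enforce, so the argument is circular. I would first try a dedicated constant-size forced-zero block, built from $\NAND_c$-type constraints among itself and using that the target weight $k$ is fixed. Failing that, I would invoke the paper's $\textsc{LessThan}^h_c$ implementation lemma as a black box and verify that its auxiliary weight is a constant independent of $n$.
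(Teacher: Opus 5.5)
Your architecture is the paper's. Lemma~\ref{lem:d-c-violating-hardness} likewise hides a $\umin(g)$-uniform core on $n^{\gamma/\umin(g)}$ variables among $\Theta(n)$ padding variables, which $f$ forces to $0$ using $O(n^{\umin(f)})=o(n^\gamma)$ constraints (via Lemmas~\ref{lem:dense_embed} and~\ref{lem:sparse_embed}). Your variable and constraint counts are right. The gap is the one you flag in Step~1, and your first proposed fix does not close it. Constraints placed only inside a block are all satisfied whenever the block carries fewer than $c$ ones, because $f$ accepts every input of weight below $c$. So no block is forced to $0$ by its own constraints.

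What closes the gap is Lemma~\ref{lem:atMostConstruction}, and it costs no weight at all.
\begin{itemize}
\item \emph{The gadget.} Place $f_{\mathrm{sym}}$ on every $h$-subset of a set of $K=k+h$ variables. Every weight-$c$ input of $f_{\mathrm{sym}}$ is violating. A solution has weight exactly $k$, so at least $h$ of the $K$ variables are $0$. Hence the gadget is violated whenever its weight lies in $[c,k]$.
\item \emph{Forcing $P$ to $0$.} You never need certified zeros. To force $p\in P$ to $0$, apply the gadget to $\{p\}\cup T$ together with a fixed choice of $K-c$ further variables of $P$. Extra padding can only add ones, so soundness is unaffected. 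In the intended solution the padding is $0$, so completeness holds. Each gadget costs $O(1)$ constraints, so the total stays $O(n^c)$.
\item \emph{The $g$-side.} The circularity only concerns $f$. Once $P$ is forced to $0$, your own Step~1 construction for $g$ is exactly $\NAND_d$: put the hyperedge on the $1$-positions of a weight-$d$ violating assignment and fixed $P$-variables on the $0$-positions, one constraint per hyperedge. This is simpler than the paper's dense embedding with its exchange argument.
\end{itemize}

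Finally, your remark in Step~3 is wrong. A positive weight shift $s$ cannot be absorbed into the $-o(1)$, because it changes the exponent by the constant $s/d$. It would only yield $m^{(k-s)/d-o(1)}$, which is why the binary bound is stated with $k-\smin(\Fam)$. Your argument survives only because the construction above has $s=0$, so $k'=k$.
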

We note that the combination of \autoref{thm:2-vio-hardness} and \autoref{thm:sandwich-hardness} implies the lower bound previously obtained for the non-uniform independent set in arity $h$-hypergraphs, but it also further extends this lower bound to a broader class of Boolean constraint functions.
While our classification is tight for families with $\umin(\Fam) \geq 3$,
in general, the situation gets considerably more delicate. 
Case in point, as demonstrated in the previous section, only restricting ourselves to the families of mixed-arity $\NAND$ are technically challenging. We leave the open question of whether one can extend the algorithmic approaches used for detecting the $k$-independent set in non-uniform hypergraphs to general constraint families that contain functions with different violation parameters. 
A further complication is that some $2$-violating constraints can \emph{hide} $\NAND$ constraints of higher arity. 
For example, consider the constraint function 
$f(x_1,\dots,x_5) = \NAND_2(x_1,x_2) \wedge \NAND_3(x_3,x_4,x_5)$ which is indeed $2$-violating, although there is an efficient reduction from $3$-uniform $k$-hyperclique. Consequently, our lower bounds do not fully capture the landscape of $\umin(\Fam) = 2$ families. 
Finally, $\leq 1$-violating constraint families introduce more intricacies, as they easily allow to force variables to $1$. This enables reductions from e.g. $\CSP_k^{2.5}(\{\NAND_2,\NAND_3\})$ to $\CSP_{k-1}^{2.5}(\{\IMPL,\NAND_3\})$, although at the cost of a slightly smaller parameter $k-1$.
As such, we leave it open to future work to fully classify Boolean constraint families with $\umin(\Fam)\leq 2$.

\subsection{Algorithms for Trivial Settings}
In this section we give an algorithm proving the triviality cutoff, i.e., that every sufficiently sparse instance is trivial to solve.
This algorithm is a non-uniform generalization of the previously seen $k$-independent set algorithm in sparse graphs.

\begin{theorem}
\label{thm:triviality_cutoff_alg}
    Let $\Fam$ be Boolean constraint family with $\gamma < \umin(\Fam)$.
    Then we can solve $\CSP^\gamma_k(\Fam)$ in time $\O(n + m)$.
\end{theorem}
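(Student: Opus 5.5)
We prove the statement by a greedy construction of a weight-$k$ solution, generalizing the constructive Tur\'an argument of \autoref{lem:constructive_turan}. Write $u := \umin(\Fam)$. Since $\gamma < u$, for sufficiently large $n$ we have $m \le n^{u}/C$ for any fixed constant $C = C(k,\Fam)$. The key property we use is the defining property of $u$: every constraint $f \in \Fam$ is satisfied by \emph{every} assignment of weight less than $u$ on its scope. Consequently, an assignment setting a set $X$ of variables to $1$ can only violate a constraint $C$ if at least $u$ of the variables in the scope of $C$ lie in $X$. The case $u = 0$ is vacuous, since then $\gamma < 0$ is impossible. If $u \ge 1$, then $\Fam$ is $0$-valid, so no branching via \autoref{lem:branch_and_bound} is needed.

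We build $X$ greedily, one variable at a time, maintaining a set of \emph{forbidden} variables $B$. Call a set $T$ of variables with $|T| \le u-1$ \emph{dangerous} for the current partial solution $X$ if there is a constraint $C$ whose scope contains $T$ and some set $U \subseteq X \cup T$ with $|U| \ge u$ and $C$ violated once $X\cup T$ is set to $1$. To keep things simple we use a cruder, purely combinatorial invariant. We require that no constraint scope contains $u$ variables of $X$; equivalently, $X$ is an independent set in the $u$-uniform ``shadow'' hypergraph $H_u$ whose edges are all $u$-subsets of constraint scopes. Since each scope has bounded arity $r$, $H_u$ has at most $\binom{r}{u} m = O(m)$ edges. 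By the property above, any $k$-set independent in $H_u$ satisfies all constraints, because each scope contains fewer than $u$ ones.

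It therefore suffices to find a $k$-independent set in a $u$-uniform hypergraph with $O(n^{\gamma})$ edges, $\gamma < u$. This is the greedy argument the introduction alludes to. Iteratively pick a vertex $v$ minimizing its degree with respect to the partial solution, and maintain that the ``link'' structure stays sparse. Concretely, after choosing $X = \{v_1,\dots,v_i\}$, define for each $j < u$ the $(u-j)$-uniform link hypergraph $L_j$: its edges are the sets $e \setminus X$ for edges $e$ with $|e\cap X| = j$. We delete all vertices forming a singleton edge in $L_{u-1}$, and choose the next vertex among the survivors. By averaging, a vertex of degree at most $O(|L_j|/n)$ in each $L_j$ exists among, say, half of the surviving vertices. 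Adding it increases $|L_{j+1}|$ by at most $O(n^{\gamma-j-1})$-type amounts. Inductively we get $|L_j| = O(n^{\gamma - j})$, so $|L_{u-1}| = O(n^{\gamma-u+1}) = o(n)$, and only $o(n)$ vertices are deleted per round. After $k$ rounds $\Theta(n)$ vertices survive, so the process never gets stuck for large $n$. The main obstacle is this bookkeeping: showing that the exponents $\gamma - j$ are preserved while the surviving vertex count stays $\Omega(n)$. We handle it by choosing $v_{i+1}$ via a weighted averaging over all $L_j$ simultaneously, in the spirit of \autoref{lem:constructive_turan}.

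For the running time, the shadow hypergraph and the link structures can be built in $O(m)$, since arity and $k$ are constants. Each of the $k$ rounds scans the current links and surviving vertices in $O(n+m)$, giving $O(n+m)$ in total.
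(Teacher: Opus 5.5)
Your argument is correct, but it is organized differently from the paper's.

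\textbf{Your route.} You first discard the Boolean structure entirely. Using only that every $f\in\Fam$ is satisfied by all assignments of weight $<u=\umin(\Fam)$, you pass to the $u$-uniform shadow hypergraph $H_u$ with $O(m)$ edges. You then prove the purely combinatorial fact that a $u$-uniform hypergraph with $O(n^{\gamma})$ edges, $\gamma<u$, has a greedily findable $k$-independent set. The bookkeeping uses the link hypergraphs $L_j$ with the invariant $|L_j|=O(n^{\gamma-j})$.

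\textbf{The paper's route.} The paper stays inside the CSP. Lemma~\ref{lemma:small_degree_existence} supplies a variable $v$ with $d_f(v)=0$ for every $f$ with $\umin(f)=1$, and $d_f(v)\le|\Fam|m_f/n$ otherwise. Setting $v$ to $1$ replaces each constraint through $v$ by a partially evaluated function $f_i$ with $\umin(f_i)\ge\umin(f)-1$. The family therefore grows, but it stays of constant size and $0$-valid. The invariant is the per-function bound $m_f=O(n^{\umin(f)-\varepsilon_f})$.

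\textbf{How they correspond.} The two bookkeepings match closely. Your $L_j$ are the constraints in which $j$ chosen variables already sit. Deleting the singleton edges of $L_{u-1}$ plays the role of the paper's requirement $d_f(v)=0$ for $1$-violating functions. Your simultaneous averaging over all $L_j$ mirrors Lemma~\ref{lemma:small_degree_existence}.

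\textbf{What each buys.} Your route is simpler. There is no evolving constraint family and no need to re-verify $0$-validity. It also makes transparent that the triviality cutoff is just a hypergraph Tur\'an-type greedy statement. The paper's route gives Theorem~\ref{thm:0valid_sparse_general_algo}, which lets each $f$ have its own density below its own $\umin(f)$, rather than requiring a single global $\gamma<\umin(\Fam)$. Your shadow construction adapts to that setting too: take all $\umin(f)$-subsets of each $f$-scope, giving a non-uniform shadow.

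\textbf{Minor point.} The ``dangerous set'' definition at the start is never used and can be dropped.
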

The approach taken is a generalization of \autoref{lem:constructive_turan}:
the algorithm greedily picks a variable occurring only in ``few'' constraints, adds it to the solution and adjusts the variable's neighborhood in such a way that the instance's sparsity does not meaningfully change. This procedure is carried out $k$ times in total, yielding a valid solution.

Given some family $\Fam$ and an instance of $\CSP_k^\gamma(\Fam)$, we denote by $d_f(v)$ the degree of a variable $v$ with respect to $f \in \Fam$, i.e. be the number of $f$-constraints variable $v$ is contained in.
Using this notion, we now first prove a lemma stating that instances of specific density exponents always contain some variable of low degree. 
For a constraint family $\Fam$ together with density parameters $\gamma_f$ for each $f\in \Fam$, define
\(
\Fam_{\le i} \coloneqq \{f\in \Fam \mid \gamma_f < i\}.
\)
We define $\Fam_{\ge i}, \Fam_{<i}, \Fam_{>i}$ analogously.
\begin{lemma}
    \label{lemma:small_degree_existence}
    Let $\Fam$ be a finite family of Boolean constraint functions with density parameters $\gamma_f$ for each $f\in \Fam$. Consider any instance $\Phi$ of $\CSP_k^\gamma(\Fam)$ with $n$ variables such that for every $f\in \Fam$, the number of $f$-constraints in $\Phi$ is $m_{f} = \Theta(n^{\gamma_f})$.  
    Then if $n$ is sufficiently large, there exists a variable $v$ with $d_g(v) = 0$ for all $g \in \Fam_{< 1}$ and $d_f(v) \le F m_f/n =  O(n^{\gamma_f-1})$ for all $f \in \Fam_{\geq1}$.
\end{lemma}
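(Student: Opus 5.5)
This is a counting argument: Markov's inequality applied to each constraint type separately, followed by a union bound over the finitely many $f \in \Fam$. I will choose the constant $F$ as $2|\Fam|\cdot h_{\max}$, where $h_{\max}$ is the maximum arity of a function in $\Fam$ (up to adjustment).

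\textbf{Constraints with $\gamma_g<1$.} For each $g\in\Fam_{<1}$, the number of variables occurring in some $g$-constraint is at most $h_{\max}\cdot m_g = O(n^{\gamma_g})=o(n)$. Since $\Fam$ is finite, the set $B_0$ of variables touched by any constraint from $\Fam_{<1}$ has size $o(n)$. For sufficiently large $n$, we may therefore assume $|B_0|\le n/2$.

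\textbf{Constraints with $\gamma_f\ge 1$.} For each $f\in\Fam_{\ge1}$, the degrees satisfy $\sum_v d_f(v) \le h_{\max} m_f$. Let $B_f$ be the set of variables with $d_f(v) > F m_f/n$. Then Markov's inequality gives
\[ |B_f| < \frac{h_{\max} m_f}{F m_f/n} = \frac{h_{\max}}{F}\,n \le \frac{n}{2|\Fam|}. \]
If $m_f=0$, this case is vacuous; under $m_f=\Theta(n^{\gamma_f})$ we have $m_f>0$ for large $n$.

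\textbf{Union bound.} The total number of bad variables is at most
\[ |B_0|+\sum_{f\in\Fam_{\ge1}}|B_f| < \frac n2 + |\Fam|\cdot\frac{n}{2|\Fam|} = n. \]
Tightening the first estimate to $|B_0|\le n/4$ and taking $F=4|\Fam|h_{\max}$ gives strict slack. Hence some variable $v$ lies outside all bad sets. Such a $v$ satisfies $d_g(v)=0$ for all $g\in\Fam_{<1}$ and $d_f(v)\le F m_f/n = O(n^{\gamma_f-1})$ for all $f\in\Fam_{\ge1}$.

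The main point needing care is that $F$ must be a constant independent of $n$, namely depending only on $|\Fam|$ and the maximum arity, and that "sufficiently large $n$" is exactly what absorbs the $o(n)$ term from $\Fam_{<1}$. Everything else is routine.
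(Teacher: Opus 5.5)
Your argument is correct and is essentially the paper's counting argument. The paper argues by contradiction: it assigns each bad variable a witnessing function and uses pigeonhole to find one $f^*$ with at least $n/|\Fam|$ bad variables, which is just the contrapositive of your Markov-plus-union-bound. Your explicit arity factor $h_{\max}$ in $F$ is in fact needed, because the paper's write-up treats $\sum_v d_{f^*}(v)$ as a count of constraints and so implicitly takes $F=|\Fam|$. That choice fails for, e.g., a regular graph under $\{\NAND\}$, where every degree is $2m/n > |\Fam|\,m/n$; since only the bound $O(n^{\gamma_f-1})$ is used downstream, the slip is harmless.
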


% \timo{For sure plenty of stuff can be improved regarding form, structure and style. I personally believe however that this proof is technically correct. Please let me know if you think otherwise.}
% \marvin{conditions like $d_f(v) \in \omega(n)$ are formally questionable. Isn't this just $d_f(v) \le m_f/n$?}
% \marvin{Thinking more about this, can't we prove $d_f(v) \le |\Fam| m /n $? }
\begin{proof}
    Assume for contradiction that for every variable $v$, there is some $f \in \Fam_{\geq 1}$ such that $d_f(v) > F m_f / n$ or there is some $f \in \Fam_{<1}$ such that $d_f(v) \geq 1$. For any $v$, pick an arbitrary such constraint $f_v$. By averaging, there exists $f^*\in \Fam$ that is picked at least $n/F$ times, i.e., $|\{v \mid f_v = f^*\}| \ge n/F$.

    If $f^*\in \Fam_{\geq 1}$, each variable $v$ with $f_v= f^*$ contributes more than $Fm_f/n$ constraints, thus the total number of $f^*$-constraints is larger than $(n/F)(Fm_{f^*}/n) = m_{f^*}$, a contradiction. If $f^* \in \Fam_{< 1}$, then each variable $v$ with $f_v=f^*$ contributes at least one constraint, so that the number of $f^*$-constraints is at least $n/F = \Theta(n)$, which is a contradiction to $f^*\in \Fam_{<1}$.

\end{proof}

With this argument we can now also generalize the algorithm to arbitrary families with 0-valid constraint functions of higher arity.

\begin{theorem}
    \label{thm:0valid_sparse_general_algo}
    Let $\Fam$ be a finite family of non-trivial constantly many 0-valid Boolean functions and denote by $q_f$ the arity of $f \in \Fam$.
    Let $\Phi$ be an instance of $\CSP_k^\gamma(\Fam)$ on $n$ variables and $m = \Theta(n^\gamma)$ constraints with constraint counts $m_f$ for each $f \in \Fam$.

    Then there exists some $n_0$ such that for all $n \geq n_0$ the following holds:
    If for all $f \in \Fam$ there is some $\varepsilon_f > 0$ such that $m_f \in \O(n^{\umin(f) - \varepsilon_f})$, $\Phi$ is a YES-instance and we can find a solution in time $\O(n+m)$.
\end{theorem}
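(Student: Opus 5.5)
We build the solution greedily in $k$ rounds, generalizing \autoref{lem:constructive_turan}. The invariant is that after $i$ rounds we have a current instance $\Phi_i$ on $n_i = \Omega(n)$ variables, with $m_f^{(i)} = O(n^{\umin(f)-\varepsilon_f})$ constraints for each $f$ (now allowing $\Fam$ to be replaced by a finite family $\Fam_i$ of restrictions of functions of $\Fam$). We also keep a set $X_i$ of $i$ variables fixed to $1$, such that every weight-$(k-i)$ solution of $\Phi_i$ on the remaining variables extends, together with $X_i$ and zeros elsewhere, to a solution of $\Phi$.

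In round $i$ we apply \autoref{lemma:small_degree_existence} to $\Phi_i$, using density parameters $\gamma_f=\umin(f)-\varepsilon_f$. This yields a variable $v$ with $d_g(v)=0$ for all $g$ of density $<1$ and $d_f(v)=O(n^{\umin(f)-\varepsilon_f-1})$ otherwise. We set $v$ to $1$ and substitute $v=1$ into every constraint containing it. Each such constraint $f(\mathbf x)$ becomes a restricted function $f'$ of arity $q_f-1$, and $\umin(f')\ge \umin(f)-1$.

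The key case split is on $\umin(f)$. If $\umin(f)=1$, then $f$ has density $<1$, so $v$ lies in no such constraint. If $\umin(f')=0$, which happens only if $\umin(f)=1$ (excluded) or $f'$ is violated by the all-zero assignment, the restricted constraint is not $0$-valid. To avoid this, we instead delete every variable occurring in a constraint with $v$ and drop those constraints. Every remaining constraint no longer contains $v$, so it is unchanged and still $0$-valid. This is the same move as removing $N(v)$ in \autoref{lem:constructive_turan}.

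The number of deleted variables is at most $\sum_f q_f d_f(v) = O(\sum_f n^{\umin(f)-1-\varepsilon_f})$. For $\umin(f)\ge 2$ this is only $o(n)$ when $\umin(f)=2$. So simple deletion fails for $\umin(f)\ge3$, and this is the main obstacle. For $\umin(f)\ge 3$ we instead keep the restricted constraint $f'$ on the other $q_f-1$ variables. It is $0$-valid and has $\umin(f')\ge\umin(f)-1$. The number of new $f'$-constraints is $d_f(v)=O(n^{\umin(f)-1-\varepsilon_f})=O(n^{\umin(f')-\varepsilon_f})$, so the sparsity invariant persists for $f'$. Deletion is only needed when $\umin(f')$ would drop to $\le 1$, i.e.\ for constraints with $\umin(f)\le 2$. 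In that case $d_f(v)=O(n^{1-\varepsilon_f})$, so only $o(n)$ variables are deleted. Concretely, the rule is: if $\umin(f')\ge 2$, substitute; otherwise delete the other variables of that constraint.

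Because $\umin(f')$ can drop by $1$ per round, we refine the rule as follows. A restricted constraint with $\umin=1$ and density $n^{1-\varepsilon}$ is handled by deleting all of its variables; it is sparse, so this costs $o(n)$. The families $\Fam_i$ remain finite, of size bounded in terms of $k$ and $\Fam$, and $n_i\ge n-o(n)$. Hence \autoref{lemma:small_degree_existence} applies in every round for $n\ge n_0$.

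For correctness, $0$-validity of all constraints of $\Phi_i$ means that setting all remaining variables to $0$ satisfies them. The only constraints involving deleted variables contained some already chosen $1$-variables, and deleting (zeroing) those variables satisfies them, provided we delete all non-chosen variables of any constraint whose restriction is no longer guaranteed to be satisfied. After $k$ rounds, $X_k$ with all else $0$ is therefore a weight-$k$ solution. Each round takes linear time after computing degrees, giving $O(n+m)$ overall for constant $k$.
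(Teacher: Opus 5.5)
Your argument is correct and has the same skeleton as the paper's proof. You pick a low-degree variable via \autoref{lemma:small_degree_existence} with $\gamma_f=\umin(f)-\varepsilon_f$, set it to $1$, and replace each incident constraint by its restriction $f'$. You then maintain $m_{f'}=O(n^{\umin(f')-\varepsilon})$ using $\umin(f')\ge\umin(f)-1$ and $d_f(v)=O(n^{\umin(f)-1-\varepsilon_f})$.

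The one real difference is your deletion rule for restrictions with $\umin(f')=1$, and the paper shows it is unnecessary. The paper always substitutes. A new function with $\umin(f')=1$ can only come from some $f$ with $\umin(f)=2$, so it receives only $d_f(v)=O(n^{1-\varepsilon_f})$ constraints and has density below $1$. In the next round, \autoref{lemma:small_degree_existence} therefore returns a variable that lies in none of these constraints, just as for the original $\umin=1$ functions. This buys a cleaner invariant: the variable count drops by exactly one per round, and there is no $n_i\ge n-o(n)$ bookkeeping.

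Your variant also works, but the justification in your last paragraph is inaccurate. A deleted variable $u$ generally lies in many constraints that contain no chosen variable, so it is false that only constraints with chosen variables involve deleted ones. The argument goes through because deleted variables are treated as fixed to $0$ and all other constraints containing them are kept. Equivalently, you plug $0$ into them, which preserves $0$-validity and cannot decrease $\umin$; dropping those constraints would not be sound. Also, your opening case ``$\umin(f')=0$'' never arises, since $v$ lies only in constraints with $\umin\ge 2$.
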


\begin{proof}
    % Let $\Phi$ be an instance of $\CSP_k^\gamma(\Fam)$ on $n$ variables and with constraint counts $m_f$ for each $f \in \Fam$ with $m_f \in O(n^{\umin(f) -\varepsilon_f})$ for some $\varepsilon_f > 0$. 
    We note that $\umin(f) \geq 1$ for all $f \in \Fam$ as $\Fam$ is 0-valid.
    % $m$ denotes the total constraint count with $\gamma$ such that $m \in \Theta(n^\gamma)$. 
    We partition the family $\Fam$ into two parts: 
    \begin{align*}
    \Fam_{=1} &\coloneqq \{f \in \Fam \mid \umin(f) = 1\}\\
    \Fam_{\geq 2} &\coloneqq \{f\in \Fam \mid \umin(f) \geq 2 \} = \Fam \setminus \Fam_{=1}.
    \end{align*} Importantly, functions in $\Fam_{\geq 2}$ are satisfied by any assignment of weight $\leq 1$ whereas functions in $\Fam_{=1}$ might be violated by some weight-1-assignment.

    \textbf{Algorithm Description}
    Our algorithm works recursively: For $k = 0$ we can assign every variable to false and obtain a satisfying assignment, as every $f \in \Fam$ is 0-valid.
    
    For $k \geq 1$, we apply \cref{lemma:small_degree_existence} to find some variable $v$ with $d_f(v) \le|\Fam|m_f/ n \in O(n^{\umin(f) - 1 -\varepsilon_f})$ for all $f \in \Fam_{\geq 2}$ and $d_f(v) = 0$ for all $f \in F_{=1}$. 
    We now set this variable $v$ to true adjusting every constraint containing $v$. Let $c$ be an $f$-constraint $c = f(x_1, \dots, x_{i-1}, v, x_{i+1}, \dots,x_{q_f})$ for some $f \in \Fam_{\geq 2}$. We now define a new function $f_i$ on $q_f - 1$ variables: 
    \begin{equation*}
        f_i(v_1, \dots, v_{q_f -1}) \coloneqq f(v_1, \dots, v_{i-1}, 1, v_i, \dots, v_{q_f-1})
    \end{equation*}
    Adding $f_i$ to the family of constraint functions, we replace the $f$-constraint $c$ with the $f_i$-constraint $c' \coloneqq f_i(x_1, \dots, x_{i-1}, x_{i+1}, \dots, x_{q_f})$.
    If $f_i$ turns out to be trivially satisfied, we can drop it. This procedure is performed for every constraint containing $v$ and we obtain a new family $\Fam' \supseteq \Fam$. We can now remove the isolated $v$ from the initial instance and recurse on the newly obtained instance $\Phi'$ of $\CSP_{k-1}^\gamma({\Fam'})$.

    \textbf{Runtime}
    In each recursion step we can find a low-degree variable in time $\O(n+m)$. 
    Each step introduces at most $2^{q_f} \in \O(1)$ new constraint functions for every $f \in \Fam$, 
    thus $\Fam'$ is of size $|\Fam'| \le |\Fam| \cdot 2^{q} \in \Theta(|\Fam|) = \Theta(1)$ where $q \coloneqq \max_{f \in \Fam} q_f$. 
    Replacing incident constraints takes at most $O(m)$ time. 
    As we recurse only $k$ times, the algorithm therefore runs in time $\O(n+m)$ which is essentially optimal as the input is of size $\Theta(n+m)$.
    
    \textbf{Correctness}
    To argue the correctness of our algorithm we need to show that variables added to our solution do not violate any constraint. 
    Further we need show that sparsity is maintained for each constraint function and that $\Fam'$ is  $0$-valid in every step.

    % Let $\Phi$ be the instance of $\CSP_k^\gamma(\Fam)$ on $n$ variables and $m$ constraints with constraint counts $m_f$ for all $f \in \Fam$ and $\gamma$ and $\gamma_f$, respectively. 
    Let $v$ be the low degree variable chosen to be set to true. Denote by $\Phi'$ the instance of $\CSP_{k-1}^{\gamma'}(\Fam')$ obtained by the algorithm on $n'$ variables and $m',m_f', \gamma', \gamma_f'$, respectively.
    Setting $v$ to true does not violate any constraint, as by choice of $v$ it is only contained in constraints $f \in\Fam_{\ge 2}$ with $\umin(f)\geq 2$. Additionally, this ensures that the new function $f_i$ has at least $\umin(f_i) \geq 1$, thus $\Fam'$ is $0$-valid.

    It is immediate that $n' = n - 1$ and $m' \le m$. 
    It is left to prove that for all $f \in \Fam'$ it holds that $m'_f \in \O\left( n'^{\umin(f) - \varepsilon_f} \right) = \O\left( n^{\umin(f) - \varepsilon'_f} \right)$ for some $\varepsilon'_f > 0$.

    Consider some $f_i \in \Fam'$. There are two cases to consider.
    If $f_i \in \Fam$, but we did not replace any constraints by $f_i$-constraints, there is nothing to show.
    Thus, assume the constraint function $f_i$ replaced some $f \in \Fam$ during the procedure. 
    Recall that we replaced only $f$-constraints containing $v$ with $d_{f}(v) \leq |\Fam|m_{f}/n$.
    Thus we introduce at most 
    \[
      d_f(v) \leq |\Fam| m_{f}/n = \O\left( n^{\umin(f) - 1 - \varepsilon_{f}}\right) = \O\left(n^{\umin(f_i) - \varepsilon_{f}} \right)
    \]
    many new constraints for some $\varepsilon_{f} > 0$ as $\umin(f_i) \geq \umin(f) - 1$.
    If $f_i \not \in \Fam$, this suffices, as $m'_{f_i} = d_f(v)$.
    If $f_i \in \Fam$, then $m'_{f_i} = m_{f_i} + d_f(v) = \O\left(n^{\umin({f_i}) - \varepsilon'_{f_i}}\right)$ for $0 < \varepsilon'_{f_i} \leq \max(\varepsilon_{f},\varepsilon_{f_i})$.
    Thus the sparsity properties are maintained such that for every $f_i \in \Fam'$ we have that $m'_{f_i} = \O\left(n^{\gamma_{f_i} - \varepsilon'_{f_i}}\right)$ for some $\varepsilon'_{f_i} > 0$.
\end{proof}

This concludes the generalization of the motivating observation, that $k$-Independent Set instances on $O(n^{2-\varepsilon})$ edges always contain a solution for sufficiently large $n$. 

Finally, we show that for any constraint function $f$, $\CSP_k^{\umin(f)}(f)$ is non-trivial,
i.e. it is not guaranteed that we always have a YES- or NO-instance for any given instance.
\begin{theorem}
    \label{lemma:non-triviality-proof}
    Let $f$ be any constraint function. Then $\satg{k}{f}{\umin(f)}$ is non-trivial.
\end{theorem}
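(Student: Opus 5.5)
We exhibit, for infinitely many $n$, a NO-instance of $\satg{k}{f}{\umin(f)}$ with $n$ variables and $\Theta(n^{\umin(f)})$ constraints. Let $u\coloneqq\umin(f)$, let $q$ be the arity of $f$, and fix a violating assignment $a\in\{0,1\}^q$ with $\|a\|_1=u$. Write $P=\{i\mid a_i=1\}$ (so $|P|=u$) and $Z=\{i\mid a_i=0\}$ (so $|Z|=q-u$). The construction is to take variables $x_1,\dots,x_n$ together with a set $W$ of $q-u$ fresh variables. For every $u$-subset $T$ of $\{x_1,\dots,x_n\}$, we add one $f$-constraint that places the variables of $T$ (in some fixed order) on the positions $P$ and the variables of $W$ on the positions $Z$. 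All $q$ variables in such a constraint are pairwise distinct. This gives $\binom{n}{u}=\Theta(n^{u})$ constraints on $n+q-u=\Theta(n)$ variables, so the density is exactly $u$.

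The main step is to show that no weight-$k$ assignment satisfies the instance. We split into cases according to how many ones the assignment puts on $W$. If the assignment sets all of $W$ to $0$, then its $k$ ones lie among the $x$'s. For $k\ge u$, pick any $u$ of these ones as $T$; the corresponding constraint then evaluates $f$ at exactly $a$, so it is violated. The difficulty is that an assignment may instead set some variables of $W$ to $1$, and this could satisfy every constraint. To rule this out, I would replace the single set $W$ by $k+1$ disjoint copies $W_1,\dots,W_{k+1}$ and add the full family of constraints for each copy. This only multiplies the number of constraints and variables by a constant. A weight-$k$ assignment can touch at most $k$ of the copies, so some copy $W_j$ is entirely zero, and the argument above applied to copy $j$ gives a violated constraint.

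Two degenerate cases remain. If $u=0$, then $f$ is $0$-invalid: the single constraint $f(x_1,\dots,x_q)$ already violates any assignment that is zero on those variables. To force a NO-instance with $\Theta(1)=\Theta(n^0)$ constraints, I would take $k+1$ disjoint copies of $f$ on fresh variables; some copy is all-zero and is therefore violated. If $k<u$, I expect to handle it by adjusting the construction, for instance by padding $P$-positions with forced patterns. The cleanest route, however, is to note that the paper treats $k$ as a sufficiently large constant and to assume $k\ge u$ (for a fixed family, $u\le q$ is a constant). The main obstacle is ensuring that ones placed on the auxiliary positions cannot rescue the assignment, and the $k+1$ disjoint copies resolve this. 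Once that is done, it remains only to check that the instances are NO-instances for every sufficiently large $n$, which yields infinitely many NO-instances.
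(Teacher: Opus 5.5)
\textbf{There is a genuine gap.} It sits exactly at the step you flagged as the main obstacle: the $k+1$ disjoint copies do not stop the auxiliary variables from absorbing the ones.

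Your sentence ``some copy $W_j$ is entirely zero, and the argument above applied to copy $j$ gives a violated constraint'' tacitly assumes that at least $u$ of the $k$ ones still lie among the $x$'s. They may instead all sit in the other copies. Here are two concrete failures.
\begin{itemize}
\item If $2\le u<q$, set exactly one variable in each of $W_1,\dots,W_k$ to $1$ and everything else to $0$. This assignment has weight $k$. Every constraint sees zeros on its $P$-positions and at most one $1$ on its $Z$-positions, so it evaluates $f$ at an assignment of weight less than $\umin(f)$ and is satisfied. Hence your instance is a YES-instance for every $n$ whenever $2\le\umin(f)<q$.
\item For $u=1$ the construction fails in general. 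Take $f=\IMPL$ with $a=(1,0)$. Your constraints are $x_s\rightarrow w_j$, and putting all $k$ ones on the $w_j$ satisfies all of them.
\end{itemize}
Your construction is sound only for $u=0$, which is exactly the paper's argument, and for $u=q$, i.e.\ $f=\NAND_q$, where there are no auxiliary variables.

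\textbf{The missing idea.} Auxiliary variables must also occupy positions where a $1$ is penalized, so that a gadget forces them to $0$. Placing them only on the zero-positions $Z$ can never achieve this.

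The paper obtains this from the symmetrization $f_{\text{sym}}$ imposed on all $h$-subsets of $k+h$ variables. This gives the $\textsc{LessThan}^K_c$ gadget of \autoref{lem:atMostConstruction}, which a weight-$k$ assignment can satisfy only with fewer than $c$ ones.
\begin{itemize}
\item For $\umin(f)=1$, the paper covers every variable by such gadgets, which forces everything to $0$.
\item For $\umin(f)\ge 2$, it appeals to its hardness reduction, i.e.\ the dense embedding (\autoref{lem:dense_embed}) of a $\NAND_{\umin(f)}$ instance, built from the same gadget.
\end{itemize}

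\textbf{A direct repair within your framework.} Take one auxiliary set $Y$ with $|Y|=k+q$. For every $u$-subset $T$ of the $x$'s, impose $f$ under every ordering on every $q$-subset of $T\cup Y$. This gives $\Theta(n^u)$ constraints on $n+k+q$ variables. Now assume $k\ge u$ and consider an assignment with $d$ ones in $Y$.
\begin{itemize}
\item Choose $T$ to contain $\min(u,k-d)$ ones. Then $T\cup Y$ has at least $u$ ones and at least $q$ zeros.
\item Pick a $q$-subset with exactly $u$ ones and order it with the ones on $P$. That constraint evaluates $f$ at $a$ and is violated.
\end{itemize}

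\textbf{On the case $k<u$.} You cannot remove this restriction by ``padding''. For $u\ge 1$ and $k<u$, every weight-$k$ assignment satisfies every constraint, so the statement is false there. It needs $k\ge\umin(f)$ as a hypothesis, which the paper also uses implicitly.
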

\begin{proof}
    We divide the proof into three cases based on the constant parameter $\umin(f)$ that is either $0, 1$ or $\geq 2$. Let $q \geq 1$ be the arity of $f$. 

    \subparagraph*{Case $\umin(f) = 0$}
    In this case, there is a violating assignment of weight 0 for $f$, i.e. $f$ is 0-invalid. Let $n \in \mathbb{N}$ with $n \geq (k+1)\cdot q$ and we now construct a NO-instance $\Phi$ on $n$ variables and $O(n^0) = O(1)$ constraints.
    Let $V = \{v_1, \dots, v_n\}$ be the set of variables of $\phi$. Then let $V_1, V_2, \dots V_{k+1} \subset V$ be some arbitrary disjoint subsets of $V$. Such sets exists as we chose $n \geq (k+1)\cdot q$. We now place the constraints $f(V_i)$ for all $V_i$. 
    This resulting instance uses precisely $k+1$ $f$-constraints and must be a NO-instance, as some variable in each $V_i$ must be assigned to true since $f$ is 0-invalid. But since there are $k+1$ such disjoint sets, at least $k+1$ variables must be assigned to true and no size-$k$ satisfying assignment exists.

    \subparagraph*{Case $\umin(f) = 1$}
    Using \autoref{lem:atMostConstruction}, we can construct the $\textsc{LessThan}^K_1$ gadget. By adding a total of at most $O(n)$ many gadgets, we can ensure that every variable appears in at least one such constraint.
    As such, all variables must be set to false by the property of $\textsc{LessThan}^K_1$. This is therefore a NO-instance, proving this case.
    
    \subparagraph*{Case $\umin(f) \geq 2$}
    This follows directly from \autoref{thm:2-vio-hardness}, as we reduce from the $k$-independent set problem, which indeed is a non-trivial problem for $\gamma = 2$, proving our claim.
\end{proof}
\subsection{Hardness}
First, we show how to employ $c$-violating constraint functions to create a dense or sparse \emph{embedding} of a given $\CSP_k^\gamma$ instance, which is the crucial idea behind our lower bounds. Then we will give conditional lower bounds for $\geq 2$-violating constraint families.

\subsubsection{Embeddings}
We introduce three constructions in this part. 
The first one allows us to construct the constraint $\textsc{LessThan}^K_{c}$ on $K$ variables, that is satisfied if and only if less than $c$ of those $K$ variables are set to $1$, from any $c$-violating family.
Building on this, we give our \emph{embedding constructions}, 
allowing to embed an instance of density $\gamma$ into an instance of increased or decreased density $\gamma'$.
Our constructions are subject to two main considerations: First, they need to be sufficiently sparse, as to not blow up the density. Secondly, we do not allow to replicate variables in a single constraint, due to the sparsity setting. 
We avoid the latter by introducing auxiliary variables, which we then enforce to be set to false for any satisfying assignment.
\begin{lemma}
    \label{lem:atMostConstruction}
     Let $f$ be $0$-valid and $c$-violating of arity $h$ with $c \geq 1$. 
     Then we can construct a gadget formula $\textsc{LessThan}^K_{c}$ for $K = k + h$ in $\CSP_k^\gamma(f)$.
\end{lemma}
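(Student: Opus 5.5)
The plan is a direct ``all placements'' construction. It uses only two facts: the minimum-weight violating assignment of $f$, and the observation that a weight-$k$ solution leaves enough zeros among $K=k+h$ variables.

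Fix an assignment $a\in\{0,1\}^h$ with $f(a)=0$ and $\|a\|_1=\umin(f)=c$. Let $P\subseteq[h]$ be the $c$ positions where $a$ is $1$ and $Z=[h]\setminus P$ the $h-c$ positions where $a$ is $0$. Given $K=k+h$ variables $x_1,\dots,x_K$, I define $\textsc{LessThan}^K_c(x_1,\dots,x_K)$ as the conjunction of the constraints $f(x_{\pi(1)},\dots,x_{\pi(h)})$ over all injective maps $\pi:[h]\to[K]$. Injectivity guarantees that every constraint is applied to pairwise distinct variables, as the problem definition requires. The gadget uses $K^h=O(1)$ constraints, since $k$ and $h$ are constants, and no auxiliary variables. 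So it does not affect the density when it is used $O(n)$ times later, for instance in \autoref{lemma:non-triviality-proof}.

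\emph{Soundness.} Suppose fewer than $c$ of the $x_i$ are set to $1$. Then each constraint $f(x_{\pi(1)},\dots,x_{\pi(h)})$ sees an input of weight less than $c=\umin(f)$. By definition of $\umin$, $f$ is satisfied on every such input (this includes weight $0$, consistent with $f$ being $0$-valid). Hence the gadget is satisfied.

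\emph{Completeness.} Consider any assignment of total weight at most $k$ under which at least $c$ of the $x_i$ are $1$. Choose $c$ of these true variables and map $P$ onto them. The $K$ gadget variables contain at most $k$ ones, so at least $K-k=h\ge h-c$ of them are $0$. Map $Z$ injectively onto $h-c$ of these zero variables. The resulting $\pi$ is injective, and the corresponding constraint evaluates to $f(a)=0$, so the gadget is violated. Thus, among all assignments of weight at most $k$, which includes every candidate solution of $\CSP_k^\gamma(f)$, the gadget is satisfied exactly when fewer than $c$ of its variables are true.

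The main obstacle is making the gadget hold without variable repetition. A violating pattern needs $h-c$ zero positions filled with variables that are certainly $0$. This is exactly why $K$ is chosen as $k+h$: the weight bound $k$ then forces enough zeros among the gadget variables, so no auxiliary always-false variables are needed. I would state explicitly that the gadget's semantics are claimed only relative to assignments of weight at most $k$, which is the only regime relevant to $\CSP_k^\gamma$.
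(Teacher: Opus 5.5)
Your proof is correct and is essentially the paper's. Placing $f$ on every injective $h$-tuple of the $K=k+h$ variables is the same gadget as the paper's placement of $f_{\mathrm{sym}}$ on every $h$-subset of $Y$. Your completeness step (at least $K-k=h\ge h-c$ zeros are available to complete $c$ ones into the minimum violating pattern) is the paper's counting argument, stated more cleanly; it also explicitly covers weight exactly $c$, which the paper's ``$c<c'$'' leaves out.
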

\begin{proof}
    % Note that for $h = c$ the function $f$ is equivalent to $\NAND_h \equiv \textsc{LessThan}^h_h$, which we can pad to obtain the desired construction.
    
    Let $f^* = f_\text{sym}$. Note that $f^*$ is still $c$-violating. 
    We construct the gadget $\textsc{LessThan}^K_c$ on $K = k+h$ many new variables $Y = \{y_1,\dots,y_K\}$.
    For every $\mathbf{x} \in \binom{Y}{h}$ add the constraint $f^*(\mathbf{x})$.
    
    By $c$-violation of $f^*$, any assignment of weight less than $c$ is satisfying.
    We claim that every satisfying assignment with weight at least $c$, 
    is of weight at least $k+1$:
    Assuming there was a satisfying assignment $a$ of weight $c'$ where $c < c'  \leq k$, 
    consider tuple $C' = (y_{i_1},\dots,y_{i_{c'}})$ such that $a(y_{i_j}) = 1$ 
    and let $C = (y_{i_1},\dots,y_{i_{c}})$.
    By a counting argument, $K-(h-c)$ distinct variables are needed to extend $C$ to an $h$-tuple,
    but there are only $c' - c$ variables set to $1$ under $a$.
    As $K-(h-c) = k + c \geq k + 1 > c'$ , there exists at least one $h$-tuple $\mathbf{y}$ 
    such that exactly $c$ variables are set to $1$ under $a$, leading to a contradiction as we included a constraint $f^*$ on every $h$-tuple.
\end{proof}
\begin{lemma}[Dense Embedding]
\label{lem:dense_embed}
    Let $f$ be $c$-violating and of arity $h$ with $c \ge 2$ and $c \leq \gamma \leq h$.
    Given an instance $\Phi$ of $\CSP_k^{c}(\NAND_{c})$ on $n$ variables,
    we can construct an equivalent instance $\Phi'$ of $\CSP_k^{\gamma}(f)$ on $\Theta(n)$ variables.
\end{lemma}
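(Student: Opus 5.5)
The plan is to simulate each $\NAND_c$ constraint of $\Phi$ by a single $f$-constraint. The $c$ original variables go on the positions of a fixed weight-$c$ violating assignment of $f$, and the other $h-c$ positions are filled with \emph{padding} variables. The padding variables are then forced to $0$ in every weight-$k$ solution.

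\textbf{Gadget.} Since $\umin(f)=c$, there is an assignment $a^*$ of weight exactly $c$ with $f(a^*)=0$. Let $I\subseteq[h]$ be its $c$ one-positions. For a $c$-set $S$ and an $(h-c)$-tuple $Q$ of padding variables, write $f(S;Q)$ for the $f$-constraint with $S$ on $I$ and $Q$ on $[h]\setminus I$. Two facts hold when $Q$ is all-zero:
\begin{itemize}
\item if all of $S$ is $1$, then $f(S;Q)$ is violated, since the assignment equals $a^*$;
\item if $S$ has weight $<c$, then $f(S;Q)$ is satisfied, because every assignment of weight $<c$ satisfies $f$.
\end{itemize}
So $f(S;Q)$ with zero padding behaves exactly like $\NAND_c(S)$.

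\textbf{Construction.} Take the variable set $V\cup P$, where $V$ is the variable set of $\Phi$ and $P$ is a fresh set with $|P|=\Theta(n)$. Call a $c$-set \emph{forbidden} if it is a hyperedge of $\Phi$, or if it is any $c$-subset of $V\cup P$ that contains a vertex of $P$. For every forbidden $S$ we add the constraints $f(S;Q_1),\dots,f(S;Q_{k+1})$, where $Q_1,\dots,Q_{k+1}$ are pairwise disjoint $(h-c)$-tuples from $P\setminus S$.

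Pigeonhole argument: any weight-$k$ assignment leaves at least one $Q_j$ entirely zero. By the first fact, $S$ is then not all-ones. Hence every weight-$k$ solution avoids every forbidden set.

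Because $k\ge c$, a solution containing a $P$-vertex $p$ would contain $p$ together with $c-1$ further ones, and these form a forbidden set. Therefore every solution sets all of $P$ to $0$, and its support is a size-$k$ independent set of $\Phi$.

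Conversely, take a $k$-independent set $X$ of $\Phi$ and set $X$ to $1$ and everything else (in particular all of $P$) to $0$. Every added constraint then has zero padding. Its $S$-part has weight $<c$: hyperedges of $\Phi$ are not contained in $X$, and forbidden sets meeting $P$ contain a $0$. By the second fact, all constraints are satisfied. The transformation preserves $k$ and uses $\Theta(n)$ variables.

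\textbf{Density bookkeeping.} This is the step I expect to need the most care. So far there are $O(k\,n^c)=O(n^\gamma)$ constraints. To reach exactly $\Theta(n^\gamma)$, we add further constraints $f(S;Q)$ with more padding tuples $Q$, attached to the forbidden sets that meet $P$ (so $\Phi$'s hyperedges are untouched). Correctness is unaffected, since the argument above only used that $P$ is zero in every candidate solution and that the extra constraints are satisfied under zero padding.

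The number of available pairs $(S,Q)$ is $\Theta(n^c\cdot n^{h-c})=\Theta(n^h)\ge n^\gamma$ because $\gamma\le h$. We can therefore pick any $\Theta(n^\gamma)$ of them, all with pairwise distinct variables in each constraint.

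The remaining points to verify are routine:
\begin{itemize}
\item the disjoint tuples $Q_j\subseteq P\setminus S$ exist, since $|P|\ge (k+1)h+c$;
\item the case $h=c$ needs no padding, and then $f$ is exactly $\NAND_c$ on $S$.
\end{itemize}
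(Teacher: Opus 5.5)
Your proof is correct, but it takes a different route from the paper. The paper does not work with a single violating pattern. It first builds the gadget $\textsc{LessThan}^K_c$: the symmetrization $f_{\mathrm{sym}}$ placed on all $h$-subsets of $K=k+h$ variables, which is satisfied exactly by weights $<c$ or $>k$. It then adds $\textsc{LessThan}^K_c(\mathbf{x},\mathbf{y})$ for every hyperedge $\mathbf{x}$ of $\Phi$ and every $(K-c)$-subset $\mathbf{y}$ of a fresh set $Y$ of $\ell=n^{(\gamma-c)/(h-c)}$ variables, so the density $n^c\ell^{h-c}=n^\gamma$ is tuned through $|Y|$. In that construction $Y$ is not forced to $0$: a solution may place fewer than $c$ ones on $Y$. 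The paper handles this with an exchange argument that moves these ones onto unused variables of $V$.

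You instead place the hyperedge on the one-positions of a fixed minimum-weight violating assignment $a^*$. You force the padding to zero outright by forbidding every $c$-set that meets $P$ (affordable exactly because $c\le\gamma$), together with $k+1$ disjoint padding copies and pigeonhole. You then reach $\Theta(n^\gamma)$ with harmless filler constraints.

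What your version buys:
\begin{itemize}
\item It avoids both the symmetrization and the exchange step.
\item It gives a parsimonious correspondence: the weight-$k$ solutions of $\Phi'$ are precisely the $k$-independent sets of $\Phi$ extended by $P=0$.
\item It treats the endpoint $\gamma=c$ uniformly. The paper's literal choice gives $\ell=1<K-c$ there, so $\binom{Y}{K-c}$ is empty and $Y$ must be enlarged to constant size.
\end{itemize}
What the paper's version buys: it reuses the same $\textsc{LessThan}^K_c$ tool that drives its sparse embedding and non-triviality arguments, and its density comes from the construction itself rather than from filler constraints.

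One small addition: your step forcing $P=0$ uses $k\ge c$. For $k<c$ both instances are trivially YES, since every weight-$k$ assignment satisfies every $\NAND_c$ and every $f$-constraint. A sentence covering that case completes your argument.
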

% \timo{I fixed it for general $\delta$, but im not entirely sure if this is correct and the technicality of this just sucks in general. We could make it easier if we let $\delta = c, h > c$, which is pretty much the only thing we actually need. If we let this, the first case disappears.}
% \julian{Considering the case for which this would be useful is not really important, let's just roll it back.}
\begin{proof}
    Note that for $c = h$, the constraint $f$ is exactly $\NAND_c$, and we there is nothing to show.
    Otherwise let $c < h$ and by \autoref{lem:atMostConstruction} can construct $\textsc{LessThan}^K_{c}$.
    Let $Y  = \{y_1,\dots,y_\ell\}$ be a set of new variables for $\ell  = n^{(\gamma-c)/(h-c)} \leq n$.
    We construct $\Phi'$ in the following manner:

    For every constraint $\NAND_{c}(\mathbf{x}) \in \Phi$ and all $\mathbf{y} \in \binom{Y}{K-c}$ we introduce a constraint $\textsc{LessThan}^K_{c}(\mathbf{x},\mathbf{y})$. Since we constructed $\textsc{LessThan}^K_c$ out of multiple $f$-constraints the number of created constraints is $\O\big(n^c \cdot |Y|^{h-c}\big)$.
    We obtain $\Theta\big(n^c \cdot n^{((\gamma-c)/(h-c))\cdot(h-c)}\big) = \Theta\big(n^\gamma\big)$. As $\ell \leq n$, $\Phi'$ consists of $\Theta(n)$ variables and $\Theta(n^\gamma)$ constraints, obtaining the desired density.

    Finally, we show that the resulting instance is equivalent.
    If there exists a solution $a$ to $\Phi$, then extending $a$ by setting all variables $Y$ to $0$ is a solution to $\Phi'$ as well, as $\textsc{LessThan}^K_{c}$ is satisfied by any assignment of weight less $c$ and thus consistent with the $\NAND_c$.
    Conversely, assume there was a satisfying assignment $a$ of weight $k$ for $\Phi'$. 
    We claim that if any variable in $y \in Y$ is set to $1$ by $a$, 
    then we can instead set an arbitrary $x \in \text{vars}(\Phi)$ to $1$ and $y$ to $0$. 
    Clearly any constraint that contains $x$ and $y$ or $y$ only is still satisfied. 
    Thus consider any constraint $f(\mathbf{x},\mathbf{y})$ such that $x \in \mathbf{x}$ but $y \not \in \mathbf{y}$. 
    As every constraint $f(\mathbf{x},\mathbf{y'})$ with $y \in \mathbf{y'}$ was satisfied by $a$ originally, 
    at most $c-2$ many variables of $\mathbf{x}$ are set to $1$ by $a$. 
    Therefore $a \cup \{x\} \setminus \{y\}$ is also satisfying.
    Exhaustive application obtains a satisfying weight $k$ assignment for $\Phi$ where every variable in $Y$ is set to $0$.
    \end{proof}
% With this result, let us swiftly prove that we can extend its application to families of mixed $\NAND$-constraints in a straightforward manner.
% \begin{theorem}
% \label{lem:dense_embed_families}
%     Let $\Fam = \{\NAND_c \mid c \in C\}$ for some set $C \subseteq [k - 1]$ and let $\Fam' = \{f_c\mid c \in C\}$ where $f_c$ is $c$-violating.
%     Then we can reduce an instance $\Phi$ of $\CSP^\gamma_k(\Fam)$ with $m_c = \O(n^{\gamma_c})$ many $\NAND_c$ constraints for $c \in C$
%     to an instance $\Phi'$ of $\CSP^\gamma_k(\Fam')$ such that $n' = \Theta(n)$ and $m_{f_c} = \O(n^{\gamma_c})$ for all $c \in C$.
% \end{theorem}
% \begin{proof}
%     We prove this, by applying \autoref{lem:dense_embed} piecewise for every $\NAND_c$ and $f_c$ for $c \in C$.
%     More precisely, we consider the instances $\Phi_c$ on $\text{vars}(\Phi)$, but only containing $\NAND_c$-constraints, i.e. instances of $\CSP_k^{\gamma_i}(\NAND_c)$.
%     We then apply \autoref{lem:dense_embed} to embed $\Phi_c$ of density $\gamma_c$ into an instance $\CSP_k^{\gamma}(f_c)$ of the same density.
%     The final instance then is $\Phi' = \bigwedge_{c \in C} \Phi'_c$ on $\Theta(n)$ variables and $m_{f_c} = \O(m_c) = \O(n^{\gamma_c})$.
%     Correctness can be verified by the equivalence between weight $k$ satisfying assignments of $\Phi_c$ and $\Phi'_c$ respectively.
% \end{proof}

\begin{lemma}[Sparse Embedding]
\label{lem:sparse_embed}
    Let $f$ be $d$-violating of arity $h$ where $1 \leq d \leq \gamma $. Given an instance $\Phi$ of $\CSP_k^\delta(\Fam)$ for $\delta > \gamma$,
    we can construct an equivalent instance $\Phi'$ of $\CSP_{k}^\gamma(\Fam \cup \{f\})$ on $\Theta(n^{\delta/\gamma})$ variables.
\end{lemma}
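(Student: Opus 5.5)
The plan is to take $\Phi'$ to be $\Phi$ placed on a small part of a much larger variable set. Let $X=\text{vars}(\Phi)$ with $|X|=n$. I would add $N:=\lceil n^{\delta/\gamma}\rceil$ fresh variables $Y$. Since $\delta>\gamma$ we have $N\ge n$, so $\Phi'$ has $\Theta(n^{\delta/\gamma})$ variables. Its constraints would be those of $\Phi$ together with a gadget $\Psi_Y$ on $Y$ built from $f$. This is the same ``hiding'' idea as the binary lower bound, where a cycle of $g$-gadgets forced the auxiliary variables $Y$ to $0$.

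First I would fix the budget. $\Phi$ contributes $\Theta(n^\delta)=\Theta(N^\gamma)$ constraints, so $\Psi_Y$ may use up to $O(N^\gamma)$ constraints, and at least $\Omega(N^\gamma)$ overall is then automatic. The gadget would be built as follows:
\begin{itemize}
\item If $f$ is $0$-invalid, I would first pass to a $0$-valid restriction, in the spirit of \autoref{lem:branch_and_bound}.
\item I would then apply \autoref{lem:atMostConstruction} to obtain $\textsc{LessThan}^K_d$ with $K=k+h$. It uses $O(1)$ copies of $f$, and it is satisfied exactly by the assignments of weight $<d$ or of weight $>k$.
\item I would fix a padding set $P\subseteq Y$ of $K-d$ variables.
\item For every $T\in\binom{Y\setminus P}{d}$ I would add $\textsc{LessThan}^K_d(T\cup P)$.
\end{itemize}
This costs $O(N^d)\subseteq O(N^\gamma)$ constraints because $d\le\gamma$. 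If the count falls short of $N^\gamma$, I would add harmless duplicated-type constraints on further tuples of $Y$ (distinct variables only) until exactly $\Theta(N^\gamma)$ is reached. Any weight-$k$ assignment satisfying $\Psi_Y$ then has fewer than $d$ ones inside $Y$, since a weight above $k$ is impossible.

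For equivalence, the forward direction is easy. A weight-$k$ solution of $\Phi$, extended by $Y\equiv 0$, satisfies $\Psi_Y$ because $\textsc{LessThan}^K_d$ is $0$-valid. For the backward direction I want to show that a weight-$k$ solution of $\Phi'$ may be assumed to vanish on $Y$. Restricting it to $X$ would then give a weight-$k$ solution of $\Phi$.

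The case $d=1$ is immediate, because $\textsc{LessThan}_1$ forces every variable of $Y$ to $0$. The main obstacle is $d\ge 2$, where up to $d-1$ variables of $Y$ may be $1$. My first attempt would be an exchange argument as in the proof of \autoref{lem:dense_embed}: move ones from $Y$ to $X$. This only works if $\Phi$ tolerates it, which fails in general. The fallback I would try instead is to shift the weight parameter. I would add $d-1$ anchor variables that are the only $Y$-positions allowed to be $1$, by putting the gadget on every $d$-set consisting of one non-anchor $y$ plus the $d-1$ anchors and padding. I would then reduce from solutions of size $k$ to solutions of size $k+d-1$ in which the anchors are all set. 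Whether the anchors can be forced to $1$ without $0$-invalid constraints is the step I expect to be delicate.
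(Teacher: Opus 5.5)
\textbf{Gap: the backward direction for $d\ge 2$ is never proved.} Your gadget on $Y$ (the $\textsc{LessThan}^K_d$ constraints on all $d$-subsets of $Y$ with fixed padding) is essentially the paper's $\Psi_{<d}$, and your counting and forward direction are fine. Your first bullet is vacuous, since $d\ge 1$ already makes $f$ $0$-valid. Constraints built from $f$ that live only on $Y$ are automatically satisfied whenever $Y$ carries fewer than $d$ ones. So they cannot exclude an assignment with $1\le d'<d$ ones in $Y$ whose $X$-part is a weight-$(k-d')$ solution of $\Phi$, and $\Phi$ may have such solutions without having a weight-$k$ one. You correctly note that the exchange argument cannot repair this, because $\Phi$ is arbitrary. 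The anchor fallback does not close the gap either. It changes the solution size to $k+d-1$, whereas the lemma keeps $k$, and you do not show how the anchors could be forced to $1$.

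\textbf{The missing idea is to couple $Y$ to $X=\text{vars}(\Phi)$.} The paper adds, for every $\mathbf{x}\in\binom{X}{d-1}$ and every $y\in Y$, the gadget $\textsc{LessThan}^K_d(\mathbf{x},y,\dots)$, padded with further distinct variables of $Y$.
\begin{itemize}
\item \emph{Budget.} This costs $O(n^{d-1}\cdot n^{\delta/\gamma})$ constraints. That fits within $O(n^\delta)$ because $d-1\le\gamma-1\le\delta(\gamma-1)/\gamma$, i.e.\ $\delta/\gamma+d-1\le\delta$.
\item \emph{Correctness.} Suppose a weight-$k$ solution sets some $y\in Y$ to $1$. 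Your $\Psi_Y$ leaves at most $d-1$ ones in $Y$, hence at least $k-d+1\ge d-1$ ones in $X$. This needs $k\ge 2d-2$, which is harmless for constant $d$. Choose $\mathbf{x}$ among these ones. Then the gadget on $(\mathbf{x},y,\dots)$ contains between $d$ and $k$ ones, so it is violated by \autoref{lem:atMostConstruction}.
\end{itemize}
Hence every weight-$k$ solution vanishes on $Y$, and its restriction to $X$ is a weight-$k$ solution of $\Phi$. With this cross gadget added, your construction becomes the paper's proof.
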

\begin{proof}
    For this construction, we crucially employ the fact that $d \le \gamma$, which allows us to place $\O(n^d)$ many constraints.
    We create $l:= n^{\delta/\gamma}$ many new variables $Y = \{y_1,\dots,y_l\}$, as to construct a sparse number of constraints, such that all variables in $Y$ are forced to $0$.
    By \autoref{lem:atMostConstruction} we can  construct $\textsc{LessThan}_{d}^{K}$ using $f$.
    To this end define two formulas: 
    \begin{align*}
    \Psi_{<d} &\coloneqq \bigwedge_{\mathbf{y} \in \binom{Y}{d}}\textsc{LessThan}_{d}^{K}(\mathbf{y},\underbrace{\makebox[5em]{\dots}}_{\text{variables } y' \in Y})\\ 
    \Psi_{<} &\coloneqq \bigwedge_{\mathbf{x}  \in \binom{\text{vars}(\Phi)}{d-1}} \bigwedge_{y \in Y}\textsc{LessThan}_{d}^{K}(\mathbf{x},y,\underbrace{\makebox[5em]{\dots}}_{\text{variables } y \in Y}).
    \end{align*}
    where we pad the remaining positions with arbitrary, distinct variables $y' \in Y$.
    $\Psi_{<d}$ has $\O(n^{\delta / \gamma \cdot d})$ many constraints and $\Psi_{<}$ consists of at most $\O(n^{(\delta / \gamma) + d - 1})$ constraints.
    Note that for all $1 \leq d < \gamma < \delta$ it holds that $\delta/\gamma + d - 1 \leq \delta$.
    The resulting instance $\Phi' \equiv \Phi \wedge \Psi_{<d} \wedge \Psi_<$ has a density of $m = \O(n^\delta) = \O(N^{\gamma})$.
    Let $a$ be a solution to $\Phi$, then extending $a$ by setting all variables $Y$ to $0$ is a satisfying weight $k$ assignment for $\Phi'$.
    Conversely, assume there was a solution $a$ for $\Phi'$. 
    Assume $0 < d'$ many variables in $Y$ are set to $1$ with $d' < d$, otherwise $a$ would violate $\Psi_{<d}$ already.
    Then $a$ sets $k-d'$ many variable from $\text{vars}(\Phi)$ to $1$.
    As there exists at least one $(d - d') $-tuple of variables set to $1$ in $\text{vars}(\Phi)$, 
    by construction of $\Psi_{<}$ all variables in $Y$ therefore must be set to $0$.
Hence we obtain an equivalent instance of  $\CSP_{k}^\gamma(\Fam \cup \{f\})$ on $\Theta(n^{\delta / \gamma})$ many variables and a density exponent of $\gamma$.
\end{proof}
\subsubsection{Lower Bounds}
Equipped with this, we can now state our main hardness results. 
\begin{lemma}
\label{lem:c-violating-hardness}
Let $\Fam$ be a constraint family with $\umin(\Fam) \le \gamma$ and $\varepsilon > 0$.
If $\umin(\Fam) \geq 3$, then $\CSP_k^\gamma(\Fam)$ cannot be solved in time $\O\left(n^{k-\varepsilon}\right)$ assuming the $\umin(\Fam)$-uniform hyperclique hypothesis.
If $\umin(\Fam) = 2$, then $\CSP_k^\gamma(\Fam)$ cannot be solved in time $O\left(n^{\frac{\omega k}{3}-\varepsilon}\right)$ assuming the $k$-Clique Hypothesis.
\end{lemma}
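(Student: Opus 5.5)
Let $c\coloneqq\umin(\Fam)$ and fix $f\in\Fam$ with $\umin(f)=c$; let $h$ be its arity, so $c\le h$. The plan is to reduce from $k$-hyperclique detection in $c$-uniform hypergraphs. Complementing the hyperedge set shows that this is the same problem as $\CSP_k^{c}(\NAND_c)$ on $n$ variables: $k$-independent set in a $c$-uniform hypergraph. Such an instance has $\Theta(n^c)$ constraints after padding with irrelevant hyperedges if needed. Then embed this instance into $\CSP_k^\gamma(f)$ on $\Theta(n)$ variables, so that a fast algorithm there transfers to the source problem with the same exponent. The source problem admits no $O(n^{k-\varepsilon})$ algorithm under the $c$-uniform hyperclique hypothesis for $c\ge3$, and no $O(n^{\omega k/3-\varepsilon})$ algorithm under the $k$-Clique Hypothesis for $c=2$.

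Since $f$ is $c$-violating with $c\ge 2\ge 1$, it is $0$-valid. Hence \autoref{lem:atMostConstruction} applies and gives the gadget $\textsc{LessThan}^K_c$. The treatment of $\gamma$ splits into two cases.

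\textbf{Case $c\le\gamma\le h$.} Apply \autoref{lem:dense_embed} directly. It yields an equivalent instance of $\CSP_k^\gamma(f)\subseteq\CSP_k^\gamma(\Fam)$ on $\Theta(n)$ variables. An $O(N^{k-\varepsilon})$ algorithm with $N=\Theta(n)$ then refutes the hypothesis, and likewise for the $\omega k/3$ bound when $c=2$.

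\textbf{Case $\gamma>h$.} Here \autoref{lem:dense_embed} cannot reach density $\gamma$ using $f$ alone, and only $O(n^h)$ distinct $f$-constraints exist on $\Theta(n)$ variables. I would interpret the density in the statement as capped by the maximum arity: either $\gamma$ is at most the maximal arity in $\Fam$, or extra constraints are padded in. The padding uses some $g\in\Fam$ of larger arity, placed on fresh variables that are forced to $0$, so that the constraints are irrelevant.

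To force those fresh variables to $0$, I would reuse the \textsc{LessThan} gadget, in the style of the $\Psi$ formulas in \autoref{lem:sparse_embed}. The key requirement is that every added $g$-constraint is satisfied whenever all non-original variables are $0$. This holds because $g$ is $0$-valid: $\umin(g)\ge c\ge 1$.

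Concretely, take a fresh set $Z$ of $\Theta(n)$ variables with all $\Theta(n^{\mathrm{ar}(g)})$ $g$-constraints on them, subsampled to exactly $\Theta(n^\gamma)$. Attach \textsc{LessThan} constraints of the form $\textsc{LessThan}^K_c(\mathbf{x},z,\dots)$, which prevent a weight-$k$ solution from using $Z$. This is the same exchange argument as in \autoref{lem:dense_embed}: any $1$ placed in $Z$ can be swapped to an original variable without violating constraints.

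The main obstacle I expect is exactly this exchange argument in the padded/high-density case. Swapping a true variable from $Z$ into the original variables must not newly violate \textsc{LessThan} constraints. Also, the padding $g$-constraints on $Z$ may themselves forbid some low-weight patterns, which could make the swap the only feasible option. I would handle this as in the proof of \autoref{lem:dense_embed}: every constraint touching $z$ was satisfied, which bounds by $c-2$ the number of true variables among the original coordinates. For the $c=2$ statement, the same embedding combined with the clique-based hardness of $\CSP_k^2(\NAND)$ gives the $n^{\omega k/3-o(1)}$ bound.
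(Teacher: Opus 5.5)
Your Case 1 is the paper's entire proof. It fixes $f$ with $\umin(f)=\umin(\Fam)$, applies \autoref{lem:dense_embed} to a $\CSP_k^{c}(\NAND_c)$ instance, and transfers the exponent because the number of variables stays $\Theta(n)$. The paper never treats $\gamma>h$; it silently assumes $\gamma$ is at most the arity of the chosen $f$. So your Case 2 goes beyond the paper, and flagging that gap is a fair observation. The construction you sketch for it, however, does not work as written.

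\emph{The gap is in the reverse direction}, where a weight-$k$ solution of $\Phi'$ should yield one of $\Phi$.
\begin{itemize}
\item The gadgets $\textsc{LessThan}^K_c(\mathbf{x},z,\dots)$ only give: if some $z\in Z$ is $1$, then at most $c-2$ original variables are $1$. They do not bound the number of ones inside $Z$. So an assignment with almost all of its ones in $Z$ may satisfy $\Phi'$ even when $\Phi$ is a NO-instance. For example, if $g=\NAND_r$ with $r>k$, no $g$-constraint ever restricts a $k$-set. If the padding is drawn from a small fixed part of $Z$, no gadget contains two of the chosen variables either.
\item The exchange argument also breaks. Suppose $z_1=z_2=1$ and $c-2$ original variables are $1$. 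Moving $z_1$'s one to an original variable $x$ creates $c-1$ original ones, and together with $z_2$ these violate a gadget $\textsc{LessThan}^K_c(\mathbf{x}',z_2,\dots)$.
\item Lowering the weight inside $Z$ can also violate a padding $g$-constraint, since $g$ need not be monotone above weight $c$.
\end{itemize}

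\emph{The missing idea} is to use $\umin(g)\ge c$ directly. Place the padding $g$-constraints only on tuples from a variable set that every weight-$k$ solution provably meets in fewer than $c$ ones. They are then automatically satisfied, and no exchange argument is needed for them.
\begin{itemize}
\item If $h>c$: run \autoref{lem:dense_embed} at density $h$, so $|Y|=n$. Since $\Phi$ has some constraint $\mathbf{x}$ and $K-c=k+h-c\ge c$, the gadgets $\textsc{LessThan}^K_c(\mathbf{x},\mathbf{y})$ over all $\mathbf{y}\in\binom{Y}{K-c}$ already force fewer than $c$ ones in $Y$. You can therefore put $\Theta(n^\gamma)$ $g$-constraints inside $Y$, which is possible because $\mathrm{ar}(g)\ge\gamma$. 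The exchange in \autoref{lem:dense_embed} only lowers the weight on $Y$, so these constraints stay satisfied.
\item If $h=c$: then $f=\NAND_c$. Take a fresh set $Z$ with $|Z|=n$ and add $\NAND_c$ on every $c$-subset of $\mathrm{vars}(\Phi)\cup Z$ that meets $Z$. This costs $O(n^c)$ constraints and excludes $Z$ from every weight-$k$ solution for $k\ge c$. Then put the $g$-padding on $Z$.
\end{itemize}
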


\begin{proof}
    Fix some $f \in \Fam$ with $\umin(f) = \umin(\Fam)$.
    We reduce from an instance $\Phi$  of $\CSP_k^{\umin(\Fam)}(\NAND_{\umin(\Fam)})$.
    Applying \autoref{lem:dense_embed} obtains an equivalent instance $\Phi'$ of $\CSP_k^{\gamma}(f)$ on $\Theta(n)$ variables.
    For $\umin(\Fam) \geq 3$, assume there was an $\varepsilon > 0$ such that $\Phi'$ can be solved in time $\O(n^{k - \varepsilon})$.
    Then one could solve $\Phi$ in time $\O(n^{k - \varepsilon'})$ for some $\varepsilon' > 0$, refuting the $\umin(\Fam)$-uniform $k$-hyperclique hypothesis. 
    For $\umin(\Fam) = 2$, we get a lower bound of $\O(n^{\omega/3 k -\varepsilon'})$ for some $\varepsilon' > 0$ under the $k$-clique hypothesis instead.
\end{proof}
Intuitively, any instance that is dense enough allows a reduction from dense $k$-independent set.

We can derive the following lower bounds depending on the density of the instance, 
if there are further constraints available that allow us to employ sparse embedding.
Just as for binary constraint families, 
we blow up the instance size to hide a dense instance that is sufficiently hard.
\begin{lemma}
\label{lem:d-c-violating-hardness}
    Let $f,g \in \Fam$ where $1 \leq  \umin(f) < \umin(g)$ and $\umin(f) \leq \gamma \leq \umin(g)$.
    Then there is no $\varepsilon > 0 $ such that 
    we can solve $\CSP_k^\gamma(\Fam)$ in time
    \begin{itemize}
        \item $\O(m^{k/\umin(g) - \varepsilon})$ if $\umin(g) \geq 3$
        \item $\O(m^{\omega k/6 - \varepsilon})$ if $\umin(g) = 2$
    \end{itemize}
    conditioned on the $\umin(g)$-uniform $k$-Hyperclique Hypothesis and $k$-Clique Hypothesis respectively.
\end{lemma}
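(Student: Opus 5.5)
The plan is to compose the two embedding lemmas: first a dense embedding into $g$, then a sparse embedding using $f$. Write $c:=\umin(g)$ and $d:=\umin(f)$, so $1\le d<c$ and $d\le\gamma\le c$. The source problem is $\CSP_k^{c}(\NAND_c)$ on $N$ variables, i.e.\ $k$-independent set in a $c$-uniform hypergraph with $\Theta(N^c)$ hyperedges. For $c\ge3$ it needs time $N^{k-o(1)}$ under the $c$-uniform $k$-hyperclique hypothesis, after complementing within $\binom{V}{c}$. For $c=2$ it needs time $N^{\omega k/3-o(1)}$ under the $k$-clique hypothesis.

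\textbf{Step 1.} Let $g$ have arity $h\ge c$. Apply \autoref{lem:dense_embed} with density parameter $c$ itself. The case $c=h$ is trivial, and for $c<h$ we take $\ell=N^0=1$ extra variables, so the lemma applies. This gives an equivalent instance $\Phi_1$ of $\CSP_k^{c}(g)$ on $\Theta(N)$ variables with $\Theta(N^c)$ constraints. For $c=2$ this step needs $c\ge2$, which holds.

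\textbf{Step 2.} If $\gamma=c$ we are done, since then $m=\Theta(N^c)$, and an $O(m^{k/c-\varepsilon})$ algorithm would run in time $O(N^{k-c\varepsilon})$. Otherwise $d\le\gamma<c=\delta$. Apply \autoref{lem:sparse_embed} to $\Phi_1$ with the $d$-violating function $f$ and target density $\gamma$. This yields an equivalent instance $\Phi_2$ of $\CSP_k^{\gamma}(\{g,f\})$, which is a subfamily of $\Fam$. It has $n=\Theta(N^{c/\gamma})$ variables and $m=\Theta(n^\gamma)=\Theta(N^c)$ constraints, and its weight-$k$ solutions correspond exactly to those of $\Phi$.

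\textbf{Step 3 (running time).} The reduction takes time $O(N^c)$. An algorithm for $\CSP_k^\gamma(\Fam)$ running in time $O(m^{k/c-\varepsilon})$ would therefore solve $\Phi$ in time $O(N^c+N^{k-c\varepsilon})$. This is $N^{k-\Omega(1)}$ for $k>c$, contradicting the hypothesis. In the case $c=2$, an $O(m^{\omega k/6-\varepsilon})$ algorithm would give time $O(N^{\omega k/3-2\varepsilon})$, refuting the $k$-clique hypothesis.

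\textbf{Main obstacle.} The delicate point is checking that the preconditions of \autoref{lem:sparse_embed} hold in the boundary cases. The first is $d=\gamma$. The second is the inequality $\delta/\gamma+d-1\le\delta$ used there for the constraint count of $\Psi_<$. Here $\delta=c$, so we need $c/\gamma+d-1\le c$. This is equivalent to $(d-1)\le c(1-1/\gamma)$, and it follows from $d\le\gamma$ and $c\ge\gamma$: indeed $c(\gamma-1)/\gamma\ge\gamma-1\ge d-1$. I would verify this explicitly rather than rely on the lemma's strict-inequality phrasing.

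A second point to check is that $\textsc{LessThan}^K_d$ from \autoref{lem:atMostConstruction} needs $f$ to be $0$-valid. This holds because $d\ge1$.
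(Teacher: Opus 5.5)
Your proposal is correct and is essentially the paper's proof: reduce from $\CSP_k^{c}(\NAND_c)$, apply the dense embedding into $g$ at density $c$, then apply the sparse embedding with $f$ down to density $\gamma$. This gives $n=\Theta(N^{c/\gamma})$ and $m=\Theta(N^{c})$, so an $O(m^{k/c-\varepsilon})$ (resp.\ $O(m^{\omega k/6-\varepsilon})$) algorithm yields $O(N^{k-c\varepsilon})$ (resp.\ $O(N^{\omega k/3-2\varepsilon})$). Your separate handling of $\gamma=c$ and your explicit check of $c/\gamma+d-1\le c$ in the boundary case $d=\gamma$ are details the paper leaves implicit; one small nit is that in the dense embedding at $\gamma=c$ the padding set $Y$ must still contain at least $K-c$ variables (otherwise $\binom{Y}{K-c}$ is empty and no constraints are created), so take $\ell$ to be a suitable constant rather than literally $N^0=1$.
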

\begin{proof}
    Let $c:= \umin(g)$.
    We reduce from an instance $\Phi$ of $\CSP_{k}^c(\NAND_c)$.
    Applying \autoref{lem:dense_embed} and \autoref{lem:sparse_embed} successively obtains a sparse embedding of $\Phi$ in $\CSP_k^{\gamma}(\{f,g\})$ on $N = \Theta(n^{c/\gamma})$ many variables, which we denote by $\Phi'$.
    Assuming there was an $\varepsilon > 0$ such that $\Phi'$ can be solved in time $\O(m^{k/c - \varepsilon})$,
    then one could solve $\Phi$ in time $\O\left(N^{\frac{\gamma k}{c}k -\varepsilon'}\right) = \O\left(n^{k - \varepsilon'}\right)$ for some $\varepsilon' > 0$, refuting the $c$-uniform $k$-hyperclique hypothesis.
    For $c = 2$, we get a running time of $\O(m^{\omega k /6 - \varepsilon})$ for some $\varepsilon > 0$ refuting the $k$-clique hypothesis instead.
\end{proof}
Consequently, this implies the lower bounds we saw for the non-uniform $k$-independent set problem in arity at most $3$-hypergraphs.

\bibliography{article}

% \appendix 

% \input{sections/triviality_cutoff_proof}
% \input{sections/nandr_eq_regime}
% \input{sections/kernelized_nand_impl}

\end{document}